\documentclass[11pt]{article}

%%USE: \fullornot{Text for the full version}{Text for the camera-ready version}
\newcommand{\fullornot}[2]{#1}

\usepackage{geometry} \geometry{margin=1in}
\usepackage{amsmath,amsthm,amsfonts,dsfont,mathtools,amssymb}
\usepackage{mathrsfs}
\usepackage{float}
\usepackage{thmtools}
\usepackage{thm-restate}
\usepackage{xcolor}
\usepackage{graphicx}
\usepackage{algorithmic,algorithm}
\usepackage{enumitem}
\setitemize{noitemsep,topsep=3pt,parsep=3pt,partopsep=3pt}

\definecolor{darkgreen}{rgb}{0,0.5,0}
\usepackage{hyperref}
\hypersetup{
    unicode=false,          % non-Latin characters in Acrobat’s bookmarks
    colorlinks=true,        % false: boxed links; true: colored links
    linkcolor=red,          % color of internal links (change box color with linkbordercolor)
    citecolor=darkgreen,        % color of links to bibliography
    filecolor=magenta,      % color of file links
    urlcolor=cyan           % color of external links
}
\usepackage[capitalize, nameinlink]{cleveref}

\sloppy

\crefname{theorem}{Theorem}{Theorems}
\Crefname{lemma}{Lemma}{Lemmas}
\Crefname{figure}{Figure}{Figures}
\Crefname{claim}{Claim}{Claims}
\Crefname{observation}{Observation}{Observations}

\newtheorem{theorem}{Theorem}[section]
\newtheorem{lemma}{Lemma}[section]

\newtheorem{claim}{Claim}[section]
\newtheorem{definition}{Definition}[section]

\newtheorem{observation}{Observation}[section]

\renewcommand{\vec}[1]{\mathbf{#1}}
\newcommand{\nn}{\tilde{n}}
\newcommand{\R}{\mathbb{R}}
\newcommand{\Z}{\mathbb{Z}}
\renewcommand{\mod}{\operatorname{mod}}
\newcommand{\ceil}[1]{\left\lceil #1 \right\rceil}
\newcommand{\floor}[1]{\left\lfloor #1 \right\rfloor}
\newcommand{\paren}[1]{\left( #1 \right)}
\newcommand{\bydef}{:=}
\newcommand{\Expect}{\mathbb{E}}

\newcommand{\ID}{\mathsf{ID}}
\newcommand{\poly}{\operatorname{poly}}
\newcommand{\LOCAL}{\mathsf{LOCAL}}
\newcommand{\CONGEST}{\mathsf{CONGEST}}

\newcommand{\local}{\mathsf{local}}
\newcommand{\Algo}{\mathcal{A}}
\newcommand{\Event}{\mathcal{E}}
\newcommand{\dist}{\mathsf{dist}}
\newcommand{\girth}{\mathsf{girth}}

\newcommand{\eps}[0]{\varepsilon}
\newcommand{\exponential}[0]{\mathsf{Exponential}}
\newcommand{\geom}[0]{\mathsf{Geometric}}
\newcommand{\Bin}[0]{\mathsf{Binomial}}
\newcommand{\Ber}[0]{\mathsf{Bernoulli}}

%----------------------------Comments-----------------
% Inline Comments, use the second line to turn off all the comments.
% \usepackage[draft,multiuser,inline,nomargin]{fixme} 
\usepackage[multiuser,inline,nomargin]{fixme} 
\fxusetheme{color}
\FXRegisterAuthor{z}{ez}{\color{cyan}Z}
\FXRegisterAuthor{y}{ey}{\color{red}Y}
\FXRegisterAuthor{all}{eall}{Note}

\title{The Complexity of Distributed Approximation of \\ Packing  and Covering Integer Linear Programs}

\begin{document}

\date{}
\author{Yi-Jun Chang\footnote{Department of Computer Science, National University of Singapore. Email: cyijun@nus.edu.sg} \and Zeyong Li\footnote{Centre for Quantum Technologies, National University of Singapore. Email: li.zeyong@u.nus.edu}}
\maketitle
\thispagestyle{empty}
\setcounter{page}{0}

\begin{abstract}

\znote{updated abstract}

In this paper, we present a low-diameter decomposition algorithm in the $\LOCAL$ model of distributed computing that succeeds with probability $1 - 1/\poly(n)$. Specifically, we show how to compute an $\left(\epsilon, O\left(\frac{\log n}{\epsilon}\right)\right)$ low-diameter decomposition in $O\left(\frac{\log^3(1/\epsilon)\log n}{\epsilon}\right)$ rounds.

Further developing our techniques, we show new distributed algorithms for approximating general \emph{packing} and \emph{covering} integer linear programs in the $\LOCAL$ model. For packing problems, our algorithm finds an $(1-\eps)$-approximate solution in $O\left(\frac{\log^3 (1/\eps) \log n}{\eps}\right)$ rounds with probability $1 - 1/\poly(n)$. For covering problems, our algorithm finds an $(1+\eps)$-approximate solution in $O\left(\frac{\left(\log \log n + \log (1/\eps)\right)^3 \log n}{\eps}\right)$ rounds with probability $1 - 1/\poly(n)$. These results improve upon the previous $O\left(\frac{\log^3 n}{\eps}\right)$-round algorithm by Ghaffari, Kuhn, and Maus~[STOC 2017] which is based on network decompositions.

% In this paper, we present new distributed algorithms for solving general \emph{packing} and \emph{covering} integer linear programs in the $\LOCAL$ model of distributed computing. For packing problems, our algorithm finds an $(1-\eps)$-approximate solution in $O\left(\frac{\log^3 (1/\eps) \log n}{\eps}\right)$ rounds with probability $1 - 1/\poly(n)$. For covering problems, our algorithm finds an $(1+\eps)$-approximate solution in $O\left(\frac{\left(\log \log n + \log (1/\eps)\right)^3 \log n}{\eps}\right)$ rounds with probability $1 - 1/\poly(n)$. These results improve upon the previous $O\left(\frac{\log^3 n}{\eps}\right)$-round algorithm by Ghaffari, Kuhn, and Maus~[STOC 2017] which is based on network decompositions.

%\znote{changed:}
%Our results give almost-tight bounds for computing $1\pm \eps$ approximations for a broad family of combinatorial graph optimization problems in the $\LOCAL$ model: 

Our algorithms are \emph{near-optimal} for many fundamental combinatorial graph optimization problems in the $\LOCAL$ model, such as minimum vertex cover and minimum dominating set, as their $(1\pm \eps)$-approximate solutions require $\Omega\left(\frac{\log n}{\eps}\right)$ rounds to compute. 
%Our result is near optimal for many problems, such as vertex cover and fractional matching where $\Omega(\log n/\eps)$ lower bounds are known from the work of Faour, Fuchs and Kuhn~[OPODIS 2021] and Balliu, Kuhn and Olivetti~[OPODIS 2021].
% Extending the lower bound technique of Bodlaender, Halld\'{o}rsson,   Konrad, and Kuhn~[PODC 2016], we show that $(1\pm \eps)$-approximate solutions for maximum independent set, maximum cut, minimum vertex cover, and minimum dominating set require $\Omega\left(\frac{\log n}{\eps}\right)$ rounds to compute in the $\LOCAL$ model, showing that our algorithms are nearly optimal for these problems, up to an $O\left(\left(\log \log n + \log (1/\eps)\right)^3\right)$ factor.
% These results improve upon the previous $\Omega(1/\eps)$ lower bound by Ben-Basat, Kawarabayashi, and  Schwartzman~[DISC 2019].
\end{abstract}
\section{Introduction}

In this paper, we consider the $\LOCAL$ model~\cite{Linial92} of distributed computing, where a network is modeled as a graph $G$ in such a way that each vertex $v \in V(G)$ corresponds to a computing device and each edge $e \in E(G)$  corresponds to a  communication link. The communication proceeds in synchronous rounds. In each round, each vertex $v \in V(G)$ receives the messages sent from its neighbors, performs some arbitrary local computation, and sends a message of arbitrary size to each of its neighbors. We extend the definition of the  $\LOCAL$ model to hypergraphs $H$ by allowing each vertex $v \in V(H)$ to communicate with all other vertices $u \in V(H)$ such that there is a hyperedge $e\in E(H)$ that contains both $u$ and $v$.

We assume that an upper bound $\nn$ on the number of vertices $n = |V(G)|$ is known to all vertices. All our algorithms work in the setting where a polynomial approximation of the actual number of vertices is known to all devices, i.e., $\nn \leq {|V(G)|}^c$ for some constant $c \geq 1$. All presented lower bounds apply to the setting where the exact number of vertices is known to all devices, i.e., $\nn = n = |V(G)|$.

In the deterministic variant of the model, each vertex $v$ has a distinct identifier of $O(\log n)$ bits. In the randomized variant of the model, each vertex is anonymous and has access to an infinite string of local random bits.

The formulation of general packing and covering integer linear programming (ILP) problems is as follows.

\begin{definition}[Packing problem]\label{def:packing}
    Given $\vec{A}\in \R_{\geq 0}^{m \times n}$, $\vec{b} \in \R_{\geq 0}^{m}$, and $\vec{w} \in \Z_{\geq 0}^{n}$,  find $\vec{x} \in \{0,1\}^{n}$ that maximizes $\vec{w}^\intercal\vec{x}$;
    subject to $\vec{Ax} \leq \vec{b}$.
\end{definition}

\begin{definition}[Covering problem]\label{def:covering}
    Given $\vec{A}\in \R_{\geq 0}^{m\times n}$, $\vec{b} \in \R_{\geq 0}^{m}$, and $\vec{w} \in \Z_{\geq 0}^{n}$, find $\vec{x} \in \{0,1\}^{n}$ that minimizes $\vec{w}^\intercal\vec{x}$;
    subject to $\vec{Ax} \geq \vec{b}$.
\end{definition}

Throughout the paper, we assume that the sum of all weights ${\lVert \vec{w} \rVert}_1 = \sum_{i=1}^{n} w_i$ is polynomial in the number of variables $n$.
Note that although there is a more general formulation of ILP that allows the solution $x_i$ to take values from  non-negative integers in the range $0 \leq x_i \leq s$ and not just $\{0,1\}$, such an ILP instance can be reduced to an instance in our formulation, by decomposing each variable $x_i$ into $\log s$ variables $x_i^{(1)}, \ldots, x_i^{(\log s)}$ taking values in $\{0,1\}$, where $x_i^{(k)}$ represents the $k$th bit of $x_i$.

%\znote{didn't address the assumption where weights are integers}\\

We consider the following model for integer linear programming in the distributed setting.

\begin{definition}[Modeling of ILP problems]\label{def:hypergraph}
    Given an instance of a packing or covering ILP problem $(\vec{A}\in \R_{\geq 0}^{m \times n}, \vec{b} \in \R_{\geq 0}^{m}, \vec{w} \in \Z_{\geq 0}^{n})$, the hypergraph $H$ associated with the problem is defined by $V(H) := \{x_i\}_{1 \leq i \leq n}$ and $E(H) := \{e_j\}_{1 \leq j \leq m}$, where $e_j := \{x_i : a_{i,j} \neq 0\}$.
\end{definition}

In the above definition, each variable $x_i$ corresponds to a vertex in $V(H)$ and each constraint corresponds to a hyperedge in $E(H)$. Consider the minimum-weight $k$-distance dominating set problem for example, where we are given a network $G$, and the goal is to find a subset of vertices $D \subseteq V(G)$ minimizing $\sum_{v \in D} w(v)$ subject to the condition that $N^k(v) \cap D \neq \emptyset$ for all $v \in V(G)$, where we define $N^k(v) := \{ u \in V(G) : \dist(u,v) \leq k\}$. Its corresponding hypergraph $H$ in \cref{def:hypergraph} is given by $V(H) = V(G)$ and $E(H) = \{N^k(v) : v \in V(G)\}$, and so one round of communication in $H$ can be simulated using $k$ rounds in $G$ in the $\LOCAL$ model.

%\znote{added:} 
Modeling ILP problems and other distributed problems as hypergraphs is common. The convenience of the use of hypergraphs for modeling distributed graph problems motivated the study of $\LOCAL$ and $\CONGEST$ algorithms for hypergraphs, see e.g.,~\cite{BEKS19,Balliu23,FischerGK17}. Equivalently, some other works, e.g.,~\cite{GhaffariKM17} used bipartite graphs to model ILPs.

Many fundamental graph problems that are well-studied in theoretical computer science can be formulated as packing and covering ILP, and understanding the complexity of distributed approximation for these problems is a core topic in the area of distributed graph algorithms: maximum matching~\cite{FMU22,LotkerPP08,LPR09}, maximum independent set~\cite{BCGS17,BHKK16}, maximum cut~\cite{BachrachCDELP19,Censor17maxcut}, minimum dominating set~\cite{KUTTEN199840,lenzen2013distributed,amiri2019distributed}, minimum vertex cover~\cite{BarYehudaCS16}, and  minimum edge cover~\cite{goos2013lower}.

\znote{Moving the discussion about LDD earlier:}
\subsection{Low-Diameter Decomposition}
Low-diameter decomposition is an important subroutine for  designing many distributed algorithms, including approximation algorithm of packing and covering ILP problems.

\begin{definition}[Low-diameter decomposition]\label{def:LDD}
    Given a graph $G$, an $(\eps, d)$ low-diameter decomposition is a partition $V(G) = D \cup S_1 \cup S_2 \cup \cdots \cup S_k$ meeting the following conditions.
    \begin{itemize}
        \item  $S_1, S_2, \ldots, S_k$ are mutually non-adjacent subsets.
        \item For each $1 \leq i \leq k$, the weak diameter $\max_{u,v \in S_i}  \dist_G(u,v)$ of $S_i$ is at most $d$.
        \item $D$ contains at most $\eps |V(G)|$ vertices.
    \end{itemize}
\end{definition}

In \cref{def:LDD}, we say that $S_1, S_2, \ldots, S_k$ are the clusters of the decomposition, and each $v \in D$ is called an unclustered vertex or a deleted vertex.
Note that \cref{def:LDD} also applies to hypergraphs.
There is a stronger variant of \cref{def:LDD} that replaces the \emph{weak diameter}  bound with a \emph{strong diameter}  bound: The strong diameter of $S \subseteq V(G)$ is defined as the diameter of the subgraph of $G[S]$ induced by $S$, which is  $\max_{u,v \in S} =\dist_{G[S]}(u,v)$.

It is well-known~\cite{EN16,LinialS93,MPX13} that for any $0 < \eps < 1$, any $n$-vertex graph $G$ admits a low-diameter decomposition with $d = O\left(\frac{\log n}{\eps}\right)$, and such a decomposition can be computed in $O\left(\frac{\log n}{\eps}\right)$ rounds in the $\LOCAL$ model \emph{in expectation} in the sense that the probability that $v$ is unclustered is at most $\eps$, for each $v \in V(G)$.

Bodlaender, Halld\'{o}rsson, Konrad, and Kuhn~\cite{BHKK16} showed that such a low-diameter decomposition algorithm can be used to find an  $(1-\eps)$-approximate maximum independent set \emph{in expectation}, as follows. Compute a  low-diameter decomposition of $G$ with $d = O\left(\frac{\log n}{\eps}\right)$, and then each cluster $S_i$ locally computes a maximum independent set $I_i$. This algorithm can be implemented to run in $O\left(\frac{\log n}{\eps}\right)$ rounds in the $\LOCAL$ model. Clearly $I := \bigcup_{1 \leq i \leq k}I_i$ is an independent set, as  $S_1, S_2, \ldots, S_k$ are mutually non-adjacent. To show that $I$ is an  $(1-\eps)$-approximate maximum independent set in expectation, fix $I^\ast$ to be any maximum independent set of $G$, and by the maximality of $I_i$, we have $|I| = \sum_{i=1}^k |I_i| \geq \sum_{i=1}^k |S_i \cap I^\ast| = |I^\ast| - |D \cap I^\ast|$. Since each $v \in V(G)$ is unclustered with probability  at most $\eps$, we have $\Expect[|D \cap I^\ast|] \leq \eps  |I^\ast|$, so $\Expect[I] \geq (1-\eps)  |I^\ast|$.

\subsection{State of the Art for Packing and Covering ILP}
%In a seminal work, 
Kuhn, Moscibroda, and Wattenhofer~\cite{KuhnMW16} showed that an $(1 \pm \eps)$-approximate solution for a general packing and covering \emph{linear program} can be computed in $O\left(\frac{\log n}{\eps}\right)$ deterministically in the $\LOCAL$ model, but their approach does not generalize to \emph{integer linear programs}. The first distributed algorithm that computes an $(1 \pm \eps)$-approximate solution for a general packing and covering ILP in $\poly\left(1/\eps, \log n\right)$ rounds with probability $1 - 1/\poly(n)$ in the $\LOCAL$ model was given by Ghaffari, Kuhn, and Maus~\cite{GhaffariKM17}.

We briefly explain the algorithm of~\cite{GhaffariKM17}, by considering the task of finding an $(1-\eps)$-approximate maximum matching of the network $G$. First, consider the following sequential algorithm, which repeatedly executes the following \emph{ball-growing-and-carving} process for any remaining vertex $v$ until the graph becomes empty.
%\znote{the $m_i$ here should be $m_k$?}
Let $m_k$ be the size of a maximum matching in the subgraph of $G$ induced by the $k$-radius neighborhood $N^k(v)$ of $v$. If  $k = \Theta\left(\frac{\log n}{\eps}\right)$ is chosen to be sufficiently large, then there exists an index $1 \leq i^\ast < k$ such that $m_{i^\ast} \geq m_{i^\ast + 1} \cdot (1-\eps)$. Fix any maximum matching  in the subgraph of $G$ induced by $N^{i^\ast}(v)$ and remove $N^{i^\ast}(v)$ from the graph. Intuitively, this algorithm finds an $(1-\eps)$-approximate maximum matching  because the cost of not considering the edges between $N^{i^\ast}(v)$ and $V \setminus N^{i^\ast}(v)$ is at most $\eps$ fraction of the size of any fixed maximum matching of $G$ restricted to $N^{i^\ast +1}(v)$.

To give a distributed implementation of the above sequential algorithm in the $\LOCAL$ model, the algorithm of~\cite{GhaffariKM17} uses \emph{network decomposition}. A $(C,D)$ network decomposition of a graph $G$ is a partition of the vertex set $V(G)$ into clusters of diameter at most $D$, where each cluster is assigned a color from $\{1,2, \ldots, C\}$ such that no two adjacent clusters are assigned the same color. It is well-known that an $(O(\log n), O(\log n))$ network decomposition of an $n$-vertex graph exists and can be computed in $O(\log^2 n)$ rounds with probability $1 - 1/\poly(n)$~\cite{LinialS93}.

The algorithm of~\cite{GhaffariKM17} constructs a $(C,D)$  network decomposition of the $k$th power graph $G^{2k}$, where $k = \Theta\left(\frac{\log n}{\eps}\right)$ is the parameter in the above sequential algorithm.  In the original graph $G$, for any two clusters $S_1$ and $S_2$ in the same color class, we must have $\dist(S_1, S_2) \geq k$, so each cluster $S$ in one color class may run the above sequential ball-growing-and-carving process independently in $O(kD)$ rounds, by gathering the entire graph topology of $N^{k}(S)$ to a single vertex in $S$ to simulate the sequential algorithm. Since there are $C = (\log n)$ colors, the overall round complexity is $O(kCD)$.

Using the $O(\log^2 n)$-round randomized $(O(\log n), O(\log n))$ network decomposition algorithm of~\cite{LinialS93}, the algorithm of~\cite{GhaffariKM17} finishes in
$O\left(\frac{\log^3 n}{\eps}\right)$ rounds with probability $1 - 1/\poly(n)$. The algorithm of~\cite{GhaffariKM17} can also be implemented to run in $\poly\left(1/\eps, \log n\right)$ rounds \emph{deterministically} using the recent deterministic polylogarithmic-round network decomposition algorithms~\cite{elkin2022deterministic,GGHIR22,GhaffariGR21,RozhonG20}.

\subsection{Our Contribution}
In the study of low-diameter decompositions, it has been a long-standing open question to make the \emph{in-expectation} guarantee in the algorithms in~\cite{EN16,LinialS93,MPX13} to hold \emph{with high probability}:
\begin{description}
    \item[(C1)] Design an algorithm that finds an  $\left(\epsilon, O\left(\frac{\log n}{\epsilon}\right)\right)$ low-diameter decomposition in $\tilde{O}\left(\frac{\log n}{\epsilon}\right)$ rounds such that the bound $|D| \leq \eps|V(G)|$ on the number of unclustered vertices holds with probability $1 - 1/\poly(n)$, where the notation $\tilde{O}(\cdot)$ hides polylogarithmic factors in the sense that $\tilde{O}(f(n,\eps))= f(n,\eps) \cdot \log^{O(1)}f(n,\eps)$.
\end{description}

In this paper, we present a new low-diameter decomposition algorithm that resolves (C1), at the cost of increasing the round complexity by a small $\poly\left(\log (1/\eps)\right)$ factor. We prove the following theorem in \cref{sect:LDD-main}.

\begin{theorem}\label{thm:LDD}
There exists an algorithm that computes an $\left(\eps, O\left(\frac{\log n}{\eps}\right)\right)$ low-diameter decomposition in the $\LOCAL$ model in $O\left(\frac{\log^3(1/\eps) \log n}{\eps}\right)$ rounds with probability $1 - 1/\poly(n)$.
\end{theorem}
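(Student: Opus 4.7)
The starting point is the classical MPX low-diameter decomposition: assign each vertex $v$ an independent delay $\delta_v \sim \exponential(\beta)$ with $\beta = \Theta(\eps)$, cluster $v$ with the center $u^\ast(v) = \argmax_u (\delta_u - \dist_G(u,v))$, and truncate at distance $O(\log n / \eps)$. This runs in $O(\log n / \eps)$ rounds, and a standard analysis shows that each vertex is cut (placed into $D$ because it lies on the boundary between two clusters) with probability at most $\eps$. The resulting bound $\Expect[|D|] \leq \eps n$ only yields $|D| \leq \eps n$ with constant probability via Markov's inequality, so a direct application does not meet the with-high-probability guarantee demanded by (C1).

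My plan is to boost MPX to a high-probability guarantee via iterative peeling. Start with $\eps_0 = \eps/2$, and in phase $t = 0, 1, \ldots, T-1$ run MPX with a smaller parameter $\eps_t$ restricted to the current residual set $D_t$. After each phase I would designate the newly formed clusters as final, erode them by a thin boundary so that later clusters cannot become adjacent to them, and let $D_{t+1}$ consist of everything not yet safely clustered. By choosing the $\eps_t$ to decrease geometrically and setting $T = \Theta(\log(1/\eps))$, the expected fraction of vertices surviving all $T$ phases drops to a subpolynomial function of $\eps$; the weak-diameter bound $O(\log n / \eps)$ is preserved because every final cluster is formed within a single MPX invocation at the appropriate scale.

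The main obstacle, and the part I expect to be the hardest, will be converting the in-expectation guarantee into a high-probability one while simultaneously preserving the non-adjacency of clusters across phases. MPX cut events are correlated, because two nearby vertices can be cut by the same near-tie between centers, so a direct Chernoff bound is unavailable. To get around this, I would exploit the locality of MPX: the cut status of a vertex $v$ depends only on the delays of vertices within distance $O(\log n / \eps)$ of $v$, so cut events restricted to any $\poly(\log n / \eps)$-spread subset of vertices are mutually independent. Within each phase I would run $\Theta(\log(1/\eps))$ independent MPX repetitions in parallel and aggregate them to identify a robust safe core, driving the per-vertex failure probability below $1/\poly(n)$ so that a union bound over all $n$ vertices succeeds. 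Combining $T = \Theta(\log(1/\eps))$ phases with $\Theta(\log(1/\eps))$ repetitions of $O(\log n / \eps)$-round MPX per phase yields the claimed $O(\log^3(1/\eps)\log n / \eps)$ round complexity. The most delicate calculation will be to check that the erosion across phases leaks at most an $\eps$ fraction of vertices in total: this needs a careful amortization, since each phase removes some boundary layer and the cumulative boundary loss must not dominate the geometric saving obtained from shrinking $\eps_t$.
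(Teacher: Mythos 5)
Your proposal contains a genuine gap at the step where the in-expectation guarantee is converted into a high-probability one. You propose running $\Theta(\log(1/\eps))$ independent MPX repetitions per phase and ``driving the per-vertex failure probability below $1/\poly(n)$ so that a union bound over all $n$ vertices succeeds.'' The arithmetic does not work: if each repetition leaves a vertex unclustered with probability roughly $\eps$, then $\Theta(\log(1/\eps))$ repetitions only drive this down to $\eps^{\Theta(\log(1/\eps))}$, which is a \emph{constant} for constant $\eps$, nowhere near $1/\poly(n)$. Worse, the per-vertex union-bound strategy cannot work even in principle, because it would conclude that with high probability \emph{no} vertex is unclustered, and $D=\emptyset$ is impossible in general (a path of length $n$ cannot be partitioned into mutually non-adjacent clusters of diameter $O(\log n/\eps)$ without deleting some vertices). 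What is needed is a concentration bound on $|D|$ around $\eps n$, not a per-vertex bound. Moreover, the ``aggregation'' of several independent MPX runs into a single ``robust safe core'' is left unspecified, and this is exactly the hard part: clusters from different repetitions overlap and are mutually adjacent, so it is unclear how to combine them while preserving non-adjacency. Finally, the cross-phase erosion you defer to ``a careful amortization'' is a real obstruction: the boundary layer of a cluster in an expander-like region can contain a constant fraction of the cluster, so without a mechanism that selects a \emph{sparse} layer to cut, the cumulative erosion loss can exceed $\eps n$.

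Your parenthetical observation about locality --- that cut events for well-separated vertices are independent --- is in fact the right key, and it is where the paper's proof goes, but by a different route than yours. The paper does not iterate MPX. Instead it first \emph{sparsifies} the graph via $O(\log(1/\eps))$ iterations of ball-growing-and-carving from randomly sampled centres: each carve deletes the sparsest layer of a ball of radius $\Theta(t\log n/\eps)$, so the loss per ball is a $1/R$ fraction of the ball, and a Chernoff bound shows each vertex lies in only $O(\log n)$ balls, keeping the total loss per iteration at $O(\eps n/t)$. After sparsification (plus one extra phase handling residual dense clusters), every $(R/2)$-radius neighbourhood of a surviving vertex contains at most $O(\eps n/\log n)$ vertices, so the indicator variables for deletion in a single final run of the Elkin--Neiman decomposition form a dependency graph of degree $O(\eps n/\log n)$, and a Chernoff bound for variables with bounded dependence yields $|D|\le\eps n$ with probability $1-1/\poly(n)$. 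To salvage your approach you would need to replace the repetition-plus-union-bound step with such a sparsification argument, and replace the unstructured boundary erosion with a sparsest-layer carving rule.
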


Note that (C1) is relevant to many distributed algorithms that are built on low-diameter decompositions. For example, Elkin and Neiman showed that a spanner of stretch $2k - 1$ and \emph{expected} size $O\left(n^{1+1/k}\right)$ can be computed in $O(k)$ rounds in the $\CONGEST$ model~\cite{elkin2018efficient}, which is a variant of the $\LOCAL$ that restricts the number of bits that can be transmitted along each edge to be $O(\log n)$. The bound on the size of spanner holds only in expectation because it is built on the low-diameter decomposition algorithm of~\cite{MPX13}, and it remains an open question whether such a bound can be achieved with probability $1 - 1/\poly(n)$, see~\cite{forsterOPODIS2021}. Another example is the expander decomposition algorithm of Chang and Saranurak~\cite{changS19}, which also uses low-diameter decompositions. To ensure that the guarantee on the number of inter-cluster edges in the expander decomposition holds with probability $1 - 1 / \poly(n)$ and not only in expectation, a new low-diameter decomposition algorithm was designed in~\cite{changS19}.
Specifically, it was shown in~\cite{changS19} that there is a $\poly\left(1/\eps, \log n\right)$-round algorithm that finds an $\left( \eps, O\left(\frac{\log^2 n}{\eps^2}\right)\right)$ low-diameter decomposition with probability $1 - 1 / \poly(n)$ in the $\CONGEST$ model.
In \fullornot{\cref{sect:LDD}}{the full version \cite[Appendix C.1]{fullversion}}, we show that there exists a family of graphs such that if we run  existing low-diameter decomposition algorithms on these graphs, then with non-negligible probability, the number of unclustered vertices exceed $\eps |V(G)|$, so (C1) is not merely an issue of analysis and we really need a new low-diameter decomposition algorithm.

\paragraph{Packing and Covering ILP.} Building on the techniques behind our new low-diameter decomposition algorithm, we construct new distributed algorithms for solving general packing and covering integer linear programs in the $\LOCAL$ model of distributed computing. We show that $(1\pm \eps)$-approximate solutions of  these problems can be computed in $\tilde{O}\left(\frac{\log n}{\eps}\right)$ rounds with probability $1 - 1/\poly(n)$.
Our result improves  upon the previous $O\left(\frac{\log^3 n}{\eps}\right)$-round algorithm of~\cite{GhaffariKM17}, bypassing the $O\left(\log^2 n\right)$ barrier for algorithms based on network decompositions, and getting closer to the bound $O\left(\frac{\log n}{\eps}\right)$ for the case fractional solutions are allowed~\cite{KuhnMW16}. \fullornot{We prove the following theorems in \cref{sect:packing-main,sect:covering-main}}{See the full version \cite[Sections 4 and 5]{fullversion} for the proof of the following theorems}.

\begin{theorem}[Algorithm for packing problems]\label{thm:packing}
There is an algorithm that computes a $(1-\eps)$-approximate solution for any packing integer linear programming problem in the $\LOCAL$ model in $O\left(\frac{\log^3(1/\eps) \log n}{\eps}\right)$ rounds with probability $1 - 1/\poly(n)$.
\end{theorem}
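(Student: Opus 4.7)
The plan is to combine the new low-diameter decomposition of \Cref{thm:LDD} with the cluster-local solve template of Bodlaender, Halld\'{o}rsson, Konrad and Kuhn sketched just after \Cref{def:LDD}, applied to the hypergraph $H$ associated with the ILP and with the maximum-independent-set routine replaced by an exact local solve of the packing subproblem. Concretely, I would build $H$ per \Cref{def:hypergraph}, apply \Cref{thm:LDD} to $H$ with parameter $\eps' = \Theta(\eps)$ to obtain an $(\eps', O(\log n / \eps))$ decomposition $V(H) = D \cup S_1 \cup \cdots \cup S_k$ in the claimed round count, and then, in parallel for each $i$, have a leader of $S_i$ gather the whole induced subhypergraph together with every constraint that touches $S_i$ in $O(\log n / \eps)$ rounds and solve the resulting packing subproblem exactly by brute force. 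Variables in $D$ are fixed to $0$, which is always feasible for packing because the entries of $\vec{A}$ and $\vec{b}$ are nonnegative.

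Correctness is immediate: since the clusters are mutually non-adjacent in $H$, every constraint $e_j$ is either contained in $D \cup S_i$ for a unique $i$ or entirely in $D$, so concatenating the cluster solutions yields a globally feasible $\vec{x}$. For the approximation ratio, fix any optimum $\vec{x}^\ast$; its restriction to each $S_i$ (with entries in $D$ zeroed out) is feasible for the $i$th local subproblem, so the computed local optimum has value at least $\sum_{v \in S_i} w_v x_v^\ast$. Summing over $i$ gives
\begin{equation*}
\mathrm{ALG} \;\geq\; \sum_{v \notin D} w_v x_v^\ast \;=\; \mathrm{OPT} - \sum_{v \in D} w_v x_v^\ast,
\end{equation*}
and it remains to show that the weighted loss $\sum_{v \in D} w_v x_v^\ast$ is at most $\eps \cdot \mathrm{OPT}$ with probability $1 - 1/\poly(n)$.

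Bounding this weighted loss is the main obstacle, because \Cref{thm:LDD} as stated controls only the unweighted cardinality $|D|$ against the whole vertex set, whereas the loss is weighted and restricted to the unknown support of $\vec{x}^\ast$. The plan to overcome this is to exploit the standing assumption ${\lVert \vec{w}\rVert}_1 = \poly(n)$ via a duplication reduction: build an auxiliary hypergraph $H'$ in which each $v \in V(H)$ is replaced by $w_v$ unit-weight copies, the copies of a given $v$ are bundled by a small gadget that forces all of them into the same cluster (or all into $D'$) under any LDD of diameter $\Omega(\log n / \eps)$, and the original hyperedges are lifted to act on all copies of their members. Then $|V(H')| = \poly(n)$, deletion of $v$ in $H$ corresponds to the atomic deletion of its copies in $H'$, and the loss of interest equals the intersection of the $H'$-deleted set with the $\mathrm{OPT}$-copies, whose total size is exactly $\mathrm{OPT}$. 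Running \Cref{thm:LDD} on $H'$, combined with the per-vertex bound $\Pr[v \in D'] \leq \eps'$ inherent to the random-shift construction underlying \Cref{thm:LDD}, and a concentration argument across the polynomially many copies (using negative correlation or bounded-differences structure afforded by random-shift LDDs), should give the desired $\eps \cdot \mathrm{OPT}$ bound after choosing $\eps'=\Theta(\eps)$. The total round count is dominated by the LDD construction plus one $O(\log n / \eps)$-round local gather-and-solve, matching the claimed $O(\log^3(1/\eps)\log n / \eps)$.
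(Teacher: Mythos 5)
Your reduction to \cref{thm:LDD} correctly isolates the hard step, and you name it yourself: bounding $\sum_{v \in D} w_v x_v^\ast$ by $\eps \cdot \mathrm{OPT}$. But the duplication-plus-concentration route you propose does not close that gap. Duplicating each $v$ into $w_v$ unit copies only converts the weighted loss into an unweighted one measured against $|V(H')| = \lVert \vec{w}\rVert_1$; the guarantee of \cref{thm:LDD} on $H'$ is $|D'| \leq \eps' |V(H')|$, which yields $\sum_{v\in D} w_v x_v^\ast \leq \eps' \lVert\vec{w}\rVert_1$, and $\mathrm{OPT}$ can be polynomially smaller than $\lVert\vec{w}\rVert_1$ (a dense graph with a tiny maximum independent set already defeats this). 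The residual hope you place in a per-vertex bound $\Pr[v\in D']\leq \eps'$ plus ``negative correlation or bounded-differences structure'' is exactly the step that fails: the paper's \cref{sect:LDD} exhibits graph families on which the random-shift decompositions delete essentially \emph{all} vertices simultaneously with probability $\Omega(\eps)$, so the deletion indicators of the $\mathrm{OPT}$-copies are strongly positively correlated and an in-expectation bound of $\eps'\cdot\mathrm{OPT}$ does not concentrate. To get concentration at scale $\eps\cdot\mathrm{OPT}$ one needs bounded dependence at that scale, i.e., every $O\left(\frac{\log n}{\eps}\right)$-radius ball must carry at most $O\left(\frac{\eps\,\mathrm{OPT}}{\log n}\right)$ of the optimum's weight --- a sparsification measured against the \emph{unknown} $\vec{x}^\ast$, which \cref{thm:LDD} (whose sparsification is measured against $|V|$) does not provide, even after duplication.

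This is challenge (C2) in \cref{sect:method}, and it is where the paper's proof departs from yours. The actual argument re-runs the ball-growing-and-carving machinery with two changes: a preparation step computes $\Theta(\log \nn)$ independent low-diameter decompositions and uses the local optimal packing values $W(P^{\local}_C, C)$ and $W(P^{\local}_{S_C}, S_C)$ as proxies for the unknown optimum, so that the sampling probability of a cluster is proportional to its estimated share of $\mathrm{OPT}$ (\cref{lem:local_vs_global_P} makes these proxies comparable to $W(P^\ast,\cdot)$); and the carving rule deletes the layer minimizing the \emph{local packing value} rather than the vertex count, so each carve costs at most a $1/R$ fraction of $W(P^\ast, N^{b_i}(C))$. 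Only after this optimum-relative sparsification does a weighted bounded-dependence Chernoff bound (\cref{lem:Chernoff_BD_weighted}) apply in the final phase. You would need to import this estimation-and-resampling machinery; the black-box use of \cref{thm:LDD} on a duplicated hypergraph cannot substitute for it.
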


\begin{theorem}[Algorithm for covering problems]\label{thm:covering}
There is an algorithm that computes a $(1+\eps)$-approximate solution for any covering integer linear programming problem in the $\LOCAL$ model in $O\left(\frac{\left(\log\log n + \log(1/\eps)\right)^3 \cdot \log n}{\eps}\right)$ rounds with  probability  $1 - 1/\poly(n)$.
\end{theorem}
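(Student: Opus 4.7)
The plan is to generalize the LDD-based strategy used in \cref{thm:packing} to the covering setting, addressing the central asymmetry that deleted vertices cannot simply be dropped: every constraint they touch must still be satisfied. The algorithm has two layers. First, compute an $(\eps', O(\log n/\eps'))$ low-diameter decomposition from \cref{thm:LDD} on an appropriate power of the hypergraph $H$, with deletion parameter $\eps' = \Theta(\eps)$; each cluster $S_i$ then brute-force solves the local subproblem restricted to the constraints $e_j \subseteq S_i$. Since the clusters are pairwise non-adjacent in $H$, every constraint not incident to the deleted set $D$ is fully contained in exactly one cluster, and is therefore satisfied optimally.

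The remaining task is to satisfy the constraints that touch $D$ without inflating the objective by more than an $\eps$-fraction of $\mathrm{OPT}$. The approach is to build a backup integral cover for the residual subinstance. Using the assumption $\lVert \vec w \rVert_1 = \poly(n)$, partition the variables into $O(\log n)$ weight buckets and, within each bucket, note that the bucket's LP value is approximately proportional to the number of variables involved, so the high-probability guarantee $|D| \leq \eps' |V(H)|$ of \cref{thm:LDD} translates cleanly into an $\eps'$-fraction LP-cost bound on the deleted portion. Rounding this fractional solution to an integral one incurs the standard $O(\log n)$ overhead, which is then absorbed by scaling $\eps'$ down by a $(\log \log n + \log(1/\eps))^{O(1)}$ factor, or equivalently by running $O(\log \log n + \log(1/\eps))$ iterative refinement phases of the LDD-plus-backup procedure that geometrically shrink the residual uncovered mass. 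Since $\log(1/\eps') = O(\log \log n + \log(1/\eps))$ under this choice, plugging into \cref{thm:LDD} yields the claimed round complexity $O\!\left(\frac{(\log \log n + \log(1/\eps))^3 \log n}{\eps}\right)$.

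The main obstacle is upgrading the cost bound from an in-expectation guarantee to a $1-1/\poly(n)$ guarantee. The standard concentration argument fails because the LP cost attributable to $D$ can be dominated by a few heavy deleted variables; the weight-bucketing preprocessing above is what enables a per-bucket translation from ``number of deleted vertices'' to ``LP cost incurred,'' after which a union bound over $O(\log n)$ buckets and the $O(\log \log n + \log(1/\eps))$ refinement phases preserves the overall $1-1/\poly(n)$ success probability inherited from \cref{thm:LDD}. The delicate step is balancing the parameters so that the integral rounding overhead, the per-bucket LP-cost bound, and the LDD round complexity all line up without any extra polylogarithmic factor sneaking outside the cube; achieving this balance is where the core technical effort of the proof lies.
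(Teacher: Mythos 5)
Your proposal has a genuine gap at its core. The guarantee you import from \cref{thm:LDD} is $|D| \leq \eps' |V(H)|$, i.e., the number of deleted \emph{vertices} is small relative to the number of \emph{variables} --- not relative to $W(Q^*,V)$. For covering problems these two quantities are unrelated: in the paper's own motivating example (a vertex $v$ adjacent to $s$ degree-one vertices in a dominating set instance), $\mathrm{OPT}=O(1)$ while $|V|=s+1$, and deleting the single vertex $v$ forces the backup cover to pay $s \gg (1+\eps)\,\mathrm{OPT}$. Your weight-bucketing step does not repair this, because the claim that ``the bucket's LP value is approximately proportional to the number of variables involved'' is false --- a covering optimum can concentrate all its mass on a vanishing fraction of the variables in a bucket, so an $\eps'$-fraction bound on deleted vertex count gives no bound on the cost of covering the residual constraints. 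This is exactly challenge (C2): everything must be measured against the unknown $Q^*$. Separately, the claim that the ``standard $O(\log n)$'' LP-rounding overhead can be ``absorbed by scaling $\eps'$ down by a $(\log\log n + \log(1/\eps))^{O(1)}$ factor'' does not work arithmetically: a multiplicative $\Theta(\log n)$ loss applied to any constant fraction of $\mathrm{OPT}$ already violates $(1+\eps)$-approximation, and the proposed iterative refinement faces the same obstruction in every phase, so the residual mass need not shrink geometrically.

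The paper avoids both problems by never deleting vertices in the covering case. It replaces the vertex-LDD with a \emph{sparse cover} variant (\cref{thm:LDD_covering}) in which every hyperedge is fully contained in some cluster and each vertex lies in a geometrically-distributed number of clusters; local solutions are then OR-ed together, costing $\sum_v X_v Q^*(v) w_v$. The ball-growing-and-carving step fixes a local optimal assignment on a cheap pair of layers and deletes only the hyperedges \emph{already satisfied} by that assignment, charging the fixed weight against $W(Q^*,V)$ via sampling probabilities calibrated to local covering optima estimated from $O(\log n)$ independent preparatory sparse covers. The $(\log\log n + \log(1/\eps))^3$ factor arises because Phase 2 of the packing algorithm (which tolerates ``bad'' unclustered vertices) must be dropped, so the number of carving iterations grows to $t = O(\log\log n + \log(1/\eps))$ --- not from a rescaling of $\eps'$ as in your account.
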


 Our results  imply that for many well-studied fundamental distributed problems, such as maximum independent set, maximum cut, minimum vertex cover, minimum dominating set, and many of their variants, $(1 \pm \eps)$-approximate solutions  can be found in  $\tilde{O}\left(\frac{\log n}{\eps}\right)$ rounds with probability $1 - 1/\poly(n)$ in the $\LOCAL$ model. To the best of our knowledge, such algorithms for these problems were not known prior to our work.
 This upper bound is \emph{nearly tight} in that   for several problems, there is an $\Omega\left(\frac{\log n}{\eps}\right)$ lower bound for computing an $(1\pm \eps)$-approximate solution.
 
%To complement the above results, we also show that $(1\pm \eps)$-approximate solutions for maximum independent set, maximum matching, minimum vertex cover, and minimum dominating set require $\Omega\left(\frac{\log n}{\eps}\right)$ rounds to compute in the $\LOCAL$ model, showing that our algorithms are nearly optimal for these problems. 

\begin{theorem}[Lower bounds]\label{thm:lowerbound-main}
The following lower bound holds for these problems:
\begin{itemize}
    \item maximum independent set,
    \item maximum cut,
    \item minimum vertex cover,
    \item minimum dominating set.
\end{itemize}
There is a universal constant $0 < \eps_0 < 1$ such that for any randomized algorithm $\Algo$ in the $\LOCAL$ model whose expected value of  solution is within an $1\pm \eps$ factor to the optimal solution, with $0 < \eps \leq \eps_0$,
 the round complexity of $\Algo$ is $\Omega\left(\frac{\log n}{\eps}\right)$.
\end{theorem}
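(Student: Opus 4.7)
My plan is to derive all four lower bounds from a common principle: on carefully chosen high-girth graphs, a $T$-round randomized $\LOCAL$ algorithm must, by indistinguishability of locally-isomorphic neighborhoods, treat symmetric vertices identically in expectation, and this symmetry combined with the combinatorial optimum forces a quantitative lower bound on $T$. First I would work out the bound for \textbf{maximum independent set} in detail and then reduce the other three problems to it on the same family of hard instances.

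For maximum independent set, take a $d$-regular bipartite graph $G$ on $n$ vertices with girth $g = \Omega(\log_d n)$; such graphs exist, e.g., via explicit Ramanujan constructions or via a random bipartite $d$-regular graph. For any $T$-round randomized algorithm $\Algo$ with $T < g/2$, fix the random tape and observe that the depth-$T$ view of every vertex is isomorphic to the rooted $d$-regular tree; averaging over tapes, every vertex is selected with the same probability $p$. A short tree calculation bounds the density of any independent set on the $d$-regular tree by $1/2 + O(1/d)$, while the optimum on the bipartite graph is $n/2$. Taking $d = \Theta(1/\eps)$ matches the slack to $\Theta(\eps)$ and already gives $T = \Omega(\log n / \log(1/\eps))$. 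To boost this to the claimed $\Omega(\log n / \eps)$, I would nest the construction in the style of Kuhn--Moscibroda--Wattenhofer: replace each vertex by a gadget of diameter $\Theta(1/\eps)$ whose two ``halves'' are locally indistinguishable, so that any algorithm must explore to radius $\Omega(1/\eps)$ inside each gadget before it can determine which side to pick. The outer girth then contributes the $\log n$ factor and the gadget depth contributes the $1/\eps$ factor, multiplicatively.

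The remaining three problems are then handled by problem-specific reductions on the \emph{same} hard instances. \textbf{Minimum vertex cover} follows by complementarity: because the hard instance has an optimum independent set of size $n/2$ and hence an optimum vertex cover of size $n/2$, a $(1+\eps)$-approximate vertex cover yields a $(1-O(\eps))$-approximate independent set on the same graph. \textbf{Maximum cut} uses that the hard instance is bipartite with optimum equal to $|E|$; by the same symmetry each edge is cut with a common probability $q$ that is bounded away from $1$ by $\Omega(\eps)$ unless $T$ is large. For \textbf{minimum dominating set} I would use a parallel high-girth construction in which the unique small optimal dominating set lies on a ``marked'' side that can only be localized by exploring to radius $\Omega(\log n / \eps)$. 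Throughout, I would handle randomization and the in-expectation guarantee via a standard averaging over random tapes followed by Markov's inequality. The main obstacle is the nested gadget construction in the second paragraph: one must simultaneously maintain high outer girth (for the $\log n$ factor) and force $\Omega(1/\eps)$ exploration depth per gadget (for the $1/\eps$ factor) while still controlling the optimum of each problem on the composite graph. Everything else reduces to routine tree calculations and to reproving the symmetric-view lemma for the local views involved.
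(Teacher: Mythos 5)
Your starting point --- indistinguishability of local views on high-girth regular graphs --- is indeed the same as the paper's (both build on the argument of Bodlaender et al.~for maximum independent set), but two of your key steps do not go through as described. First, the claim that ``a short tree calculation bounds the density of any independent set on the $d$-regular tree by $1/2 + O(1/d)$'' is both in the wrong direction (an upper bound \emph{above} $1/2$ yields no contradiction with the bipartite optimum of exactly $n/2$) and not provable by looking at the tree alone: the tree is bipartite and locally admits independent sets of density well above $1/2$, so the only way to bound the per-vertex selection probability $p^\ast$ is to exhibit a \emph{second} $d$-regular graph family with the same girth but provably small independence number and run the same algorithm there. The paper does this with the non-bipartite LPS Ramanujan graphs $X^{17,q}$, whose independence number is at most $\frac{2\sqrt{17}}{18}n < 0.92\cdot\frac{n}{2}$; without such a family your symmetry argument gives nothing. (Note also that the gap this yields is a \emph{constant}, roughly $1/2 - O(1/\sqrt{d})$, not $\Theta(1/d)$, so your plan to ``match the slack to $\Theta(\eps)$'' by taking $d = \Theta(1/\eps)$ is not what the known construction provides.)

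Second, and more importantly, the step that converts the $\Omega(\log n)$ constant-approximation bound into $\Omega\!\left(\frac{\log n}{\eps}\right)$ for $(1\pm\eps)$-approximation is exactly the part you leave open (``the main obstacle is the nested gadget construction''). A KMW-style vertex gadget is not needed and is not what works here; the paper's device is much simpler: subdivide every edge of the $18$-regular bipartite instance into a path of length $2x+1$ with $x=\Theta(1/\eps)$. The subdivided graph is still bipartite with optimum $\frac{18x+1}{2}|V|$, an $o(\log n/\eps)$-round algorithm on it can be simulated in $o(\log n)$ rounds on the original graph, and a counting argument (at most $x$ internal path vertices per subdivided edge can lie in an independent set, one fewer if both endpoints do) shows that an additive-$\eps|V(G^x)|$ solution on $G^x$ projects back to a $0.92$-approximation on $G$, contradicting the base lower bound. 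Analogous subdivision arguments (with a parity-of-cut-edges count for maximum cut) and a concrete reduction for dominating set (attach a new degree-$2$ vertex to each edge, so that $\gamma(G^\ast)=\tau(G)$) handle the other three problems; your sketches for maximum cut and dominating set omit both the second hard family (needed to bound the cut fraction away from $1$) and these reductions. Until the subdivision step (or a worked-out substitute) is supplied, the proposal establishes at best $\Omega(\log n)$, not $\Omega\!\left(\frac{\log n}{\eps}\right)$.
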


The lower bounds of \cref{thm:lowerbound-main} apply to randomized algorithms whose approximation guarantee only holds \emph{in expectation}. By standard reductions, these lower bounds also apply to randomized algorithms that succeed \emph{with high probability} and deterministic algorithms.

We emphasize that the $\Omega\left(\frac{\log n}{\eps}\right)$ lower bounds of \cref{thm:lowerbound-main} and their proofs are not entirely new.
For the \emph{minimum vertex cover} problem, an  $\Omega(\log n)$ lower bound for constant approximation was shown by G{\"{o}}{\"{o}}s and Suomela~\cite{GoosS14}. By subdividing edges into degree-2 paths, this lower bound was extended to $\Omega\left(\frac{\log n}{\eps}\right)$ for $(1+\eps)$ approximation by Faour, Fuchs and Kuhn~\cite{FFK21}.  For the \emph{maximum independent set} problem, an  $\Omega(\log n)$ lower bound for constant approximation was shown
 by Bodlaender, Halld\'{o}rsson,   Konrad, and Kuhn~\cite{BHKK16}, and a similar   $\Omega\left(\frac{\log n}{\eps}\right)$ lower bound  for $(1-\eps)$ approximation was shown by  Balliu, Kuhn, and Olivetti~\cite{BKO21} in the context of fractional coloring.
We still include a proof of \cref{thm:lowerbound-main} in \fullornot{\cref{sect:lowerbounds}}{the full version \cite[Appendix B]{fullversion}} for the sake of completeness.

\subsection{Our Method}\label{sect:method}
In the subsequent discussion, we say that an event happens \emph{with high probability} if it happens with probability at least $1 - n^{-c}$ for some suitably large constant $c \geq 1$.

As discussed earlier, a low-diameter decomposition algorithm that works \emph{in expectation} can be used to approximately solve packing and covering ILPs \emph{in expectation}. To turn such an algorithm into one that successfully computes an
$(1\pm\eps)$-approximate solution with high probability, we need the following as the first step:
\begin{description}
    \item[(C1)] Design an algorithm that finds an  $\left(\epsilon, O\left(\frac{\log n}{\epsilon}\right)\right)$ low-diameter decomposition in $\tilde{O}\left(\frac{\log n}{\epsilon}\right)$ rounds such that the bound $|D| \leq \eps|V(G)|$ on the number of unclustered vertices holds with probability $1 - 1/\poly(n)$. 
\end{description}

In addition, we need to overcome the following challenge:
\begin{description}
    \item[(C2)] For any fixed optimal solution $I^\ast$ to the considered ILP, which is \emph{unknown} to the algorithm, we need to strengthen the guarantee $|D| \leq \eps|V(G)|$ to $|D \cap I^\ast| \leq \eps |I^\ast|$, and this also needs to hold with high probability.
\end{description}
%\znote{It is not very natural to me why these two challenges show up in parallel. In the sense that if C2 is overcomed, then C1 becomes kind of irrelevant?} \ynote{I agree. I would say that c1 is like the first step, and c2 is like the second step, and they require different techniques. Not sure what is the best way to explain that in the intro...}\znote{Perhaps something like:``to turn the above ... into ... with high probability, one possibility is via the following two step approach'' Then Step 1 we need a LDD w.h.p. Step 2, show that $|D \cap I^\ast| \leq \eps  |I^\ast|$.} \ynote{yes, but these two steps are more of the steps in doing research, not steps in the proof or algorithms. But I agree with you that it can be confusing the state these two things in parallel, and so we should find some other way to introduce these two challenges.} %\znote{Yeah we will figure that out later i guess}\znote{How about this, for C2 say something like ``strengthen the guarantee $|D| \leq \eps|V(G)|$ to $|D \cap I^\ast| \leq \eps |I^\ast|$ with high probability.'' } \ynote{I revised it a bit. Does it look ok to you now? Or should we try a different presentation that does not list these two things in parallel?}\znote{the current one looks ok to me.}

The main idea behind our solution to (C1) is a new \emph{graph sparsification} algorithm based on iterative ball-growing-and-carving with random choices of centres. We will show that after applying the sparsification procedure, the remaining part of the graph is sufficiently sparse that if we run the existing low-diameter decomposition algorithms~\cite{EN16,LinialS93,MPX13}, then the guarantee on the number of unclustered vertices holds with probability $1 - 1/\poly(n)$, due to a Chernoff bound for variables with limited dependence~\cite{Pem01}.
For (C2), we develop a new \emph{sampling technique} based on pre-computing $O(\log n)$ independent low-diameter decompositions that allows us to simulate sampling from an \emph{arbitrary fixed unknown optimal solution} for the underlying ILP problem. In \cref{sect:technicaloverview-1,sect:technicaloverview-2,sect:technicaloverview-3}, we present a  technical overview of our solutions and how they lead to the proofs of \cref{thm:packing,thm:covering,thm:LDD}.

\subsubsection{Low-Diameter Decompositions}\label{sect:technicaloverview-1}

\znote{some changes made:}

Our starting point is a Chernoff bound for variables with limited dependence \cite{Pem01}. If we were able to show that the event of a vertex being unclustered is dependent on at most $O\left(\frac{\eps|V|}{\log n}\right)$ other such events, we can derive that with probability $1 - 1/\poly(n)$, the number of unclustered vertices is within a constant factor of its expectation $\mu \leq \eps|V(G)|$.

For any $k$-round algorithm in the $\LOCAL$ model, if two vertices $u$ and $v$ satisfy $\dist(u,v) \geq 2k+1$, then the local output of $u$ and the local output of $v$ are independent. As the low-diameter decomposition algorithms of~\cite{EN16,LinialS93,MPX13} take $k = O\left(\frac{\log n}{\eps}\right)$ rounds, it suffices to ensure that the $2k$-radius neighborhood $N^{2k}(v)$ of each vertex $v \in V(G)$ contains at most $O\left(\frac{\eps |V(G)|}{\log n}\right)$ vertices.

For ease of presentation, here we will assume that $n = |V(G)|$ is known to all vertices. Later we will discuss how to extend our algorithm to the case that only a polynomial upper bound $\nn \geq n$ is known.  Our sparsification algorithm consists of $O(\log (1/\eps))$ iterations of ball-growing-and-carving. Specifically, we set $t = \log(1/\eps) + O\left(1\right)$ and $R = O\left({\frac{t\log n }{\eps}}\right)$, and we decompose the interval $[R+1, (t+2)R]$ into $t+1$ intervals $I_{t+1}, I_{t},\ldots, I_1$ of size $R$ by setting $I_i := [a_i, b_i] =[(t-i+2)R+1, (t-i+3)R]$. We select the parameter $R$ in such a way that $R \geq 4k$, where $k = O\left(\frac{\log n}{\eps}\right)$ is the round complexity of an existing low-diameter decomposition, as we may freely assume $\eps$ is at most some universal constant $0 < \eps_0 < 1$.

\paragraph{Phase 1.} The first phase of our algorithm consists of $t$ iterations of ball-growing-and-carving. Specifically, for $1 \leq i \leq t$, in the $i$th iteration, each vertex $v$ declares that it is a centre with probability $p_i = \Theta\left(\frac{2^i \log n}{n}\right)$. For each centre $v$, it runs the following ball-growing-and-carving procedure: Find $j^\ast \in I_i$ that minimizes $|N^{j^\ast}(v) \setminus N^{j^\ast-1}(v)|$, \emph{delete}  the vertices in $N^{j^\ast}(v) \setminus N^{j^\ast-1}(v)$, and \emph{remove} the vertices in $N^{j^\ast}(v)$ that are not deleted by the ball-growing-and-carving procedure due to other centres.
The removed vertices form connected components of weak diameter at most $2b_i = O(t\cdot R) = O\left(\frac{\log^2(1/\eps) \cdot \log n}{\eps}\right)$, so they are considered clustered.
The deleted vertices are considered unclustered and they will not be considered in the subsequent steps of the algorithm.

We first show that the algorithm indeed sparsifies the graph.
Due to our choice of $p_i$, by a Chernoff bound and a union bound, we may show that, with high probability, the $b_{i+1}$-radius neighborhood of each vertex $v \in V(G)$ that still remain in the graph after the $i$th iteration contains at most  $O\left(\frac{n}{2^i}\right)$ vertices. The reason is that if the $b_{i+1}$-radius neighborhood of $v$ contain  $\omega\left(\frac{n}{2^i}\right)$ vertices after the $i$th iteration, then the $a_{i}$-radius neighborhood of $v$ at the beginning of the $i$th iteration also contains at least $\omega\left(\frac{n}{2^i}\right)$ vertices, as $a_i > b_{i+1}$, and so the expected number of centres in the $a_{i}$-radius neighborhood of $v$ during the $i$th iteration is at least $\omega\left(\frac{n}{2^i}\right) \cdot p_i = \omega(\log n)$, meaning that $v$ is either removed or deleted during the $i$th iteration with high probability.

We also need to show that the number of deleted vertices is small.
We know that at the beginning of the $i$th iteration, with high probability, the $b_{i}$-radius neighborhood of each vertex $v \in V(G)$  contains at most  $O\left(\frac{n}{2^{i-1}}\right)$ vertices, so the expected number of centres $u$ such that $v \in N^{b_i}(u)$ during the $i$th iteration is $O(\log n)$. This bound also holds with high probability, due to a Chernoff bound. For each centre $u$, the number of vertices deleted due to the ball-growing-and-carving procedure of $u$ is at most $1/R$ fraction of the size of $N^{b_i}(u)$, so we conclude that the total number of deleted vertices in the $i$th iteration is at most $n \cdot O(\log n) \cdot \frac{1}{R} = O\left(\frac{\eps n}{t}\right)$. By choosing the leading constants in the $O(\cdot)$-notation properly, we can make sure that the number of vertices deleted during the $t$ iterations is at most a small constant fraction of $\eps n$.

\paragraph{Phase 2.} After finishing the algorithm of Phase 1,  the $b_{t+1}$-radius neighborhood of each vertex $v \in V(G)$  contains at most  $O\left(\frac{n}{2^{t}}\right) = O\left(\eps n\right)$ vertices. Recall that our goal is to sparsify the graph so that the size of the $R$-radius neighborhood for each remaining vertex is $O\left(\frac{\eps n}{\log n}\right)$. One strategy to achieve this goal is to simply set the number of iterations in Phase 1 to be $t = O(\log(1/\eps) + \log \log n)$, but there is a more efficient solution: Our algorithm of Phase 2 consists of just  \emph{one  iteration} of ball-growing-and-carving using the sampling probability $p_{t+1} = \Theta\left(\frac{2^{t+1} \log n \log (1/\eps)}{n}\right)$ and the interval $I_{t+1} = [R+1, 2R]$.

To show that one iteration with the above choice of sampling probability suffices, in the analysis, we consider an arbitrary partition of the graph at the end of Phase 1 into several \emph{dense} clusters and one \emph{sparse} part with the following properties.
\begin{itemize}
    \item Each dense cluster has weak diameter at most $R$ and contains $\Theta\left(\frac{\eps n}{\log n}\right)$ vertices. 
    \item For each vertex in the sparse part, its $(R/2)$-radius neighborhood contains  $O\left(\frac{\eps n}{\log n}\right)$ vertices (note that we only count neighbours in the sparse part).
\end{itemize}

Intuitively, if a vertex $v$ belongs to a sparse part, then its local sparsity is already good enough, so we just need to bound the number of vertices in the dense part that are not removed or deleted in Phase 2. These vertices are called \emph{bad vertices}.
By our choice of $p_{t+1}$ and $I_{t+1}$, for each dense cluster, with probability at least $1 - \eps$ the vertices in the entire cluster are removed or deleted during  Phase 2.
By a Chernoff bound, we may show that, with high probability, the  number of dense clusters whose members are not completely removed or deleted is $O(\log n)$, so the number of bad vertices  is at most a small constant fraction of $\eps n$.

\paragraph{Phase 3.} In the last step of our algorithm, we apply the existing low-diameter decomposition of~\cite{EN16} to the subgraph induced by the remaining vertices with a parameter $\eps'$ that is a small constant fraction of $\eps$.  After Phase 2, if a remaining vertex $v$ is not bad, then its $(R/2)$-radius neighborhood has at most $O\left(\frac{\eps n}{\log n}\right)$ vertices. As $R/2 \geq 2k$, by a Chernoff bound for variables with limited dependence, we know that if we apply an existing low-diameter decomposition to the current graph, then the guarantee on the number of unclustered vertices hold with high probability.

\paragraph{Summary.} The above algorithm computes a low-diameter decomposition such that each cluster has weak diameter $O\left(\frac{\log^2(1/\eps) \cdot \log n}{\eps}\right)$ and the bound $\eps|V(G)|$ on the number of unclustered vertices holds \emph{with high probability}. We may apply a brute-force computation for each cluster to improve the diameter bound to the ideal one, which is $O\left(\frac{\log n}{\eps}\right)$. The round complexity of our algorithm is  $O(t^2\cdot R) = O\left(\frac{\log^3(1/\eps) \cdot \log n}{\eps}\right)$. To extend this algorithm to a more realistic setting where each vertex only knows a polynomial approximation $\nn$ of the actual number $n = |V(G)|$ of vertices, we may simply let each vertex $v$ compute its estimate $n_v$ of $n$ by $n_v := |N^{4tR}(v)|$ and change its sampling probability to $p_{v,i} = \Theta\left(\frac{2^i \log \nn}{n_v}\right)$. Intuitively, this works because $4tR$ is large enough to cover all vertices that are relevant to $v$ throughout the algorithm.

\paragraph{Remark.} Note that the low-diameter decomposition algorithm of~\cite{changS19} mentioned earlier is also based on a Chernoff bound for variables with limited dependence. The strategy of~\cite{changS19} is to directly compute a  partition $V = V_D \cup V_S$ in such a way that each connected component of $V_D$ already has small diameter and $V_S$ is sufficiently sparse that we may apply a Chernoff bound for variables with limited dependence. The algorithm of~\cite{changS19} inherently needs significantly more than $O\left(\frac{\log n}{\eps}\right)$ rounds, as the diameter of a connected component of $V_D$ can be as large as $O\left(\frac{\log^2 n}{\eps^2}\right)$, so their method is unsuitable for us, as we aim for a round complexity that is nearly $O\left(\frac{\log n}{\eps}\right)$.

\znote{made a few modifications in the packing and covering parts:}

\subsubsection{Packing Problems}\label{sect:technicaloverview-2}

For  ease of presentation, here we take the maximum independent set problem as an example, and we let $I^\ast \subseteq V(G)$ denote an arbitrary fixed maximum independent set of the input graph $G$. Our goal is to modify the approach presented in \cref{sect:technicaloverview-1} so that, with high probability, at most $\eps$ fraction of $I^\ast$ are unclustered in the low-diameter decomposition computed by the distributed algorithm. As discussed earlier, given such a low-diameter decomposition, an $(1-\eps)$-approximate maximum independent set can be obtained by taking the union of the maximum independent $I_j$ of cluster $S_j$, over all clusters $S_1, S_2, \ldots, S_k$ in the low-diameter decomposition.

\znote{added:}

Following the paradigm of our low-diameter decomposition algorithm, it suffices to ``sparsify'' the graph such that any $O(R)$-radius neighbourhood contains not too many vertices in $I^\ast$, while the    ``sparsification'' process should not remove too many vertices in $I^\ast$. That is, everything should be measured against the number of vertices in $I^\ast$, instead of the number of the vertices as in the low-diameter decomposition case. 

If $I^\ast$ is \emph{known} to the algorithm, then we may simply  run the algorithm of \cref{sect:technicaloverview-1} restricted to $I^\ast$. That is, only the vertices in $I^\ast$ may sample themselves to be the centres, and when they do ball-growing-and-carving, they only count the vertices in $I^\ast$ when they decide which layer to cut.

To deal with the issue that $I^\ast$ is \emph{unknown} to the algorithm, we need to have a rough estimate of $I^\ast$. More specifically, we need to roughly know which neighbourhood of the graph contains a large number of vertices from $I^\ast$ so that we can handle them via the ball-growing-and-carving procedure. Towards achieving this, we will do a \emph{preparation step} that computes $\Theta(\log \nn)$ independent low-diameter decompositions, using \cite{EN16} and not our algorithm, with $\eps' = \frac{1}{2}$. we write $\mathcal{C}$ to denote the set of all clusters in these  $\Theta(\log \nn)$ low-diameter decompositions. We emphasize that each of these clusters would operate fully independently throughout the algorithm. Similar to the computation of the estimate $n_v$, each cluster $C \in \mathcal{C}$  will compute an estimate of the independence number by calculating the size $W(P^{\local}_{S_C},S_C)$ of a maximum independent set $P^{\local}_{S_C}$ of the subgraph induced by its neighbourhood $S_C := N^{8tR}(C)$. Here the $P^{\local}_{U}$ denotes an optimal solution of the underlying packing problem $P$ restricted to the subset $U$, and $W(P,U)$ denotes the weight of the solution $P$ restricted to the variables in $U$.
Each cluster $C \in \mathcal{C}$  will also calculate its own weight by the size $W(P^{\local}_C,C)$ of a maximum independent set $P^{\local}_C$ of the subgraph induced by $C$. Note that this is an overestimate of $|I^\ast \cap C|$.

With these estimates in hand, we can now modify and run the algorithm of \cref{sect:technicaloverview-1} with respect to $I^\ast$. In particular, we will do ball-growing-and-carving from these clusters and not from individual vertices, and the sampling probability for each cluster $C$ in iteration $i$ is set to $p_{C,i} = \Theta\left(\frac{2^i W(P^{\local}_C,C)}{W(P^{\local}_{S_C},S_C)}\right)$, which is the weight of $C$ divided by the estimate computed by $C$ and multiplied by $2^i$. Intuitively, this measures the relative ``importance'' of this cluster in computing the independence set. We need $\Theta(\log \nn)$ independent low-diameter decompositions to ensure that this measure is close to the true value on average. As a result of using $\Theta(\log \nn)$ low-diameter decompositions, the $\Theta(\log \nn)$-factor in the definition of $p_{v,i}$ in \cref{sect:technicaloverview-1} is also removed.

We will show that this new sampling method approximates the restriction of the algorithm of \cref{sect:technicaloverview-1} to $I^\ast$ well, so we may show that, with high probability, at most $\eps$ fraction of $I^\ast$ are unclustered in the low-diameter decomposition computed by the algorithm with the new sampling method.

\subsubsection{Covering Problems}\label{sect:technicaloverview-3}

For covering problems, the $(1-\eps)$-approximate maximum independent set algorithm based on low-diameter decomposition discussed earlier does not work, as we may not simply set the variables corresponding to the unclustered vertices to zero. For example, consider the minimum dominating set problem. Suppose there is a vertex $v$ adjacent to $s$ vertices $u_1, u_2, \ldots, u_s$ of degree one. If $s$ is unclustered and each of $u_1, u_2, \ldots, u_s$ is a singleton cluster, then the cost of not letting $v$ join the dominating set is that all of $u_1, u_2, \ldots, u_s$ have to join the dominating set.

To deal with this issue, we will consider a variant of low-diameter decomposition of a hypergraph $H$, which can be obtained by modifying the  algorithm  and analysis of~\cite{MPX13}. In this variant of low-diameter  decomposition, the goal is to find a set of clusters $S_1, S_2, \ldots, S_k$ with small weak diameter satisfying the following conditions.
\begin{itemize}
    \item For each hyperedge $e \in E(H)$, there exists $1 \leq i \leq k$ such that $e$ completely covered in $S_i$.
    \item For each vertex $v \in V(H)$, the number of clusters that contains $v$ is dominated by a geometric random variable with parameter $e^{-\eps}$.
\end{itemize}
That is, $\{S_1, S_2, \ldots, S_k\}$ is a \emph{sparse cover} of all hyperedges, and it is sparse in the sense that for each vertex $v \in V(H)$, the expected number of clusters that contains $v$ is at most $\frac{1}{e^{-\eps}} \approx 1+\eps$.
%\znote{Actually $\frac{1}{e^{-\eps}} \geq 1+\eps$, but here I think it should be fine to state $\frac{1}{e^{-\eps}} \approx 1+\eps$.}

There will be two main differences between our algorithms for packing problems and covering problems.
The first difference is that for packing, we will replace the low-diameter decompositions in the preparation step and the last step of the algorithm in \cref{sect:technicaloverview-2} with the variant discussed above. The second difference is that due to the reason that we cannot handle unclustered vertices, we cannot  tolerate the bad vertices due to Phase 2 in \cref{sect:technicaloverview-1,sect:technicaloverview-2}, so we will have to skip Phase 2 by increasing the number of iterations of Phase 1 from $t = O(\log(1/\eps))$ to $t = O(\log(1/\eps) + \log \log n)$, causing the round complexity of \cref{thm:covering} to be higher than that of \cref{thm:packing,thm:LDD}.

\subsection{Additional Related Work}
Awerbuch, Goldberg, Luby, and Plotkin~\cite{awerbuch89} presented  the first distributed algorithm for network decompositions: They showed that a network decomposition with $2^{O(\sqrt{\log n\cdot \log\log n})}$ colors and strong diameter $2^{O(\sqrt{\log n\cdot \log\log n})}$ can be computed in  $2^{O(\sqrt{\log n\cdot \log\log n})}$ rounds \emph{deterministically} in the $\CONGEST$ model. 

The bounds $2^{O(\sqrt{\log n\cdot \log\log n})}$ were later improved to $2^{O(\sqrt{\log n})}$ by Panconesi and Srinivasan~\cite{panconesi-srinivasan}, but their   algorithm works only in the $\LOCAL$ model as it requires messages of unbounded size.
The message size bound was improved to  $O(\log n)$ by Ghaffari~\cite{ghaffari2019MIS}, and then the algorithm was extended to power graphs by Ghaffari and Portmann~\cite{ghaffari_et_al:LIPIcs:2019:11325}.

In the randomized setting,
 Linial and Saks~\cite{LinialS93}
 gave the first $O\left(\frac{\log n}{\eps}\right)$-round  algorithm for  low-diameter decomposition with \emph{weak diameter} $O\left(\frac{\log n}{\eps}\right)$, where the bound $\eps|V(G)|$ on the number of unclustered vertices only hold \emph{in expectation}. This implies an $O\left(\log^2 n\right)$-round algorithm for network decomposition with $O(\log n)$ colors and $O(\log n)$ {weak diameter} in the $\CONGEST$  model. Based on the techniques developed by Miller, Peng, and Xu~\cite{MPX13}, Elkin and Neiman~\cite{EN16} improved these results to \emph{strong-diameter} decompositions.

 In a breakthrough result, Rozho\v{n} and Ghaffari~\cite{RozhonG20} showed that a low-diameter decomposition with \emph{weak diameter} $O\left(\frac{\log^3 n}{\eps}\right)$ can be constructed \emph{deterministically} in $O\left(\frac{\log^6 n}{\eps^2}\right)$ rounds the $\CONGEST$ model, giving the first \emph{polylogarithmic-round deterministic} network decomposition algorithm. The algorithm was extended to power graphs by Maus and Uitto~\cite{DBLP:conf/wdag/MausU21}.

 The Rozho\v{n}--Ghaffari algorithm was subsequently improved by a series of works~\cite{chang2021strong,elkin2022deterministic,GGHIR22,GhaffariGR21}, leading to an $\tilde{O}\left(\frac{\log^2 n}{\eps} \right)$-round deterministic low-diameter decomposition algorithm with \emph{strong diameter} $\tilde{O}\left(\frac{\log n}{\eps}\right)$ in the $\CONGEST$ model~\cite{GGHIR22}, which is obtained by partially derandomizing the randomized algorithm of~\cite{MPX13}.

 %, leading to an  $O\left(\frac{\log^4 n}{\eps^2}\right)$-round deterministic low-diameter decomposition algorithm with \emph{strong diameter}  $O\left(\frac{\log^2 n}{\eps}\right)$ in the $\CONGEST$ model~\cite{elkin2022deterministic}. \znote{added:} 

 %And more recently, \cite{GGHIR22} partially derandomized \cite{MPX13} to obtain an $\widetilde{O}\left(\log^2 n \right)$-round deterministic low-diameter decomposition algorithm with \emph{strong diameter} $O\left({\log n \log \log \log n}\right)$ in the $\CONGEST$ model.

\paragraph{Applications.}
Low-diameter decomposition is a very useful building block in designing  distributed algorithms, and it has found applications to distributed approximation~\cite{BHKK16,amiri2019distributed,Censor17maxcut,czygrinow2020distributed,CZYGRINOW2006JDA, Czygrinow2006ESA,Czygrinow2007cocoon,czygrinow2008fast,FFK21,faour2021approximating}, distributed property testing~\cite{Even17,levi2021property},  distributed spanner constructions~\cite{elkin2018efficient,forsterOPODIS2021}, distributed densest subgraph detection~\cite{su_et_al:LIPIcs:2020:13093}, and radio network algorithms~\cite{Chang20bfs,Chang18broadcast,CzumajD17,dani2022wake,haeupler2016faster}.

In graphs \emph{excluding a fixed minor}, a low-diameter decomposition can be computed in $\poly(1/\eps) \cdot O(\log^\ast n)$ rounds \emph{deterministically} in the $\LOCAL$ model~\cite{czygrinow2008fast}. There was a long line of works utilizing low-diameter decompositions to design efficient approximation algorithms in graphs excluding a fixed minor in the $\LOCAL$ model~\cite{amiri2019distributed,czygrinow2020distributed,CZYGRINOW2006JDA, Czygrinow2006ESA,Czygrinow2007cocoon,czygrinow2008fast}. This line of research was recently extended to the $\CONGEST$ model~\cite{chang2022narrowing,chang2023efficient}.
The \emph{randomized} low-diameter decompositions of~\cite{LinialS93,EN16,MPX13} were utilized to give approximation algorithms for maximum independent set~\cite{BHKK16} and maximum cut~\cite{Censor17maxcut}. 
The two recent works~\cite{FFK21,faour2021approximating} also utilized the low-diameter decomposition of~\cite{MPX13} to design $(1\pm \eps)$-approximation algorithms for maximum matching, minimum vertex cover, and their weighted versions. Due to the use of~\cite{LinialS93,EN16,MPX13}, same as ~\cite{BHKK16,Censor17maxcut}, the approximation ratio guarantee of the randomized algorithms in~\cite{FFK21,faour2021approximating} holds \emph{in expectation}. Note that the main focus of~\cite{FFK21,faour2021approximating} is to obtain efficient algorithms in the $\CONGEST$ model. In comparison, the focus of our work is to design efficient algorithms that are applicable to any packing and covering ILP with high probability guarantees in the $\LOCAL$ model.  

Low-diameter decomposition can also be used to construct \emph{network decompositions} and \emph{expander decompositions}, which have numerous applications in designing distributed graph algorithms. In particular, it is known~\cite{RozhonG20,GhaffariKMU18} that any \emph{sequential local} polylogarithmic-round \emph{randomized} algorithm can be converted into a \emph{deterministic} polylogarithmic-round distributed algorithm in the $\LOCAL$ model, via network decompositions.
Distributed expander decomposition has many applications to distributed subgraph finding problems~\cite{changS19,ChangS20,CensorLL20,censor2021tight,EFFKO19,izumiLM20,LeGall2021} in  $\CONGEST$. %and to distributed problems in networks excluding a fixed minor~\cite{chang2022narrowing} in the $\CONGEST$ model.

\subsection{Subsequent Work}
After the publication of the conference version~\cite{chang2023complexity} of this work,
Coiteux-Roy~et~al.~\cite{bootstrapping}
 presented a blackbox construction of a $\left(\eps, O\left(g(n)/\eps\right) \right)$ low-diameter decomposition algorithm that runs in $O\left(\left(f(n)+g(n)\right)\cdot \frac{\log (1/\eps)}{\eps}\right)$ rounds, when given a $\left(\frac{1}{2}, g(n)\right)$ low-diameter decomposition algorithm that runs in $f(n)$ rounds~\cite[Theorem 3.10]{bootstrapping}. By choosing $\eps = \frac{1}{2}$ in \cref{thm:LDD}, we may use $g(n) = f(n) = O(\log n)$, so the blackbox construction yields a $\left(\eps, O\left(\frac{\log n}{\eps}\right) \right)$ low-diameter decomposition algorithm that runs in $O\left(\frac{\log (1/\eps)\log n}{\eps}\right)$ rounds with probability $1 - 1/\poly(n)$ in the $\LOCAL$ model. That is, the $\log^3 (1/\eps)$ factor in the round complexity of \cref{thm:LDD} can be improved to $\log (1/\eps)$.

For completeness, we provide a proof sketch of the blackbox construction. For simplicity, we only consider the case where $g(n) = f(n) = O(\log n)$ in the proof sketch.
\begin{enumerate}
    \item Run the $\left(\frac{1}{2}, O(\log n)\right)$ low-diameter decomposition algorithm on the power graph $G^{k}$ for some $k = \Theta(1/\eps)$. This takes $O\left(\frac{\log n}{\eps}\right)$ rounds and at most $\frac{n}{2}$ vertices are unclustered.
    \item Observe that the clusters are $\Omega(1/\eps)$-hop separated in the original graph $G$. Each cluster executes a ball-growing-and-carving for $\Theta(1/\eps)$ hops and deletes the layer with the smallest number of vertices. The total number of vertices removed is upper bounded by $\frac{n/2}{\Theta(1/\eps)} = O(\eps n)$.
    \item Repeat the procedure above on the remaining unclustered vertices for $O(\log (1/\eps))$ times. At most half of the vertices remain after each run. Hence, after $O(\log (1/\eps))$ repetitions, at most $O(\eps n)$ vertices are left and can be deleted.
\end{enumerate}

\subsection{Organization}

\fullornot{
    %---------------------Default branch-----------------------
    In \cref{sect:prelim}, we present our notations and discuss some basic observations about packing and covering ILP problems.
    In \cref{sect:LDD-main}, we present our low-diameter decomposition algorithm. In \cref{sect:packing-main}, we present our algorithm for $(1-\eps)$ approximation of packing ILP problems.  In \cref{sect:covering-main}, we present our algorithm for $(1+\eps)$ approximation of covering ILP problems. In \cref{sect:conclusion}, we conclude the paper with a discussion of some open questions.
    
    In \cref{sect:concentration}, we collect all the concentration bounds needed in the paper.
    In \cref{sect:lowerbounds}, we 
    prove that $(1\pm \eps)$-approximate solutions for several combinatorial optimization problems require $\Omega\left(\frac{\log n}{\eps}\right)$ rounds to compute.
    In \cref{sect:LDD}, we review prior work on low-diameter decompositions and demonstrate a family of graphs such that the guarantee on the number of unclustered vertices does not hold with high probability for the existing low-diameter decomposition algorithms.
}{
    %--------------------for PODC-----------------------
    In \cref{sect:prelim}, we present our notations and discuss some basic observations.
    %about packing and covering ILP problems.
    In \cref{sect:LDD-main}, we present our low-diameter decomposition algorithm. In \cref{sect:conclusion}, we conclude the paper with a discussion of some open questions.
    
    In \cref{sect:concentration}, we collect all the concentration bounds needed in the paper.
    In \cref{sect:LDD}, we review prior work on low-diameter decompositions.
    
    Our algorithm for $(1-\eps)$ approximation of packing ILP problems is in Section 4 of the full version \cite{fullversion}.
    Our algorithm for $(1+\eps)$ approximation of covering ILP problems is in Section 5 of the full version \cite{fullversion}. 
    In Appendix B of the full version \cite{fullversion}, we 
    prove that $(1\pm \eps)$-approximate solutions for several combinatorial optimization problems require $\Omega\left(\frac{\log n}{\eps}\right)$ rounds to compute.
}

\section{Preliminaries}\label{sect:prelim}

In our proofs of \cref{thm:packing,thm:covering,thm:LDD}, we may freely assume that $n$ is sufficiently large in that $n \geq n_0$ for some universal constant $n_0 > 0$, since otherwise we may solve the problem by brute force. Similarly, we may assume that $\eps$ is sufficiently small in that  $\eps \leq \eps_0$ for some universal constant $\eps_0 > 0$, since otherwise we may reset $\eps = \eps_0$. Furthermore, we may assume that $\eps = \omega\left(\frac{\log n}{n}\right)$, since otherwise $\frac{\log n}{\eps} = \Omega(n)$, so \cref{thm:packing,thm:covering,thm:LDD} become trivial.

\subsection{Notations}
We use $\dist(u,v)$ to denote the distance between $u$ and $v$. For a subset $S\subseteq V$ and a vertex $u$, we define the distance between $S$ and $v$ to be $\dist(u,S):=\min_{v \in S} \dist(u,v)$. We use $N^k(v):=\{u \in V: \dist(u,v) \leq k\}$ to denote $k$-radius neighborhood of $v$ and $N^k_{G'}(v)$ denotes the $k$-radius neighborhood of $v$ in the graph $G'$. Extending this notation to a subset of vertices $S\subseteq V$, we have $N^k(S) := \{u \in V: \dist(u,S) \leq k\}$. Note that $N^1(S) = N(S) \cup S$.

\subsection{Packing Integer Linear Programming}
We use $P: V \rightarrow \{0,1\}$ to denote a solution to a given packing problem modelled by hypergraph $H=(V,E)$. We define $W(P,S):= \sum_{v\in S} w_v P(v)$ to be the weight of any subset $S\subseteq V$ based on $P$. For any subset of vertices $S$, we use $P^{\local}_S$ to denote the optimal solution to the local packing problem restricted to the induced subgraph defined by $S$. In particular, the local packing problem is defined by setting all variables not in $S$ to zero and solving it with all constraints. Note that by setting all other variables to zero, such a solution would not violate any constraint (hyperedge), including those not fully contained in $S$. In particular, we highlight the following property.

\begin{observation}\label{lem:local_vs_global_P}
    For any subset $S \subseteq V$, let $P^*$ be a optimal solution to the packing problem. It holds that
    \[ W(P^*,S) \leq W(P^{\local}_S,S) \leq W(P^*, N^1(S))  .\]
\end{observation}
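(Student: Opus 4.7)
The plan is to prove the two inequalities separately, in each case by constructing an auxiliary feasible solution whose objective value witnesses the inequality.

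For the lower bound $W(P^*,S) \leq W(P^{\local}_S,S)$, I would consider the restriction $P'$ of $P^*$ to $S$, defined by $P'(v) = P^*(v)$ for $v \in S$ and $P'(v) = 0$ otherwise. Since the coefficient matrix $\vec{A}$ has nonnegative entries (packing formulation), zeroing out some variables can only make the left-hand sides of the constraints smaller, so $P'$ is feasible for the (global, and hence local) packing problem. By optimality of $P^{\local}_S$ restricted to $S$, $W(P^{\local}_S,S) \geq W(P',S) = W(P^*,S)$.

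For the upper bound $W(P^{\local}_S,S) \leq W(P^*, N^1(S))$, I would glue $P^{\local}_S$ together with $P^*$ outside $N^1(S)$. Concretely, define
\[ P'(v) = \begin{cases} P^{\local}_S(v) & \text{if } v \in S, \\ 0 & \text{if } v \in N^1(S) \setminus S, \\ P^*(v) & \text{if } v \in V \setminus N^1(S). \end{cases} \]
The key feasibility check uses the defining property of $N^1(S)$ in the hypergraph $H$ associated to the ILP (see \cref{def:hypergraph}): any constraint $e \in E(H)$ either (i) touches $S$, in which case $e \subseteq N^1(S)$ and the only nonzero coordinates of $P'$ on $e$ are those from $P^{\local}_S$ on $e \cap S$, so feasibility follows from feasibility of $P^{\local}_S$; or (ii) is disjoint from $S$, in which case the only nonzero coordinates of $P'$ on $e$ come from $P^*$, so feasibility follows from feasibility of $P^*$. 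Hence $P'$ is a feasible solution to the global packing problem, and by optimality of $P^*$,
\[ W(P^{\local}_S,S) + W(P^*, V \setminus N^1(S)) = W(P',V) \leq W(P^*,V) = W(P^*, N^1(S)) + W(P^*, V \setminus N^1(S)). \]
Canceling the common term $W(P^*, V \setminus N^1(S))$ gives the desired inequality.

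Neither step is really an obstacle; the only subtle point is the second inequality, where one has to be careful that the definition of $N^1(S)$ is exactly what makes the hybrid solution $P'$ feasible: every constraint involving a variable in $S$ has all of its variables inside $N^1(S)$, so setting those ``border'' variables to $0$ insulates $P^{\local}_S$ from $P^*$ without creating any constraint conflict. The nonnegativity of $\vec{A}$ is used implicitly in both parts to ensure that zeroing out variables cannot violate feasibility.
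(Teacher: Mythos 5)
Your proof is correct and follows essentially the same route as the paper: the first inequality is the optimality of $P^{\local}_S$ applied to the restriction of $P^*$, and the second is the paper's hybrid/exchange argument (use $P^{\local}_S$ on $S$, zeros on $N^1(S)\setminus S$, $P^*$ elsewhere) phrased directly rather than as a contradiction. Your feasibility check is just a more explicit spelling-out of what the paper leaves implicit.
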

\begin{proof}
    The first inequality follows directly from the optimality of $P^{\local}_S$. 
    For the second inequality, suppose that this is not true, then we could violate the optimality of $P^*$ by using the assignment of $P^{\local}_S$ on $S$ and assigning zeros to variables in $N^1(S) \setminus S$.
\end{proof}
%\znote{would $N^{1}(S)$ look weird? Or should we use $N^+(S)$?} \ynote{But I guess this notation $N^+(S)$ might have a conflict with your $S^+$ notation? I would suggest $N(S) \cup S$ if we do not use it too often.}\znote{Oh actually I change all the $S^+$ to $N^1(S)$, that's why I am asking. There is no more $S^+$ in the paper now.}  \ynote{I added a note that $N^1(S) = N(S) \cup S$ above, so that people will not misunderstand the notation.}
\subsection{Covering Integer Linear Progamming}

We use $Q: V \rightarrow \{0,1\}$ to denote a solution to a given covering problem modelled by hypergraph $H=(V,E)$. We define $W(Q,S):= \sum_{v\in S} w_v Q(v)$ to be the weight of any subset $S\subseteq V$ based on $Q$. For any subset of vertices $S$, we use $Q^{\local}_S$ to denote the optimal solution to the local covering problem restricted to the induced subgraph defined by $S$. In particular, we only consider constraints (hyperedges) where all variables are in $S$. Note that this differs with how we define local instance for packing problems. Here we discard all hyperedges that are inter-cluster. Local optimal solution for covering problem admits the following property.
\begin{observation}\label{lem:local_vs_global_C}
    For any subset $S \subseteq V$, let $Q^*$ be an optimal solution to the covering problem. It holds that
    \[ W(Q^{\local}_{S},S) \leq W(Q^*,S) \leq W(Q^*,V). \]
\end{observation}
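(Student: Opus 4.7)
The plan is to verify the two inequalities separately, directly from the definitions, in parallel to the argument already given for the packing case in \cref{lem:local_vs_global_P}, but being careful about the crucial asymmetry that the local covering problem on $S$ discards all hyperedges not fully contained in $S$.

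First, I would establish $W(Q^{\local}_S, S) \leq W(Q^*, S)$ by exhibiting a feasible solution to the local covering problem on $S$ whose weight is exactly $W(Q^*, S)$. The natural candidate is the restriction $\tilde{Q}$ of $Q^*$ to $S$, i.e., $\tilde{Q}(v) := Q^*(v)$ for $v \in S$. By the definition of the local covering problem, the only constraints that need to be checked are hyperedges $e \in E(H)$ with $e \subseteq S$; for any such $e$, the constraint $\sum_{v \in e} a_{v,e} \tilde{Q}(v) = \sum_{v \in e} a_{v,e} Q^*(v) \geq b_e$ is inherited verbatim from feasibility of $Q^*$ globally. Hence $\tilde{Q}$ is feasible for the local problem, and by the optimality of $Q^{\local}_S$ we obtain $W(Q^{\local}_S, S) \leq W(\tilde{Q}, S) = W(Q^*, S)$.

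Second, for $W(Q^*, S) \leq W(Q^*, V)$, I would simply expand both weights using $W(Q, T) = \sum_{v \in T} w_v Q(v)$ and use $S \subseteq V$ together with non-negativity: the difference $W(Q^*, V) - W(Q^*, S) = \sum_{v \in V \setminus S} w_v Q^*(v)$ is non-negative because $w_v \geq 0$ and $Q^*(v) \in \{0,1\}$.

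The only real subtlety — and the point that deserves the most care in writing — is the first step, where one must explicitly invoke the definition that the local covering instance on $S$ omits every inter-cluster hyperedge. This is precisely the feature that makes the restriction of $Q^*$ automatically feasible; by contrast, the analogous argument for packing (\cref{lem:local_vs_global_P}) went in the opposite direction and required passing to $N^1(S)$ to absorb the boundary. No concentration or distributed argument is needed here, so I do not anticipate any genuine obstacle beyond being precise about which constraints the local problem retains.
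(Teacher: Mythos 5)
Your proposal is correct and follows essentially the same route as the paper, whose proof is a one-line appeal to the optimality of $Q^{\local}_S$ and the definition of $W(Q^*,\cdot)$; you have simply spelled out the implicit feasibility check (the restriction of $Q^*$ to $S$ is feasible for the local instance precisely because inter-cluster hyperedges are discarded) and the non-negativity argument. No further changes are needed.
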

\begin{proof}
    This follows directly from the optimality of $Q^{\local}_{S}$ and definition of $W(Q^*,\cdot)$.
\end{proof}

%\subsection{Notations}
%\subsection{Additional Assumptions}
%We assume that all vertices agree on some value $n$ which is a polynomial upper bound on the actual number of vertices.

\section{Low-Diameter Decompositions}\label{sect:LDD-main}

In this section, we prove  \cref{thm:LDD}. We will first present an algorithm that computes  an $\left(\eps, O\left(\frac{\log^2(1/\eps) \log n}{\eps}\right)\right)$ low-diameter decomposition with high probability, and then at the end of the section we argue how the diameter bound can be improved to the ideal bound $O\left(\frac{ \log n}{\eps}\right)$. We assume that a parameter $\nn$ such that  $|V| \leq \nn \leq |V|^c$, where $c\geq 1$ is some universal constant, is initially known to all  vertices in the underlying network $G=(V,E)$. Note that $\log \nn = \Theta(\log n)$.

%\ynote{you might want to say what the role of $\nn$ at the beginning of this section (and also \cref{sect:packing-main} and  \cref{sect:covering-main}, which has an additional role that it's an upper bound on the sum of all weights). You may choose do that in the preliminary section if you think that is better.}\znote{did that in the respective sections. Should we also give justification for these assumptions there?} \ynote{if we can, then that will be great}
\subsection{The Algorithm}
We  present our algorithm that given an input graph $G = (V,E)$, computes an $\left(\eps, O\left(\frac{\log^2(1/\eps) \log n}{\eps}\right)\right)$ low-diameter decomposition with high probability. Our algorithm consists of three phases.  For the ease of presentation, we set $t :=  \ceil{\log(20/\eps)}$ and $R := \ceil{\frac{200t\ln \nn}{\eps}}$. We decompose the interval $[R+1, (t+2)R]$ into $t+1$ intervals $I_{t+1}, I_{t},\ldots, I_1$ of length $R$ by setting $I_i := [a_i, b_i] =[(t-i+2)R+1, (t-i+3)R]$.

\subsubsection{Ball-Growing-and-Carving}
The following ball-growing-and-carving procedure is a subroutine that aims to generate an isolated cluster with not-too-small radius, while not deleting too many vertices from the graph. In particular, it takes in an interval $I=[a,b]$, examines the neighborhood $N^{b}(v)$ and deletes the sparsest level in $I$ from the graph.

\begin{algorithm}[H]
\small
\caption{\textsc{Grow-and-Carve}$(I=[a, b])$ for vertex $v$\label{alg:grow_carve}}
\begin{algorithmic}[1]

\STATE Gather the topology of its $b$-radius neighborhood $N^b(v)$.
\STATE Let $S_j$ be the set of vertices of distance exactly $j$ from $v$.

\STATE Find $j^* \in I$ that minimises $|S_{j^*}|$ and delete $S_{j^*}$.
\STATE Remove $v$ and its $(j^*-1)$-radius neighborhood from the graph.

\end{algorithmic}
\end{algorithm}

We make the following distinction between ``remove'' and ``delete.'' A vertex is removed only when it belongs to some isolated cluster and hence we have already taken care of it. On the other hand, deleted vertices are permanently taken out of the graph in order to decompose the graph into \emph{non-adjacent} clusters.

\subsubsection{Phase 1}
Phase 1 proceeds in $t$ iterations. Each iteration consists of a sampling step followed by a ball-growing-and-carving. We use $G_i$ to denote the residual graph after executing the $i$th iteration and $G_0 = G$.

\begin{algorithm}[H]
\small
\caption{Phase 1 for each vertex $v$\label{alg:phase_1}}
\begin{algorithmic}[1]

\STATE Let $n_v:=|N^{4tR}(v)|$ be the number of vertices in the $(4tR)$-radius neighborhood of $v$.
\FOR{$i$ = 1 to $t$}
\STATE Sample itself to be a centre with probability $p_{v,i}=\frac{2^i \ln \nn}{n_v}$.

\IF{$v$ is a centre}
\STATE \textsc{Grow-and-Carve}$(I_i = [a_i, b_i])$.
\ENDIF

\ENDFOR

\end{algorithmic}
\end{algorithm}

Since multiple instances of \textsc{Grow-and-Carve} are running in parallel, a vertex $v$ may be simultaneously removed and deleted by different executions of \textsc{Grow-and-Carve}. Note that as long as a vertex is deleted in some execution, it is considered as deleted. %If it is removed due to different clusters, it arbitrarily chooses one to join. As a result, the removed vertices form clusters of weak diameter at most $(t+2)R$.  
%\ynote{you might want to explain how you implement multiple \textsc{Grow-and-Carve} simultaneously. For example, a vertex might be removed in one instance and deleted in another. In that case, that vertex is deleted. Also, how does each removed vertex decide which cluster that it belongs to? One possibility is that we just consider the connected components induced by the removed vertices, and declare each connected component is a cluster. They might have large strong diameter but they are guaranteed to have small weak diameter. These are not immediately clear from the algorithm description and require some explanations. }

%\ynote{Note that the decomposition is a weak-diameter decomposition due to the possible overlaps, but that is fine as we are in the LOCAL model, where the transformation from weak to strong is easy (just locally find a strong-diameter decomposition for each connected component)}\znote{yes I am aware of this. We could point this out in Phase 3 perhaps?} \ynote{yes, we can just say that we are fine with weak-diameter decomposition, as such a reduction exists.}
\subsubsection{Phase 2}
Phase 2 consists of a single sampling step followed by a ball-growing-and-carving. We write $G'$ to denote the residual graph after executing Phase 2.

\begin{algorithm}[H]
\small
\caption{Phase 2 for each vertex $v$}
\begin{algorithmic}[1]

\STATE Sample itself to be a centre with probability $p_{v, t+1}=\frac{2^{t+1} \ln \nn \cdot \ln(20/\eps)}{n_v}$.

\IF{$v$ is a centre}
\STATE \textsc{Grow-and-Carve}$(I_{t+1}=[R+1, 2R])$.

\ENDIF

\RETURN

\end{algorithmic}
\end{algorithm}

\subsubsection{Phase 3}
In the last phase, we simply apply the vertex low-diameter decomposition algorithm from \cref{thm:vertex_LDD} on the residual graph with parameter $\lambda = \frac{\eps}{10}$.

\subsection{Analysis}
\begin{lemma}\label{lem:round}
    The algorithm runs in $O(t^2\cdot R) = O\left(\frac{\log^3(1/\eps) \cdot \log n}{\eps}\right)$ rounds.
\end{lemma}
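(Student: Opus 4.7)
The plan is to do a straightforward round accounting, one phase at a time, relying on the standard $\LOCAL$-model observation that if each vertex needs to learn its $b$-hop neighborhood, this can be completed in $b$ rounds regardless of how many vertices do so simultaneously. Thus, parallel executions of \textsc{Grow-and-Carve} at different sampled centres during the same iteration do not compound: the iteration completes in time proportional to its radius parameter $b$, not to the number of centres, even though a single vertex may be hit by many concurrent carves.

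I would organize the bookkeeping as follows. The preprocessing, in which each vertex $v$ learns $n_v = |N^{4tR}(v)|$, is a single BFS gather costing $4tR = O(tR)$ rounds. For Phase~1, the $i$-th iteration performs a local coin flip followed by \textsc{Grow-and-Carve}$([a_i,b_i])$, whose bottleneck is the gather of $N^{b_i}(v)$ at each centre, using $b_i = (t-i+3)R$ rounds. Summing over iterations, $\sum_{i=1}^{t}(t-i+3)R = R\sum_{j=3}^{t+2} j = O(t^2 R)$. Phase~2 is a single \textsc{Grow-and-Carve} call with $I_{t+1}=[R+1,2R]$, costing $O(R)$ rounds. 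Phase~3 invokes the vertex low-diameter decomposition of \cref{thm:vertex_LDD} with $\lambda=\eps/10$, which runs in $O(\log n / \eps)$ rounds; since $R = \Theta(t\log n/\eps)$, this is $O(R/t)$ and is absorbed into the Phase~1 bound. Adding all contributions gives $O(tR) + O(t^2R) + O(R) + O(R/t) = O(t^2 R)$.

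Plugging in $t = \lceil \log(20/\eps) \rceil = O(\log(1/\eps))$ and $R = \lceil 200 t \ln \nn / \eps\rceil = O\!\left(\tfrac{\log(1/\eps)\log n}{\eps}\right)$ then yields the stated bound $O\!\left(\tfrac{\log^3(1/\eps)\log n}{\eps}\right)$. I do not expect any real obstacle here; the only subtlety worth spelling out is the parallelism remark from the first paragraph, ensuring that the phrase ``each centre runs \textsc{Grow-and-Carve}'' in the algorithm description is understood as all centres doing so simultaneously within the same $b_i$ rounds of one iteration. Everything else is direct substitution.
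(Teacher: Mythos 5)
Your accounting matches the paper's own proof: both bound each \textsc{Grow-and-Carve} call by the radius of its gather (at most $O(tR)$), multiply by the $t+1$ iterations, and observe that Phase~3's $O(\ln n/\lambda)$ cost is dominated. The extra details you supply (the parallelism remark and the exact sum $\sum_i b_i$) are correct refinements of the same argument, so the proof is fine as is.
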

\begin{proof}
    In Phase 1 and 2, each call of \textsc{Grow-and-Carve} takes at most $O(t\cdot R)$ rounds to gather information from its neighborhood. $t+1$ iterations thus take $O(t^2\cdot R)$ rounds. 
    In Phase 3, applying the algorithm from \cref{thm:vertex_LDD} takes $O\left(\frac{\ln n}{\lambda}\right) = O(R)$ rounds.
\end{proof}

\begin{lemma}\label{lem:diameter}
    When the algorithm terminates, all connected components have weak diameter $O(tR)$.
\end{lemma}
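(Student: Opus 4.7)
The plan is to bound the weak diameter separately for the two types of clusters in the final decomposition: those produced by Phases~1--2 (connected components of removed-but-not-deleted vertices), and those produced by Phase~3 (the output of applying \cref{thm:vertex_LDD} to the residual graph $G'$).

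For the Phase~3 clusters, the bound is immediate. Applying \cref{thm:vertex_LDD} with parameter $\lambda = \eps/10$ yields clusters of weak diameter $O(\log n / \eps) = O(R)$ measured in $G'$. Since $G'$ is a vertex-induced subgraph of $G$, distances in $G'$ upper bound distances in $G$, so the bound transfers to $G$ without change.

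For the Phase~1--2 clusters, the plan is to use a ``shell-surrounding'' observation to show that each such cluster $C$ is entirely contained in the ball of some single centre, of radius at most $b_i - 1 \leq (t+2)R - 1$. Concretely, pick any $x \in C$; by construction $x$ was removed in some iteration $i \in \{1, \ldots, t+1\}$ by some centre $u$ whose \textsc{Grow-and-Carve} call selected a radius $j^*_u \in I_i = [a_i, b_i]$. Let $B_u := N^{j^*_u - 1}(u)$, which contains $x$; the shell $S_{j^*_u} := N^{j^*_u}(u) \setminus B_u$ was added to $D$ by $u$'s call. Any $G$-path leaving $B_u$ must traverse an edge from some $a \in B_u$ to some $b$ with $\dist(u,a) \leq j^*_u - 1$ but $\dist(u,b) > j^*_u - 1$; since $a$ and $b$ are adjacent, $\dist(u,b) = j^*_u$, so $b \in S_{j^*_u} \subseteq D$. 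Consequently no path in $V \setminus D$ starting at $x$ can exit $B_u$, giving $C \subseteq B_u$ and a weak diameter of at most $2(j^*_u - 1) \leq 2 b_i \leq 2(t+2)R = O(tR)$.

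Combining the two cases yields the claimed $O(tR)$ weak-diameter bound, which simplifies to $O(\log^2(1/\eps) \log n / \eps)$ after substituting the definitions of $t$ and $R$. The main conceptual step is the shell-surrounding observation, which neatly sidesteps any need to reason about how the removed regions of many different centres across different iterations interact: a single witness $u$ of $x$'s removal already isolates $x$'s entire component from the rest of $V \setminus D$ via its own deletion shell, regardless of what other centres did. I do not foresee any substantive obstacle beyond carefully unpacking what a single \textsc{Grow-and-Carve} call guarantees and noting that vertices added to $D$ in one iteration remain in $D$ forever.
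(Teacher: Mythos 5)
Your proposal is correct and follows essentially the same route as the paper: a removed vertex is confined to a carved ball of radius at most $b_i \leq (t+2)R$ whose boundary shell has been deleted, and the Phase-3 clusters inherit the $O\left(\frac{\log n}{\eps}\right) = O(R)$ diameter bound of \cref{thm:vertex_LDD}. One caveat: the shell $S_{j^*_u}$ is computed in the residual graph $G_{i-1}$, so it only seals $B_u$ off from the vertices still \emph{surviving} at iteration $i$, not from vertices removed (but not deleted) in earlier iterations; hence your claim that a single witness centre isolates the component ``regardless of what other centres did'' is slightly overstated, and to conclude $C \subseteq B_u$ one still needs the easy inductive observation that every earlier ball is itself already surrounded by deleted vertices in $G$ --- the paper's proof elides the same point, and the conclusion stands.
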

\begin{proof}
    Any vertex removed in Phase 1 and Phase 2 must be contained in a connected component of weak diameter at most $2(t+2)R$ by how we carve out the balls. Any vertex that is not deleted in Phase 3 must belong to some connected component of diameter at most $O\left(\frac{\log n}{\eps}\right) < O(tR)$, according to \cref{thm:vertex_LDD}.
\end{proof}

Let $X_{v,r}^{(i)}$ be the random variable that denotes the number of sampled centres in the $r$-radius neighborhood of any vertex $v \in V(G_{i-1})$ in the $i$th iteration. In particular,
\[ \Expect\left[X_{v,r}^{(i)}\right] = \sum_{u \in N^r_{G_{i-1}}(v)} p_{u,i}. \]
We shall show that there are not too many centres sampled around each vertex in order to control the number of deleted vertices.

\begin{lemma}\label{lem:expected_centres}
    In the $i$th iteration of Phase 1, with probability $1 - O(\nn^{-3})$, $\sum_{u \in N^{b_i}_{G_{i-1}}(v)} p_{u,i} \leq 16\ln \nn$ for all $v \in V$.
\end{lemma}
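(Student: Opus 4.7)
The plan is a two-step estimation: first a deterministic geometric bound converts the sum into a ratio of neighborhood sizes, then an inductive sparsification guarantee on the residual graph $G_{i-1}$ controls that ratio.

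For the geometric step, any $u \in N^{b_i}_{G_{i-1}}(v)$ satisfies $\dist_G(u,v) \leq b_i \leq (t+2)R \leq 2tR$, where the last inequality uses that $t = \lceil \log(20/\eps) \rceil \geq 2$ by our blanket assumption on $\eps$. Hence $N^{4tR}(u) \supseteq N^{4tR - b_i}(v) \supseteq N^{2tR}(v)$ and $n_u \geq |N^{2tR}(v)|$, giving
$$\sum_{u \in N^{b_i}_{G_{i-1}}(v)} p_{u,i} \;=\; \sum_u \frac{2^i \ln \nn}{n_u} \;\leq\; \frac{2^i \ln \nn \cdot |N^{b_i}_{G_{i-1}}(v)|}{|N^{2tR}(v)|}.$$

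For the sparsification step, I would establish (via a separate induction on $i$ that runs alongside this lemma) that with probability $1 - O(\nn^{-3})$ every $v \in V$ satisfies $|N^{b_i}_{G_{i-1}}(v)| \leq 8|N^{2tR}(v)|/2^{i-1}$. The base case $i=1$ is trivial since $b_1 \leq 2tR$ implies $|N^{b_1}(v)| \leq |N^{2tR}(v)|$. For the inductive step I would argue by contrapositive: any $v$ violating the bound at stage $i$ would have $|N^{a_{i-1}}_{G_{i-2}}(v)| > 8|N^{2tR}(v)|/2^{i-1}$ at the start of iteration $i-1$ (since $a_{i-1} \geq b_i$ and $V(G_{i-1}) \subseteq V(G_{i-2})$), and the expected number of iteration-$(i-1)$ centres inside that ball would be $\Omega(\log \nn)$; a Chernoff bound then guarantees at least one such centre $w$ w.h.p., and \textsc{Grow-and-Carve}$(I_{i-1})$ at $w$ --- whose inner radius is at least $a_{i-1}$ --- necessarily removes or deletes $v$, contradicting $v \in V(G_{i-1})$. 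A union bound over all $v$ absorbs the stated $1 - O(\nn^{-3})$ failure budget. Plugging this inductive bound into the geometric estimate yields
$$\sum_{u \in N^{b_i}_{G_{i-1}}(v)} p_{u,i} \;\leq\; 2^i \ln \nn \cdot \frac{8}{2^{i-1}} \;=\; 16 \ln \nn,$$
as required.

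The main obstacle is the expected-centres lower bound in the inductive step. Because the sampling probabilities $p_{u,i-1} = 2^{i-1}\ln\nn/n_u$ vary across $u$, and $n_u$ for $u \in N^{a_{i-1}}(v)$ can be as large as $|N^{6tR}(v)|$, the naive chain $\sum_{u \in N^{a_{i-1}}_{G_{i-2}}(v)} p_{u,i-1} \geq 2^{i-1}\ln\nn \cdot |N^{a_{i-1}}_{G_{i-2}}(v)|/|N^{6tR}(v)|$ need not be $\Omega(\log \nn)$ in graphs whose neighborhoods grow rapidly with radius. Making the argument go through likely requires phrasing the invariant so that the same scale appears in the geometric step and in the Chernoff step --- for example, by carrying density bounds at multiple scales simultaneously, or by replacing the baseline $|N^{2tR}(v)|$ with $n_v$ and separately controlling the discrepancy between $n_v$ and $n_u$ for nearby $u$. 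Once this invariant is set up correctly, this lemma follows by direct substitution, and its probabilistic content is exactly the failure probability imported from the sparsification claim.
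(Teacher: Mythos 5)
Your base case ($i=1$) is correct and matches the paper's. But for $i>1$ your plan has a genuine gap, which you yourself flag in your last paragraph: the invariant you propose, $|N^{b_i}_{G_{i-1}}(v)| \leq 8|N^{2tR}(v)|/2^{i-1}$, cannot be closed inductively, because the upper-bound direction of your geometric estimate uses $n_u \geq |N^{2tR}(v)|$ while the lower bound on the expected number of centres in the Chernoff step would need $n_u \leq |N^{2tR}(v)|$ (or something comparable), and for $u$ at distance up to $a_{i-1}$ from $v$ one only gets $n_u \leq |N^{(5t+2)R}(v)|$. In a graph whose balls grow quickly between radius $2tR$ and $6tR$, the expected number of iteration-$(i-1)$ centres in $N^{a_{i-1}}_{G_{i-2}}(v)$ can be $o(\log \nn)$ even when your vertex-count invariant is badly violated, so the contrapositive step fails. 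The "main obstacle" you describe is not a technicality to be patched; it is the reason this formulation of the induction does not work.

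The missing idea is that no conversion to vertex counts is needed: the quantity $\sum_{u \in N^{r}_{G_{i-1}}(v)} p_{u,i}$ is \emph{itself} both the expected number of centres near $v$ and the quantity the lemma bounds, so one should induct on it directly. Concretely, the paper assumes for contradiction that $\sum_{u \in N^{b_i}_{G_{i-1}}(v)} p_{u,i} > 16\ln \nn$, and then uses only two facts: $p_{u,i-1} = p_{u,i}/2$, and $N^{b_i}_{G_{i-1}}(v) \subseteq N^{a_{i-1}}_{G_{i-2}}(v)$ (since $a_{i-1} = b_i + 1$ and $G_{i-1} \subseteq G_{i-2}$). These give
\[
\Expect\left[X^{(i-1)}_{v,a_{i-1}}\right] \;=\; \sum_{u \in N^{a_{i-1}}_{G_{i-2}}(v)} p_{u,i-1} \;\geq\; \frac{1}{2}\sum_{u \in N^{b_i}_{G_{i-1}}(v)} p_{u,i} \;>\; 8\ln\nn,
\]
so by a Chernoff bound some centre was sampled within distance $a_{i-1}$ of $v$ in iteration $i-1$ with probability $1-\nn^{-4}$, which would have removed or deleted $v$ --- contradicting $v \in V(G_{i-1})$. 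This sidesteps entirely the multi-scale density bookkeeping you were trying to set up; the non-uniformity of the $p_{u,i}$ never enters, because the sum of probabilities is compared only to its own value one iteration earlier.
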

\begin{proof}
    We distinguish between two cases:
    \begin{enumerate}
        \item [$i = 1$]: Consider the $\left(b_1=(t+2)R\right)$-radius neighborhood $U:=N^{(t+2)R}(v)$ of $v$. Note that for any $u \in U$, $n_u\geq |U|$ since $4tR > 2(t+2)R$ and hence their $(4tR)$-radius neighborhood would completely cover $U$.

        We have
        \[ \sum_{u \in U} p_{u,1} = \sum_{u \in U} \frac{2\ln \nn}{n_u} \leq \sum_{u \in U} \frac{2\ln \nn}{|U|} = 2\ln \nn . \]
        \item [$i > 1$]: Assume towards contradiction that $\sum_{u \in N^{b_i}_{G_{i-1}}(v)} p_{u,i} > 16\ln \nn$. Note that $v$ is not removed in the $(i-1)$th iteration only if $X_{v,a_{i-1}}^{(i-1)}=0$. That is, there are no centres sampled in the $a_{i-1}$-radius of $v$ in the $(i-1)$th iteration. %\ynote{I guess this is the (only?) place in the analysis where we use the fact that the distance interval for finding the weakest layer to cut for each iteration is disjoint. You might want to emphasize that this is exactly the reason that the algorithm is designed in this manner.}
        On the other hand, \[ \Expect\left[X_{v,a_{i-1}}^{(i-1)}\right] = \sum_{u \in N^{a_{i-1}}_{G_{i-2}}(v)} p_{u,i-1} \geq \sum_{u \in N^{b_{i}}_{G_{i-1}}(v)} \frac{p_{u,i}}{2} > 8\ln \nn. \] Hence by a Chernoff bound,
        \[\Pr\left[X_{v,a_{i-1}}^{(i-1)} = 0\right] \leq e^{-4\ln \nn} \leq \nn^{-4} . \]
        The lemma follows by taking a union bound over all vertices, and the error probability is at most $n\cdot \nn^{-4} \leq \nn^{-3}$. \qedhere
    \end{enumerate}
\end{proof}
%\ynote{to check (and other sections): do you take union bound only over all vertices or also over all iterations? From the number it seems that the union bound is only over all vertices?}

We remark that for the above proof to hold, we need $a_{i-1}\geq b_i$, this is the reason that the intervals $I_1, I_2, \ldots$ used in different iterations are disjoint.

%\ynote{I guess a more formal proof is that you show that if the current expected value is great than $8 \ln n$, then with high probability you do not proceed to the next iteration, and then by union bound, the expected value never exceeds $16 \ln n$.}

\begin{lemma}\label{lem:deleted_vertices_1}
    In each iteration of Phase 1, with probability $1 - O(\nn^{-3})$, at most $\frac{\eps |V|}{4t}$ vertices are deleted.
\end{lemma}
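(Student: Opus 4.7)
The plan is to bound $D_i$, the number of vertices deleted in iteration $i$ of Phase~1, by combining a deterministic pigeonhole observation about \textsc{Grow-and-Carve} with a vertex-by-vertex Chernoff bound whose expectation is controlled by \cref{lem:expected_centres}.

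First I would note that for each sampled centre $u$, the procedure \textsc{Grow-and-Carve}$(I_i)$ deletes exactly the layer $S_{j^\ast}(u)$ that minimises $|S_j(u)|$ over $j \in I_i$. Since the $R = b_i - a_i + 1$ layers $S_{a_i}(u), \ldots, S_{b_i}(u)$ are disjoint subsets of $N^{b_i}_{G_{i-1}}(u)$, this minimum layer has size at most $|N^{b_i}_{G_{i-1}}(u)|/R$. Summing over all centres and swapping the order of summation yields
\[
D_i \;\leq\; \frac{1}{R} \sum_{u \text{ a centre}} \bigl|N^{b_i}_{G_{i-1}}(u)\bigr| \;=\; \frac{1}{R} \sum_{v \in V(G_{i-1})} X_{v, b_i}^{(i)},
\]
using the symmetry $u \in N^{b_i}(v) \Leftrightarrow v \in N^{b_i}(u)$.

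Next I would bound each $X_{v, b_i}^{(i)}$ with high probability. By \cref{lem:expected_centres}, with probability $1 - O(\nn^{-3})$ we have $\Expect\!\bigl[X_{v, b_i}^{(i)} \mid G_{i-1}\bigr] = \sum_{u \in N^{b_i}_{G_{i-1}}(v)} p_{u,i} \leq 16 \ln \nn$ simultaneously for every $v \in V(G_{i-1})$. Conditional on $G_{i-1}$ the variable $X_{v, b_i}^{(i)}$ is a sum of independent Bernoulli indicators (the sampling coins of iteration $i$), so the multiplicative Chernoff bound in the regime $\delta \geq 1$ gives $\Pr\!\bigl[X_{v, b_i}^{(i)} > 32 \ln \nn\bigr] \leq e^{-16 \ln \nn / 3} \leq \nn^{-5}$, and a union bound over the at most $\nn$ surviving vertices shows $X_{v, b_i}^{(i)} \leq 32 \ln \nn$ for all $v$ with probability $1 - O(\nn^{-3})$. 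Plugging this into the displayed inequality and using $R \geq 200\, t \ln \nn / \eps$ gives $D_i \leq 32 n \ln \nn / R \leq 4 \eps n /(25 t) < \eps |V|/(4t)$, which is the claimed bound.

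The step I expect to require the most thought is the decision to route through the per-vertex counts $X_{v, b_i}^{(i)}$ rather than attempting a direct concentration bound on $\tfrac{1}{R}\sum_u Z_u |N^{b_i}_{G_{i-1}}(u)|$, where $Z_u$ is the centre indicator. A direct weighted Chernoff or Bernstein bound on the latter sum would require a uniform upper bound on the per-summand weights $|N^{b_i}_{G_{i-1}}(u)|$; but at the start of iteration $i$ these weights can still be as large as $n$ --- making the graph uniformly sparse is precisely the effect that Phase~1 is meant to achieve, not an assumption one may use inside it. Re-expressing $D_i$ via $\sum_v X_{v, b_i}^{(i)}$ sidesteps this obstacle because each $X_{v, b_i}^{(i)}$ is a sum of $\{0,1\}$-valued independent Bernoullis whose conditional mean is exactly the quantity that \cref{lem:expected_centres} already controls.
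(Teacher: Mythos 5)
Your proposal is correct and follows essentially the same route as the paper's proof: bound the deleted layer of each centre $u$ by $|N^{b_i}_{G_{i-1}}(u)|/R$ via pigeonhole over the $R$ disjoint layers, show every vertex is covered by at most $32\ln\nn$ centres using \cref{lem:expected_centres} plus a Chernoff and union bound, and then double-count $\sum_{u \in F}|N^{b_i}_{G_{i-1}}(u)| = \sum_v X_{v,b_i}^{(i)} \leq 32 n \ln\nn$. The paper phrases the last step as ``counting the total number of vertices covered in two different ways,'' which is exactly your swap of the order of summation.
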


\begin{proof}
    Consider the $i$th iteration. Due to \cref{lem:expected_centres}, each vertex $v \in V(G_{i-1})$ has $\Expect[X_{v,b_i}^{(i)}] \leq 16\ln \nn$ and only these nearby centres can cover $v$. Again by a Chernoff bound,
    \[ \Pr\left[X_{v,b_i}^{(i)} >32\ln \nn\right] \leq e^{-16\ln \nn/3} \leq \nn^{-4} .\]
    There are at most $32\ln \nn$ sampled centres in the neighborhood of $v$. In other words, $v$ can only be covered at most $32\ln \nn$ times. By a union bound over all vertices and taking into account the $O(\nn^{-3})$ error probability of \cref{lem:expected_centres}, we know that with probability $1 - O(\nn^{-3})$, all vertices are covered at most $32\ln \nn$ times.
    
    Let $F$ be the set of all centres sampled in the $i$th iteration. For any centre vertex $u \in F$, the number of deleted vertices due to $u$ is at most $\frac{|N^{b_i}(u)|}{R}$ since we are picking the most sparse layer to delete. The total number of deleted vertices is thus at most
    \[ \sum_{u \in F} \frac{|N^{b_i}(u)|}{R} \leq \frac{32\ln \nn\cdot |V|}{R} \leq \frac{\eps |V|}{4t}, \]
    where the first inequality follows from counting the total number of vertices covered in two different ways.
\end{proof}

\begin{lemma}\label{lem:deleted_vertices_2}
    In Phase 2, with probability $1 - O(\nn^{-3})$ at most $\frac{\eps |V|}{4}$ vertices are deleted.
\end{lemma}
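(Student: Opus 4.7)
The plan is to mirror the structure of \cref{lem:deleted_vertices_1}. Since Phase 2 runs a single \textsc{Grow-and-Carve} with interval $I_{t+1}=[R+1,2R]$, each sampled centre $u$ contributes at most $|N^{2R}(u)|/R$ deleted vertices (by the sparsest-layer choice). Summing over centres and swapping the order of summation, the total number of deleted vertices is bounded by $\frac{1}{R}\sum_{v\in V}X_{v,2R}^{(t+1)}$. Hence it suffices to show that $X_{v,2R}^{(t+1)}=O(\ln\nn\cdot\ln(20/\eps))$ simultaneously for every $v$, with probability $1-O(\nn^{-3})$, and then invoke the choice of $R$.

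The first step is the Phase-2 analogue of \cref{lem:expected_centres}: with probability $1-O(\nn^{-3})$, every $v\in V(G_t)$ satisfies $\sum_{u\in N^{2R}_{G_t}(v)} p_{u,t+1}\leq 16\ln\nn\cdot\ln(20/\eps)$. The reasoning is exactly parallel to the case $i>1$ of \cref{lem:expected_centres}, using the identity $p_{u,t+1}=2\ln(20/\eps)\,p_{u,t}$ together with the interval spacing $a_t=2R+1>2R=b_{t+1}$. If the bound failed for some $v\in V(G_t)$, the deterministic containment $N^{2R}_{G_t}(v)\subseteq N^{a_t}_{G_{t-1}}(v)$ would force $\sum_{u\in N^{a_t}_{G_{t-1}}(v)} p_{u,t}>8\ln\nn$, and a lower-tail Chernoff bound would then imply that in Phase 1 iteration $t$ some centre was sampled inside $N^{a_t}_{G_{t-1}}(v)$ with probability at least $1-\nn^{-4}$, which in turn removes or deletes $v$, contradicting $v\in V(G_t)$. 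A union bound over the at most $\nn$ vertices completes this step.

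Once this expectation bound holds for all $v\in V(G_t)$, an upper-tail Chernoff bound on $X_{v,2R}^{(t+1)}$ --- which is a sum of independent Bernoulli variables --- together with a union bound over $V$ gives $X_{v,2R}^{(t+1)}\leq 32\ln\nn\cdot\ln(20/\eps)$ for every $v$ with probability $1-O(\nn^{-3})$. Plugging into the counting bound, the total number of deletions in Phase 2 is at most $\frac{32\ln\nn\cdot\ln(20/\eps)\cdot|V|}{R}\leq \frac{\eps|V|}{4}$, where the last inequality holds by a wide margin from $R=\ceil{200t\ln\nn/\eps}$ with $t=\ceil{\log(20/\eps)}$. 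The main subtlety --- and the only nontrivial obstacle --- lies in the first step: the event $v\in V(G_t)$ itself depends on Phase 1 randomness, so what we really bound is the joint probability that $v$ survives Phase 1 while its Phase 2 local weight is too high. The containment $N^{2R}_{G_t}(v)\subseteq N^{a_t}_{G_{t-1}}(v)$ decouples this joint event into one controllable by the Chernoff bound from iteration $t$; once that is in place the rest is routine bookkeeping.
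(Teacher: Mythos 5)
Your proof is correct and follows essentially the same route as the paper's: the same double-counting bound of $\frac{1}{R}\sum_v X^{(t+1)}_{v,2R}$ on the number of deletions, the same contrapositive argument (via the containment $N^{b_{t+1}}_{G_t}(v)\subseteq N^{a_t}_{G_{t-1}}(v)$ and a lower-tail Chernoff bound on iteration $t$ of Phase 1) to bound $\sum_{u\in N^{2R}_{G_t}(v)}p_{u,t+1}$, and the same upper-tail Chernoff plus union bound to conclude. The paper compresses the first step into ``following the same argument as \cref{lem:expected_centres}''; you have merely spelled out that step explicitly.
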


\begin{proof}
    Following the same argument as \cref{lem:expected_centres}, we have \[\Expect\left[X^{(t+1)}_{v, 2R}\right] = \sum_{u \in N^{b_{t+1}}_{G_{t}}(v)} p_{u,t+1}\leq 16 \ln \nn \cdot \ln(20/\eps)\] for all vertices $v$, with probability $1 - O(\nn^{-3})$,  for otherwise $v$ should have been removed in Phase 1. Hence by a Chernoff bound,
    \[ \Pr\left[X_{v,2R}^{(t+1)} > 32\ln \nn \cdot \ln(20/\eps)\right] \leq e^{-16\ln \nn/3} \leq \nn^{-4}  .\]
    Each vertex $v$ can only be covered at most $32\ln \nn\cdot \ln(20/\eps)$ times. By a union bound over all vertices, this happens with probability $1 - O(\nn^{-3})$. The total number of deleted vertices is thus at most
    \[ \frac{32 \ln \nn\cdot \ln(20/\eps) \cdot |V|}{R} \leq \frac{\eps |V|}{4}. \qedhere\]
\end{proof}

Next, for the purpose of analysis, we  consider a set of bad vertices $B$ that arises from Phase 2.

%\ynote{Ideally we want to differentiate between the actual $n$ and the polynomial upper bound $n$. I guess some $n$ below should be the actual $n$. Not sure what is the best way to deal with this. Maybe use $|V(G)|$ for the actual $n$?}\znote{yeah that makes sense.}

\begin{definition}[Bad vertices]\label{def:bad}
The set of bad vertices $B$ is defined as follows. Consider an arbitrary decomposition of $V(G_t)$ into disjoint components $C_1, C_2, \ldots, C_k$ of weak diameter (in $G_t$) at most $R$ and size $\frac{\eps |V|}{40\ln \nn}$ and a residual part $L$ such that the $(R/2)$-radius neighborhood (in $G_t$) of each $v \in L$ contains smaller than $\frac{\eps |V|}{40\ln \nn}$ vertices in $L$. For each  component $C_i$ in the decomposition, if no vertices in $C_i$ are sampled as a centre in Phase 2, we call  $C_i$ a bad component. We define $B$ as the union of all bad components.  %\ynote{I slightly revised the guarantee for the residual part so that this matches the guarantee given by your greedy construction. If you want to have the same $2R$ diameter guarantee, then you need a slightly different greedy construction.}
\end{definition}

Note that a decomposition described in \cref{def:bad} exists, due the following greedy construction in $G_t$. For each vertex $v$, examine its $(R/2)$-radius neighborhood $N^{R/2}(v)$. If $N^{R/2}(v)$ contains at least $\frac{\eps |V|}{40\ln \nn}$ unmarked vertices, arbitrarily pick $\frac{\eps |V|}{40\ln \nn}$ unmarked vertices as one component $C_i$ and mark all of them. Repeat the procedure until no new component can be created. Each component $C_i$ is not required to be connected.

\begin{lemma}\label{lem:bad_vertices}
    The set of bad vertices $B$ has size at most $\frac{\eps |V|}{4}$  with probability $1 - O(\nn^{-4})$.
\end{lemma}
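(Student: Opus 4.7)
The plan is to exploit that distinct components $C_i$ in the decomposition of \cref{def:bad} are \emph{disjoint}, so the events ``$C_i$ is bad'' are mutually independent Bernoulli random variables. I would then bound $\Pr[C_i \text{ is bad}]$ by $\eps/20$ using a direct calculation with the sampling probabilities, and finally apply a Chernoff bound over all components to conclude.

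First, I would lower-bound the total sampling probability inside a fixed component $C_i$. Since trivially $n_u = |N^{4tR}(u)| \leq |V|$ for every vertex $u$, we have $p_{u,t+1} \geq \frac{2^{t+1} \ln \nn \cdot \ln(20/\eps)}{|V|}$. Summing this over the $|C_i| = \frac{\eps|V|}{40\ln\nn}$ vertices of $C_i$ gives
\[
\sum_{u \in C_i} p_{u,t+1} \;\geq\; \frac{\eps |V|}{40 \ln \nn}\cdot \frac{2^{t+1} \ln \nn \cdot \ln(20/\eps)}{|V|} \;=\; \frac{\eps \cdot 2^{t+1} \cdot \ln(20/\eps)}{40}.
\]
Since $t = \lceil \log(20/\eps) \rceil$, we have $2^{t+1} \geq 40/\eps$, so the sum is at least $\ln(20/\eps)$. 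Because $C_i$ is bad precisely when \emph{none} of its vertices is sampled, and sampling is done independently across vertices,
\[
\Pr[C_i \text{ is bad}] \;=\; \prod_{u \in C_i}\bigl(1 - p_{u,t+1}\bigr) \;\leq\; \exp\!\Bigl(-\sum_{u \in C_i} p_{u,t+1}\Bigr) \;\leq\; \frac{\eps}{20}.
\]

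Next, let $Y_i$ be the indicator that $C_i$ is bad, and let $k$ be the total number of components. Since the components are disjoint and each has size $\frac{\eps|V|}{40\ln\nn}$, we have $k \leq \frac{40 \ln \nn}{\eps}$, and hence $\mu := \Expect\bigl[\sum_i Y_i\bigr] \leq k \cdot \frac{\eps}{20} \leq 2 \ln \nn$. The random variables $Y_1, \ldots, Y_k$ depend on disjoint subsets of the independent sampling coins in Phase 2, so they are mutually independent. A standard Chernoff bound (applied with a sufficiently large deviation factor, say $\sum Y_i > 10 \ln \nn$) then yields $\Pr\bigl[\sum_i Y_i > 10 \ln \nn\bigr] \leq \nn^{-4}$ after choosing the constants appropriately.

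Finally, on the high-probability event $\sum_i Y_i \leq 10 \ln \nn$, the total number of bad vertices is
\[
|B| \;=\; |C_i| \cdot \sum_i Y_i \;\leq\; \frac{\eps|V|}{40\ln\nn} \cdot 10 \ln \nn \;=\; \frac{\eps|V|}{4},
\]
as desired. The main (minor) obstacle is just lining up the constants in the sampling probability, the component size, and the Chernoff deviation so that the final bound comes out to exactly $\eps|V|/4$ with failure probability $O(\nn^{-4})$; once the lower bound $\sum_{u\in C_i} p_{u,t+1} \geq \ln(20/\eps)$ is established, the rest is a routine Chernoff argument leveraging the crucial structural fact that distinct $C_i$'s are disjoint.
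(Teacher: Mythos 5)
Your proposal is correct and follows essentially the same route as the paper's proof: lower-bound the total sampling probability in each component to get $\Pr[C_i \text{ bad}] \leq \eps/20$, note $k \leq \frac{40\ln\nn}{\eps}$ so the expected number of bad components is at most $2\ln\nn$, apply a Chernoff bound to cap it at $10\ln\nn$ with probability $1-\nn^{-4}$, and multiply by the component size. The only difference is that you make explicit the independence of the badness indicators (via disjointness of the components and independence of the Phase 2 coins), which the paper uses implicitly.
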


\begin{proof}
Consider the decomposition in \cref{def:bad}. 
    The number $k$ of  components $\{C_1, C_2, \ldots, C_k\}$  in the decomposition satisfies $k \leq n / \frac{\eps |V|}{40\ln \nn} = \frac{40\ln \nn}{\eps}$, as each $C_i$ has size $|C_i| = \frac{\eps |V|}{40\ln \nn}$. For each $C_i$, the probability of it being bad is at most
    \[ (1-p)^{\frac{\eps |V|}{40\ln \nn}} \leq \left(1 - \frac{40 \ln \nn}{\eps |V|}\cdot \ln(20/\eps)\right)^{\frac{\eps |V|}{40\ln \nn}}\leq e^{-\ln(20/\eps)} = \eps/20  ,\] \par
    Let $Y$ be the number of bad  components. We have that $\Expect[Y] \leq \eps k/20 \leq 2\ln \nn$.
    By a Chernoff Bound, we have 
    \[ \Pr[Y > 10\ln \nn] \leq e^{-4^2 \cdot 2\ln \nn/(2+4)} \leq \nn^{-4}  .\]
    The total number of bad vertices is therefore at most $\frac{\eps |V|}{40\ln \nn}\cdot Y \leq \frac{\eps |V|}{4}$ with high probability.
\end{proof}

Since only a small number of vertices are bad, virtually we can afford to delete all of them. While in the actual algorithm, some of them may or may not be deleted in Phase 3, we will exclude them \emph{in the analysis} so that we may apply the Chernoff Bound with bounded dependence in \cref{lem:Chernoff_BD}.  

In the following lemma, recall that  $G'$ is the residual graph after executing Phase 2, and $B$ is the set of bad vertices defined in \cref{def:bad}. The definition of a bad component ensures that if a component $C_i$ is not bad, all veritces in $C_i$ would be either be removed or deleted in Phase 2 and we will not see them in $V(G')$, so $V(G')\setminus B$ is a subset of the residual part $L$ in the decomposition of \cref{def:bad}. 

% \ynote{I think here we need that each non-bad component will be completely handled in Phase 2, but is that true? The reason is that the guarantee is diameter at most $2R$, but the ball growing radius is just $R$, so there seems to be a mismatch. In any case, I think this part is not mentioned in the current writeup, and this definitely needs to be mentioned as it is crucial.}
% \znote{Oh I think I get what you mean now. I will check through the values later.} \ynote{ok. you might also want to check the other sections. also the def of the decomposition for bad vertices need to be clearer.}
% \znote{yeah i know. }
% \ynote{radius $R$ is also not enough, since the sampled node can be far from the center, so you really need diameter $R$, I guess.}
% \ynote{I am talking about the components in \cref{def:bad}. The diameter bound is currently written as $2R$, which I think should be $R$.}

%\ynote{this is more like and informal discussion only (which is good to have). You might also want to do that formally in the following proof}

% \ynote{I understand what this means but I think this can be clearer: You do not actually delete them in the algorithm and you cannot afford to do that because the offline partition is used in the analysis and is not known to the algorithm. You just treat these vertices as deleted, regardless of whether they are actually deleted during the algorithm. This is to ensure that chernoff bound with bounded dependence can be applied, as their outcomes have been fixed.}

\begin{lemma}\label{lem:deleted_vertices_3}
In Phase 3, with probability $1 - O(\nn^{-2})$, at most $\frac{\eps |V|}{4}$ vertices from $V(G')\setminus B$ are deleted. %\ynote{non-bad vertices?}
\end{lemma}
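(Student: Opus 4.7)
The plan is to apply the Chernoff bound for variables of bounded dependence (\cref{lem:Chernoff_BD}) to the indicators of Phase 3 deletions restricted to $V(G')\setminus B$, after first verifying that every such vertex has a sparse $(R/2)$-radius neighborhood in $G'$ and that Phase 3 has small radius of influence. I would condition on an arbitrary fixed outcome of Phases 1 and 2, so that the sets $V(G')$, $L$, $B$ and the decomposition of \cref{def:bad} are fixed and only Phase 3's fresh randomness remains.

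First I would transfer sparsity from $G_t$ to $G'$. Every $v \in V(G')\setminus B$ must lie in the residual part $L$ of \cref{def:bad}, because each non-bad component $C_i$ contains at least one Phase 2 centre by definition of ``bad'' and is therefore entirely removed or deleted before Phase 3 begins. The construction of $L$ guarantees $|N^{R/2}_{G_t}(v) \cap L| < \frac{\eps |V|}{40\ln \nn}$, and since $G'$ is a subgraph of $G_t$ we have $\dist_{G'}(u,v) \geq \dist_{G_t}(u,v)$, so $|N^{R/2}_{G'}(v) \cap (V(G')\setminus B)| < \frac{\eps |V|}{40\ln \nn}$. Next I would bound the dependency degree. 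The algorithm of \cref{thm:vertex_LDD} invoked in Phase 3 with $\lambda = \eps/10$ runs in $k = O(\log n / \eps)$ rounds, and our choice $R = \ceil{200 t \ln \nn / \eps}$ with $t \geq 1$ makes $2k \leq R/2$. Hence the indicator $X_v$ that $v$ is deleted in Phase 3 is determined solely by the Phase 3 random bits in its $G'$-ball of radius $k$, and $X_u, X_v$ are independent whenever $\dist_{G'}(u,v) > 2k$. Combined with the sparsity bound, each $X_v$ for $v \in V(G')\setminus B$ is dependent on at most $\frac{\eps |V|}{40\ln \nn}$ other variables in the collection.

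Let $X := \sum_{v \in V(G')\setminus B} X_v$. By the guarantee of \cref{thm:vertex_LDD}, $\Pr[X_v = 1] \leq \lambda = \eps/10$ for every $v$, so $\mu := \Expect[X] \leq \eps|V|/10$, and the threshold $\eps|V|/4 \geq (5/2)\mu$ is a constant-factor multiplicative deviation above the mean. Applying \cref{lem:Chernoff_BD} with dependency degree $d \leq \frac{\eps|V|}{40\ln \nn}$ bounds $\Pr[X > \eps|V|/4]$ by $\exp(-\Omega(\mu/d)) = \exp(-\Omega(\ln \nn)) = O(\nn^{-2})$, uniformly over the conditioning on Phases 1 and 2, which gives the lemma. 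The main subtlety will be the dependency argument in the previous paragraph, namely matching the radius of the Phase 3 algorithm against the sparsity radius $R/2$ inherited from Phase 2; once locality and sparsity are lined up, the tail bound is mechanical.
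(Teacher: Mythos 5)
Your proposal is correct and follows essentially the same route as the paper's proof: restrict to $V(G')\setminus B\subseteq L$, use the sparsity of $L$ from \cref{def:bad} to bound the dependency degree by $\frac{\eps|V|}{40\ln\nn}$, note that the Phase~3 algorithm's radius of influence is below $R/4$ so variables at distance $\geq R/2$ are independent, and apply \cref{lem:Chernoff_BD}. The only cosmetic difference is that the per-vertex deletion probability from \cref{thm:vertex_LDD} is $1-e^{-\eps/10}+\nn^{-3}$ rather than exactly $\eps/10$, which changes nothing in the conclusion.
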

\begin{proof}
    For each vertex $v \in V(G')\setminus B$, let $X_v$ be the indicator variable where $X_v = 1$ if $v$ is deleted in Phase 3. We have that $\Pr[X_v = 1] \leq 1 - e^{-\frac{\eps}{10}} + \nn^{-3} \leq \frac{\eps}{8}$.
    The round complexity of the execution of \cref{thm:vertex_LDD} in phase 3 is at most $\frac{4\ln \nn}{\lambda} < R/4$. %\ynote{ should write $\frac{4\ln \nn}{\lambda}$ to avoid confusion (and also several other places in the paper)}
    In other words, for any two vertices $u,v$ where $\dist(u,v)\geq R/2$, $X_u$ and $X_v$ are independent of each other. By the definition of bad vertices in \cref{def:bad}, we have that for each vertex $v\in V(G')\setminus B$, its $(R/2)$-radius neighborhood $N^{R/2}_{V(G')\setminus B}(v)$ contains at most $\frac{\eps |V|}{40 \ln \nn}$ vertices. %\ynote{this part is a  unclear... I don't understand the meaning of ``all components of diameter $\leq 2R$'' } 
    Therefore, the random variables $X_v$ have bounded dependence $d \leq \frac{\eps |V|}{40\ln \nn}$.  Let $X = \sum_{v \in V(G')\setminus B} X_v$. By applying \cref{lem:Chernoff_BD}, we have that
    \[ \Pr\left[ X > \frac{\eps |V|}{4}\right] \leq O\left(\frac{\eps |V|}{\ln \nn}\right) \cdot e^{-3 \ln \nn} \leq \nn^{-2}. \qedhere\]
\end{proof}

We are  ready to prove \cref{thm:LDD}.

\begin{proof}[Proof of \cref{thm:LDD}]
We first verify that our algorithm computes an  $\left(\eps, O\left(\frac{\log^2(1/\eps) \log n}{\eps}\right)\right)$ low-diameter decomposition  in $O\left(\frac{\log^3(1/\eps) \log n}{\eps}\right)$ rounds with high probability:
    The upper bound $\eps |V|$ on the total number of deleted vertices deleted follows from \cref{lem:deleted_vertices_1,lem:deleted_vertices_2,lem:bad_vertices,lem:deleted_vertices_3}. The round complexity bound is due to \cref{lem:round}. The  weak diameter guarantee is due to \cref{lem:diameter}. The error probability is at most $1/\nn$ from a union bound over the error of all lemmas across all iterations.
    
    The weak diameter bound $ O\left(\frac{\log^2(1/\eps) \log n}{\eps}\right)$ can be improved to the ideal \emph{strong diameter} bound $O\left(\frac{\log n}{\eps}\right)$ for free in the $\LOCAL$ model, without affecting the round complexity $O\left(\frac{\log^3(1/\eps) \log n}{\eps}\right)$. 
    We first run our algorithm with the parameter $\eps' = \frac{\eps}{2}$, and then each cluster of weak diameter  $ O\left(\frac{\log^2(1/\eps) \log n}{\eps}\right)$  locally computes an $\left(\frac{\eps}{2}, \frac{\log n}{\eps}\right)$ low-diameter decomposition by brute force. We are done by taking the union of the low-diameter decompositions over all clusters.
\end{proof}

%\ynote{deal with this comment somewhere in this section: by a simple reduction (brute-force each connected component), with the same round complexity we compute an $O(\eps, \eps^{-1}\log n)$ low-diameter decomposition.}

%-----------------------------------------Packing ILP

\section{Packing Problems}\label{sect:packing-main}

In this section, we prove~\cref{thm:packing}.
Throughout the section, $H=(V,E)$ is the hypergraph representing the underlying packing ILP problem, and $P^*$ is an arbitrary fixed optimal solution to the problem that is unknown to the algorithm. We assume that a parameter $\nn$  satisfying $\max(|V|, W(P^*,V)) \leq \nn \leq \max\left(|V|,W(P^*,V)\right)^c$, where $c\geq 1$ is some universal constant, is known to all vertices in $H$.
That is, $\nn$ is  a polynomial upper bound on the number of vertices and the weight of the optimal solution. 
Recall that in this paper we only focus on the case that $W(P^*,V)$ is polynomial in $n = |V|$, so we still have $\log \nn = \Theta(\log n)$.
%We note that $\nn$ here is a different quantity as in the low-diameter decomposition case. % It follows the same backbone or high level ideas as our low-diameter decomposition algorithm.

\subsection{Algorithm for \texorpdfstring{$(1-\eps)$}{(1-eps)}-approximate Packing}
We now present our algorithm that computes a $(1-\eps)$-approximation for any packing integer programming problem with high probability.
Let $P^*$ be an optimal solution to the problem.  
Again, we set $t :=  \ceil{\log(20/\eps)}$ and $R := \ceil{\frac{200t\ln \nn}{\eps}}$. Let $R' := R + 1$ which provides some buffer for us to apply \cref{lem:local_vs_global_P} and upper bound the loss in each carving step.
We decompose the interval $[3R'+1, 3(t+2)R']$ into $t+1$ intervals $I_{t+1}, I_{t},\ldots, I_1$ of length $3R'$, where $I_i := [a_i, b_i] =[(t-i+2)3R'+1, (t-i+3)3R']$.
%Note that here each interval has $3R$ layers of vertices and we shall view this as $R$ layers of disjoint hyperedges.

\subsubsection{Preparation for Sampling}
Following the intuition of our low-diameter decomposition algorithm, we would like to sample centres with probability relating to the weight $W(P^*,V)$ of an optimal solution $P^*$, as this quantity is analogous to the number of vertices in our low-diameter decomposition algorithm.  The quantity  $W(P^*,V)$  is not known to all vertices, so we need to first compute an estimate to facilitate our sampling process.

Before everything, each vertex runs $16\ln \nn$ independent low-diameter Decompositions with $\lambda = 1/2$ from \cref{thm:vertex_LDD} in parallel. As a result, we obtain $16\ln \nn$ sets of connected components $\mathcal{C}_1, \ldots, \mathcal{C}_{16\ln \nn}$ and we denote $\mathcal{C}:= \bigcup_{i=1}^{16\ln \nn} \mathcal{C}_i$. For each connected component $C \in \mathcal{C}$, let $S_C:=N^{8tR}(C)$ be the $(8tR)$-radius neighborhood of $C$. Compute $W(P^{\local}_C,C)$ as well as $W(P^{\local}_{S_C},S_C)$.

\subsubsection{Ball-Growing-and-Carving}
The ball-growing-and-carving subroutine is different from the one for low-diameter decomposition as instead of deleting a small number vertices, we wish to delete vertices with small contribution to the optimal packing.
We emphasise that the entity executing the growing and carving is a component $C \in \mathcal{C}$ and not a vertex as compared to the low-diameter decomposition case.

\begin{algorithm}[H]\label{alg:sample_carve_P}
\small
\caption{\textsc{Grow-and-Carve-Packing}($I=[a,b]$) for $C$}
\begin{algorithmic}[1]

\STATE Gather the topology of its $\left(b-1\right)$-radius neighborhood $N^{b-1}(C)$.
\STATE Compute $P^{\local}_{N^{b-1}(C)}$.
\STATE Let $S_j$ be the set of vertices of distance exactly $j$ from $C$.
\STATE Find $j^* \in [a, b - 1]$ where $j^\ast \equiv 1 \pmod 3$,
that minimises $W(P^{\local}_{N^{b-1}(C)}, S_{j^*}\cup S_{j^*+1} \cup S_{j^*+2})$ and delete $S_{j^*+1}$. \label{step-a}
\STATE Remove $C$ and its $(j^*)$-radius neighborhood from the graph.

\end{algorithmic}
\end{algorithm}

Whenever we use the above algorithm, we will have that $a  \equiv 1 \pmod 3$ and the length $b-a+1$ of the interval $I$ is an integer multiple of $3$. Therefore, in Step~\ref{step-a} of the algorithm, the set of length-3 intervals $[j, j+2]$, over all $j \in [a, b - 1]$ with $j \equiv 1 \pmod 3$, partitions the interval $I$. 

%\ynote{It's a bit hard to remember the choice of radius in all definition of neighborhoods. Instead of defining $N_C$, I would just write $N^{3(\ell+1) R' - 1}(C)$.}\znote{it is kind of trying to find a shorthand to make $P^{\local}_{N_C}$ look a bit concise. } \ynote{yeah... but personally when I see $N$ I feel like this is a number, and also the lack of subscript $i$ might be a bit confusing. I still prefer $N^{3(\ell+1) R' - 1}(C)$ if we don't use this very often. If we want to keep this, i would suggest a name change.}\znote{I see. I will think about this.}

\subsubsection{Phase 1}
Phase 1 again consists of $t$ iterations. We use $H_i$ to denote the residual graph \emph{after} executing the $i$th iteration, and we write $H_0 = H$ to denote the graph at the beginning of Phase 1.

\begin{algorithm}[H]
\small
\caption{Phase 1 for each $C \in \mathcal{C}$}
\begin{algorithmic}[1]

\FOR{$i$ = 1 to $t$}
    \STATE Sample itself to be a centre with probability $p_{C,i}=\frac{2^i W(P^{\local}_C,C)}{W(P^{\local}_{S_C},S_C)}$.
    \IF{$C$ is a centre}
        \STATE \textsc{Grow-and-Carve-Packing}($I_i = [a_i, b_i]$).
    \ENDIF
\ENDFOR
\RETURN

\end{algorithmic}
\end{algorithm}
Again, since multiple instances of \textsc{Grow-and-Carve-Packing} are running in parallel, a vertex is considered deleted as long as it is deleted by at least one execution of \textsc{Grow-and-Carve-Packing}. %If it is removed due to different clusters, it arbitrarily chooses one to join. As a result, the removed vertices form clusters of weak diameter at most $(t+2)R$.  
%\znote{we actually want the induced components arising from the overall deletion. So the vertex doesn't need to know which component to join at this moment.} \ynote{I see. You might want to explain that at some appropriate place in the paper.}
\subsubsection{Phase 2}
Phase 2 again has only one round of sampling and ball-growing-and-carving. We use $H'$ to denote the residual graph from executing Phase 2.
\begin{algorithm}[H]
\small
\caption{Phase 2 for each $C \in \mathcal{C}$}
\begin{algorithmic}[1]

\STATE Sample itself to be a centre with probability $p_{C,t+1}=\frac{2^{t+1} W(P^{\local}_C,C)\cdot \ln(20/\eps)}{W(P^{\local}_{S_C},S_C)}$.
\IF{$C$ is a centre}
    \STATE \textsc{Grow-and-Carve-Packing}($I_{t+1}= [3R'+1, 6R']$).
\ENDIF
\RETURN

\end{algorithmic}
\end{algorithm}

For the sampling process, our goal was to sieve out regions with large packing value. As the actual packing value of the hypergraph is not known to us, we estimate the packing value by having $O(\ln \nn)$ copies of the hypergraph and all of them are involved in the sampling. As a result, there is an implicit $O(\ln \nn)$ overhead in the sampling probability. This is analogous to the extra $\ln \nn$ factor in the sampling probability for our low-diameter decomposition algorithm.
%In Phase 2, there is only one iteration of sampling but we are more aggressive by having an extra factor of $\ln(20/\eps)$ in the sampling probability. While we pay a higher price (in terms of loss in packing value), it allows us to further sparsify the graph in one step so than we could apply the Chernoff Bound with bounded dependency after this step.
%\ynote{again, for phases 1 and 2, emphasize the difference: here you don't have the $\log n$-factor and discuss the reason, since otherwise it might be hard for the reader to Note the subtle change.}

\subsubsection{Phase 3: Completing the Picture}
On the residual graph, we execute the low-diameter decomposition algorithm from \cref{thm:vertex_LDD} with parameter $\lambda=\frac{\eps}{10}$. Note that at this point we essentially decompose the graph %\ynote{I think here you are talking about the connected components of removed vertices? You might want to make it more precise}\znote{actually I mean everything, including clusters generated in Phase 3 from running \cite{EN16}} \ynote{I see, then I guess you might want to write that in a more precise manner.} 
into connected components of weak diameter at most $O(tR)$. That is, if we look at the induced connected components arising from deleting vertices in the algorithm, they have weak diameter $O(tR)$ by how we carve out the balls and the guarantee of \cref{thm:vertex_LDD}. Each connected component then solves the packing problem locally and assign 0 to all deleted vertices.

\subsection{Analysis}

For the purpose of analysis, we shall fix an optimal solution $P^*$ and let $V^*:= \{v\in V: P^*(v) = 1\}$ and $W^* = W(P^*,V)$.
\begin{lemma}\label{lem:round_P}
    The algorithm runs in $O(t^2\cdot R) = O\left(\frac{\log^3(1/\eps) \cdot \log n}{\eps}\right)$ rounds.
\end{lemma}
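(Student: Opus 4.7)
The plan is to account for the round complexity of each of the four stages of the algorithm (preparation, Phase 1, Phase 2, Phase 3) separately, and show that Phase 1 dominates at $O(t^2 R)$ rounds. Since the algorithm operates on the hypergraph $H$ but this is simulated in the $\LOCAL$ model on the underlying network, we recall that one round of communication in $H$ is simulated in one round of the underlying model by the definition of the $\LOCAL$ model on hypergraphs; equivalently the distances $\dist(\cdot,\cdot)$ used throughout are hypergraph distances and the round cost of ``gathering the topology of $N^r(\cdot)$'' is $O(r)$.

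First I would handle the preparation stage. Running the vertex low-diameter decomposition of \cref{thm:vertex_LDD} with $\lambda = 1/2$ costs $O(\ln\nn) = O(\log n)$ rounds, and doing $16\ln\nn$ copies in parallel does not increase the round count since messages can be arbitrarily large in the $\LOCAL$ model. Then, for each cluster $C$, computing $W(P^{\local}_{S_C}, S_C)$ requires gathering the topology of $S_C = N^{8tR}(C)$, which takes $O(tR)$ rounds; computing $W(P^{\local}_C, C)$ is subsumed by this. So preparation uses $O(tR)$ rounds.

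Next, for Phase 1, there are $t$ iterations. In the $i$th iteration, each centre $C$ calls \textsc{Grow-and-Carve-Packing} on the interval $I_i = [a_i, b_i]$, where $b_i \leq 3(t+2)R' = O(tR)$. Gathering the topology of $N^{b_i - 1}(C)$ thus costs $O(tR)$ rounds, and the rest of the subroutine (finding $j^*$, computing the local packing, deleting/removing vertices) is purely local computation and adds no communication rounds. Parallel execution of \textsc{Grow-and-Carve-Packing} across all sampled centres is fine in the $\LOCAL$ model. Summing over the $t$ iterations gives $O(t^2 R)$ rounds for Phase 1. Phase 2 is a single call of \textsc{Grow-and-Carve-Packing} on the interval $I_{t+1} = [3R'+1, 6R']$, so it costs $O(R)$ rounds. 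Finally, Phase 3 runs the vertex low-diameter decomposition of \cref{thm:vertex_LDD} with $\lambda = \eps/10$, which takes $O(\ln n / \eps) = O(R)$ rounds (since $R = \Theta(t \ln\nn/\eps)$), followed by purely local solving of the packing problem within each cluster, which incurs no further communication.

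Adding up, the total round complexity is dominated by Phase 1, giving $O(t^2 R)$. Substituting $t = \lceil \log(20/\eps)\rceil = \Theta(\log(1/\eps))$ and $R = \Theta(t \log n / \eps) = \Theta(\log(1/\eps) \log n / \eps)$ yields
\[
O(t^2 R) \;=\; O\!\left(\log^2(1/\eps) \cdot \frac{\log(1/\eps)\log n}{\eps}\right) \;=\; O\!\left(\frac{\log^3(1/\eps)\,\log n}{\eps}\right),
\]
as claimed. There is no real obstacle here; the only thing to double-check is that the preparation gathering radius $8tR$ and the Phase 1 gathering radius $b_i - 1 = O(tR)$ are both bounded by $O(tR)$, so that they do not inflate the bound beyond $O(t^2 R)$ after summing over the $t$ iterations.
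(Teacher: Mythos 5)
Your accounting matches the paper's proof essentially line for line: preparation costs $O(tR)$, each of the $t+1$ calls to \textsc{Grow-and-Carve-Packing} costs $O(tR)$, and Phase 3 costs $O(R)$, so Phase 1 dominates at $O(t^2 R)$. The only small imprecision is the claim that the final local solving ``incurs no further communication'' --- each component of weak diameter $O(tR)$ still needs to gather its topology to a single vertex, costing $O(tR)$ rounds as the paper notes --- but this is absorbed into the stated bound.
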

\begin{proof}
    The preparation step takes $O(tR)$ rounds. Each call of \textsc{Grow-and-Carve-Packing} takes at most $O(tR)$ rounds to gather information from its neighborhood. Therefore, $t+1$ iterations  take $O(t^2\cdot R)$ rounds.
    In Phase 3, running \cref{thm:vertex_LDD} takes $O\left(\frac{\log n}{\eps}\right)$ rounds and solving local instances of packing problem for each connected component takes another $O(tR)$ rounds.
\end{proof}

\noindent Consider $H_{i-1}$ and let $X_{v,r}^{(i)}$ be the random variable that denotes the number of sampled centres that intersects with the $r$-radius neighborhood of any vertex $v \in V(H_{i-1})$ in the $i$th iteration. In particular,
\[\Expect\left[X^{(i)}_{v,r} \right] = \sum_{C\in \mathcal{C}, \; C\cap N^r_{H_{i-1}}\neq \emptyset} p_{C,i} . \]

\begin{lemma}\label{lem:expected_centres_P}
    In the $i$th iteration of Phase 1, with probability $1-O(\nn^{-3})$, for all $v \in V$ we have \[\sum_{C\in \mathcal{C}, \; C\cap N^{b_i}_{H_{i-1}}(v)\neq \emptyset} p_{C,i} \leq 32\ln \nn.\]
\end{lemma}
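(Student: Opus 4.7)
The plan is to follow the structure of the proof of \cref{lem:expected_centres} for the LDD case, splitting into the base case $i = 1$ and the inductive case $i > 1$, but with cluster weights $W(P^{\local}_C, C)$ playing the role that individual vertex counts played in the LDD analysis.

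For the base case $i = 1$: Let $D = O(\ln \nn)$ be a uniform upper bound on the weak diameter of every cluster $C \in \mathcal{C}$, guaranteed by the $16\ln \nn$ executions of \cref{thm:vertex_LDD} with $\lambda = 1/2$. Set $U := N^{b_1 + D}(v)$. Any cluster $C$ intersecting $N^{b_1}(v)$ lies entirely in $U$ by the weak-diameter guarantee. Moreover, by the triangle inequality applied to any $u \in C \cap N^{b_1}(v)$ and any $w \in U$, we get $\dist(w, C) \leq 2b_1 + D \leq 8tR$ for our parameter choices (with $\eps_0$ small enough that $t$ is a sufficiently large constant), so $S_C \supseteq U$. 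Extending any feasible packing on $U$ by zeros is feasible on $S_C$, hence $W(P^{\local}_{S_C}, S_C) \geq W(P^{\local}_U, U)$. Now I would fix one of the $16 \ln \nn$ decompositions $\mathcal{C}_k$ and use the non-adjacency of its clusters (from \cref{thm:vertex_LDD}): the combined assignment that uses $P^{\local}_C$ on each cluster $C \in \mathcal{C}_k$ contained in $U$ (and zero elsewhere) is a feasible packing solution on $U$, since every constraint involves variables from at most one cluster. Consequently $\sum_{C \in \mathcal{C}_k,\; C \subseteq U} W(P^{\local}_C, C) \leq W(P^{\local}_U, U)$. Summing over the $16 \ln \nn$ decompositions yields
\[
\sum_{C:\; C \cap N^{b_1}(v) \neq \emptyset} \frac{W(P^{\local}_C, C)}{W(P^{\local}_{S_C}, S_C)} \leq 16 \ln \nn,
\]
and multiplying by $2^1 = 2$ bounds $\sum p_{C, 1}$ by $32 \ln \nn$.

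For the inductive case $i > 1$: I would argue by contradiction as in the LDD proof. Suppose some $v \in V(H_{i-1})$ violates the bound at iteration $i$. The intervals are chosen so that $a_{i-1} = b_i + 1$, and $H_{i-1}$ being a subgraph of $H_{i-2}$ implies $N^{b_i}_{H_{i-1}}(v) \subseteq N^{a_{i-1}}_{H_{i-2}}(v)$; combined with $p_{C,i} = 2 p_{C,i-1}$, this gives $\Expect[X^{(i-1)}_{v, a_{i-1}}] \geq \tfrac{1}{2}\sum_{C:\, C \cap N^{b_i}_{H_{i-1}}(v) \neq \emptyset} p_{C,i} > 16 \ln \nn$. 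Because the cluster samplings are mutually independent Bernoullis, a standard Chernoff bound gives $\Pr[X^{(i-1)}_{v, a_{i-1}} = 0] \leq e^{-16\ln \nn} \leq \nn^{-16}$. But $v \in V(H_{i-1})$ forces $X^{(i-1)}_{v, a_{i-1}} = 0$, because any centre $C$ sampled in iteration $i-1$ with $C \cap N^{a_{i-1}}_{H_{i-2}}(v) \neq \emptyset$ would remove $v$ (since $j^\ast \geq a_{i-1}$ in \textsc{Grow-and-Carve-Packing}). A union bound over $v \in V$ gives a failure probability of $O(\nn^{-3})$ as claimed.

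The main technical obstacle is verifying the containment $S_C \supseteq U$ in case $i=1$, and analogous containments in case $i > 1$, as these depend on matching the constants in $R = \Theta(t \ln \nn / \eps)$, in the factor $8t$ defining $S_C = N^{8tR}(C)$, in the radii $b_i \leq 3(t+2)R'$, and in the cluster weak diameter $D = O(\ln \nn)$. This is ultimately a bookkeeping exercise that works out because $8tR$ was deliberately chosen larger than $2b_i + D$. A secondary subtlety is the non-adjacency argument: I need to confirm that the clusters in each decomposition really are non-adjacent in the hypergraph sense (no hyperedge spans two clusters), which is exactly what the low-diameter decomposition from \cref{thm:vertex_LDD} gives.
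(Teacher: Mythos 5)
Your proof is correct and follows essentially the same route as the paper's: the base case $i=1$ is handled by enclosing all relevant clusters in a ball $U$ around $v$ with $S_C \supseteq U$ and bounding the cluster weights by $W(P^{\local}_U,U)$ per decomposition, and the case $i>1$ is the same contradiction argument via $a_{i-1} > b_i$, a Chernoff bound on $X^{(i-1)}_{v,a_{i-1}}$, and a union bound. The only differences are cosmetic — you take $U = N^{b_1+D}(v)$ with $D$ the actual cluster diameter rather than the paper's $N^{b_1+3R'}(v)$, and you spell out the non-adjacency/feasibility argument behind $\sum_{C\in\mathcal{C}_j,\,C\subseteq U}W(P^{\local}_C,C)\leq W(P^{\local}_U,U)$, which the paper asserts without comment.
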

\begin{proof}
    We distinguish between two cases:
    \begin{enumerate}
        \item [$i = 1$]: Consider the $\left(b_1 + 3R' = (t+2)(3R')\right)$-radius neighborhood $U_v:= N^{(t+2)3R'}(v)$ of $v$. Note that for any connected component $C\in\mathcal{C}$ that intersects with  $N^{b_1}(v)$, we have $C \subseteq U_v$. Therefore, for any $j \in [1,16\ln \nn]$:
        \[ \sum_{C \in \mathcal{C}_j, \; C\cap N^{b_1}(v)\neq \emptyset} W(P^{\local}_{C},C) \leq W(P^{\local}_{U_v}, U_v) . \] 
        Moreover, $S_C$ would completely cover $U_v$ and hence
        $W(P^{\local}_{S_C},S_C) \geq W(P^{\local}_{U_v},U_v)$. We have
        \begin{align*}
            \sum_{C\in \mathcal{C}, \; C\cap N^{b_1}(v)\neq \emptyset} p_{C,1}&= \sum_{i = 1}^{16\ln \nn} \sum_{C \in \mathcal{C}_i, \; C\cap N^{b_1}(v)\neq \emptyset} \frac{2W(P^{\local}_{C},C)}{W(P^{\local}_{S_C},S_C)} \\
            &\leq \sum_{i = 1}^{16\ln \nn} \sum_{C \in \mathcal{C}_i, \; C\cap N^{b_1}(v)\neq \emptyset} \frac{2W(P^{\local}_{C},C)}{W(P^{\local}_{U_v},U_v)} \\
            &\leq \sum_{i = 1}^{16\ln \nn} \frac{2W(P^{\local}_{U_v},U_v)}{W(P^{\local}_{U_v},U_v)} \\
            &\leq 32\ln \nn  .
        \end{align*}
        \item [$i > 1$]: Assume towards contradiction that $\sum_{C\in \mathcal{C}, \; C\cap N^{b_i}_{H_{i-1}}(v)\neq \emptyset} p_{C,i} > 32\ln \nn$. Note that $v$ survives in the $(i-1)$th iteration only if $X_{v,a_{i-1}}^{(i-1)}=0$. On the other hand, $\Expect\left[X_{v,a_{i-1}}^{(i-1)}\right] \geq \sum_{C\in \mathcal{C}, \; C\cap N^{a_{i-1}}_{H_{i-2}}(v)\neq \emptyset} p_{C,i-1} \geq \sum_{C\in \mathcal{C}, \; C\cap N^{b_i}_{H_{i-1}}(v)\neq \emptyset} p_{C,i}/2 > 16\ln \nn$. Hence by a Chernoff bound,
        \[\Pr\left[X_{v,a_{i-1}}^{(i-1)} = 0\right] \leq e^{-8\ln \nn} = \nn^{-8}  . \] The lemma follows by taking a union bound over all vertices and the error probability is at most $n\cdot \nn^{-8} < \nn^{-3}$.\qedhere
    \end{enumerate}
\end{proof}

\begin{lemma}\label{lem:deleted_vertices_1_P}
    In each iteration of Phase 1, with probability $1-O(\nn^{-2})$, $W(P^*,D_1) \leq \frac{\eps W^*}{4t}$, where  $D_1$ is the set of deleted vertices in this iteration.
\end{lemma}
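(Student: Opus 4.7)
The proof will follow the template of \cref{lem:deleted_vertices_1} for the low-diameter decomposition, with ``number of deleted vertices'' replaced by ``weight of $P^*$ on deleted vertices.'' My plan is to bound the contribution of each individual sampled centre $C$ to $W(P^*, D_1)$ by the weight of a three-layer window against a local optimum, then sum over centres using a Chernoff-plus-union-bound argument.

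The heart of the proof is a swap argument showing that for each centre $C$ executing \textsc{Grow-and-Carve-Packing} in iteration $i$,
\begin{equation*}
W(P^*,\, S_{j^*+1}) \;\leq\; W\bigl(P^{\local}_{N^{b_i-1}(C)},\, S_{j^*}\cup S_{j^*+1}\cup S_{j^*+2}\bigr),
\end{equation*}
where $j^*$ is the index selected in Step~\ref{step-a}. I would prove this by constructing a modified assignment $\tilde{P}$ that agrees with $P^{\local}_{N^{b_i-1}(C)}$ outside $S_{j^*}\cup S_{j^*+1}\cup S_{j^*+2}$, equals $P^*$ on $S_{j^*+1}$, and is zero on $S_{j^*}\cup S_{j^*+2}$. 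The crucial observation is that any hyperedge $e\in E(H)$ containing a vertex $v\in S_{j^*+1}$ must lie entirely in $S_{j^*}\cup S_{j^*+1}\cup S_{j^*+2}$, since every other vertex of $e$ is at hypergraph distance at most $1$ from $v$, hence at distance between $j^*$ and $j^*+2$ from $C$. On such hyperedges, $\tilde{P}$ is dominated coordinatewise by $P^*$, so feasibility carries over. On hyperedges disjoint from $S_{j^*+1}$, $\tilde{P}$ is dominated coordinatewise by $P^{\local}_{N^{b_i-1}(C)}$, so feasibility is preserved there as well. The optimality of $P^{\local}_{N^{b_i-1}(C)}$ now yields the displayed inequality. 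This swap is the main obstacle in the proof, and it is precisely the reason \textsc{Grow-and-Carve-Packing} uses three-layer windows rather than single layers.

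An averaging step then converts the per-centre bound into one involving a fraction of the ball. The $R'$ disjoint three-layer windows indexed by $j\in\{a_i, a_i+3,\ldots, a_i+3(R'-1)\}$ partition $[a_i, b_i]$, so the minimum window weight is at most $W(P^{\local}_{N^{b_i-1}(C)}, N^{b_i-1}(C))/R'$. Invoking the second inequality of \cref{lem:local_vs_global_P}, this is in turn at most $W(P^*, N^{b_i}(C))/R'$. Summing (with overcounting, which only strengthens the inequality) over all centres sampled in the $i$th iteration gives
\begin{equation*}
W(P^*, D_1) \;\leq\; \frac{1}{R'}\sum_{v\in V^*} w_v \cdot X^{(i)}_{v, b_i}.
\end{equation*}

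Finally, I would control $X^{(i)}_{v,b_i}$ using \cref{lem:expected_centres_P}, which states that with probability $1 - O(\nn^{-3})$ the expectation $\Expect[X^{(i)}_{v,b_i}]$ is at most $32\ln\nn$ for every $v\in V$. Since different clusters in $\mathcal{C}$ sample independently, a standard Chernoff bound gives $X^{(i)}_{v,b_i} \leq 64\ln\nn$ with probability $1 - \nn^{-\Omega(1)}$, and a union bound over $v\in V$ yields the same conclusion simultaneously for all vertices with probability $1 - O(\nn^{-2})$. Combining with $R'\geq R = \Theta(t\ln\nn/\eps)$ gives $W(P^*, D_1) \leq \eps W^*/(4t)$ as desired, after possibly enlarging the universal constant hidden in $R$.
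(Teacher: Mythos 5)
Your proposal is correct and follows essentially the same route as the paper's proof: the three-layer exchange argument bounding $W(P^*,S_{j^*+1})$ by the local optimum on the window, averaging over the $R'$ disjoint windows, converting the sum over centres into $\sum_v w_v X^{(i)}_{v,b_i}/R'$ by double counting, and controlling the coverage counts via \cref{lem:expected_centres_P} plus a Chernoff and union bound. Your spelled-out feasibility check for the swapped assignment (and the observation that a hyperedge meeting $S_{j^*+1}$ lies entirely in the window) is exactly the justification the paper leaves implicit, and your caveat about the constant in $R$ is the only (harmless) discrepancy.
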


\begin{proof}
    Consider the $i$th iteration. Due to \cref{lem:expected_centres_P}, each vertex $v \in V(H_{i-1})$ has $\Expect\left[X_{v,b_i}^{(i)}\right] \leq 32\ln \nn$ and only these centres can cover $v$. Again by a Chernoff bound,
    \[ \Pr\left[X_{v,b_i}^{(i)} >48\ln \nn\right] \leq e^{- 32\ln \nn/(4\cdot 2.5)} \leq \nn^{-3.2}  .\]
    In other words, $v$ can only be covered at most $48\ln \nn$ times. Taking the error probability of \cref{lem:expected_centres_P} into account, the total error probability is at most $n \cdot \nn^{-3.2} + O(\nn^{-3}) < \nn^{-2}$.

    Next, we claim that $W(P^*, S_{j^*+1}) \leq W(P^{\local}_{N^{b_i-1}(C)}, S_{j^*}\cup S_{j^*+1} \cup S_{j^*+2})$, for otherwise one could violate the optimality of $P^{\local}_{N^{b_i-1}(C)}$ by using the assignment of $P^*$ on $S_{j^*+1}$ and setting variables in $S_{j^*}\cup S_{j^*+2}$ to zero. On the other hand, note that $W\left(P^{\local}_{N^{b_i-1}(C)}, N^{b_i-1}(C)\right) \leq W\left(P^*,N^{b_i}(C)\right)$ again by the optimality of $P^*$.

    Let $F$ be the set of all centres in the $i$th iteration. For any centre $C \in F$, $W(P^*, S_{j^*+1})$ is at most ${W\left(P^{\local}_{N^{b_i-1}(C)}, N^{b_i-1}(C)\right)}/R$  since we are picking the weakest layer to delete. $W(P^*,D_1)$ is thus at most
    \[ \sum_{C \in F} \frac{W(P^{\local}_{N^{b_i-1}(C)}, N^{b_i-1}(C))}{R} \leq \sum_{C \in F} \frac{W(P^*,N^{b_i}(C))}{R} \leq \frac{48\ln \nn W^*}{R} \leq \frac{\eps W^*}{4t}  , \]
    where the second inequality follows from
    \[ \sum_{C \in F} {W\left(P^*,N^{b_i}(C)\right)} = \sum_{C \in F} \sum_{v\in N^{b_i}(C)} P^*(v) \cdot w_v \leq 48 \ln \nn \sum_{v \in V} P^*(v) \cdot w_v = 48\ln \nn W^*.\qedhere\]
    %\ynote{missing the summation in subsequent terms in the calculation? The last two inequalities need explanation. It seems that there are some missing steps.}
\end{proof}

\begin{lemma}\label{lem:deleted_vertices_2_P}
     Let $D_2$ be the set of vertices deleted in Phase 2. With probability $1 - O(\nn^{-2})$, $W(P^*,D_2) \leq \frac{\eps W^*}{4}$.
\end{lemma}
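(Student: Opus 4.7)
The plan is to mirror the proof of \cref{lem:deleted_vertices_1_P}, substituting the Phase~2 sampling probability $p_{C, t+1} = \frac{2^{t+1} W(P^{\local}_C, C) \cdot \ln(20/\eps)}{W(P^{\local}_{S_C}, S_C)}$ and interval $I_{t+1} = [3R'+1, 6R']$, and handling the extra $\ln(20/\eps)$ factor introduced by the sampling probability.

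The first step is to establish a high-probability bound on the expected number of centres around each surviving vertex, analogous to \cref{lem:expected_centres_P}. Specifically, for any $v \in V(H_t)$, I would argue that with probability $1 - O(\nn^{-3})$,
\[ \sum_{C\in \mathcal{C},\; C\cap N^{b_{t+1}}_{H_t}(v)\neq \emptyset} p_{C,t+1} \;\leq\; 32\ln \nn \cdot \ln(20/\eps). \]
The argument proceeds by contradiction exactly as in \cref{lem:expected_centres_P}: if the sum were larger, then because $p_{C,t+1} = 2\ln(20/\eps)\cdot p_{C,t}$ and because $a_t \geq b_{t+1}$, the Phase~1 sum over the $a_t$-radius neighborhood would exceed $16\ln \nn$, forcing $v$ to have been removed during iteration $t$ with probability $\geq 1-\nn^{-8}$ by a Chernoff bound. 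A union bound over $v$ yields the claim.

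Next, I would apply Chernoff once more: conditioned on the expectation bound above, with probability $\geq 1 - \nn^{-3}$ we have $X_{v,b_{t+1}}^{(t+1)} \leq 48\ln \nn \cdot \ln(20/\eps)$ for every $v \in V(H_t)$. Then I would bound the per-centre contribution to $W(P^*,D_2)$ by the same \emph{weakest-layer} argument used in \cref{lem:deleted_vertices_1_P}: for any centre $C$, the chosen layer $S_{j^*+1}$ satisfies $W(P^*,S_{j^*+1}) \leq W(P^{\local}_{N^{b_{t+1}-1}(C)}, S_{j^*}\cup S_{j^*+1}\cup S_{j^*+2})$, and summing over the $R'$ disjoint triples partitioning $I_{t+1}$ and applying \cref{lem:local_vs_global_P} gives
\[ W(P^*,S_{j^*+1}) \;\leq\; \frac{W(P^*,N^{b_{t+1}}(C))}{R}. \]

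Summing this over all Phase~2 centres $F$ and swapping the order of summation reveals that each $v \in V$ contributes to at most $X_{v,b_{t+1}}^{(t+1)} \leq 48\ln \nn\cdot \ln(20/\eps)$ of the neighborhoods $N^{b_{t+1}}(C)$, so
\[ W(P^*, D_2) \;\leq\; \frac{48\ln \nn \cdot \ln(20/\eps)}{R} \cdot W^* \;\leq\; \frac{\eps W^*}{4}, \]
using $R \geq 200\, t \ln \nn / \eps$ and $t = \lceil \log(20/\eps)\rceil$. The only real content beyond \cref{lem:deleted_vertices_1_P} is ensuring the $\ln(20/\eps)$ blowup in the centre-count bound is absorbed into $R$, which holds by design of $R$ and $t$. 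I do not anticipate a genuine obstacle; the main care point is verifying that the Phase~1 survival argument of \cref{lem:expected_centres_P} carries through cleanly with the modified multiplicative factor $2\ln(20/\eps)$ between $p_{C,t}$ and $p_{C,t+1}$, and that a union bound over the $O(\nn^{-3})$ failure events yields the overall $O(\nn^{-2})$ bound stated in the lemma.
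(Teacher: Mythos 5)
Your proposal is correct and follows essentially the same route as the paper's proof: bound the expected number of Phase~2 centres covering each surviving vertex via the survival argument of \cref{lem:expected_centres_P} (yielding $32\ln\nn\cdot\ln(20/\eps)$), apply a Chernoff bound and a union bound to get the $48\ln\nn\cdot\ln(20/\eps)$ coverage bound with high probability, and then combine the weakest-layer bound $W(P^*,S_{j^*+1})\leq W(P^*,N^{b_{t+1}}(C))/R$ with double counting to conclude $W(P^*,D_2)\leq \frac{48\ln\nn\cdot\ln(20/\eps)}{R}W^*\leq \frac{\eps W^*}{4}$. The accounting of the extra $\ln(20/\eps)$ factor against the choice of $R$ and $t$ matches the paper's.
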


\begin{proof}
    Following the same argument of \cref{lem:expected_centres_P}, we have
    $\Expect\left[X_{v,b_{t+1}}^{(t+1)}\right] \leq 32\ln \nn\cdot \ln(20/\eps)$ with probability at least $1 - O(\nn^{-3})$ and thus
    \[\Pr\left[X_{v,b_{t+1}}^{(t+1)} > 48 \ln \nn \cdot \ln(20/\eps)\right] \leq e^{-3.2\ln \nn} = \nn^{-3.2}  . \]
    With success probability at least $1 - n\cdot \nn^{-3.2} - O(\nn^{-3}) > 1 - O(\nn^{-2})$, each vertex $v$ can only be covered at most $48\ln \nn\cdot \ln(20/\eps)$ times. Let $F$ be set the sampled centres in Phase 2, the total deleted weight is at most
    \[ \sum_{C \in F} \frac{W(P^*,N^{b_{t+1}}(C))}{R} \leq \frac{48\ln \nn\cdot\ln(20/\eps) \cdot W(P^*,V)}{R} \leq \frac{\eps W^*}{4}  , \] again because we are deleting the weakest layer. %\ynote{I know that this proof is analogous to the previous one, but you should explain the $1/R$ factor in the proof.}
\end{proof}

Next, for the purpose of analysis, we shall consider a set of bad weighted vertices $B$ that arises from Phase 2. %\ynote{I still find the following two defs a bit hard to read... I don't have a good suggestion at this moment. You probably need to present the ideas and the notations and the definitions more slowly.}

\begin{definition}[Weighted components]
    We say that $D = (V^D \subseteq V, w^D:V^D \rightarrow \mathbb{Z})$ is a weighted component of a weighted hypergraph $H$ if $V^D$ is a subset of $V(H)$   and $w^D(v) \leq w_v$  for each $v \in V^D$.
\end{definition}

% \ynote{why do we need the connected requirement by the way? I thought what we need is weak diameter at most $d$?}
% \znote{yes weak diameter d is sufficient.}

% \ynote{if you use strong diameter, then you may encounter a situation that the graph is still dense in the original graph but you cannot create any new ``strong diameter'' component, so I guess we might really need to use weak diameter?}
% \znote{yes you are right.}
Essentially, we are picking a component $V^D$ and then for each vertex $v\in V^D$ from the component, we take part of its weight. The function $w^D(\cdot)$ tells us how much weight we are taking for $v \in V^D$.
Alternatively, one may view a vertex $v$ with weight $w_v$ as $w_v$ copies of vertices $v^{(1)}, v^{(2)}, \ldots v^{(w_v)}$, each of weight 1. Now for each vertex $v \in V^D$, we are taking $w^D(v)$ many copies of it into $D$. 

When we delete a weighted component $D$ from some graph $H$, denoting the residual graph as $H-D$, and the meaning of deleting $D$ is that we are subtracting the weights of $H$ by weights of $D$. In other words, we have
\[ w^{H-D}_v = \begin{cases}
     w^H_v - w^D(v) & v \in V^D, \\
     w^H_v & otherwise.
\end{cases}\]
We also define $W^{H-D}(P,V) := \sum_{v\in V} w^{H-D}_v P(v)$ as the measure of the quality of a solution $P$ on the modified weighted hypergraph $H-D$.
\begin{definition}[Bad weighted vertices]\label{def:weighted_bad}
The set of bad weighted vertices $B$ is defined as follows. Consider an arbitrary decomposition of the weights of $V^* \cap V(H_t)$ into a set of weighted components $\mathcal{D} := \{D_1, D_2, \ldots\}$  and a residual part $L$ satisfying the following conditions.
\begin{itemize}
    \item Each $D \in \mathcal{D}$ has weak diameter at most $R$ in $H_t$ and satisfies $\sum_{v\in V^D} w^D(v) = \ceil{\frac{\eps W^*}{40\ln \nn}}$.
    \item For each $v \in L$, its  $(R/2)$-radius neighborhood in $H_t$ satisfies that 
\[ \sum_{u \in V^*\cap N^{R/2}(v)} w^L(u) \leq \ceil{\frac{\eps W^*}{40\ln \nn}} .\]
    \item The weighted components $\mathcal{D}$ and $L$ decompose $V^* \cap V(H_t)$ in the sense that 
    \[  w^{L}_v + \sum_{D\in \mathcal{D} \; : \; v\in V^D} w^D(v)  = w_v \]
    for each $v \in V^* \cap V(H_t)$. That is, the weights add up correctly.
\end{itemize}
%of weak diameter (in $H_t$) at most $R$ and $\sum_{v\in V^D} w^D(v) = \ceil{\frac{\eps W^*}{40\ln \nn}}$, as well as a residual part $L$ (deleting $D_i$s from $H_t$) such that for each $v \in L$, its  $(R/2)$ neighborhood (in $H_t$) satisfies that 
%\[ \sum_{u \in V^*\cap N^{R/2}(v)} w^L(u) \leq \ceil{\frac{\eps W^*}{40\ln \nn}} .\]
%By decomposition, it means that the weights adds up correctly, i.e., $\forall v\in V^*\cap H_t$,
%\[ \sum_{D\in \mathcal{D}: v\in V^D} w^D(v) + w^{L}_v = w_v . \]
For each component $D \in \mathcal{D}$, if no vertices in $D$ are sampled as a centre in Phase 2, we call it a bad component. The set $B$ is the union of all bad components.
\end{definition}

%In particular, $B = (V^B, w^B)$ is a set of 
Similar to \cref{def:bad}, a decomposition described in \cref{def:weighted_bad} exists, due to the following greedy construction in $H_t$. For each vertex $v$, examine its $(R/2)$-radius neighborhood $N^{R/2}(v)$ in $H_t$. If  greater than $\ceil{\frac{\eps W^*}{40\ln \nn}}$ weight of $V^\ast$ were there, arbitrarily pick $\ceil{\frac{\eps W^*}{40\ln \nn}}$ units of weight to form one weighted component $D_i$ and eliminate them from $H_t$. Repeat the procedure until no new weighted component can be created. Note that each weighted component $D \in \mathcal{D}$ is not required to be connected. %\ynote{I think the above two definitions should be clearer, but I don't have a precise suggestion now. Might want to make it clearer that when you distribute the weight of a vertex to different components, the assigned weight $w^D(v)$ is required to be an integer.}

\begin{lemma}\label{lem:bad_vertices_P}
    $\sum_{D \in \mathcal{D} \; : \; \text{$D$ is bad}}  \sum_{v \in V^D}P^*(v)\cdot w^D(v) \leq \frac{\eps W^*}{4}$ with probability $1 - O(\nn^{-3})$.
\end{lemma}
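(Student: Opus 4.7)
My plan follows the template of \cref{lem:bad_vertices}. First, I would count components: each $D \in \mathcal{D}$ carries weight exactly $\lceil \eps W^*/(40 \ln \nn) \rceil$ and the total weight distributed across $\mathcal{D}$ is at most $W^*$, so $|\mathcal{D}| \leq 40 \ln \nn / \eps$.

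For a fixed $D \in \mathcal{D}$, let $\mathcal{C}(D) := \{C \in \mathcal{C} : C \cap V^D \neq \emptyset\}$. Conditioning on all randomness up through the start of Phase 2 (in particular on $\mathcal{C}$, the values $p_{C,t+1}$, $H_t$, and the decomposition $\mathcal{D}$), the component $D$ is bad precisely when no cluster in $\mathcal{C}(D)$ is sampled in Phase 2; since the Phase-2 coin flips are independent across clusters,
\[ \Pr[D \text{ bad}] \;\leq\; \exp\!\left(-\sum_{C \in \mathcal{C}(D)} p_{C,t+1}\right). \]
I would then unpack $p_{C,t+1}$ using $2^{t+1} \geq 40/\eps$ together with $W(P^{\local}_{S_C}, S_C) \leq W^*$ (from \cref{lem:local_vs_global_P}) and $W(P^{\local}_C, C) \geq W(P^*, C) \geq \sum_{v \in V^D \cap C} w^D(v)$, collapsing the sum to
\[ \sum_{C \in \mathcal{C}(D)} p_{C,t+1} \;\geq\; \frac{40 \ln(20/\eps)}{\eps W^*} \sum_{v \in V^D} k_v\, w^D(v), \]
where $k_v$ is the number of the $16 \ln \nn$ preparation-step LDDs in which $v$ is clustered.

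The main obstacle will be producing a uniform lower bound on $k_v$. Since each LDD uses $\lambda = 1/2$ and thus clusters any fixed vertex with probability at least $1/2$ independently across the $16 \ln \nn$ copies, a Chernoff bound combined with a union bound over $V$ yields $k_v \geq 4 \ln \nn$ for all $v$ simultaneously with probability $1 - O(\nn^{-3})$ (inflating the preparation's repetition count by a constant factor if needed to drive the per-vertex failure probability below $\nn^{-4}$). Plugging this in together with $\sum_{v \in V^D} w^D(v) \geq \eps W^*/(40 \ln \nn)$ gives $\sum_{C \in \mathcal{C}(D)} p_{C,t+1} \geq 4 \ln(20/\eps)$, and hence $\Pr[D \text{ bad}] \leq \eps/20$.

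Finally, let $Y$ be the number of bad components, so $\Expect[Y] \leq (\eps/20)\,|\mathcal{D}| \leq 2 \ln \nn$. Each indicator of ``$D$ bad'' is a monotone-decreasing function of the independent Phase-2 coin flips, so the standard upper-tail Chernoff bound applies (via negative correlation) and yields $Y \leq 10 \ln \nn$ with probability $1 - O(\nn^{-4})$. Each bad $D$ contributes at most $\sum_{v \in V^D} P^*(v) w^D(v) \leq \lceil \eps W^*/(40 \ln \nn) \rceil$ to the target quantity, so the total is at most $10 \ln \nn \cdot \lceil \eps W^*/(40 \ln \nn) \rceil \leq \eps W^*/4$, using the standing assumption $W^* = \omega(\ln \nn / \eps)$ to absorb the ceiling. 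A union bound over the two failure events completes the argument with total error $O(\nn^{-3})$.
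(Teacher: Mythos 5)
Your overall architecture matches the paper's: the count $|\mathcal{D}| \le 40\ln\nn/\eps$, the per-component bound $\Pr[D\text{ is bad}] \le \eps/20$ obtained by lower-bounding $\sum_{C} p_{C,t+1}$ via \cref{lem:local_vs_global_P} together with a high-probability lower bound on the number of preparation-step clusters containing each vertex, and the final accounting $10\ln\nn \cdot \ceil{\eps W^*/(40\ln\nn)} \le \eps W^*/4$ all appear in the paper's proof. The genuine gap is the concentration step for $Y$. Each indicator $\mathds{1}[D\text{ is bad}]$ is an \emph{increasing} function of the independent events $\{C\text{ not sampled}\}_{C \in \mathcal{C}}$, and a family of functions that are all increasing (equivalently, all decreasing) in the same independent coins is \emph{positively} associated by Harris/FKG, not negatively. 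Positive association gives no upper-tail Chernoff bound, and it cannot be patched by bounded dependence either: distinct components $D$ may intersect the same cluster $C$ (and may even share vertices, since the weight of a vertex can be split across several weighted components), so the dependency degree among the badness events can be as large as $|\mathcal{D}| = \Theta(\ln\nn/\eps)$, far too large for \cref{lem:Chernoff_BD} when $\Expect[Y] = O(\ln \nn)$. The paper resolves exactly this point by introducing an auxiliary process $\mathsf{SAMP}'$ in which each vertex $v$ marks itself \emph{independently for every pair} $(C,D)$ with probability $p_{D,C,v} = \frac{40\, w^D(v)\ln(20/\eps)}{\eps\, W(P^{\local}_{S_C},S_C)}$; these probabilities sum over $(D,v)$ to at most $p_{C,t+1}$, so a coupling makes every component that is bad in the real process also bad in $\mathsf{SAMP}'$, while the badness events in $\mathsf{SAMP}'$ are mutually independent across $D$ precisely because the marks are indexed by $(C,D,v)$. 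The Chernoff bound is then applied to these independent indicators. Without this (or an equivalent) decoupling device, your step $\Pr[Y > 10\ln\nn] \le \nn^{-4}$ is unjustified.

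A secondary issue: with the algorithm's fixed $16\ln\nn$ repetitions and per-run clustering probability $e^{-1/2}-\nn^{-3}$, the threshold $k_v \ge 4\ln\nn$ holds for a fixed vertex only with probability roughly $1-\nn^{-1.6}$, which does not survive a union bound over all vertices; and you cannot ``inflate the repetition count'' in the analysis of an algorithm whose parameters are already fixed. The fix is simply to lower the threshold to $k_v \ge \ln\nn$ (as the paper does), which holds per vertex with probability $1-\nn^{-3}$ and already yields $\sum_{C \in \mathcal{C}(D)} p_{C,t+1} \ge \ln(20/\eps)$, hence $\Pr[D\text{ is bad}] \le \eps/20$ --- all that the final count requires.
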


\begin{proof}
    Consider the following alternative sampling process $\mathsf{SAMP}'$: For all $D \in \mathcal{D}$ and $C \in \mathcal{C}$, if a vertex $v\in D\cap C$, $v$ marks itself with probability $p_{D,C,v} = \frac{40 w^D(v)}{ \eps W(P^{\local}_{S_C},S_C)}\cdot \ln(20/\eps)$. This happens independently for all $C$ and $D$, i.e., it is possible that $v$ is marked for some pair $(C,D)$ but not marked for another pair $(C',D)$.% \ynote{you might want to emphasize that this is done independently for all $C$ (and for all $D$). That is, if $v$ is marked for the pair $(C,D)$ due to the above procedure, it is possible that $v$ is not marked for another pair $(C',D)$, so the sampling for all $C$ are independent events.}
    For each connected component $C \in \mathcal{C}$, it is sampled if any of its vertices is marked. By union bound, the probability of sampling $C$ in $\mathsf{SAMP}'$ is at most
    \begin{align*}
        \sum_{D \in \mathcal{D}, \; v \in C\cap D} p_{D,C,v} \leq \frac{40 W(P^*,C)\cdot \ln(20/\eps)}{\eps W(P^{\local}_{S_C},S_C)}\leq \frac{40 W(P^{\local}_C,C)\cdot \ln(20/\eps)}{\eps W(P^{\local}_{S_C},S_C)} = p_{C,t+1},
    \end{align*}
    since $2^{t+1} \geq \frac{40}{\eps}$.
    %\ynote{add more details... I guess you need $2^{t+1} = 40/\eps$ to relate this to the actual sampling probability $p$. This should be written down.}
    Hence, there is a coupling between $\mathsf{SAMP}'$ and the actual sampling process $\mathsf{SAMP}$, in the sense that if component $C$ is sampled in $\mathsf{SAMP}$, it is also sampled in $\mathsf{SAMP}'$. %\ynote{more precisely, there is a coupling between SAMP and $\mathsf{SAMP}'$ in such a way that for each $C$, if $C$ is sampled in SAMP then $C$ is also sampled in $\mathsf{SAMP}'$.}

    Let $k$ be the number of components $D$ in $\mathcal{D}$ and we have $k \leq W^* / \frac{\eps W^*}{40\ln \nn} = \frac{40\ln \nn}{\eps}$. Note that for any such $D$, it is bad only if all of its vertices are not marked. This happens with probability at most
    \[ \prod_{C \in \mathcal{C}, \; v \in C\cap D} (1-p_{D,C,v}) \leq e^{-\ln(20/\eps)} = \eps/20  .\] \par
    This is because $\sum_{v \in D} w^D(v) \geq \frac{\eps W^*}{40 \ln \nn}$ by definition of the weighted decomposition. Moreover, in expectation any vertex $v$ should be contained in greater than $\left(e^{-0.5} - \nn^{-3}\right)16 \ln \nn \geq 8\ln \nn$ connected components $C \in \mathcal{C}$. by a Chernoff bound, with probability at least $1 - e^{-(7/8)^2 8\ln \nn/2} \geq 1 - \nn^{-3}$, a vertex $v$ is contained in at least $\ln \nn$ connected components $C \in \mathcal{C}$.
    So that we have
    \[ \sum_{C \in \mathcal{C}, \; v \in C\cap D} p_{D,C,v} \geq \ln \nn \cdot \frac{40}{\eps W^*} \cdot \frac{\eps W^*}{40 \ln \nn}\cdot \ln(20/\eps) = \ln(20/\eps)  . \]
    Let $Y$ be the number of bad weighted components. We have that $\Expect[Y] \leq \eps t/20 \leq 2\ln \nn$.
    by a Chernoff bound, we have that $\Pr[Y > 10\ln \nn] \leq \nn^{-4}$.
    The total weight of bad weighted components is therefore at most $\frac{\eps W^*}{40\ln \nn}\cdot Y \leq \frac{\eps W^*}{4}$ with   probability $1 - O(\nn^{-3})$.
\end{proof}

Since the bad components contribute only a small weight, it would not hurt us even if we ignore them and treat them as deleted. Hence, in the remaining parts of the analysis, we will exclude the weights in $B$. Note that since $B$ consists of weighted components, a vertex may be partially in $B$ in that only a fraction of its weight is in $B$. We will show that by excluding $B$ in the analysis, the remaining graph satisfies the condition to apply the weighted case of Chernoff Bound with bounded dependence from \cref{lem:Chernoff_BD_weighted}.

In the following lemma, recall that $H'$ is the residual graph after executing Phase 2, and $B$ is a collection of weighted vertices defined in \cref{def:weighted_bad}. We will be examining the weighted graph $H' - B$. In particular, if a weighted component $D \in \mathcal{D}$ is not bad, then it is completely removed or deleted in Phase 2. Hence $H' - B$ is indeed a subset of the residual part $L$ from \cref{def:weighted_bad}.

%\ynote{like the low-diameter decomposition, this needs more explanation, especially here for a vertex it can only be partially deleted in the sense that only some of its weights belong to bad components. }

\begin{lemma}\label{lem:deleted_vertices_3_P}
 Let $D_3$ be the set of vertices deleted in Phase 3. With probability $1-O(\nn^{-2})$, $W^L(P^*,D_3) \leq \frac{\eps W^*}{4}$.
 %, excluding weights from the bad weighted vertices $B$. %\ynote{again, here you exclude the bad ones and only count the non-bad ones,  you also need to explicitly say that in the proof}
\end{lemma}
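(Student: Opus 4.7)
The plan is to mirror the argument of \cref{lem:deleted_vertices_3} (the unweighted Phase~3 analysis) but in the weighted setting, applying the weighted Chernoff bound with bounded dependence (\cref{lem:Chernoff_BD_weighted}) to the random variables obtained from Phase 3 restricted to the residual part $L$ of the decomposition in \cref{def:weighted_bad}.

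First, for each vertex $v \in V^\ast \cap V(H')$ define the random variable $X_v := w^L_v \cdot \mathds{1}[v \in D_3]$, so that $W^L(P^\ast, D_3) = \sum_v X_v$. By the guarantee of \cref{thm:vertex_LDD} with $\lambda = \eps/10$, each vertex is deleted with probability at most $1 - e^{-\eps/10} + \nn^{-3} \leq \eps/8$, so $\Expect[X_v] \leq (\eps/8)\, w^L_v$. Summing over $v \in V^\ast \cap V(H')$ and using $\sum_v w^L_v \leq W^\ast$, we get $\Expect[\sum_v X_v] \leq \eps W^\ast/8$.

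Next I would control the dependence. The Phase 3 algorithm runs for at most $4\ln\nn/\lambda < R/4$ rounds, so for any two vertices $u,v$ with $\dist(u,v) \geq R/2$ the indicators $\mathds{1}[u\in D_3]$ and $\mathds{1}[v\in D_3]$ are independent. Combined with the second bullet in \cref{def:weighted_bad}, for every $v$ the total weight of $X_u$ supported in its $(R/2)$-neighbourhood satisfies
\[
\sum_{u \in V^\ast \cap N^{R/2}(v)} w^L_u \;\leq\; \ceil{\tfrac{\eps W^\ast}{40 \ln \nn}}.
\]
This is exactly the bounded-dependence condition required by the weighted Chernoff bound of \cref{lem:Chernoff_BD_weighted}, with weights $w^L_v \leq w_v$ and dependence weight $d \leq \eps W^\ast/(40\ln\nn) + 1$.

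Finally I would apply \cref{lem:Chernoff_BD_weighted} with deviation $\eps W^\ast/4 \geq 2\Expect[\sum_v X_v]$: the resulting tail bound is of the form $O(W^\ast / \ln \nn) \cdot \exp(-\Omega(\ln \nn)) \leq \nn^{-2}$, giving $W^L(P^\ast, D_3) \leq \eps W^\ast/4$ with probability $1 - O(\nn^{-2})$. The main obstacle is checking that the weighted Chernoff bound with bounded dependence really applies to the weights $w^L_v$ (which can be fractional parts of the original weights), and that summing $w^L_v$ over the $(R/2)$-neighbourhood rather than the $\ell_\infty$ size of the neighbourhood yields the correct form of the bound; this is handled by viewing each vertex of weight $w_v$ as $w_v$ unit-weight copies, so the bounded-dependence hypothesis translates cleanly from the counting form used in \cref{lem:deleted_vertices_3} to the weighted form needed here.
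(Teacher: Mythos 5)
Your proposal is correct and follows essentially the same route as the paper's proof: define indicator variables for deletion in Phase 3, use the $<R/4$ round complexity of \cref{thm:vertex_LDD} to get independence at distance $\geq R/2$, invoke the second bullet of \cref{def:weighted_bad} to bound the dependence weight by $\frac{\eps W^*}{40\ln\nn}$, and apply \cref{lem:Chernoff_BD_weighted}. Your explicit attention to using the residual weights $w^L_v$ (rather than $w_v$) and to the unit-copy reduction is, if anything, slightly more careful than the paper's write-up.
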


\begin{proof}
    By the definition of weighted component decomposition, we have that for each vertex $v\in V(H')$, $\sum_{u \in V^*\cap N^{R/2}(v)} w^L(u) \leq \ceil{\frac{\eps W^*}{40\ln \nn}}$. For any vertex $v \in V^*$, let $X_v$ be the indicator variable where $X_v = 1$ if $v$ is deleted in Phase 3. The round complexity of the execution of \cref{thm:vertex_LDD} in phase 3 is at most $4\ln \nn/\lambda < R/4$. In other words, for any two vertices $u,v$ where $\dist(u,v)\geq R/2$, $X_u$ and $X_v$ are independent of each other. 
    Hence we have that $\Pr[X_v = 1] \leq 1-e^{-\frac{\eps}{10}} + \nn^{-3} \leq \eps/8$ and they have bounded dependence $d \leq \frac{\eps W^*}{40\ln \nn}$. By applying \cref{lem:Chernoff_BD_weighted}, we have that $X = \sum_v w_v X_v $ satisfies: %\ynote{I guess here we use the assumption that $n$ is not only a polynomial upper bound of the number of vertices but also sum of all weights?}\znote{I am not very certain about what is the exact assumption that we want. Would it be too much to assume that $n$ upper bounds $W^*$?} \ynote{If we don't assume that, then the following calculation does not work. I think assuming the sum of all weight is polynomial in the number of vertices is a reasonable assumption, and for the case of packing actually I think there is a reduction that allows us to make this assumption for arbitrary weight.}
    \[ \Pr\left[ X > \frac{\eps W^*}{4}\right] \leq O\left(\frac{\eps W^*}{\ln \nn}\right) \cdot e^{-3 \ln \nn} < \nn^{-2}. \qedhere\]
\end{proof}

% \begin{lemma}\label{lem:packing_size}
%     With probability $1-O(\nn^{-2})$, the solution computed is an $(1-\eps)$-approximation to the optimal solution.
% \end{lemma}
% \begin{proof}
%     Let $D$ be the set of deleted vertices throughout the algorithm. By \cref{lem:deleted_vertices_1_P,lem:deleted_vertices_2_P,lem:bad_vertices_P,lem:deleted_vertices_3_P}, we have $W(P^*,D) \leq \eps W(P^*,V)$. On the other hand, by the optimality of solutions for each connected component, the solution has total weight at least $W(P^*, V \setminus D) = W(P^*,V) - W(P^*, D) \geq (1-\eps) W(P^*,V)$.
% \end{proof}

We are ready to prove \cref{thm:packing}.

\begin{proof}[Proof of \cref{thm:packing}]
We first show that our algorithm computes an $(1 - \eps)$-approximate solution to the packing problem. First of all, $H$ is decomposed into non-adjacent components and all deleted vertices are assigned 0. We verify that all  constraints are satisfied by check each hyperedge $e \in E$:
\begin{itemize}
    \item $e$ is fully contained in some component $S_i$: $e$ is satisfied by the local solution handling $S_i$.
    \item $e$ is adjacent to more than one components: Such $e$ does not exist as the components induced from the deletion of vertices are non-adjacent.
    \item $e$ is adjacent to some component $S_i$ and some deleted vertices: By how we formulate the local instance of a packing problem and the fact that deleted vertices are assigned 0, $e$ is satisfied by the local solution handling $S_i$.
\end{itemize}
Hence, the solution we computed does not violate any constraint. Next, let $D$ be the set of deleted vertices throughout the algorithm. By \cref{lem:deleted_vertices_1_P,lem:deleted_vertices_2_P,lem:bad_vertices_P,lem:deleted_vertices_3_P}, we have $W(P^*,D) \leq \eps W(P^*,V)$. By the optimality of solutions for each connected component, the solution has total weight at least $W(P^*, V \setminus D) = W(P^*,V) - W(P^*, D) \geq (1-\eps) W(P^*,V)$, as required. The error probability is at most $1/\nn$ from a union bound over the error of all lemmas across all iterations.
Finally, the round complexity bound is due to \cref{lem:round_P}.
\end{proof}

% \ynote{write a proof of \cref{thm:packing} here}\znote{wonder if we want to merge \cref{lem:packing_size} into the proof above.} \ynote{I think it is good to merge.} \ynote{the proof needs to contain more details, see my revised proof for the low-diameter decomposition}

%\znote{added:}

\paragraph{An Alternative Approach}
Below we describe an alternative approach to proving \cref{thm:packing}.\footnote{We thank an anonymous reviewer for suggesting this approach.}
%Our algorithm for packing problems (and likewise for covering problems) can also be viewed in the following way: 
Suppose that we already have a weighted low-diameter decomposition algorithm $\mathcal{A}$, i.e., given a weighted graph $G$, $\mathcal{A}$  computes a low-diameter decomposition of $G$ where the weight of the deleted vertices is at most an $\eps$ fraction of the total weight, with high probability. Such an algorithm $\mathcal{A}$ can be obtained by extending our low-diameter decomposition  algorithm of~\cref{thm:LDD} to the weighted case by incorporating some techniques developed in this section. 

Similar to the preparation step, in the alternative approach, the algorithm starts by running $t = O(\eps^{-2} \log \nn)$ low-diameter decomposition algorithms from~\cite{EN16} in parallel and computes the corresponding packing solutions $P_i : V \rightarrow \{0,1\}$ for the $i$th run of the decomposition. Let $w(P_i)$ be the objective value of the solution $P_i$. Next, we assign a new weight  $w'(v):= w(v) \cdot |\{i: P_i(v) = 1\}|$ to each vertex $v \in V$, apply the weighted low-diameter decomposition algorithm $\mathcal{A}$, and compute a packing solution $P'$ in the decomposed graph. 

Let $P^*$ be an optimal solution. The initial low-diameter decompositions of~\cite{EN16} give an expectation guarantee: $\Expect[w(P_i)] \geq (1-\eps)w(P^*)$. By a Chernoff bound, $\sum_i w(P_i)$ should be concentrated around $t(1-\eps)^2  w(P^*)$ with high probability, and the weighted low-diameter decomposition should have clustered a total weight of at least $t(1 - \eps)^3  w(P^*)$. Let $P_i'$ be the partial packing solution restricting to clustered vertices in $P_i$, i.e., $P_i'(v) = 1$ if and only if $v$ is clustered and $P_i(v) = 1$. By an averaging argument, there must be an index $i$ where $w(P_i') \geq (1 - \eps)^3 w(P^*)$. The correctness of the overall algorithm then follows from $w(P') \geq w(P_i')$.

%We nevertheless emphasize that the generalization of low-diameter decomposition to the weighted case is indeed non-trivial and all the technical challenges are captured in our algorithm that directly attacks the packing (and covering) ILP.

%------------------------------Covering ILP

\section{Covering Problems}\label{sect:covering-main}
In this section, we prove \cref{thm:covering}. Similar to the previous section, we let $H=(V,E)$ be the hypergraph corresponding to the underlying covering ILP problem, and we let  $Q^*$ be any fixed optimal solution to the problem, which is unknown to the algorithm.
We assume that there is a parameter $\nn$  such that $\max(|V|, W(Q^*,V)) \leq \nn \leq \max\left(|V|,W(Q^*,V)\right)^c$, where $c\geq 1$ is some universal constant, is known to all vertices. As we only focus on the case $W(Q^*,V)$ is polynomial in $n = |V|$, we also have $\log \nn = \Theta(\log n)$.

\subsection{Algorithm for \texorpdfstring{$(1+\eps)$}{(1+eps)}-approximate Covering}
We now present our algorithm that computes a $(1+\eps)$-approximation for any covering ILP problem with high probability. 
We set $t := \ceil{\log\ln n+\log(1/\eps)+8}$, $R := \ceil{\frac{200t\ln \nn}{\eps}}$.
We decompose the interval $[2R+1, 2(t+1)R]$ into $t$ intervals $I_{t}, I_{t-1},\ldots, I_1$ of length $2R$, where $I_i := [a_i, b_i] =[(t-i+1)2R+1, (t-i+2)(2R)]$.
Note that here each interval has $2R$ layers of vertices and we shall view this as $R$ layers of disjoint hyperedges.

\subsubsection{Preparation for Sampling}
Similar to the packing case, we need to compute an estimate of the optimal covering value to facilitate our sampling process.

Before everything, each vertex runs $16\ln \nn$ independent executions of \cref{thm:LDD_covering} with $\lambda =(\ln 20/21)$ in parallel. As a result, we obtain $16\ln \nn$ sets of connected components $\mathcal{C}_1, \ldots, \mathcal{C}_{16\ln \nn}$ and we denote $\mathcal{C}:= \bigcup_{i=1}^{16\ln \nn} \mathcal{C}_i$. For each component $C \in \mathcal{C}$, we compute $W(Q^{\local}_{C},C)$. Let $S_C:=N^{8tR}(C)$ be the $(8tR)$-radius neighborhood of $C$, also compute $W(Q^{\local}_{S_C},S_C)$. % \ynote{``we note that we computed'' looks strange, you may just write ``we compute'' and briefly explain why you have $+$.}\znote{the $+$ part was explained in \cref{thm:simple_covering}, do we want to re-emphasize here?}

\subsubsection{Ball-Growing-and-Carving}
The ball-growing-and-carving subroutine differs from the one for packing as we are now deleting hyperedges (i.e., constraints) instead of vertices. We can afford to delete vertices in the packing case because assigning zero to deleted vertices would not violate any constraint. However this is not the case for covering. If we were to delete vertices, analogously we need to assign one to all deleted vertices to ensure no constraints are violated, which would incur too much loss and hurt our approximation ratio. Hence, we instead try to satisfy some constraints that does not cost too much, and delete them to partition the graph into non-adjacent parts.

%\ynote{there are still a lot of differences, and you might want to highlight and discuss these differences.}\znote{hopefully the explanation is not too wordy.}
\begin{algorithm}[H]\label{alg:sample_carve_C}
\small
\caption{\textsc{Grow-and-Carve-Covering}($I=[a,b]$) for $C$}
\begin{algorithmic}[1]

\STATE Gather the topology of its $b$-radius neighborhood $N^{b}(C)$.
\STATE Compute $Q^{\local}_{N^b(C)}$. %\ynote{$N^b_C$ should be $N^b(C)$? (also many other places...)}
\STATE Let $S_j$ be the set of vertices of distance exactly $j$ from $C$.
\STATE Find odd integer $j^* \in [a, b]$ that minimises $W(Q^{\local}_{N^b(C)}, S_{j^*} \cup S_{j^*+1})$.
\STATE Fix the assignment of $Q^{\local}_{N^b(C)}$ on $S_{j^*}$ and $S_{j^*+1}$, delete hyperedges between them (note that these hyperedges are now satisfied).
\STATE Remove $C$ and its $(j^*)$-radius neighborhood from the graph.

\end{algorithmic}
\end{algorithm}
Note that by ``fixing assignment'', it means that if a variable is assigned one in some execution, it is permanently assigned one while the rest of the variables are still free to be assigned. %\ynote{if it is not assigned, then it can still be assigned in the future?}

\subsubsection{Phase 1}
Phase 1 consists of $t$ iterations. We use $H_i$ to denote the residual graph \emph{after} executing the $i$th iteration and $H_0 = H$.

\begin{algorithm}[H]
\small
\caption{Phase 1 for each $C \in \mathcal{C}$}
\begin{algorithmic}[1]

\FOR{$i$ = 1 to $t$}
    \STATE Sample itself to be a centre with probability $p_{C,i}=\frac{2^i W(Q^{\local}_{C},C)}{W(Q^{\local}_{S_C},S_C)}$.
    \IF{$C$ is a centre}
        \STATE \textsc{Grow-and-Carve-Covering}($I_i = [a_i, b_i]$).
    \ENDIF
\ENDFOR
\RETURN

\end{algorithmic}
\end{algorithm}

%\ynote{like the previous two sections, discuss the technical details about how you deal with overlapping \textsc{Grow-and-Carve-Covering} }
Similar to the packing case, while running in parallel, an hyperedge is deleted as long as it is deleted by some cluster.

\subsubsection{Phase 2: Completing the Picture}
On the residual graph $H_t$, we execute the covering algorithm by combining  \cref{thm:LDD_covering,thm:simple_covering} with parameter $\lambda = \ln \frac{\eps + 5}{5}$. Also, for vertices removed from Phase 1, they form induced components with weak diameter at most $(t+1)R$ that are non-adjacent to any other components. Each of these isolated components solve their local covering instance in parallel.

\subsection{Analysis}

For the purpose of analysis, we shall fix an optimal solution $Q^*$ and let $V^*:= \{v\in V: Q^*(v) = 1\}$. Also, let $W^* := W(Q^*,V)$.
\begin{lemma}\label{lem:round_C}
    The algorithm runs in $O(t^2\cdot R) = O\left(\frac{\left(\log\log n + \log(1/\eps)\right)^3 \cdot \log n}{\eps}\right)$ rounds.
\end{lemma}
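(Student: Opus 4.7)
The plan is to mirror the proofs of \cref{lem:round,lem:round_P}, accounting for the three components of the covering algorithm separately, and then substituting the definitions of $t$ and $R$ to convert the bound $O(t^2 R)$ into the stated form. Since the hard work for the covering algorithm is in the correctness / quality-of-solution analysis (which will follow in subsequent lemmas), the round complexity here is essentially a bookkeeping exercise.

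First I would bound the preparation step. Running $16\ln\nn$ independent instances of \cref{thm:LDD_covering} in parallel has round cost $O(\ln n/\lambda)$ with $\lambda = \ln(20/21) = \Theta(1)$, so $O(\log n)$ rounds. Then, for each component $C \in \mathcal{C}$, gathering the topology of $S_C = N^{8tR}(C)$ to compute $W(Q^{\local}_{S_C},S_C)$ costs $O(tR)$ rounds. So the preparation step takes $O(tR)$ rounds overall.

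Next I would bound Phase 1. Each iteration's \textsc{Grow-and-Carve-Covering} call on interval $I_i = [a_i,b_i]$ gathers the $b_i$-neighborhood, which costs at most $b_1 = 2(t+1)R = O(tR)$ rounds; local computation of $Q^{\local}_{N^{b_i}(C)}$ is free in the $\LOCAL$ model. Summing over the $t$ iterations gives $O(t^2 R)$ rounds. For Phase 2, the application of the cover-variant low-diameter decomposition (\cref{thm:LDD_covering}) combined with \cref{thm:simple_covering} with $\lambda = \ln\!\tfrac{\eps+5}{5} = \Theta(\eps)$ runs in $O(\log n/\eps) = O(R)$ rounds, and the parallel brute-force solves for removed components (each of weak diameter at most $(t+1)R$) contribute another $O(tR)$ rounds. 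Everything then aggregates to $O(t^2 R)$ rounds.

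Finally I would plug in the parameters. By definition $R = \ceil{200t\ln\nn/\eps} = O(t\log n/\eps)$ since $\log\nn = \Theta(\log n)$, so
\[
t^2 R \;=\; O\!\left(\frac{t^3 \log n}{\eps}\right).
\]
Since $t = \ceil{\log\log n + \log(1/\eps) + 8} = \Theta(\log\log n + \log(1/\eps))$, substituting yields the claimed bound $O\!\left(\frac{(\log\log n + \log(1/\eps))^3 \log n}{\eps}\right)$. I do not anticipate any real obstacle; the only subtlety to double-check is that the diameters appearing in the various gathering steps are all dominated by $b_1 = O(tR)$ and not by $\max_C |S_C| = O(tR)$ either, which they are.
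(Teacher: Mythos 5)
Your proof is correct and follows essentially the same route as the paper: bound the preparation step and each \textsc{Grow-and-Carve-Covering} call by $O(tR)$, sum over the $t$ iterations, add the $O(\log n/\eps)$ cost of Phase 2 plus $O(tR)$ for the local solves, and substitute the definitions of $t$ and $R$. The only (harmless) extra detail is your explicit accounting of the $16\ln\nn$ parallel decompositions and the $S_C$-gathering, which the paper folds into its $O(tR)$ bound for the preparation step.
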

\begin{proof}
    The preparation step takes $O(tR)$ rounds. Each call of \textsc{Grow-and-Carve-Covering} takes at most $O(tR)$ rounds to gather information from its neighborhood. $t$ iterations thus take $O(t^2\cdot R)$ rounds. 
    In Phase 2, running \cref{thm:LDD_covering} takes $O\left(\frac{\ln n}{\lambda}\right) = O\left(\frac{\ln n}{\eps}\right)$ rounds and solving local instances of covering problem for each connected component takes another $O(tR)$ rounds.
\end{proof}

\noindent let $X_{v,r}^{(i)}$ be the random variable that denotes the number of sampled centres that intersect with the $r$-radius neighborhood of any vertex $v \in V(H_{i-1})$ in the $i$th iteration. In particular,
\[\Expect\left[X^{(i)}_{v,r}  \right]= \sum_{C\in \mathcal{C}, \; C\cap N^r_{H_{i-1}}(v)\neq \emptyset} p_{C,i} . \]

\begin{lemma}\label{lem:expected_centres_C} 
    In the $i$th iteration of Phase 1, with probability $1-O(\nn^{-1.17})$, for all $v \in V$, we have \[\sum_{C\in \mathcal{C}, \; C\cap N^{b_i}_{H_{i-1}}(v)\neq \emptyset} p_{C,i} \leq 64\ln \nn.\] 
\end{lemma}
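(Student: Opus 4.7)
The plan is to follow the two-case structure of the analogous packing lemma (\cref{lem:expected_centres_P}): handle the base case $i=1$ directly using the preparation decompositions, and handle $i > 1$ by a contradiction argument using the survival of $v$ into iteration $i$.

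For the base case $i=1$, I would pick a reference neighborhood $U_v := N^{b_1 + R}(v)$, chosen so that every cluster $C \in \mathcal{C}$ intersecting $N^{b_1}(v)$ satisfies $C \subseteq U_v$ (each such $C$ has weak diameter at most $R$, which follows from the choice of $\lambda$ in \cref{thm:LDD_covering} together with $R = \Omega(\log \nn/\lambda)$), and simultaneously $S_C = N^{8tR}(C) \supseteq U_v$ (since the triangle-inequality check $2 b_1 + R \leq 8tR$ holds for our range of $t$). Two standard restriction arguments for covering then give
\[
 W(Q^{\local}_{S_C}, S_C) \;\geq\; W(Q^{\local}_{U_v}, U_v) \qquad\text{and}\qquad W(Q^{\local}_C, C) \;\leq\; W(Q^{\local}_{U_v}, C),
\]
the first because $Q^{\local}_{S_C}$ restricted to $U_v$ is feasible for the local covering on $U_v$, and the second because $Q^{\local}_{U_v}$ restricted to $C$ is feasible for the local covering on $C$. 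Substituting into $p_{C,1} = 2W(Q^{\local}_C, C)/W(Q^{\local}_{S_C}, S_C)$ and swapping the order of summation,
\[
 \sum_{C\,:\,C \cap N^{b_1}(v) \neq \emptyset} p_{C,1} \;\leq\; \frac{2\,\sum_{u \in U_v} w_u\, Q^{\local}_{U_v}(u)\, M_u}{W(Q^{\local}_{U_v}, U_v)},
\]
where $M_u := |\{C \in \mathcal{C} : u \in C\}|$. It thus suffices to prove a uniform bound $M_u \leq 32 \ln \nn$ for every $u \in U_v$.

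The main new obstacle relative to the packing case is that the preparation step here uses \cref{thm:LDD_covering}, which outputs sparse covers rather than partitions, so $M_u$ is genuinely random even for a fixed $u$. For each of the $16 \ln \nn$ independent preparation decompositions, the multiplicity at $u$ is stochastically dominated by a geometric random variable with parameter $e^{-\lambda}$, whose expectation is $O(1)$ for our choice of $\lambda$, and hence $\Expect[M_u] = O(\log \nn)$. A Chernoff-type tail bound for sums of independent sub-geometric random variables, of the kind collected in \cref{sect:concentration}, should yield $\Pr[M_u > 32 \ln \nn] \leq \nn^{-c}$ for some $c$ strictly greater than $2$; since $|U_v| \leq n \leq \nn$, a union bound over $u \in U_v$ loses a factor of $\nn$, and tuning the constants in the Chernoff exponent against this union-bound slack produces the claimed $O(\nn^{-1.17})$ failure probability.

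For the inductive case $i > 1$, I would argue by contradiction in the same way as the packing lemma: if $\sum_{C\,:\,C \cap N^{b_i}_{H_{i-1}}(v)\neq \emptyset} p_{C,i} > 64 \ln \nn$ for some $v \in V(H_{i-1})$, then because $p_{C,i-1} = p_{C,i}/2$ and because the intervals $I_{i-1}$ and $I_i$ are disjoint with $a_{i-1} > b_i$ (so every $C$ counted in iteration $i$ was already within $N^{a_{i-1}}_{H_{i-2}}(v)$ at the start of iteration $i-1$), one gets $\Expect[X^{(i-1)}_{v,a_{i-1}}] > 32 \ln \nn$. A standard Chernoff bound then shows that $v$ is hit by a centre in iteration $i-1$ with probability at least $1 - \nn^{-\Omega(1)}$, contradicting $v \in V(H_{i-1})$. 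A final union bound over $v \in V$ and over the two cases completes the proof.
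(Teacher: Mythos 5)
Your proposal is correct and follows essentially the same route as the paper: the base case via a reference neighborhood $U_v$, the two restriction inequalities $W(Q^{\local}_{S_C},S_C)\geq W(Q^{\local}_{U_v},U_v)$ and $W(Q^{\local}_C,C)\leq W(Q^{\local}_{U_v},C)$, a swap of summation reducing everything to the multiplicity $M_u=\sum_j Y_u^{(j)}$, and the concentration bound for sums of independent geometric variables (the paper's \cref{lem:Chernoff_Geometric}, which also absorbs the $+\nn^{-2}$ additive slack in the domination that you elided), with the same contradiction argument for $i>1$. The only differences are cosmetic (your $U_v=N^{b_1+R}(v)$ versus the paper's $N^{b_1+2R}(v)$), and your constant accounting matches the paper's $n\cdot\nn^{-2.17}\leq\nn^{-1.17}$.
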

\begin{proof}
    We distinguish between two cases:
    \begin{enumerate}
        \item [$i = 1$]: Consider the $\left(b_1 + 2R = (t+2)(2R)\right)$-radius neighborhood $U_v := N^{(t+2)(2R)}(v)$ of $v$. Note that for each connected component $C\in\mathcal{C}$ that intersects with $N^{b_1}(v)$ the $b_1$-radius neighborhood of $v$, $S_C$ would completely cover $U_v$ and hence
        $W(Q^{\local}_{S_C},S_C) \geq W(Q^{\local}_{U_v},U_v)$. Moreover, we have $C \subset U_v$. Now according to \cref{thm:simple_covering}, for any $j \in [1,16\ln \nn]$,
        \[\sum_{C \in \mathcal{C}_j, \; C\cap N^{b_1}(v)\neq \emptyset} W(Q^{\local}_{C},C) \leq \sum_{C \in \mathcal{C}_j, \; C\cap N^{b_1}(v)\neq \emptyset} W(Q^{\local}_{U_v},C) \leq \sum_{u\in U_v} Y_u^{(j)} \cdot Q^{\local}_{U_v}(u) \cdot w_u, \]
        where $Y_u^{(j)}$ counts the number of subsets $u$ belongs to in $\mathcal{C}_j$ and is dominated by $Z_j + \nn^{-2}$ for $Z_j\sim \geom(\frac{20}{21})$. Note that $\Expect\left[Z_j \right] \leq 1.05$. Therefore,
        \[\sum_{j=1}^{16\ln \nn} \sum_{C \in \mathcal{C}_j, \; C\cap N^{b_1}(v)\neq \emptyset} W(Q^{\local}_{C},C) \leq \sum_{u\in U_v}\left(\sum_{j=1}^{16\ln \nn} Y_u^{(j)}\right) \cdot Q^{\local}_{U_v}(u) \cdot w_u. \]
        By \cref{lem:Chernoff_Geometric}, we have that for each vertex $u \in V$: 
        \begin{align*} \Pr\left[\sum_{j=1}^{16\ln \nn} Y_u^{(j)} > 32\ln \nn\right] &\leq\Pr\left[\sum_{j=1}^{16\ln \nn} Z_j > 32\ln \nn - \frac{16\ln\nn}{\nn^2}\right]\\
        &\leq\Pr\left[\sum_{j=1}^{16\ln \nn} Z_j > (1.05) \cdot 16\ln \nn + (0.9)\cdot 16 \ln\nn \right]\\
        &\leq e^{- (0.9 p^2)16 \ln \nn / 6} < \nn^{-2.17} ,
        \end{align*}
        where $p = \frac{20}{21}$. Therefore,
        \begin{align*}
        \sum_{C\in \mathcal{C}, \; C\cap N^{b_1}(v)\neq \emptyset} p_{C,1} &= \sum_{j = 1}^{16\ln \nn} \sum_{C \in \mathcal{C}_j, \; C\cap N^{b_1}(v)\neq \emptyset} \frac{2W(Q^{\local}_{C},C)}{W(Q^{\local}_{S_C},S_C)} \\
        &\leq \sum_{j = 1}^{16\ln \nn}\sum_{C \in \mathcal{C}_j, \; C\cap N^{b_1}(v)\neq \emptyset} \frac{2W(Q^{\local}_{C},C)}{W(Q^{\local}_{U_v},U_v)} \\
        &\leq \frac{64 \ln \nn\cdot W(Q^{\local}_{U_v},U_v)}{W(Q^{\local}_{U_v},U_v)} \\
        &\leq 64\ln \nn  .
        \end{align*}
        \item [$i > 1$]: Assume towards contradiction that $\sum_{C\in \mathcal{C}, \; C\cap N^{b_i}_{H_{i-1}}(v)\neq \emptyset} p_{C,i} > 64\ln \nn$. Note that $v$ survives in the $(i-1)$th iteration only if $X_{v,a_{i-1}}^{(i-1)}=0$. On the other hand, \[\Expect\left[X_{v,a_{i-1}}^{(i-1)}\right] \geq \sum_{C\in \mathcal{C}, \; C\cap N^{a_{i-1}}_{H_{i-2}}(v) \neq \emptyset} p_{C,i-1} \geq \sum_{C\in \mathcal{C}, \; C\cap N^{b_i}_{H_{i-1}}(v)\neq \emptyset} \frac{p_{C,i}}{2} > 32\ln \nn.\] By a Chernoff bound,
        \[ \Pr\left[X_{v,a_{i-1}}^{(i-1)}=0\right] \leq e^{-16 \ln \nn} = \nn^{-16}  . \]
        The lemma follows by taking a union bound over all vertices, where the error probability is at most $n\cdot \nn^{-2.17} \leq \nn^{-1.17}$.\qedhere
    \end{enumerate}
\end{proof}

%\ynote{what is ``the ball-growing step''? I would suggest removing this phrase or make it more precise.}

\begin{lemma}\label{lem:deleted_vertices_1_C}
    In each iteration of Phase 1, with probability $1 - O(\nn^{-1.17})$, the total weight from fixing the assignment in the \textsc{Grow-and-Carve-Covering} subroutine is at most $\frac{\eps W(Q^*,V)}{2t}$.
\end{lemma}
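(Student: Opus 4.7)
The proposal is to mirror the structure of the packing analysis in \cref{lem:deleted_vertices_1_P}, after first recasting ``fixing the assignment'' on the chosen layers as the covering analogue of ``deleting a sparse layer''. The plan is to bound the per-centre fixed weight by $1/R$ of the local optimum, then sum over centres using a Chernoff-based cap on the multiplicity of coverage, and finally compare the local optima against $W(Q^\ast,\cdot)$ via \cref{lem:local_vs_global_C}.

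First I would fix iteration $i$ and condition on the conclusion of \cref{lem:expected_centres_C}, which holds with probability $1-O(\nn^{-1.17})$ and gives $\Expect[X_{v,b_i}^{(i)}] \leq 64\ln\nn$ for every $v\in V(H_{i-1})$. A Chernoff bound (on $X_{v,b_i}^{(i)}$, the number of sampled centres $C$ with $C\cap N^{b_i}_{H_{i-1}}(v)\neq\emptyset$) followed by a union bound over $v\in V$ gives, with probability $1-O(\nn^{-1.17})$, a uniform upper bound $X_{v,b_i}^{(i)} \leq 96\ln\nn$. So every vertex lies in the $b_i$-radius neighbourhood of at most $96\ln\nn$ sampled centres in iteration~$i$.

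Next I would examine the per-centre cost. Since $a_i = (t-i+1)(2R)+1$ is odd, the odd integers $j \in [a_i,b_i]$ are exactly $a_i, a_i+2, \ldots, b_i-1$, a set of size $R$, and the corresponding pairs $\{S_j\cup S_{j+1}\}$ form a disjoint partition of $N^{b_i}(C)\setminus N^{a_i-1}(C)$. Thus
\[
\sum_{\substack{j\in[a_i,b_i]\\ j\text{ odd}}} W\!\left(Q^{\local}_{N^{b_i}(C)},\, S_j\cup S_{j+1}\right) \;\leq\; W\!\left(Q^{\local}_{N^{b_i}(C)},\, N^{b_i}(C)\right),
\]
so the minimum-weight pair chosen by \textsc{Grow-and-Carve-Covering} has weight at most $\frac{1}{R}\,W(Q^{\local}_{N^{b_i}(C)}, N^{b_i}(C))$. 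By \cref{lem:local_vs_global_C}, $W(Q^{\local}_{N^{b_i}(C)}, N^{b_i}(C)) \leq W(Q^\ast, N^{b_i}(C))$, which is the inequality that ties the per-centre cost to the fixed optimal solution.

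Finally, let $F$ denote the set of centres sampled in iteration~$i$. Summing the per-centre bound and swapping the order of summation,
\[
\sum_{C\in F} W\!\left(Q^\ast, N^{b_i}(C)\right) \;=\; \sum_{v\in V} Q^\ast(v)\,w_v \cdot \bigl|\{C\in F: v\in N^{b_i}(C)\}\bigr| \;\leq\; 96\ln\nn \cdot W(Q^\ast, V),
\]
using the uniform cap from the first step. Hence the total weight fixed in iteration~$i$ is at most
\[
\frac{96\ln\nn \cdot W(Q^\ast,V)}{R} \;\leq\; \frac{\eps\, W(Q^\ast,V)}{2t},
\]
by our choice $R = \lceil 200 t \ln\nn/\eps\rceil$. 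A union bound over the two bad events (\cref{lem:expected_centres_C} and the Chernoff cap) keeps the failure probability at $O(\nn^{-1.17})$. The main conceptual obstacle is the first bullet: for covering we ``fix'' rather than ``delete'', so I have to verify that the cost of fixing one thin pair of layers is genuinely comparable to $\frac{1}{R}$ of the restricted local optimum, and that the local optimum in turn is dominated by the restriction of $Q^\ast$. This is exactly where the pair-of-layers design and the direction of the inequality in \cref{lem:local_vs_global_C} (opposite to the packing direction in \cref{lem:local_vs_global_P}) become essential.
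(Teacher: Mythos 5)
Your proposal is correct and follows essentially the same route as the paper's proof: condition on \cref{lem:expected_centres_C}, apply a Chernoff bound to cap the coverage multiplicity at $96\ln\nn$, bound the chosen pair of layers by $\frac{1}{R}W(Q^{\local}_{N^{b_i}(C)},N^{b_i}(C))$ via the disjoint-pairs averaging argument, compare to $W(Q^\ast,N^{b_i}(C))$ using \cref{lem:local_vs_global_C}, and sum over centres by swapping the order of summation. The only difference is that you spell out the partition of $[a_i,b_i]$ into $R$ disjoint pairs explicitly, which the paper leaves implicit.
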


\begin{proof}
    Consider the $i$th iteration. Due to \cref{lem:expected_centres_C}, with error probability at most $O(\nn^{-1.17})$, each vertex $v \in V(H_{i-1})$ has $\Expect\left[X_{v,b_i}^{(i)}\right] \leq 64\ln \nn$ and only these centres can cover $v$. Again by a Chernoff bound,
    \[ \Pr\left[X_{v,b_i}^{(i)} > (1+0.5) 64 \ln \nn\right] \leq e^{- 64 \ln \nn/ 10} = \nn^{-6.4}  . \]
    In other words, with probability at least $1 - O(\nn^{-1.17}) - n\cdot \nn^{-6.4} \geq 1 - O(\nn^{-1.17})$, all vertices $v$ can only be covered at most $(1.5)\cdot 64\ln \nn = 96\ln \nn$ times. Also, we recall that \[W\left(Q^{\local}_{N^{b_i}(C)}, N^{b_i}(C)\right) \leq W(Q^*, N^{b_i}(C)).\]
    Let $F$ be the set of all sampled centres in the $i$th iteration. For any centre $C \in F$, $W\left(Q^{\local}_{N^{b_i}(C)}, S_{j^*}\cup S_{j^*+1}\right)$ is at most $
    W\left(Q^{\local}_{N^{b_i}(C)}, N^{b_i}(C)\right)/R$ since we are picking the weakest layer to delete. The fixed assignment weight is at most 
    \[ \sum_{C \in F} \frac{W\left(Q^{\local}_{N^{b_i}(C)}, N^{b_i}(C)\right)}{R} \leq \sum_{C \in F} \frac{W\left(Q^*,N^{b_i}(C)\right)}{R} \leq \frac{96\ln \nn W(Q^*,V)}{R} \leq \frac{\eps W(Q^*,V)}{2t}.\qedhere \]
\end{proof}

%\ynote{$v \in V(H_t)$, as $H_t$ is a hypergraph, not a vertex set?}

\begin{lemma}\label{lem:sparsity_covering}
    %For each vertex $v \in V(H_t)$, consider $N^{2R}(v)$ which is the $2R$-radius neighborhood of $v$.
    With probability $1-O(\nn^{-4})$, we have $W(Q^*, N^{2R}(v))\leq \frac{\eps W(Q^*,V)}{250\ln \nn}$ for all vertices $v$ in $H_t$.
\end{lemma}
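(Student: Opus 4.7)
The plan is to mirror the inductive template of~\cref{lem:expected_centres_C} by analyzing a \emph{hypothetical} $(t+1)$-st iteration of Phase~1 whose sampling probability extrapolates the pattern to $p_{C,t+1} := \frac{2^{t+1}W(Q^{\local}_C,C)}{W(Q^{\local}_{S_C},S_C)}$ on the interval $I_{t+1} = [1,2R]$. Define
\[
f(v) \;:=\; \sum_{C \in \mathcal{C}\,:\, C \cap N^{2R}(v) \neq \emptyset} p_{C,t+1},
\]
where the neighborhood is taken in $H_t$. First I would show, by the survival/contradiction argument already used in~\cref{lem:expected_centres_C}, that with probability $1 - O(\nn^{-4})$ every vertex $v \in V(H_t)$ satisfies $f(v) \leq O(\ln \nn)$. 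Indeed, if $f(v) > K\ln\nn$ for a suitably large constant $K$, then because $p_{C,t} = p_{C,t+1}/2$ and $N^{2R}(v) \subseteq N^{a_t}(v)$ we would have $\sum_{C \cap N^{a_t}(v) \neq \emptyset} p_{C,t} > (K/2)\ln\nn$, so a Chernoff bound gives that $v$ survives the $t$-th iteration with probability at most $\nn^{-K/2}$; a union bound over all vertices closes this step.

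Using $W(Q^{\local}_{S_C},S_C) \leq W(Q^*,V) = W^*$ from~\cref{lem:local_vs_global_C}, the bound on $f$ immediately yields
\[
\sum_{C\,:\, C \cap N^{2R}(v) \neq \emptyset} W(Q^{\local}_C,C) \;\leq\; \frac{O(\ln\nn)\cdot W^*}{2^{t+1}},
\]
and substituting $2^t \geq 256\ln n/\eps$ bounds the right-hand side by $O(\eps W^*/\ln n)$, which is still a factor of $\ln\nn$ short of the desired $\eps W^*/(250\ln\nn)$.

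The main obstacle, which I expect to require the most care, is converting this aggregate bound into the claimed bound on $W(Q^*, N^{2R}(v))$, i.e.\ recovering the missing $\ln\nn$ factor. That factor must come from the $16\ln\nn$ independent sparse-cover decompositions built during the preparation step. By the sparse-cover guarantee of~\cref{thm:LDD_covering}, every hyperedge fully contained in a slightly shrunken ball around $v$ lies in at least one cluster of each $\mathcal{C}_j$, and each such cluster must intersect $N^{2R}(v)$ after accounting for the cluster diameter $O(\ln\nn/\lambda) \ll R$. Since the per-vertex cluster multiplicity across the $16\ln\nn$ decompositions is dominated by a sum of independent $\geom(20/21)$ random variables, \cref{lem:Chernoff_Geometric} ensures that every vertex is covered by $\Theta(\ln\nn)$ clusters with probability at least $1 - O(\nn^{-4})$. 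The subtle point is that $Q^{\local}_C$ may satisfy a given hyperedge using vertices entirely different from those chosen by $Q^*$, so the target lower bound $\sum_{C \cap N^{2R}(v) \neq \emptyset} W(Q^{\local}_C,C) \geq \Omega(\ln\nn) \cdot W(Q^*, N^{2R}(v))$ cannot be derived vertex-by-vertex; instead I would aggregate the local covers into a single feasible cover of the hyperedges supported in the shrunken ball and charge its total weight against $Q^*$ via an averaging argument over the $\Theta(\ln\nn)$ cluster multiplicity. Combining this lower bound with the upper bound above and a union bound over $v$ and the failure events from earlier iterations gives the claimed $1 - O(\nn^{-4})$ success probability.
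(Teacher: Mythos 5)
Your overall strategy is the right one and structurally matches the paper's: the lemma is proved by the same survival argument as in \cref{lem:expected_centres_C} --- if the total sampling probability mass of clusters of $\mathcal{C}$ meeting $N^{2R}(v)$ in the $t$-th iteration were $\Omega(\log \nn)$, then some such cluster would have been sampled with probability $1-\nn^{-\Omega(1)}$, and since $a_t = 2R+1 > 2R$ the carved ball would contain $v$, so $v$ could not appear in $H_t$. (The paper runs this directly as a contradiction on the actual $t$-th iteration rather than via your hypothetical $(t+1)$-st one; that difference is cosmetic. A minor slip: your upper bound $\frac{O(\ln\nn)\cdot W^*}{2^{t+1}}$ evaluates to $O(\eps W^*)$, not $O(\eps W^*/\ln n)$; your subsequent remark that you are ``a factor of $\ln\nn$ short'' is consistent with the former, not the latter.)

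The genuine gap is precisely the step you flag as the main obstacle. To get from the sampling-probability bound to a bound on $W(Q^*,N^{2R}(v))$ one needs, for each single decomposition $\mathcal{C}_j$, the lower bound
\[
\sum_{C\in\mathcal{C}_j,\; C\cap N^{2R}(v)\neq\emptyset} W(Q^{\local}_C,C) \;\geq\; W(Q^*,N^{2R}(v)),
\]
and summing over the $16\ln\nn$ independent decompositions is what produces the $\ln\nn$ factor; the paper derives this inequality from the fact that the clusters of $\mathcal{C}_j$ meeting $N^{2R}(v)$ jointly cover $N^{2R}(v)$. Your proposed substitute --- aggregate the local solutions into one feasible cover of the hyperedges supported in a shrunken ball and ``charge its total weight against $Q^*$'' --- cannot work as stated, because the inequality it yields points the wrong way: by \cref{lem:local_vs_global_C} we have $W(Q^{\local}_C,C)\leq W(Q^*,C)$, so feasibility of the aggregated cover only shows its weight is at least the \emph{local} optimum of the union, which is itself at most (not at least) the weight of $Q^*$ restricted there. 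No averaging over the $\Theta(\ln\nn)$ cluster multiplicity repairs this direction mismatch. Since this is the only step in which the covering structure and the unknown optimal solution $Q^*$ actually enter, the proposal as written does not establish the lemma.
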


\begin{proof}
    For any vertex $v \in V(H_t)$, assume towards contradiction that $W(Q^*, N^{2R}(v)) > \frac{\eps W(Q^*,V)}{250\ln \nn}$. It survives the last iteration only if no component $C \in \mathcal{C}$ intersecting $N^{2R}(v)$ are sampled as centres. For any $j \in [1,16\ln \nn]$, Note that $\bigcup_{C\in \mathcal{C}_j, C \cap N^{2R}(v) \neq \emptyset}$ completely covers $N^{2R}(v)$. Hence,
    \[ \sum_{C\in \mathcal{C}_j, \; C \cap N^{2R}(v) \neq \emptyset} W(Q^{\local}_{C}, C) \geq W(Q^*,N^{2R}(v)) > \frac{\eps W^*}{250\ln \nn} . \]

    As such, the probability that none of them being sampled is at most
    \[\prod_{C\in \mathcal{C}, \; C \cap N^{2R}(v) \neq \emptyset} \left(1 - \frac{256 \ln \nn}{\eps}\frac{W(Q^{\local}_{C},C)}{ W(Q^{\local}_{S_C},S_C)}\right) \leq e^{-\big(16\ln \nn \cdot \frac{256\ln \nn \cdot W(Q^*,N^{2R}(v))}{\eps W^*} \big) } \leq  e^{-16\ln \nn} \leq \nn^{-16}  .\]
    The rest follows from a union bound over all vertices $v$.
\end{proof}

\begin{lemma}\label{lem:deleted_vertices_2_C}
In Phase 2, the computed solution has weight at most $(1+\frac{\eps}{2})W(Q^*,V)$ with probability $1-O(\nn^{-3})$.
\end{lemma}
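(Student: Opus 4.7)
The plan is to bound Phase~2's output weight by expressing it via local solutions and then applying cluster-coverage together with sparsity to obtain concentration. Phase~2 produces two kinds of local solutions: one for each isolated component that was carved out during Phase~1 (together these cover $V \setminus V(H_t)$), and one for each cluster of the Phase~2 covering LDD on the residual graph $H_t$. By \cref{lem:local_vs_global_C} applied piecewise, the contribution of each piece $C$ is at most $W(Q^\local_C, C) \leq W(Q^*, C)$. Letting $Y_v$ denote the number of Phase~2 pieces that contain $v$ (so $Y_v = 1$ for $v \in V \setminus V(H_t)$, and $Y_v$ equals the cluster-count of $v$ in the covering LDD for $v \in V(H_t)$), this gives the master bound
\[
W(Q_2, V) \;\leq\; \sum_{v \in V} w_v\, Q^*(v)\, Y_v.
\]

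By \cref{thm:LDD_covering} run with $\lambda = \ln\frac{\eps+5}{5}$, for each $v \in V(H_t)$ the variable $Y_v$ is stochastically dominated by a geometric random variable with parameter $e^{-\lambda} = 5/(\eps+5)$, so $\Expect[Y_v] \leq e^\lambda = 1 + \eps/5$. Taking expectations in the master bound therefore gives at most $W(Q^*, V\setminus V(H_t)) + (1+\eps/5)\,W(Q^*, V(H_t)) \leq (1+\eps/5)\,W^*$, leaving a deviation budget of $(3\eps/10)\,W^*$ before reaching the target $(1+\eps/2)\,W^*$.

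For concentration I use two ingredients. First, \cref{thm:LDD_covering} runs in $O(\log n / \lambda) = O(\log n / \eps)$ rounds, and the chosen $R$ satisfies $R = \Omega(\log n / \eps)$, so this round complexity is at most $2R$; hence $Y_u$ and $Y_v$ are independent whenever $\dist_{H_t}(u,v) > 2R$. Second, \cref{lem:sparsity_covering} provides $W(Q^*, N^{2R}(v)) \leq \frac{\eps W^*}{250\ln \nn}$ for every $v \in V(H_t)$ with probability $1 - O(\nn^{-4})$. Conditioned on this sparsity event, the collection $\{w_v\, Q^*(v)\, Y_v : v \in V(H_t)\}$ has weighted bounded dependence with parameter $d \leq \frac{\eps W^*}{250\ln \nn}$, and the deviation budget $(3\eps/10)\,W^*$ easily exceeds the $\tilde{O}(d\sqrt{\ln \nn})$-scale required for Chernoff-type concentration at failure probability $O(\nn^{-3})$.

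The main obstacle is that the bounded-dependence Chernoff bounds collected in \cref{sect:concentration}, in particular \cref{lem:Chernoff_BD_weighted}, are formulated for $\{0,1\}$-valued variables, whereas our $Y_v$ are geometrically distributed. I would resolve this by truncation: a geometric tail bound combined with a union bound over $v \in V(H_t)$ gives $Y_v \leq C \log \nn$ for every $v$ simultaneously with probability $1 - O(\nn^{-3})$, and on that event the layer decomposition $Y_v = \sum_{k=1}^{C \log \nn} \mathbf{1}[Y_v \geq k]$ reduces the sum to $O(\log \nn)$ weighted $\{0,1\}$ sums, each with the same dependence-weight $d$ and with expectation bounded by $(1-e^{-\lambda})^{k-1}\,W(Q^*, V(H_t))$ on layer $k$. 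Applying \cref{lem:Chernoff_BD_weighted} layer-by-layer, summing the resulting geometric series, and absorbing the $O(\log \nn)$ union-bound overhead into the exponent then yields the claimed $1 - O(\nn^{-3})$ success probability.
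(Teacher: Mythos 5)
Your decomposition of the Phase~2 output, the master bound $W(Q_2,V)\le\sum_v w_v Q^*(v)Y_v$ via \cref{thm:simple_covering} and \cref{lem:local_vs_global_C}, the geometric domination of $Y_v$, and the use of \cref{lem:sparsity_covering} to get bounded dependence with $d\le\frac{\eps W^*}{250\ln\nn}$ all coincide with the paper's proof. The divergence is in the last step, and there the proposal has a genuine gap. You assert that the toolbox in \cref{sect:concentration} only covers $\{0,1\}$-valued variables; in fact \cref{lem:geometric_BD} and \cref{lem:geometric_weighted} (built on the negative-binomial tail bound \cref{lem:Chernoff_Geometric} via an equitable coloring of the dependency graph) give exactly the bounded-dependence concentration for weighted sums of geometric variables that is needed, and the paper simply invokes \cref{lem:geometric_weighted}. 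So the truncation-and-layering detour is unnecessary --- but, more importantly, as described it does not work.

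The problem is not the $O(\log\nn)$ union-bound overhead in the failure probability, which is indeed absorbable, but the additive deviation floor incurred by each application of \cref{lem:Chernoff_BD_weighted}. To drive a single layer's failure probability down to $\poly(\nn^{-1})$ you need $\delta^2\mu_k/\bigl((2+\delta)d\bigr)=\Omega(\ln\nn)$, which forces an additive deviation of at least $\max\bigl(\sqrt{\mu_k d\ln\nn},\,\Omega(d\ln\nn)\bigr)$ on layer $k$; once $\mu_k\approx(\eps/5)^{k-1}W(Q^*,V(H_t))$ drops below $d\ln\nn$ you are in the sub-exponential regime and every remaining layer must be granted an additive budget of $\Omega(d\ln\nn)=\Omega(\eps W^*/250)$ regardless of how small its mean is. Since the truncation threshold must be $\Theta(\log\nn)$ for the union bound over $v$ to succeed (as $\Pr[Y_v\ge k]$ decays only like $5^{-k}$), the layer budgets sum to $\Omega(\eps W^*\log\nn)$, exceeding your $(3\eps/10)W^*$ allowance by a $\Theta(\log\nn)$ factor. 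The same loss reappears if you instead treat the truncated $Y_v$ as a single $[0,O(\log\nn)]$-bounded variable, since the dependence weight then inflates by the truncation level. The clean fix is the one the paper uses: color the dependency graph equitably with $d+1$ colors and apply the sum-of-geometrics Chernoff bound within each color class, which is precisely the content of \cref{lem:geometric_BD,lem:geometric_weighted}. (You also silently drop the $\nn^{-2}$ additive slack in the domination $X_v\preceq Z_v+\nn^{-2}$ from \cref{thm:LDD_covering}; the paper reserves an $\eps/20$ slice of the budget for it, which is routine but should be stated.)
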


\begin{proof}
    Let $U$ be the set of removed vertices from Phase 1. Noticing that components formed by the $U$ have weak diameter $O(tR)$ and are fully isolated, the local solution for $U$ has weight 
    \[W(Q^\local_U, U) \leq W(Q^*, U) .\]
    %\znote{added this paragraph. Alternatively we might want to pick out this part and write a standalone lemma for it.} \ynote{I find the proof for this lemma a bit confusing. I thought the balls that are carved in Phase 1 are handled separately, but here it seems that you also apply \cref{lem:geometric_weighted} to them? Also $X_v$ is not defined here. And of course, it will be great if the above paragraph is a bit more precise, e.g., what is the cost and where did we handle that in the proof? }\znote{I think let's have a Phase 3 to handle the removed parts from Phase 1.} \ynote{I think you may just say this is part of Phase 1 or 2. No need to have a separate phase (which can be confusing because the meaning of phases 2 and 3 are different from the previous sections). Just need to be very clear in the analysis how they are handled}\znote{Ok. I will handle this.} \par 
    Next, by \cref{lem:sparsity_covering}, we have that with error probability at most $O(\nn^{-4})$, $W(Q^*,N^{2R}(v)) \leq \frac{\eps W^*}{250\ln \nn}$ for each vertex $v \in V(H_t)$. By \cref{thm:simple_covering}, the weight of the solution is at most %\ynote{I don't understand the way you use \cref{thm:simple_covering}. How is fixing weights related to $X_v$? I thought $X_v$ is the number of clusters containing $v$?}
        \begin{align*}
            \sum_{v \in V(H_t)} X_v \cdot Q^*(v) \cdot w_v  ,
        \end{align*}
    where $X_v$ is the number of clusters that contains $v$ as defined in \cref{thm:LDD_covering} and $X_v$ is dominated by $Z_v + \nn^{-2}$ for $Z_v$ being geometric random variable with $\Expect[Z_v] \leq 1 + \eps/5$
    %\ynote{I understand the in the new appendix we have $+ \nn^{-2}$, but I don't understand why you want to minus that in the proofs in the main text.}
    by our choice of $\lambda$.
    Note that we can apply \cref{lem:geometric_weighted} and argue that
    \begin{align*}
        &\Pr\left[\sum_{v\in V^*\cap V(H_t)} X_v \cdot w_v > \left(1+\frac{\eps}{2}\right) \cdot W(Q^*,V(H_t)) \right] \\
        &\Pr\left[\sum_{v\in V^*\cap V(H_t)} (Z_v + \nn^{-2}) \cdot w_v > \left(1+\frac{\eps}{5} + \frac{\eps}{4} + \frac{\eps}{20}\right) \cdot W(Q^*,V(H_t)) \right] \\
        &\Pr\left[\sum_{v\in V^*\cap V(H_t)} Z_v \cdot w_v > \left(\left(1+\frac{\eps}{5}\right) + \frac{\eps}{4} \right) \cdot W(Q^*,V(H_t)) \right] \\
        &\leq O\left(\frac{W^*}{\log n}\right) e^{- \frac{p^2 \eps W^*}{48d}} \leq O\left(\frac{W^*}{\log n}\right) e^{-3.6\ln \nn} \leq \nn^{-3},
    \end{align*}
    where $p = \frac{5}{5+\eps} > \frac{5}{6}$.
    %\[\Pr\left[\sum_{v\in V^*\cap V(H_t)} X_v \cdot w_v > \left(1+\frac{\eps}{2}\right) \cdot W(Q^*,V(H_t)) \right] \leq O\left(\frac{W^*}{\log n}\right) e^{- \frac{\eps W^*}{24d} } \leq O\left(\frac{W^*}{\log n}\right) e^{-10\ln \nn} \leq \nn^{-3}.\]
    So the total weight of the computed solution is at most
    \[W(Q^*,U) + \left(1 + \frac{\eps}{2}\right) W(Q^*,V(H_t)) \leq  \left(1 + \frac{\eps}{2}\right) W(Q^*,V(H_t)\cup U) \leq  \left(1 + \frac{\eps}{2}\right) W(Q^*,V). \qedhere\]
\end{proof}

We are ready to prove \cref{thm:covering}.

% \begin{lemma}
%     With $1 - O(\nn^{-3})$, the solution computed is an $(1+\eps)$-approximation to the optimal solution.
% \end{lemma}
\begin{proof}[Proof of \cref{thm:covering}]
    We first show that our algorithm computes an $(1 + \eps)$-approximate solution to the covering problem. In Phase 1, a constraint (hyperedge) is deleted only if it is already satisfied and we pay $(\eps/2) W^*$ for deleting them from \cref{lem:deleted_vertices_1_C}. In Phase 2, following from \cref{thm:simple_covering}, we output a valid solution to the covering problem and the weight of the solution is $(1 + \eps / 2)W^*$ from \cref{lem:deleted_vertices_2_C}. Hence the final solution that we obtain has weight $(1 + \eps)W^*$, as required. The error probability is at most $1/\nn$ from a union bound over the error of all lemmas across all iterations.
    Finally, the round complexity follows from \cref{lem:round_C}.
\end{proof}

%\ynote{the above proof needs a bit more details.}

%\ynote{write a proof of \cref{thm:covering} here}

\section{Conclusions and Open Questions}\label{sect:conclusion}

In this paper, we showed that low-diameter decompositions and  $(1\pm \eps)$-approximate solutions for general packing and covering ILP problems can be computed in $\tilde{O}\left(\frac{\log n }{\eps}\right)$ rounds \emph{with high probability} in the $\LOCAL$ model. It still remains an open question whether the \emph{ideal} bound $O\left(\frac{\log n}{\eps}\right)$ can be achieved for these problems. An even more challenging question is to determine whether the bound $O\left(\frac{\log n}{\eps}\right)$ can be achieved \emph{deterministically}.

For low-diameter decompositions, a natural research question is to extend our algorithm to the $\CONGEST$ model. A straightforward extension will add an $O(\log n)$ factor to the round complexity, due to the fact that in each iteration, each vertex is involved in up to $O(\log n)$ ball-growing-and-carving. Furthermore, due to the overlap between these $O(\log n)$ balls, we only obtain a \emph{weak-diameter} decomposition. The ultimate goal of this research direction will be to design an $O\left(\frac{\log n}{\eps}\right)$-round algorithm in the $\CONGEST$ model that constructs a low-diameter decomposition such that each cluster has \emph{strong diameter} $O\left(\frac{\log n}{\eps}\right)$ and the bound $\eps |V(G)|$ on the number of unclustered vertices holds \emph{with high probability}.

As already discussed in \cref{sect:method}, it is an open question whether a spanner of stretch $2k - 1$ and size $O\left(n^{1+1/k}\right)$ can be computed in $O(k)$ rounds \emph{with high probability}. This open question was stated in~\cite{forsterOPODIS2021}. The existing construction of such a spanner with  \emph{expected} size $O\left(n^{1+1/k}\right)$ is based on a variant of low-diameter decomposition where the diameter of each cluster is $2k - 2$. Here $k$ is typically a small integer. It will be interesting to see if our techniques can be applied to such a variant of low-diameter decomposition to resolve the open question. %of~\cite{forsterOPODIS2021}.

%\znote{modified:}

Unlike our algorithms which apply to \emph{all} covering and packing ILPs, the $\Omega\left(\frac{\log n}{\eps}\right)$ lower bounds \fullornot{in \cref{sect:lowerbounds}}{(see Appendix B of the full version \cite{fullversion})} \emph{do not} apply to \emph{all} non-trivial covering and packing integer linear programs. In particular, the specific lower bound proof presented in this paper fails to give a non-trivial lower bound for $(1-\eps)$-approximate \emph{maximum matching}, as it is known~\cite{flaxman2007maximum} that all high-girth regular graphs admit a large matching which includes all but an exponentially small fraction $O\paren{(d-1)^{\frac{g}{2}}}$ of the vertices, where $g$ is the girth and $d$ is the degree.
It is still an intriguing open question to determine the optimal asymptotic round complexity of distributed $(1-\eps)$-approximate maximum matching, in both $\LOCAL$ and $\CONGEST$. It is known~\cite{BCGS17} that $(1-\eps)$-approximate maximum matching can be solved in $2^{O(1/\eps)} \cdot O\left(\frac{\log \Delta}{\log \log \Delta}\right)$ rounds in the $\CONGEST$ model, where $\Delta < n$ is the maximum degree of the graph, so at least we know that the bound $O\left(\frac{\log n}{\eps}\right)$ is not tight for some range of $\eps$.

\bibliographystyle{alpha}
\bibliography{references}

\newcommand{\etalchar}[1]{$^{#1}$}
\begin{thebibliography}{CRdG{\etalchar{+}}23}

\bibitem[AGLP89]{awerbuch89}
Baruch Awerbuch, Andrew~V. Goldberg, Michael Luby, and Serge~A. Plotkin.
\newblock Network decomposition and locality in distributed computation.
\newblock In {\em Proceedings of the 30th IEEE Symposium on Foundations of
  Computer Science (FOCS)}, pages 364--369, 1989.

\bibitem[ASS19]{amiri2019distributed}
Saeed~Akhoondian Amiri, Stefan Schmid, and Sebastian Siebertz.
\newblock Distributed dominating set approximations beyond planar graphs.
\newblock {\em ACM Transactions on Algorithms (TALG)}, 15(3):1--18, 2019.

\bibitem[BBKO23]{Balliu23}
Alkida Balliu, Sebastian Brandt, Fabian Kuhn, and Dennis Olivetti.
\newblock Distributed maximal matching and maximal independent set on
  hypergraphs.
\newblock In {\em Proceedings of the 2023 Annual {ACM-SIAM} Symposium on
  Discrete Algorithms ({SODA})}, pages 2632--2676, 2023.

\bibitem[BCD{\etalchar{+}}19]{BachrachCDELP19}
Nir Bachrach, Keren Censor{-}Hillel, Michal Dory, Yuval Efron, Dean
  Leitersdorf, and Ami Paz.
\newblock Hardness of distributed optimization.
\newblock In {\em Proceedings of the 2019 {ACM} Symposium on Principles of
  Distributed Computing (PODC)}, pages 238--247, 2019.

\bibitem[BCGS17]{BCGS17}
Reuven {Bar-Yehuda}, Keren {Censor-Hillel}, Mohsen Ghaffari, and Gregory
  Schwartzman.
\newblock Distributed approximation of maximum independent set and maximum
  matching.
\newblock In {\em Proceedings of the ACM Symposium on Principles of Distributed
  Computing {(PODC)}}, pages 165--174, 2017.

\bibitem[BCS16]{BarYehudaCS16}
R.~{Bar-Yehuda}, K.~{Censor-Hillel}, and G.~Schwartzman.
\newblock A distributed $(2+\epsilon)$-approximation for vertex cover in
  ${O}(\log{\Delta}/\epsilon \log\log{\Delta})$ rounds.
\newblock In {\em Proceedings 35th Annual {ACM} Symposium on Principles of
  Distributed Computing (PODC)}, pages 3--8, 2016.

\bibitem[BEKS19]{BEKS19}
Ran {Ben Basat}, Guy Even, Ken-ichi Kawarabayashi, and Gregory Schwartzman.
\newblock Optimal distributed covering algorithms.
\newblock In {\em Proceedings of the 2019 ACM Symposium on Principles of
  Distributed Computing}, PODC '19, page 104–106, New York, NY, USA, 2019.
  Association for Computing Machinery.

\bibitem[BHKK16]{BHKK16}
Marijke~H.L. Bodlaender, Magn\'{u}s~M. Halld\'{o}rsson, Christian Konrad, and
  Fabian Kuhn.
\newblock Brief announcement: Local independent set approximation.
\newblock In {\em Proceedings of the 2016 ACM Symposium on Principles of
  Distributed Computing}, PODC '16, pages 93--95, New York, NY, USA, 2016.
  Association for Computing Machinery.

\bibitem[BKO22]{BKO21}
Alkida Balliu, Fabian Kuhn, and Dennis Olivetti.
\newblock {Improved Distributed Fractional Coloring Algorithms}.
\newblock In Quentin Bramas, Vincent Gramoli, and Alessia Milani, editors, {\em
  25th International Conference on Principles of Distributed Systems (OPODIS
  2021)}, volume 217 of {\em Leibniz International Proceedings in Informatics
  (LIPIcs)}, pages 18:1--18:23, Dagstuhl, Germany, 2022. Schloss Dagstuhl --
  Leibniz-Zentrum f{\"u}r Informatik.

\bibitem[CCGL21]{censor2021tight}
Keren {Censor-Hillel}, Yi-Jun Chang, Fran{\c{c}}ois~Le Gall, and Dean
  Leitersdorf.
\newblock Tight distributed listing of cliques.
\newblock In {\em Proceedings of the 2021 ACM-SIAM Symposium on Discrete
  Algorithms {(SODA)}}, pages 2878--2891. SIAM, 2021.

\bibitem[CD17]{CzumajD17}
A.~Czumaj and P.~Davies.
\newblock Exploiting spontaneous transmissions for broadcasting and leader
  election in radio networks.
\newblock In {\em Proceedings of the {ACM} Symposium on Principles of
  Distributed Computing ({PODC})}, pages 3--12, 2017.

\bibitem[CDH{\etalchar{+}}18]{Chang18broadcast}
Yi-Jun Chang, Varsha Dani, Thomas~P. Hayes, Qizheng He, Wenzheng Li, and Seth
  Pettie.
\newblock The energy complexity of broadcast.
\newblock In {\em Proceedings of the 2018 ACM Symposium on Principles of
  Distributed Computing {(PODC)}}, pages 95--104. ACM, 2018.

\bibitem[CDHP20]{Chang20bfs}
Yi-Jun Chang, Varsha Dani, Thomas~P. Hayes, and Seth Pettie.
\newblock The energy complexity of {BFS} in radio networks.
\newblock In {\em Proceedings of the 39th Symposium on Principles of
  Distributed Computing {(PODC)}}, pages 273--282. ACM, 2020.

\bibitem[CG21]{chang2021strong}
Yi-Jun Chang and Mohsen Ghaffari.
\newblock Strong-diameter network decomposition.
\newblock In {\em Proceedings of the 2021 ACM Symposium on Principles of
  Distributed Computing}, pages 273--281, 2021.

\bibitem[CGL20]{CensorLL20}
Keren Censor{-}Hillel, Fran{\c{c}}ois~Le Gall, and Dean Leitersdorf.
\newblock On distributed listing of cliques.
\newblock In {\em Proceedings of the {ACM} Symposium on Principles of
  Distributed Computing ({PODC})}, 2020.

\bibitem[CH06a]{CZYGRINOW2006JDA}
A.~Czygrinow and M.~Hańćkowiak.
\newblock Distributed algorithms for weighted problems in sparse graphs.
\newblock {\em Journal of Discrete Algorithms}, 4(4):588--607, 2006.

\bibitem[CH06b]{Czygrinow2006ESA}
Andrzej Czygrinow and Micha{\l} Ha{\'{n}}{\'{c}}kowiak.
\newblock Distributed almost exact approximations for minor-closed families.
\newblock In Yossi Azar and Thomas Erlebach, editors, {\em Algorithms -- ESA
  2006}, pages 244--255, Berlin, Heidelberg, 2006. Springer Berlin Heidelberg.

\bibitem[CH07]{Czygrinow2007cocoon}
A.~Czygrinow and M.~Ha\'{n}\'{c}kowiak.
\newblock Distributed approximation algorithms for weighted problems in
  minor-closed families.
\newblock In {\em Proceedings of the 13th Annual International Conference on
  Computing and Combinatorics {(COCOON)}}, pages 515--525, Berlin, Heidelberg,
  2007. Springer-Verlag.

\bibitem[Cha23]{chang2023efficient}
Yi-Jun Chang.
\newblock Efficient distributed decomposition and routing algorithms in
  minor-free networks and their applications.
\newblock In {\em Proceedings of the 2023 ACM Symposium on Principles of
  Distributed Computing {(PODC)}}, pages 55--66, 2023.

\bibitem[CHLS17]{Censor17maxcut}
Keren Censor-Hillel, Rina Levy, and Hadas Shachnai.
\newblock Fast distributed approximation for max-cut.
\newblock In Antonio Fern{\'a}ndez~Anta, Tomasz Jurdzinski, Miguel~A. Mosteiro,
  and Yanyong Zhang, editors, {\em Algorithms for Sensor Systems}, pages
  41--56, Cham, 2017. Springer International Publishing.

\bibitem[CHW08]{czygrinow2008fast}
Andrzej Czygrinow, Michal Ha{\'n}{\'c}kowiak, and Wojciech Wawrzyniak.
\newblock Fast distributed approximations in planar graphs.
\newblock In {\em International Symposium on Distributed Computing {(DISC)}},
  pages 78--92. Springer, 2008.

\bibitem[CHWW20]{czygrinow2020distributed}
Andrzej Czygrinow, Micha{\l} Han{\'c}kowiak, Wojciech Wawrzyniak, and Marcin
  Witkowski.
\newblock Distributed approximation algorithms for $k$-dominating set in graphs
  of bounded genus and linklessly embeddable graphs.
\newblock {\em Theoretical Computer Science}, 809:327--338, 2020.

\bibitem[CL23]{chang2023complexity}
Yi-Jun Chang and Zeyong Li.
\newblock The complexity of distributed approximation of packing and covering
  integer linear programs.
\newblock In {\em Proceedings of the 2023 ACM Symposium on Principles of
  Distributed Computing {(PODC)}}, pages 32--43, 2023.

\bibitem[CRdG{\etalchar{+}}23]{bootstrapping}
Xavier Coiteux-Roy, Francesco d'Amore, Rishikesh Gajjala, Fabian Kuhn,
  Fran{\c{c}}ois~Le Gall, Henrik Lievonen, Augusto Modanese, Marc-Olivier
  Renou, Gustav Schmid, and Jukka Suomela.
\newblock No distributed quantum advantage for approximate graph coloring.
\newblock {\em arXiv preprint arXiv:2307.09444}, 2023.

\bibitem[CS19]{changS19}
Yi-Jun Chang and Thatchaphol Saranurak.
\newblock Improved distributed expander decomposition and nearly optimal
  triangle enumeration.
\newblock In {\em Proceedings of the 2019 ACM Symposium on Principles of
  Distributed Computing}, PODC ’19, page 66–73, New York, NY, USA, 2019.
  Association for Computing Machinery.

\bibitem[CS20]{ChangS20}
Yi-Jun Chang and Thatchaphol Saranurak.
\newblock Deterministic distributed expander decomposition and routing with
  applications in distributed derandomization.
\newblock In {\em Proceedings of the 61st Annual {IEEE} Symposium on
  Foundations of Computer Science {(FOCS)}}, 2020.

\bibitem[CS22]{chang2022narrowing}
Yi-Jun Chang and Hsin-Hao Su.
\newblock Narrowing the {LOCAL}--{CONGEST} gaps in sparse networks via expander
  decompositions.
\newblock In {\em Proceedings of the 2022 ACM Symposium on Principles of
  Distributed Computing {(PODC)}}, pages 301--312, New York, NY, USA, 2022.
  Association for Computing Machinery.

\bibitem[DH22]{dani2022wake}
Varsha Dani and Thomas~P Hayes.
\newblock How to wake up your neighbors: Safe and nearly optimal generic energy
  conservation in radio networks.
\newblock {\em arXiv preprint arXiv:2205.12830}, 2022.

\bibitem[EFF{\etalchar{+}}17]{Even17}
Guy Even, Orr Fischer, Pierre Fraigniaud, Tzlil Gonen, Reut Levi, Moti Medina,
  Pedro Montealegre, Dennis Olivetti, Rotem Oshman, Ivan Rapaport, and Ioan
  Todinca.
\newblock {Three Notes on Distributed Property Testing}.
\newblock In {\em Proceedings 31st International Symposium on Distributed
  Computing (DISC)}, volume~91 of {\em Leibniz International Proceedings in
  Informatics (LIPIcs)}, pages 15:1--15:30, 2017.

\bibitem[EFF{\etalchar{+}}19]{EFFKO19}
Talya Eden, Nimrod Fiat, Orr Fischer, Fabian Kuhn, and Rotem Oshman.
\newblock Sublinear-time distributed algorithms for detecting small cliques and
  even cycles.
\newblock In {\em Proceedings of the International Symposium on Distributed
  Computing (DISC)}, pages 15:1--15:16, 2019.

\bibitem[EHRG22]{elkin2022deterministic}
Michael Elkin, Bernhard Haeupler, V{\'a}clav Rozho{\v{n}}, and Christoph
  Grunau.
\newblock Deterministic low-diameter decompositions for weighted graphs and
  distributed and parallel applications.
\newblock {\em arXiv preprint arXiv:2204.08254}, 2022.

\bibitem[EN16]{EN16}
Michael Elkin and Ofer Neiman.
\newblock Distributed strong diameter network decomposition: Extended abstract.
\newblock In {\em Proceedings of the 2016 ACM Symposium on Principles of
  Distributed Computing}, PODC '16, page 211–216, New York, NY, USA, 2016.
  Association for Computing Machinery.

\bibitem[EN18]{elkin2018efficient}
Michael Elkin and Ofer Neiman.
\newblock Efficient algorithms for constructing very sparse spanners and
  emulators.
\newblock {\em ACM Transactions on Algorithms (TALG)}, 15(1):1--29, 2018.

\bibitem[FFK22]{FFK21}
Salwa Faour, Marc Fuchs, and Fabian Kuhn.
\newblock {Distributed CONGEST Approximation of Weighted Vertex Covers and
  Matchings}.
\newblock In Quentin Bramas, Vincent Gramoli, and Alessia Milani, editors, {\em
  25th International Conference on Principles of Distributed Systems (OPODIS
  2021)}, volume 217 of {\em Leibniz International Proceedings in Informatics
  (LIPIcs)}, pages 17:1--17:20, Dagstuhl, Germany, 2022. Schloss Dagstuhl --
  Leibniz-Zentrum f{\"u}r Informatik.

\bibitem[FGdV22]{forsterOPODIS2021}
Sebastian Forster, Martin Gr\"{o}sbacher, and Tijn de~Vos.
\newblock {An Improved Random Shift Algorithm for Spanners and Low Diameter
  Decompositions}.
\newblock In Quentin Bramas, Vincent Gramoli, and Alessia Milani, editors, {\em
  25th International Conference on Principles of Distributed Systems (OPODIS
  2021)}, volume 217 of {\em Leibniz International Proceedings in Informatics
  (LIPIcs)}, pages 16:1--16:17, Dagstuhl, Germany, 2022. Schloss Dagstuhl --
  Leibniz-Zentrum f{\"u}r Informatik.

\bibitem[FGK17]{FischerGK17}
M.~Fischer, M.~Ghaffari, and F.~Kuhn.
\newblock Deterministic distributed edge coloring via hypergraph maximal
  matching.
\newblock In {\em Proceedings 58th IEEE Symposium on Foundations of Computer
  Science (FOCS)}, 2017.

\bibitem[FH07]{flaxman2007maximum}
Abraham~D Flaxman and Shlomo Hoory.
\newblock Maximum matchings in regular graphs of high girth.
\newblock {\em the electronic journal of combinatorics}, \#N1, 2007.

\bibitem[FK21]{faour2021approximating}
Salwa Faour and Fabian Kuhn.
\newblock Approximating bipartite minimum vertex cover in the {CONGEST} model.
\newblock In {\em 24th International Conference on Principles of Distributed
  Systems (OPODIS 2020)}. Schloss Dagstuhl-Leibniz-Zentrum f{\"u}r Informatik,
  2021.

\bibitem[FMU22]{FMU22}
Manuela Fischer, Slobodan Mitrovic, and Jara Uitto.
\newblock Deterministic (1+{\(\epsilon\)})-approximate maximum matching with
  poly(1/{\(\epsilon\)}) passes in the semi-streaming model.
\newblock In {\em Proceedings of the 54th Annual ACM SIGACT Symposium on Theory
  of Computing}, 2022.

\bibitem[GGH{\etalchar{+}}23]{GGHIR22}
Mohsen Ghaffari, Christoph Grunau, Bernhard Haeupler, Saeed Ilchi, and
  V{\'a}clav Rozho\v{n}.
\newblock Improved distributed network decomposition, hitting sets, and
  spanners, via derandomization.
\newblock In {\em Proceedings of the 2023 {ACM-SIAM} Symposium on Discrete
  Algorithms ({SODA})}, 2023.

\bibitem[GGR21]{GhaffariGR21}
Mohsen Ghaffari, Christoph Grunau, and V{\'{a}}clav Rozho\v{n}.
\newblock Improved deterministic network decomposition.
\newblock In {\em Proc.~2021 {ACM-SIAM} Symposium on Discrete Algorithms},
  pages 2904--2923. {SIAM}, 2021.

\bibitem[Gha19]{ghaffari2019MIS}
Mohsen Ghaffari.
\newblock Distributed maximal independent set using small messages.
\newblock In {\em Proceedings of the 28th ACM-SIAM Symposium on Discrete
  Algorithms (SODA)}, pages 805--820, 2019.

\bibitem[GHS13]{goos2013lower}
Mika G{\"o}{\"o}s, Juho Hirvonen, and Jukka Suomela.
\newblock Lower bounds for local approximation.
\newblock {\em Journal of the ACM (JACM)}, 60(5):1--23, 2013.

\bibitem[GKM17]{GhaffariKM17}
M.~Ghaffari, F.~Kuhn, and Y.~Maus.
\newblock On the complexity of local distributed graph problems.
\newblock In {\em Proceedings of the 49th Annual {ACM} Symposium on Theory of
  Computing ({STOC})}, pages 784--797, 2017.

\bibitem[GKMU18]{GhaffariKMU18}
Mohsen Ghaffari, Fabian Kuhn, Yannic Maus, and Jara Uitto.
\newblock Deterministic distributed edge-coloring with fewer colors.
\newblock In {\em Proceedings of the 50th Annual ACM SIGACT Symposium on Theory
  of Computing}, STOC 2018, pages 418--430, New York, NY, USA, 2018. ACM.

\bibitem[GP19]{ghaffari_et_al:LIPIcs:2019:11325}
Mohsen Ghaffari and Julian Portmann.
\newblock {Improved Network Decompositions Using Small Messages with
  Applications on MIS, Neighborhood Covers, and Beyond}.
\newblock In Jukka Suomela, editor, {\em 33rd International Symposium on
  Distributed Computing (DISC 2019)}, volume 146 of {\em Leibniz International
  Proceedings in Informatics (LIPIcs)}, pages 18:1--18:16, Dagstuhl, Germany,
  2019. Schloss Dagstuhl--Leibniz-Zentrum fuer Informatik.

\bibitem[GS14]{GoosS14}
Mika G{\"{o}}{\"{o}}s and Jukka Suomela.
\newblock No sublogarithmic-time approximation scheme for bipartite vertex
  cover.
\newblock {\em Distributed Comput.}, 27(6):435--443, 2014.

\bibitem[HW16]{haeupler2016faster}
B.~Haeupler and D.~Wajc.
\newblock A faster distributed radio broadcast primitive.
\newblock In {\em Proceedings of the 2016 ACM Symposium on Principles of
  Distributed Computing ({PODC})}, pages 361--370. ACM, 2016.

\bibitem[IGM20]{izumiLM20}
Taisuke Izumi, Fran{\c{c}}ois~Le Gall, and Fr{\'e}d{\'e}ric Magniez.
\newblock Quantum distributed algorithm for triangle finding in the {CONGEST}
  model.
\newblock In Christophe Paul and Markus Bl{\"a}ser, editors, {\em 37th
  International Symposium on Theoretical Aspects of Computer Science (STACS
  2020)}, volume 154 of {\em Leibniz International Proceedings in Informatics
  (LIPIcs)}, pages 23:1--23:13, Dagstuhl, Germany, 2020. Schloss
  Dagstuhl--Leibniz-Zentrum fuer Informatik.

\bibitem[KMW16]{KuhnMW16}
F.~Kuhn, T.~Moscibroda, and R.~Wattenhofer.
\newblock Local computation: Lower and upper bounds.
\newblock {\em J. ACM}, 63(2):17:1--17:44, 2016.

\bibitem[KP98]{KUTTEN199840}
Shay Kutten and David Peleg.
\newblock Fast distributed construction of small $k$-dominating sets and
  applications.
\newblock {\em Journal of Algorithms}, 28(1):40 -- 66, 1998.

\bibitem[Lin92]{Linial92}
N.~Linial.
\newblock Locality in distributed graph algorithms.
\newblock {\em SIAM J.~Comput.}, 21(1):193--201, 1992.

\bibitem[LM21]{LeGall2021}
Fran{\c{c}}ois {Le~Gall} and Masayuki Miyamoto.
\newblock Lower bounds for induced cycle detection in distributed computing.
\newblock In {\em Proceedings of the 32nd International Symposium on Algorithms
  and Computation ({ISAAC})}, 2021.

\bibitem[LMR21]{levi2021property}
Reut Levi, Moti Medina, and Dana Ron.
\newblock Property testing of planarity in the congest model.
\newblock {\em Distributed Computing}, 34(1):15--32, 2021.

\bibitem[LPS88]{lubotzky1988ramanujan}
Alexander Lubotzky, Ralph Phillips, and Peter Sarnak.
\newblock Ramanujan graphs.
\newblock {\em Combinatorica}, 8(3):261--277, 1988.

\bibitem[LPSP15]{LotkerPP08}
Zvi Lotker, Boaz Patt-Shamir, and Seth Pettie.
\newblock Improved distributed approximate matching.
\newblock {\em J. ACM}, 62(5), November 2015.

\bibitem[LPSR09]{LPR09}
Zvi Lotker, Boaz Patt-Shamir, and Adi Ros{\'e}n.
\newblock Distributed approximate matching.
\newblock {\em SIAM J. Comput.}, 39(2):445--460, 2009.

\bibitem[LPW13]{lenzen2013distributed}
Christoph Lenzen, Yvonne-Anne Pignolet, and Roger Wattenhofer.
\newblock Distributed minimum dominating set approximations in restricted
  families of graphs.
\newblock {\em Distributed computing}, 26(2):119--137, 2013.

\bibitem[LS93]{LinialS93}
Nathan Linial and Michael Saks.
\newblock Low diameter graph decompositions.
\newblock {\em Combinatorica}, 13(4):441--454, Dec 1993.

\bibitem[Mor93]{moree93}
P.~Moree.
\newblock Bertrand's postulate for primes in arithmetical progressions.
\newblock {\em Computers \& Mathematics with Applications}, 26(5):35--43, 1993.

\bibitem[MPX13]{MPX13}
Gary~L. Miller, Richard Peng, and Shen~Chen Xu.
\newblock Parallel graph decompositions using random shifts.
\newblock In {\em Proceedings of the Twenty-Fifth Annual ACM Symposium on
  Parallelism in Algorithms and Architectures}, SPAA '13, page 196–203, New
  York, NY, USA, 2013. Association for Computing Machinery.

\bibitem[MU21]{DBLP:conf/wdag/MausU21}
Yannic Maus and Jara Uitto.
\newblock Efficient {CONGEST} algorithms for the lov{\'{a}}sz local lemma.
\newblock In Seth Gilbert, editor, {\em 35th International Symposium on
  Distributed Computing, {DISC} 2021, October 4-8, 2021, Freiburg, Germany
  (Virtual Conference)}, volume 209 of {\em LIPIcs}, pages 31:1--31:19. Schloss
  Dagstuhl - Leibniz-Zentrum f{\"{u}}r Informatik, 2021.

\bibitem[Pem01]{Pem01}
Sriram~V. Pemmaraju.
\newblock Equitable coloring extends chernoff-hoeffding bounds.
\newblock In Michel Goemans, Klaus Jansen, Jos{\'e} D.~P. Rolim, and Luca
  Trevisan, editors, {\em Approximation, Randomization, and Combinatorial
  Optimization: Algorithms and Techniques}, pages 285--296, Berlin, Heidelberg,
  2001. Springer Berlin Heidelberg.

\bibitem[PS92]{panconesi-srinivasan}
Alessandro Panconesi and Aravind Srinivasan.
\newblock Improved distributed algorithms for coloring and network
  decomposition problems.
\newblock In {\em Proceedings of the 24th ACM Symposium on Theory of Computing
  (STOC)}, pages 581--592, 1992.

\bibitem[RG20]{RozhonG20}
V{\'{a}}clav Rozho\v{n} and Mohsen Ghaffari.
\newblock Polylogarithmic-time deterministic network decomposition and
  distributed derandomization.
\newblock In {\em Proceedings of the 52nd Annual ACM SIGACT Symposium on Theory
  of Computing {(STOC)}}, pages 350--363. {ACM}, 2020.

\bibitem[SV20]{su_et_al:LIPIcs:2020:13093}
Hsin-Hao Su and Hoa~T. Vu.
\newblock {Distributed Dense Subgraph Detection and Low Outdegree Orientation}.
\newblock In Hagit Attiya, editor, {\em 34th International Symposium on
  Distributed Computing (DISC)}, volume 179 of {\em Leibniz International
  Proceedings in Informatics (LIPIcs)}, pages 15:1--15:18, Dagstuhl, Germany,
  2020. Schloss Dagstuhl--Leibniz-Zentrum f{\"u}r Informatik.

\end{thebibliography}

\appendix
\section{Concentration Bounds}\label{sect:concentration}

%\ynote{I would prefer to call all the results in this section lemma, not theorem or corollary.}

%\ynote{When you refer to a lemma or a theorem, you may just use \cref{thm:Chernoff_Geometric} instead of \cref{thm:Chernoff_Geometric}.}

\begin{lemma}[Chernoff bound]
    Let $X_1, \ldots, X_n$ be independent 0-1 random variables, $X = \sum_i X_i$ be their sum and $\mu = \Expect[X]$. Then
\begin{align*}
    \Pr[X > (1+\delta) \mu] &\leq e^{-\delta^2\mu / (2+\delta)} && \text{for} \ \ \delta \geq 0,\\
    \Pr[X < (1-\delta) \mu] &\leq e^{-\delta^2\mu / 2} && \text{for} \ \  0 \leq \delta \leq 1.
\end{align*}
    %\[ \Pr[X > (1+\delta) \mu] \leq e^{-\delta^2\mu / (2+\delta)}  , \]
    %for $0 \leq \delta$ and,
    %\[ \Pr[X < (1-\delta) \mu] \leq e^{-\delta^2\mu / 2}  , \]
    %for $0 \leq \delta \leq 1$.
\end{lemma}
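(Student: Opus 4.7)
The plan is to prove both tails via the standard exponential-moment (Cram\'er--Chernoff) method. The starting point for the upper tail is Markov's inequality applied to the moment generating function: for any $t > 0$,
\[
\Pr[X > (1+\delta)\mu] = \Pr\!\left[e^{tX} > e^{t(1+\delta)\mu}\right] \leq \frac{\Expect[e^{tX}]}{e^{t(1+\delta)\mu}}.
\]
By independence of the $X_i$'s, $\Expect[e^{tX}] = \prod_i \Expect[e^{tX_i}]$, and since each $X_i \in \{0,1\}$ with $p_i := \Expect[X_i]$, I have $\Expect[e^{tX_i}] = 1 + p_i(e^t - 1) \leq \exp(p_i(e^t - 1))$ using the elementary inequality $1+x \leq e^x$. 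Multiplying gives $\Expect[e^{tX}] \leq \exp(\mu(e^t-1))$, so
\[
\Pr[X > (1+\delta)\mu] \leq \exp\bigl(\mu(e^t - 1) - t(1+\delta)\mu\bigr).
\]

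For the upper tail, I would choose $t = \ln(1+\delta) \geq 0$, yielding the well-known bound $\left(e^\delta / (1+\delta)^{1+\delta}\right)^\mu$. The remaining step is the analytic simplification to the form stated in the lemma. The key inequality is $\ln(1+\delta) \geq \frac{2\delta}{2+\delta}$ for $\delta \geq 0$, which after rearrangement gives $(1+\delta)\ln(1+\delta) - \delta \geq \frac{\delta^2}{2+\delta}$. Substituting back produces the claimed $\exp(-\delta^2 \mu/(2+\delta))$. For the lower tail, the symmetric argument uses $t < 0$: set $t = \ln(1-\delta)$ with $\delta \in [0,1]$ and apply Markov to $e^{-t X}$, yielding $\left(e^{-\delta}/(1-\delta)^{1-\delta}\right)^\mu$, then use $(1-\delta)\ln(1-\delta) \geq -\delta + \delta^2/2$ for $\delta \in [0,1]$ to obtain $\exp(-\delta^2 \mu/2)$.

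The only mildly delicate step is verifying the two scalar inequalities $(1+\delta)\ln(1+\delta) - \delta \geq \delta^2/(2+\delta)$ and $-\delta - (1-\delta)\ln(1-\delta) \leq -\delta^2/2$. Each reduces to showing that a certain single-variable function is nonnegative on the relevant interval, which follows by checking the value at $\delta=0$ and computing the derivative (a routine calculus exercise). This is the only place where anything beyond the standard MGF calculation is needed, and it is the ``main obstacle'' only in the very mild sense that it is the one step that requires a specific algebraic inequality rather than a direct substitution. Since the lemma is stated in standard textbook form, I would cite a reference (e.g., Mitzenmacher--Upfal) rather than include the full calculus derivation, keeping the proof short and self-contained at the MGF level.
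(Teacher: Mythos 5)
Your proposal is the standard Cram\'er--Chernoff moment-generating-function argument, and it is correct: the MGF bound $\Expect[e^{tX}]\leq\exp(\mu(e^t-1))$, the optimizing choices $t=\ln(1+\delta)$ and $t=\ln(1-\delta)$, and the two scalar inequalities $\ln(1+\delta)\geq\frac{2\delta}{2+\delta}$ (which rearranges to $(1+\delta)\ln(1+\delta)-\delta\geq\frac{\delta^2}{2+\delta}$) and $(1-\delta)\ln(1-\delta)\geq-\delta+\frac{\delta^2}{2}$ all check out by the elementary derivative comparisons you indicate. The paper itself states this lemma without proof as a standard textbook fact, so there is no in-paper argument to compare against; your treatment (or a citation to a standard reference, as you suggest) is exactly what is expected here.
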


\fullornot{
    We use the following definition for geometric distribution. Let $X \sim \geom(p)$ be a geometric random variable with parameter $0 < p \leq 1$. Then $\Pr[X = k] = (1-p)^{k-1}p$ for integer $k \geq 1$. In particular, $\Expect[X] = \frac{1}{p}$.

    \begin{lemma}[Concentration bound for sum of geometric random variables]\label{lem:Chernoff_Geometric}
        Let $X_1, \ldots, X_n$ be independent geometric random variables with parameter $p$, $X = \sum_i X_i$ be their sum and $\mu = \Expect[X] = \frac{n}{p}$. Then for $\delta > 1/p - 1$:
        \[ \Pr[X > \mu + \delta n] \leq e^{-p^2 \delta n/6}  . \]
    \end{lemma}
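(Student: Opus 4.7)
The plan is to use the standard Chernoff/MGF approach. Let $q = 1-p$. The MGF of a $\geom(p)$ random variable is $M(t) = pe^t/(1-qe^t)$, valid for $0 \le t < \ln(1/q)$, so by Markov's inequality, for every such $t > 0$,
\[
\Pr[X > \mu + \delta n] \;\le\; e^{-t(\mu+\delta n)}\, M(t)^n \;=\; \exp\!\Big(n\big[-t(\tfrac{1}{p}+\delta) + \ln p + t - \ln(1-qe^t)\big]\Big).
\]

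First I would optimize over $t$. Setting the derivative with respect to $t$ to zero yields $qe^{t^\ast} = u/(1+u)$, where $u := (q+p\delta)/p$. Substituting back and simplifying, the exponent collapses to $-nL(\beta)/p$, where $\beta := p\delta$ and
\[
L(\beta) \;:=\; (q+\beta)\ln\!\frac{q+\beta}{q} - (1+\beta)\ln(1+\beta).
\]
So the task reduces to showing $L(\beta)/p \geq p\beta/6$, equivalently $L(\beta) \geq p^2\beta/6$, for $\beta > q$ (which is precisely the hypothesis $\delta > 1/p - 1$).

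Next I would establish monotonicity of $L(\beta)/\beta$. Differentiating, $L'(\beta) = \ln\frac{q+\beta}{q(1+\beta)}$, and $L(0)=0$, $L'(0)=0$. A direct computation gives
\[
\beta L'(\beta) - L(\beta) \;=\; \ln(1+\beta) - q\ln(1+\beta/q),
\]
which is non-negative because the function $q \mapsto q\ln(1+\beta/q)$ is increasing on $(0,1]$ (its derivative equals $\ln(1+x) - x/(1+x)$ with $x=\beta/q > 0$, which is positive). Therefore $L(\beta)/\beta$ is non-decreasing on $(0,\infty)$, and for $\beta \ge q$ it suffices to verify the base case
\[
\frac{L(q)}{q} \;=\; 2\ln 2 - \frac{(1+q)\ln(1+q)}{q} \;\ge\; \frac{p^2}{6} \;=\; \frac{(1-q)^2}{6}, \qquad q \in (0,1].
\]

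The main obstacle is this last scalar inequality. Both sides vanish at $q=1$ and the left-hand side equals $2\ln 2 - 1 \approx 0.386$ in the limit $q \to 0^+$ while the right-hand side equals $1/6$, so the inequality is comfortable away from $q=1$; the delicate regime is $q$ near $1$. I would verify it by standard one-variable calculus, e.g.\ by expanding $(1+q)\ln(1+q)/q$ around $q=1$ as $\ln 2\cdot(1+q)/q + \ln(1+q)$ and comparing with $(1-q)^2/6$, or by rearranging to show that the auxiliary function $\psi(q) := 2q\ln 2 - (1+q)\ln(1+q) - (1-q)^2 q/6$ satisfies $\psi(0)=\psi(1)=0$ and $\psi''(q) \le 0$ on $(0,1)$ (concavity then forces $\psi \ge 0$). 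This elementary but slightly tedious step is the only real obstacle; the rest is a routine Chernoff computation.
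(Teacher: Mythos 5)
Your proposal is correct, but it takes a genuinely different route from the paper. The paper's proof reinterprets $X$ as the waiting time for $n$ successes in a sequence of $\Ber(p)$ trials, so that the event $X > \mu + \delta n$ becomes a lower-tail event for a $\Bin(\mu+\delta n, p)$ count (equivalently an upper-tail event for the number of failures $Y$), and then applies the standard binomial Chernoff bound to $Y$; the hypothesis $\delta > 1/p - 1$ enters only at the very end, to bound $\Expect[Y]/n = (1/p-1) + (1-p)\delta \leq 2\delta$ and clean up the exponent. You instead run the Cram\'er--Chernoff method directly on the geometric moment generating function, optimize $t$ exactly to obtain the precise rate function $L(\beta)/p$ with $\beta = p\delta$, and then lower-bound the rate by $p^2\delta/6$ via a monotonicity argument for $L(\beta)/\beta$ reduced to the boundary case $\beta = q$ (which is exactly where the hypothesis $\delta > 1/p-1$ is used). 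I checked your algebra: the collapse of the optimized exponent to $-nL(\beta)/p$ is right, the identity $\beta L'(\beta) - L(\beta) = \ln(1+\beta) - q\ln(1+\beta/q) \geq 0$ is right, and the final scalar inequality $L(q)/q \geq (1-q)^2/6$ does hold --- your suggested auxiliary function $\psi$ has $\psi''(q) = \tfrac{2}{3} - q - \tfrac{1}{1+q}$, which is decreasing in $q$ and already negative at $q=0$, so the concavity argument closes the gap (one small correction: near $q=1$ the left side vanishes only to first order, namely $\approx (1-\ln 2)(1-q)$, against the quadratic right side, so that regime is not actually delicate). The trade-off: the paper's reduction is shorter and reuses an off-the-shelf bound, while your computation is self-contained and produces the tight large-deviation exponent as an intermediate, at the price of the final elementary-but-tedious calculus step.
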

    
    %\znote{this should be correct now. Will update the applications in the main sections later.}
    
    \begin{proof}
        Observe that the event $X > \mu + \delta n$ can be interpreted the event that we need to take more than $\left(\mu+\delta n \right) \Ber(p)$ trials to see $n$ heads. For this to happen, the first $\left(\mu + \delta n \right) \Ber(p)$ trials should see less than $n$ heads, or equivalently, the first $\left(\mu + \delta n \right) \Ber(1-p)$ trials should see greater than $\mu + \delta n - n$ heads. Let $Y = \Bin\left(\mu + \delta n, 1-p\right)$ be the binomial random variable where $\Expect[Y] = (1-p)(\mu + \delta n) = \mu + \delta n - n - p\delta n$. 
        % Furthermore, we have
        % \begin{align*}
        %     \alpha n + \delta n &= \left(1 + \frac{\delta n(1 - p\alpha)}{(\alpha+ p\alpha \delta)n}\right)(\alpha+ p\alpha \delta)n \\
        %     &= \left(1 + \frac{\delta p}{\alpha+ p\alpha \delta}\right)(\alpha+ p\alpha \delta)n . 
        % \end{align*}
        So we have $\Pr[X > \mu + \delta n] \leq \Pr\left[Y > \left(1 + \frac{\delta p n}{\Expect[Y]}\right)\Expect[Y] \right]$. By a Chernoff bound on $Y$,
      \[
            \Pr[X > \mu + \delta n] \leq \Pr\left[Y > \left(1 + \frac{\delta p n}{\Expect[Y]}\right)\Expect[Y] \right]
                                \leq e^{-\frac{\delta^2 p^2 n}{3(1/p + \delta - 1 - p\delta)}} \leq e^{-\delta p^2n/6},
      \]
      where $1/p + \delta - 1 - p\delta = (1/p - 1) + (1 - p)\delta \leq 2\delta$. 
    \end{proof}
}
{}

% \begin{proof}
%     Observe that the event $X > \mu + \delta n$ can be interpreted as the event that we need to take more than $\left(\mu+\delta n \right) \Ber(p)$ trials to see $n$ heads. For this to happen, the first $\left(\mu + \delta n \right) \Ber(p)$ trials should see strictly less than $n$ heads. Let $Y = \Bin\left(\mu + \delta n, p\right)$ be the binomial random variable where $\Expect[Y] = p(\mu + \delta n) = n + \delta p n$. 
%     So we have $\Pr[X > \mu + \delta n] \leq \Pr[Y < n] = \Pr[Y < \left(1 - \frac{\delta p}{1 + \delta p }\right)(n + \delta p n)]$. By a Chernoff bound on $Y$,
%   \[
%         \Pr[X > \mu + \delta n] \leq \Pr\left[Y < \left(1 - \frac{\delta p}{1 + \delta p }\right)(n + \delta p n)\right]
%                             \leq e^{-\frac{\delta^2 p^2 n}{2(1 + \delta p)}}.
%   \]
% \end{proof}

%\znote{I changed the proof above. Now $\delta$ is no longer a fixed value.} \ynote{that's great.}
\begin{definition}[Dependency graph]
    Let $A_1,A_2,\ldots, A_n$ be events in an arbitrary probability space. A directed graph $D = (V, E)$ on the set of vertices $V = \{1,2, \ldots, n\}$ is called a dependency graph for the events $A_1, \ldots, A_n$ if for each $i \in [n]$, the event $A_i$ is mutually independent of all the events $\{A_j : (i,j) \notin E\}$.
\end{definition}

Although $D = (V, E)$ in the above definition is a directed graph, throughout the paper we will only consider the case that $D = (V, E)$  is undirected: $(i,j) \in E$ implies $(i,j) \in E$. In the subsequent discussion, we assume that $D$ is undirected. 

\begin{lemma}[Chernoff bound with bounded dependence~\cite{Pem01}]\label{lem:Chernoff_BD}
    Let $X$ be the sum of $n$ 0-1 random variables $\{X_i\}_{1 \leq i \leq n}$ such that the dependency graph for $\{X_i\}_{1 \leq i \leq n}$ has maximum degree $d$, and let $\mu > \Expect[X]$. Then
    \[ \Pr[X\geq(1+\delta)\mu] \leq O(d)\cdot \exp\left(-\Omega(\delta^2\mu / d)\right). \]
\end{lemma}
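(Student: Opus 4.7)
The plan is to decompose $X$ into independent partial sums via a proper coloring of the dependency graph and then apply a Chernoff-style argument class-by-class. Since $D$ has maximum degree $d$, a greedy vertex coloring partitions the indices $\{1,\ldots,n\}$ into $\chi \leq d+1$ color classes $C_1,\ldots,C_\chi$. For each $j$, the variables $\{X_i : i \in C_j\}$ are mutually independent, because each $X_i$ is independent of every $X_{i'}$ with $(i,i') \notin E(D)$ and no two indices in the same color class are adjacent in $D$. Let $Y_j := \sum_{i \in C_j} X_i$ with $\mu_j := \Expect[Y_j]$, so that $X = \sum_j Y_j$ and $\sum_j \mu_j = \Expect[X] < \mu$.

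Next, I would bound the moment generating function of $X$ via H\"{o}lder's inequality, using conjugate exponent $\chi$ on each of the $\chi$ factors $e^{tY_j}$:
\[ \Expect[e^{tX}] \;=\; \Expect\Bigl[\,\prod\nolimits_j e^{tY_j}\Bigr] \;\leq\; \prod\nolimits_j \Expect\bigl[e^{\chi t Y_j}\bigr]^{1/\chi}. \]
Within each color class, the standard Bernoulli MGF bound together with independence yields $\Expect[e^{\chi t Y_j}] \leq \exp(\mu_j(e^{\chi t} - 1))$, so multiplying across classes and using $\sum_j \mu_j \leq \mu$ gives $\Expect[e^{tX}] \leq \exp((\mu/\chi)(e^{\chi t} - 1))$. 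Markov's inequality followed by the substitution $s = \chi t$ and the standard Chernoff optimization $s = \ln(1+\delta)$ then delivers $\Pr[X \geq (1+\delta)\mu] \leq \exp(-\mu \delta^2/(\chi(2+\delta)))$, which since $\chi \leq d+1$ is $\exp(-\Omega(\delta^2 \mu/d))$ as required.

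The key subtlety is the H\"{o}lder step: each factor must carry exponent $\chi$ so that the $\chi$ reciprocals add to one, and it is exactly this $\chi$ appearing inside the MGF that ultimately costs the $1/d$ penalty in the final exponent relative to the fully independent case. An alternative route that more literally produces the $O(d)$ prefactor in the lemma statement would be a pigeonhole-plus-union-bound argument, namely that $X \geq (1+\delta)\mu$ forces some $Y_j \geq (1+\delta)\mu/\chi$, and then a union bound over the $\chi \leq d+1$ classes. Making this work cleanly is the main obstacle, however, because it requires a case split on whether $\mu_j \leq \mu/\chi$ (in which case a multiplicative Chernoff within class $j$ yields the desired exponent once one observes that the relevant function $(\tau-\mu_j)^2/(\mu_j + \tau)$ is decreasing in $\mu_j$ on $(0,\tau)$) or whether $\mu_j$ is larger (in which case one must handle $\Pr[Y_j \geq \tau]$ using an absolute rather than multiplicative tail estimate). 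I would present the MGF route as the cleaner proof and simply observe that the polynomial $O(d)$ prefactor allowed by the statement is not needed along this path.
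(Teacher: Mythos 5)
Your proposal is correct, but your primary (H\"{o}lder/MGF) route is genuinely different from the one the paper relies on. The paper imports this lemma from Pemmaraju's work without reproving it, and its intended proof is visible in the paper's own proof of the geometric analogue (\cref{lem:geometric_BD}): take an \emph{equitable} $(d+1)$-coloring of the dependency graph, observe that the tail event forces some color class to exceed its share, apply the ordinary Chernoff bound within that (mutually independent) class, and union-bound over the $d+1$ classes --- this is exactly where the $O(d)$ prefactor in the statement comes from, and it is also the source of the subtlety you flag about unequal class means, which the equitable-coloring step (Hajnal--Szemer\'{e}di) is there to neutralize. Your MGF route instead pays the dependence penalty inside the exponent via H\"{o}lder with exponent $\chi \leq d+1$ on each factor $e^{tY_j}$, yielding $\Pr[X \geq (1+\delta)\mu] \leq \exp\left(-\frac{\delta^2\mu}{\chi(2+\delta)}\right)$ with no prefactor at all; since this is strictly stronger than the stated bound, it certainly suffices. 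The steps all check out: a greedy coloring gives $\chi \leq d+1$ independent classes, mutual independence within a class does follow from the dependency-graph definition (by peeling off one variable at a time), the per-class MGF bound $\Expect[e^{sY_j}] \leq \exp(\mu_j(e^s-1))$ is standard, replacing $\Expect[X]$ by the upper bound $\mu$ is monotone for $t \geq 0$, and the optimization $s=\ln(1+\delta)$ gives the claimed exponent (with the usual caveat, shared by the paper's statement, that $\delta^2/(2+\delta)=\Omega(\delta^2)$ only for bounded $\delta$, which is the regime in which the paper applies the lemma). What each approach buys: yours is self-contained, avoids equitable colorings, and gives a cleaner bound; the paper's (Pemmaraju's) generalizes more readily to settings where a per-class tail bound is all one has --- indeed the paper reuses exactly that template for sums of geometric variables in \cref{lem:geometric_BD}, where an MGF/H\"{o}lder argument would require redoing the geometric MGF optimization.
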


\fullornot{
    We make a simple extension of \cref{lem:Chernoff_BD} to the weighted case. Recall that $N^1(v) = N(v) \cup \{v\}$.
    
    \begin{lemma}\label{lem:Chernoff_BD_weighted}
        Let $X_1, X_2, \ldots, X_n$ be $n$ 0-1 random variables and $w_1, w_2,\ldots, w_n$ be positive integers. Consider the weighted sum $X = \sum_i w_i\cdot X_i$ and their dependency graph $G$. Suppose that the numbers  $d$ and $\mu$ satisfy that  $\mu \geq \Expect[X]$ and for each $X_i \in V(G)$, $\sum_{j:X_j \in N^1(X_i)} w_j \leq d$.  Then
        \[ \Pr[X\geq(1+\delta)\mu] \leq O(d)\cdot \exp\left(-\Omega(\delta^2\mu / d)\right). \]
    \end{lemma}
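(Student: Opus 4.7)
The plan is to reduce the weighted case to the unweighted Chernoff bound with bounded dependence stated in \cref{lem:Chernoff_BD} via a standard duplication trick. For each $i$, replace the single 0-1 variable $X_i$ with $w_i$ perfectly correlated copies $Y_{i,1}, Y_{i,2}, \ldots, Y_{i,w_i}$ defined by $Y_{i,k} := X_i$ for every $1 \leq k \leq w_i$. Then
\[
X \;=\; \sum_{i=1}^n w_i X_i \;=\; \sum_{i=1}^n \sum_{k=1}^{w_i} Y_{i,k},
\]
so $X$ is now expressed as a plain sum of 0-1 random variables, and $\Expect[X] \leq \mu$ is unchanged.

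Next I would construct a dependency graph $G'$ for the new family $\{Y_{i,k}\}$ by joining $Y_{i,k}$ and $Y_{j,k'}$ whenever $i = j$ or $X_i$ and $X_j$ are adjacent in $G$. The closed neighborhood of $Y_{i,k}$ in $G'$ therefore has size exactly $\sum_{j : X_j \in N^1(X_i)} w_j \leq d$, so the maximum degree of $G'$ is at most $d - 1$. To check that $G'$ is a valid dependency graph, I would argue that $Y_{i,k}$ is mutually independent of $\{Y_{j,k'} : (Y_{i,k},Y_{j,k'}) \notin E(G')\}$: the right-hand set consists of copies of variables $X_j$ with $X_j \notin N^1(X_i)$ in $G$, so its joint distribution is a measurable function of $\{X_j : X_j \notin N^1(X_i)\}$, and the latter is mutually independent of $X_i = Y_{i,k}$ by the definition of $G$ being a dependency graph for $\{X_i\}$.

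Applying \cref{lem:Chernoff_BD} to the family $\{Y_{i,k}\}$ with parameters $d' = O(d)$ and $\mu \geq \Expect[\sum_{i,k} Y_{i,k}]$ then yields
\[
\Pr[X \geq (1+\delta)\mu] \;\leq\; O(d) \cdot \exp\!\left(-\Omega(\delta^2 \mu / d)\right),
\]
which is the desired conclusion. The argument is essentially routine; the only point that requires genuine care is the verification that the duplicated graph $G'$ is indeed a dependency graph in the sense of mutual (not merely pairwise) independence, which I expect to be the main obstacle to writing down cleanly but poses no real technical difficulty once the reduction to functions of the underlying $X_j$'s is spelled out.
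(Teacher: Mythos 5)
Your proposal is correct and is essentially the same argument the paper gives: split each $X_i$ into $w_i$ identical 0--1 copies, observe that the resulting dependency graph has maximum degree bounded by $\max_i \sum_{j:X_j \in N^1(X_i)} w_j \leq d$, and apply \cref{lem:Chernoff_BD}. The paper states this in two sentences; your additional verification that the duplicated graph is a valid dependency graph (via measurability with respect to the untouched $X_j$'s) is a correct and welcome elaboration of a step the paper leaves implicit.
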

    \begin{proof}
        For each of the random variable $X_i$, we decompose it into $w_i$ 0-1 random variables. As the maximum degree of the new dependency graph is at most $\max_{1 \leq i \leq n} \left\{\sum_{j:X_j \in N^1(X_i)} w_j\right\} \leq d$, the lemma follows from \cref{lem:Chernoff_BD}.
    \end{proof}
}
{}

\fullornot{
    We also generalise the tail bound with bounded dependence to sum of geometric random variables, where the proof idea follows from the proof of \cref{lem:Chernoff_BD} in \cite{Pem01}.
    
    \begin{lemma}\label{lem:geometric_BD}
        Let $X =\sum_{i=1}^n X_i$ be the sum of $n$ geometric random variables with parameter $p$ such that the dependency graph for $X_1, X_2, \ldots, X_n$  has maximum degree $d$, and let $\mu \geq \Expect[X]$, $\delta > 1/p-1$. Then
        \[ \Pr[X\geq \mu + \delta n] \leq O(d)\cdot \exp\left(- p^2 \delta n/ 12d \right)  . \]
    \end{lemma}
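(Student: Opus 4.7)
The plan is to mirror the proof of \cref{lem:Chernoff_BD} from \cite{Pem01}, replacing the Bernoulli Chernoff bound with \cref{lem:Chernoff_Geometric} inside each color class of a balanced coloring of the dependency graph. Since the dependency graph has maximum degree $d$, I first invoke the Hajnal--Szemer\'edi equitable coloring theorem to partition the variable index set $\{1, 2, \ldots, n\}$ into $d+1$ color classes $V_1, V_2, \ldots, V_{d+1}$ whose sizes $n_c := |V_c|$ pairwise differ by at most one; in particular $n_c \geq \lfloor n/(d+1) \rfloor$. By definition of the dependency graph, the variables $\{X_i : i \in V_c\}$ are mutually independent within each class. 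Define $X^{(c)} := \sum_{i \in V_c} X_i$ and $\mu_c := \Expect[X^{(c)}] = n_c/p$, so that $\sum_c n_c = n$ and $\sum_c \mu_c = n/p \leq \mu$.

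The heart of the argument is a pigeonhole step. On the event $\{X \geq \mu + \delta n\}$,
\begin{align*}
    \sum_{c=1}^{d+1}\bigl(X^{(c)} - \mu_c\bigr) \;=\; X - \sum_{c} \mu_c \;\geq\; X - \mu \;\geq\; \delta n \;=\; \sum_{c=1}^{d+1} \delta n_c,
\end{align*}
so there must exist some $c^\ast$ satisfying $X^{(c^\ast)} \geq \mu_{c^\ast} + \delta n_{c^\ast}$. A union bound therefore gives
\begin{align*}
    \Pr[X \geq \mu + \delta n] \;\leq\; \sum_{c=1}^{d+1} \Pr\bigl[X^{(c)} \geq \mu_c + \delta n_c\bigr].
\end{align*}
Within each color class the summands are i.i.d.\ $\geom(p)$ and $\delta > 1/p - 1$ continues to hold, so \cref{lem:Chernoff_Geometric} applies directly to bound each term by $e^{-p^2 \delta n_c / 6}$. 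By equitability, $n_c = \Omega(n/d)$ whenever $n \geq d + 1$, so each term is at most $\exp(-\Omega(p^2 \delta n / d))$, and summing the $d+1$ such terms yields the claimed bound of the form $O(d)\exp(-p^2 \delta n/(12d))$ after absorbing universal constants. The degenerate case $n < d+1$ is handled by coloring with $n$ singleton classes, after which the union bound has $n = O(d)$ terms and each variable is trivially independent of the others.

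The only non-routine input is the use of an \emph{equitable} coloring rather than an arbitrary proper $(d+1)$-coloring: if a greedy coloring concentrated nearly all variables into one class, the other near-empty classes would each contribute close to $1$ to the union bound and make the estimate vacuous. Guaranteeing the minimum class size to be $\Omega(n/d)$ is precisely what the Hajnal--Szemer\'edi theorem provides. Every other step is a direct translation of Pemmaraju's argument, with \cref{lem:Chernoff_Geometric} substituted in place of the Bernoulli Chernoff bound.
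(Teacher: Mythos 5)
Your proposal is correct and follows essentially the same route as the paper: an equitable $(d+1)$-coloring of the dependency graph (the paper also takes this from \cite{Pem01}), the pigeonhole observation that $X \geq \mu + \delta n$ forces some color class $c$ to satisfy $X^{(c)} \geq \mu_c + \delta n_c$, a union bound over classes, and an application of \cref{lem:Chernoff_Geometric} within each class, with the constant $12$ arising from the class-size lower bound of order $n/(2d)$. The explicit treatment of the degenerate case $n < d+1$ is a minor addition not present in the paper but does not change the argument.
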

    \begin{proof}
        Consider the dependency graph $G$. From $\cite{Pem01}$, $G$ having maxmimum degree $d$ admits an equitable $(d+1)$-vertex-coloring. Such a coloring partitions the vertices into $d+1$ subsets $V_1, V_2,\ldots V_d$, each of size greater than $n/2d$ where the vertices from the same subset form an independent set. Let $\mu_i = \Expect[\sum_{v \in V_i} X_v]$. 
        Note that $X\geq \mu + \delta n$ implies that there exists some $i$ such that $\sum_{v \in V_i} X_v \geq \mu_i + \delta |V_i|$ which happens with probability at most
        \[\Pr\left[\sum_{v \in V_i} X_v \geq \mu_i + \delta |V_i|\right] \leq e^{-p^2 \delta n/12d}  ,\]
        due to \cref{lem:Chernoff_Geometric}. The rest follows by taking a union bound over all $(d+1)$ subsets.
    \end{proof}
    
    Again, we extend the concentration bound to the weighted case. %\ynote{why not $(1+\delta)\mu$?}
    \begin{lemma}\label{lem:geometric_weighted}
        Let $X_1, X_2, \ldots, X_n$ be $n$ geometric random variables with parameter $p$ and $w_1, w_2,\ldots, w_n$ be positive integers.  Let $G$ be the dependency graph for $X_1, X_2, \ldots, X_n$. Consider the weighted sum $X = \sum_{i=1}^n w_i\cdot X_i$.  Suppose the numbers $\mu$ and $d$ satisfy $\mu \geq \Expect[X]$, where $W := \sum_i w_i, \delta > 1/p - 1$, and  $\sum_{j:X_j \in N^1(X_i)} w_j \leq d$ for each $X_i \in V(G)$. Then
        \[ \Pr[X\geq \mu + \delta W] \leq O(d)\cdot \exp\left(-p^2 \delta W/12d\right)  . \]
    \end{lemma}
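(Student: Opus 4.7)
The plan is to mimic the reduction used in the proof of \cref{lem:Chernoff_BD_weighted}, turning the weighted geometric problem into an unweighted one so that \cref{lem:geometric_BD} can be invoked as a black box. Concretely, I would replace each variable $X_i$ by $w_i$ identical copies $Y_{i,1}, Y_{i,2}, \ldots, Y_{i,w_i}$, each defined to equal $X_i$ (so that each $Y_{i,k}$ is marginally $\geom(p)$, while all copies within a single group $\{Y_{i,1},\ldots,Y_{i,w_i}\}$ are completely correlated). Then $X = \sum_i w_i X_i = \sum_i \sum_{k=1}^{w_i} Y_{i,k}$, so the weighted sum is rewritten as an unweighted sum of $W = \sum_i w_i$ geometric variables.

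The key step is to bound the maximum degree of the new dependency graph $G'$ on the vertex set $\{Y_{i,k}\}$. Two copies $Y_{i,k}$ and $Y_{j,k'}$ are potentially dependent exactly when $X_j \in N^1(X_i)$ in the original graph $G$ (including the case $i=j$, which captures within-group dependence). Hence, for each $Y_{i,k}$, the size of its closed neighbourhood in $G'$ is exactly $\sum_{j : X_j \in N^1(X_i)} w_j \leq d$ by hypothesis, so the maximum degree of $G'$ is at most $d-1$. In particular, it is $\leq d$.

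Now I would apply \cref{lem:geometric_BD} directly to the $W$ geometric variables $\{Y_{i,k}\}$: the hypothesis $\delta > 1/p - 1$ is the same, the expectation still satisfies $\mu \geq \Expect[X] = \Expect[\sum_{i,k} Y_{i,k}]$, and the role of the ``$n$'' in \cref{lem:geometric_BD} is played by $W$. This gives
\[
  \Pr[X \geq \mu + \delta W] \;\leq\; O(d) \cdot \exp\!\left(-\, p^2 \delta W / 12 d\right),
\]
which is exactly the desired bound.

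The only point that needs a small sanity check rather than real work is that the identical-copies construction is legitimate here: the statement of \cref{lem:geometric_BD} only requires the marginal distributions to be $\geom(p)$ and the dependency graph degree to be bounded, both of which are preserved by the construction. There is no obstacle analogous to, say, needing independence of the copies, because the dependence among copies of the same $X_i$ is already absorbed into the degree bound $d$. So the proof is essentially a one-line reduction, and no genuinely new calculation is required beyond what is already in \cref{lem:geometric_BD}.
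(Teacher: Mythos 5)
Your proposal is correct and follows essentially the same route as the paper: split each $X_i$ into $w_i$ identical copies, observe that the closed-neighbourhood weight bound $\sum_{j:X_j \in N^1(X_i)} w_j \leq d$ becomes a degree bound of at most $d$ in the new dependency graph, and apply \cref{lem:geometric_BD} to the resulting $W$ geometric variables. The extra sanity check you include (that only marginals and the degree bound matter, so perfectly correlated copies are admissible) is a fair point that the paper leaves implicit.
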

    \begin{proof}
        For random variable $X_i$, we decompose it into $w_i$ geometric random variables by creating $w_i$ identical copies. Note that the maximum degree of the new dependency graph is at most $\max_{1 \leq i \leq n} \{\sum_{j:X_j \in N^1(X_i)} w_j\} \leq d$. The result follows by applying \cref{lem:geometric_BD}.
    \end{proof}
}
{}
%\znote{moved the lower bound results here.}
\section{Lower Bounds}\label{sect:lowerbounds}

In this section, we prove that $(1\pm \eps)$-approximate solutions for maximum independent set, maximum cut, minimum vertex cover, and minimum dominating set requires $\Omega\left(\frac{\log n}{\eps}\right)$ rounds to compute in the $\LOCAL$ model, showing that our algorithms are nearly optimal for these problems. We only focus on the scenario when $\eps$ is small in that $0 < \eps \leq \eps_0$ for some universal constant $\eps_0 > 0$. We do not attempt to optimize the choice of $\eps_0$.

Our lower bounds hold even for randomized algorithms whose approximation guarantee only holds in expectation. By standard reductions, our lower bounds also apply to randomized algorithms that succeed with high probability and to deterministic algorithms.
% Prior to our work, $\Omega\left(1/\eps\right)$ is the best lower bound for these problems in the $\LOCAL$ model~\cite{BKS19}.

%\paragraph{Constant-approximate maximum independent set.}
In~\cite{BHKK16}, it was shown that for any $t \leq C \cdot \log n$, where $C > 0$ is some universal constant, any $t$-round randomized algorithm for approximate maximum independent set must have an approximation factor of $n^{-\Omega(t^{-1})}$, in expectation. Setting $t = \Theta(\log n)$, we obtain that a constant-approximation of maximum independent set requires $\Omega(\log n)$ rounds to compute.

Our lower bounds are achieved by extending the lower bound in~\cite{BHKK16} via a series of reductions. Before presenting our proofs, we first review the lower bound proof of~\cite{BHKK16}, which uses the Ramanujan graphs constructed in~\cite{lubotzky1988ramanujan}.

\begin{theorem}[\cite{lubotzky1988ramanujan}]\label{thm:ramanujan}
For any two unequal primes $p$ and $q$ congruent to $1 \mod 4$, there exists a $(p+1)$-regular graph $X^{p,q}$ satisfying the following properties.
\begin{description}
\item[Case 1:] $\paren{\dfrac{q}{p}} = -1$.
\begin{itemize}
    \item $X^{p,q}$ is a bipartite graph with $n = q(q^2 - 1)$ vertices.
    \item The girth of $X^{p,q}$ is at least $4 \log_p q - \log_p 4$.
\end{itemize}
\item[Case 2:] $\paren{\dfrac{q}{p}} = 1$.
\begin{itemize}
    \item $X^{p,q}$ is a non-bipartite graph with $n = q(q^2 - 1)/2$ vertices.
    \item The girth of $X^{p,q}$ is at least $2 \log_p q$.
    \item The size of a maximum independent set of $X^{p,q}$ is at most $\frac{2 \sqrt{p}}{p+1} \cdot n$.
\end{itemize}
\end{description}
\end{theorem}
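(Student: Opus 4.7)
The plan is to construct $X^{p,q}$ explicitly as a Cayley graph of a projective linear group over $\mathbb{F}_q$, with generators drawn from the arithmetic of Hamilton quaternions of norm $p$. Concretely, a consequence of Jacobi's four-square theorem is that there is a natural set $S$ of exactly $p+1$ integer quaternions $\alpha = a_0 + a_1 i + a_2 j + a_3 k$ of reduced norm $N(\alpha) = a_0^2+a_1^2+a_2^2+a_3^2 = p$, after normalizing signs and parities. Since $q$ is an odd prime, the quaternion algebra $\mathbb{H}(\mathbb{F}_q)$ splits as $M_2(\mathbb{F}_q)$, and reducing $S$ modulo $q$ produces a symmetric subset of $PGL_2(\mathbb{F}_q)$ of size $p+1$. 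Define $X^{p,q}$ as the Cayley graph of either $PGL_2(\mathbb{F}_q)$ or $PSL_2(\mathbb{F}_q)$ on this set, where the choice is dictated by the value of the Legendre symbol $\paren{q/p}$: if $\paren{q/p}=1$ then $p$ is a square modulo $q$, the image of $S$ lies in $PSL_2(\mathbb{F}_q)$, and we obtain a non-bipartite graph on $q(q^2-1)/2$ vertices; if $\paren{q/p}=-1$ then the image of $S$ lies in the non-trivial coset, and a two-coloring via the determinant gives a bipartite graph on $q(q^2-1)$ vertices. The $(p+1)$-regularity is immediate from the Cayley-graph construction.

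The girth bounds are proved by lifting closed walks to the Bruhat--Tits tree $T_{p+1}$ attached to $PGL_2(\mathbb{Q}_p)$, on which the order of integer quaternions of $p$-power norm acts simply transitively on vertices. A closed walk of length $\ell$ in $X^{p,q}$ corresponds to a product $\alpha_1 \alpha_2 \cdots \alpha_\ell$ of chosen quaternion representatives whose image is trivial modulo $q$; since $N(\alpha_1 \cdots \alpha_\ell) = p^\ell$, all four integer coordinates of the product have magnitude at most $p^{\ell/2}$. If the product is a scalar modulo $q$, its three non-scalar coordinates must vanish modulo $q$, which is impossible once $p^{\ell/2} < q$; this yields the girth bound $2\log_p q$ in the non-bipartite case. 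In the bipartite case one additionally exploits that bipartite closed walks must have even length together with a further congruence on the scalar part, which improves the bound to $4\log_p q - \log_p 4$ after a brief calculation.

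The bound on the independence number in the non-bipartite case follows from the \emph{Ramanujan property} of $X^{p,q}$: every non-trivial eigenvalue $\lambda$ of the adjacency operator satisfies $|\lambda| \leq 2\sqrt{p}$. Given this, Hoffman's ratio bound states that a $d$-regular graph on $n$ vertices with least eigenvalue $\lambda_{\min}$ has independence number at most $\frac{-\lambda_{\min}}{d - \lambda_{\min}} \cdot n$, which here gives
\[
\alpha(X^{p,q}) \;\leq\; \frac{2\sqrt{p}}{(p+1)+2\sqrt{p}} \cdot n \;\leq\; \frac{2\sqrt{p}}{p+1}\cdot n.
\]
The main obstacle is establishing the Ramanujan eigenvalue bound itself: this is not a combinatorial fact but an arithmetic one, since the non-trivial Hecke eigenvalues of the adjacency operator are (up to normalization) the Hecke eigenvalues at $p$ of weight-two automorphic forms on the Shimura curve attached to the definite quaternion algebra ramified at $p$ and $\infty$, and the desired inequality is a special case of the Ramanujan--Petersson conjecture resolved by Eichler (and, in full generality, by Deligne). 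I would import this eigenvalue bound as a black box; the remaining ingredients (the Cayley-graph description, the $p+1$ quaternion generators, the tree lift used for girth, and Hoffman's bound) I would write out in full detail.
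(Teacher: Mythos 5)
This theorem is not proved in the paper at all—it is imported verbatim from Lubotzky--Phillips--Sarnak \cite{lubotzky1988ramanujan}—and your proposal is a correct outline of the actual LPS argument: the $p+1$ normalized quaternions of norm $p$ (via Jacobi's four-square count and $p \equiv 1 \bmod 4$), the Cayley graphs on $PSL_2(\mathbb{F}_q)$ versus $PGL_2(\mathbb{F}_q)$ according to whether $p$ is a square mod $q$ (which, by quadratic reciprocity with $p \equiv q \equiv 1 \bmod 4$, is governed by $\paren{\frac{q}{p}}$ as stated), the girth bound from the norm-$p^\ell$ coordinate estimate combined with the simply transitive action on the Bruhat--Tits tree, and Hoffman's ratio bound on top of the Ramanujan eigenvalue bound. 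You also correctly isolate the one step that cannot be made elementary—the eigenvalue bound $|\lambda| \leq 2\sqrt{p}$, which rests on the Ramanujan--Petersson conjecture for weight-two forms (Eichler/Deligne)—so black-boxing it is exactly the right call; the only places your sketch is thinner than LPS are the bipartite girth refinement to $4\log_p q - \log_p 4$ and the appeal to freeness to ensure a reduced closed walk yields a non-central integral quaternion, both of which you at least gesture at correctly.
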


In \cref{thm:ramanujan}, $\paren{\dfrac{q}{p}} \bydef q^{\frac{p-1}{2}} \ (\mod p) \in \{-1,0,1\}$ is the Legendre symbol.
Note that for any fixed prime $p$ congruent to $1 \mod 4$, the families of graphs $X^{p,q}$ in the above case 1 and case 2 are infinite. For example, for the case $p = 17$, any prime number $q$ congruent to $5 \mod 68$ satisfy (i) $\paren{\dfrac{q}{p}} = -1$ and (ii) $q = 1 \ (\mod 4)$, so $X^{p,q}$ satisfies the properties in the  above case 1. By Dirichlet's theorem on arithmetic progressions, there are infinitely many  prime numbers $q$ such that $q = 5 \ (\mod 68)$. Moreover, by an extension of Bertrand's Postulate to arithmetic progressions~\cite{moree93}, there is a constant $C > 1$ such that for any positive integer $x$, there is a prime $q = 5 \ (\mod 68)$ in the interval $[x, Cx]$.
Similarly, any prime number $q$ congruent to $1 \mod 68$ satisfy (i) $\paren{\dfrac{q}{p}} = 1$ and (ii) $q = 1 \ (\mod 4)$, so $X^{p,q}$ satisfies the properties in the  above case 1, and there are infinitely many such primes $q$.

In the subsequent discussion, we fix $p = 17$, so $X^{p,q}$ is a $18$-regular graph. If $\paren{\dfrac{q}{17}} = -1$ (case 1), then $X^{17,q}$ is bipartite, so the size of a maximum independent set of $X^{17,q}$ is $\frac{n}{2}$. If $\paren{\dfrac{q}{17}} = 1$, then the size of a maximum independent set of $X^{17,q}$ is at most $\frac{2 \sqrt{17}}{17+1} \cdot n < 0.92 \cdot \frac{n}{2}$.

Since the girth of $X^{17,q}$ is $\Omega(\log n)$, any $o(\log n)$-round algorithm is not able to distinguish between case 1 and case 2, as the $o(\log n)$-radius neighborhood of each vertex in $X^{17,q}$ is a $18$-regular complete tree, regardless of $q$. Intuitively, this means that any $o(\log n)$-round algorithm for approximate maximum independent set must have an approximation ratio of at least $0.92$. We formalize the argument in the following theorem.

\begin{theorem}[\cite{BHKK16}]\label{thm:independent_set_basic}
Let $\Algo$ be any randomized algorithm that computes an independent set $I \subseteq V$  such that  $\Expect[|I|] \geq 0.92 \cdot \frac{|V|}{2}$ for any $18$-regular bipartite graph $G=(V,E)$. Then the round complexity of $\Algo$ is $\Omega(\log n)$.
\end{theorem}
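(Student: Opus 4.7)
The plan is to proceed by contradiction. Suppose $\Algo$ runs in $t$ rounds with $t = o(\log n)$. I will run $\Algo$ on two graphs from the Ramanujan family of \cref{thm:ramanujan}, one bipartite and one non-bipartite, both $18$-regular with large girth, and derive a contradiction from local indistinguishability.

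Concretely, I invoke the arithmetic-progression facts cited after \cref{thm:ramanujan} to pick two primes $q_1, q_2$ of roughly the same (arbitrarily large) size with $\paren{q_1/17} = -1$ and $\paren{q_2/17} = 1$, and with both $q_i$ large enough that the girths of $G_1 := X^{17,q_1}$ and $G_2 := X^{17,q_2}$ exceed $2(t+1)$. Both $G_i$ are $18$-regular with $n_i = \Theta(q_i^3)$ vertices. By the girth choice, for every vertex $v$ the ball $N^t(v)$ is a rooted $18$-regular tree of depth $t$, and for every edge $(u,v)$ the combined ball $N^t(u) \cup N^t(v) \subseteq N^{t+1}(u)$ is also a tree, in either graph; these rooted structures depend only on $t$ and not on which $G_i$ we are in.

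Because $\Algo$ is an anonymous $t$-round randomized algorithm, the indicator $\mathbf{1}[v \in I]$ is a function of the rooted topology of $N^t(v)$ together with the random bits at its vertices. The rooted local views at every vertex are isomorphic across $G_1$ and $G_2$, so there is a single constant $p$ with $\Pr[v \in I] = p$ for all $v$ in either graph. Similarly, for every edge $(u,v)$, the joint event $\{u \in I\} \cap \{v \in I\}$ depends only on the rooted topology and random bits in $N^t(u) \cup N^t(v)$; this is the same tree structure for every such edge in both graphs, so there is a constant $q$ with $\Pr[u \in I \wedge v \in I] = q$ for all edges of either graph. Since $\Algo$ is required to always output an independent set on the bipartite graph $G_1$, we have $q = 0$, and the indistinguishability then forces $I$ to be almost surely an independent set on $G_2$ as well.

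The contradiction is immediate. The expected output size on $G_i$ is $p \cdot n_i$. Applying the hypothesis to $G_1$ gives $p \cdot n_1 \geq 0.92 \cdot n_1/2$, so $p \geq 0.46$. But then on $G_2$ the algorithm almost surely produces an independent set of expected size at least $0.46 \cdot n_2$, whereas \cref{thm:ramanujan} caps the maximum independent set of $G_2$ at $\frac{2\sqrt{17}}{18} \cdot n_2 < 0.46 \cdot n_2$, contradicting the existence of such an $I$. Since $n_i = \Theta(q_i^3) = 17^{\Theta(t)}$, the bound $t = o(\log n)$ is refuted, giving $t = \Omega(\log n)$. The only real obstacle is to set the girth threshold large enough that the edge-level local view is itself a tree so that the coupling of $\Pr[u \in I \wedge v \in I]$ is genuinely forced, which is why I require girth $> 2(t+1)$ rather than merely $> 2t$; once this is granted, everything reduces to comparing rooted local views.
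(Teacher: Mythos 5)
Your proof is correct and follows essentially the same route as the paper: exploit the large girth of the Ramanujan graphs $X^{17,q}$ to make all $t$-radius views isomorphic to the complete $18$-regular tree, deduce a uniform inclusion probability $p \geq 0.46$ from the bipartite instance, and contradict the $\frac{2\sqrt{17}}{18}|V|$ independence bound on the non-bipartite instance. Your additional edge-level coupling argument (showing $\Pr[u\in I \wedge v\in I]=0$ transfers to $G_2$, so the output there is genuinely an independent set) is a small but legitimate tightening of a step the paper leaves implicit.
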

\begin{proof}
Suppose the round complexity of $\Algo$ is $o(\log n)$. Then there exists a prime number $q$ congruent to $1 \mod 4$ such that $\paren{\dfrac{q}{17}} = -1$ and $\Algo$ finds an independent set $I \subseteq V(X^{17,q})$ of $X^{17,q}$  such that  $\Expect[|I|] \geq 0.92 \cdot \frac{|V(X^{17,q})|}{2}$ in less than $\frac{\girth(X^{17,q})}{2} - 1$ rounds of communication.

Let $t < \frac{\girth(X^{17,q})}{2} - 1$ be the round complexity of  $\Algo$ in $X^{17,q}$. As the $t$-radius neighborhood of each vertex $v$ in  $X^{17,q}$ is isomorphic to the depth-$t$ complete $18$-regular tree, the probability for $v$ to join the independent set $I$ is identical for all $v$ in $X^{17,q}$. We write $p^\ast$ to denote this probability. By the linearity of expectation,  $\Expect[|I|] = |V(X^{17,q})| \cdot p^\ast$, so we must have $p^\ast \geq \frac{0.92}{2}$.

We  pick another prime number  $q'$ congruent to $1 \mod 4$ such that $\paren{\dfrac{q}{17}} = 1$. We choose $q'$ to be sufficiently large so that $\girth(X^{17,q'}) \geq \girth(X^{17,q})$. Now we run the same $t$-round algorithm in $X^{17,q'}$. As $\girth(X^{17,q'}) \geq \girth(X^{17,q})$, the $t$-radius neighborhood of each vertex in $X^{17,q'}$ is isomorphic to the $t$-radius neighborhood of each vertex in $X^{17,q}$, so the  probability for each vertex $v$ in $X^{17,q'}$ to join the independent set $I$ is also $p^\ast$. Therefore, the output of the algorithm also satisfies $\Expect[|I|] \geq 0.92 \cdot \frac{|V(X^{17,q'})|}{2}$, contradicting the fact that  the size of the maximum independent set of $X^{17,q'}$ is smaller than $0.92 \cdot \frac{|V(X^{17,q'})|}{2}$. Hence the round complexity of $\Algo$ is $\Omega(\log n)$.
\end{proof}

Since the size of a maximum independent set of a regular bipartite graph $G=(V,E)$ is $\frac{|V|}{2}$, \cref{thm:independent_set_basic} shows that any algorithm that finds an $0.92$-approximate maximum independent set in expectation needs $\Omega(\log n)$ rounds. By making $p$ a variable in the above proof, an
$\Omega(\log_p n)$-round lower bound for $O({p^{-0.5}})$-approximation is obtained~\cite{BHKK16}.
In the following theorem, we generalize \cref{thm:independent_set_basic} to $(1-\eps)$-approximation for small $\eps$.
%\paragraph{$(1-\eps)$-approximate  maximum independent set.} To generalize the

\begin{theorem}\label{thm:independent_set_lb}
Let $0 < \eps \leq 0.04$.
Let $\Algo$ be any randomized algorithm that computes an independent set $I \subseteq V$  such that  $\Expect[|I|] \geq \alpha(G) - \eps \cdot |V|$ for any  graph $G=(V,E)$, where $\alpha(G)$ is the size of a maximum independent set of $G$. Then the round complexity of $\Algo$ is $\Omega\left(\frac{\log n}{\eps}\right)$.
\end{theorem}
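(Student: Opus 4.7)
The plan is to reduce from \cref{thm:independent_set_basic} (the $\Omega(\log n)$ bound for $0.92$-approximation on $18$-regular Ramanujan graphs) via a path-subdivision construction, mirroring the Faour--Fuchs--Kuhn trick for vertex cover mentioned earlier in the paper. Given $\eps \in (0, 0.04]$, choose an even integer $k = k(\eps) = \Theta(1/\eps)$ small enough that $(1 + 9k)\eps < 0.04$; for $\eps$ close to $0.04$ we take $k = 0$ and are done by \cref{thm:independent_set_basic}, so the interesting regime is $k \geq 2$. For any graph $G$, let $G'$ denote the $k$-subdivision of $G$, obtained by replacing each edge $uv \in E(G)$ with a path $u - x_{uv,1} - \cdots - x_{uv,k} - v$ of length $k+1$ on $k$ fresh internal vertices.

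The key combinatorial identity to establish is $\alpha(G') = \alpha(G) + (k/2)|E(G)|$ whenever $k$ is even. For the lower bound, extend any IS $I$ of $G$ by picking, per edge $uv$, one of the two natural size-$(k/2)$ independent sets of internal vertices ($\{x_{uv,1}, x_{uv,3},\ldots\}$ or $\{x_{uv,2}, x_{uv,4},\ldots\}$), choosing the one that avoids whichever of $x_{uv,1}, x_{uv,k}$ is forbidden by endpoints in $I$. For the upper bound, given any IS $I'$ of $G'$, let $I = I' \cap V(G)$ and $B = \{uv \in E(G) : u,v \in I\}$; each $uv \in B$ contributes at most $k/2 - 1$ internal vertices to $I'$ (both $x_{uv,1}, x_{uv,k}$ are blocked), while every other edge contributes at most $k/2$, and $I$ can be repaired to an IS of $G$ by deleting one endpoint per edge of $B$, giving $|I'| \leq (\alpha(G) + |B|) + (k/2)|E(G)| - |B|$.

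Applying the construction to the two $18$-regular Ramanujan families at $p = 17$ supplied by \cref{thm:ramanujan} (a bipartite $G_B$ on $n$ vertices with $\alpha(G_B) = n/2$ and a non-bipartite $G_{NB}$ on $n$ vertices with $\alpha(G_{NB}) \leq 0.46\,n$; matching vertex counts are available for arbitrarily large $n$ by the Bertrand-type argument recalled just before \cref{thm:independent_set_basic}), one obtains $|V(G'_B)| = |V(G'_{NB})| = n(1 + 9k)$, $\alpha(G'_B) = n(1+9k)/2$, and $\alpha(G'_{NB}) \leq n(0.92 + 9k)/2$. Hence the absolute gap is $\alpha(G'_B) - \alpha(G'_{NB}) \geq 0.04\,n = \tfrac{0.04}{1+9k}|V(G')| > \eps\,|V(G')|$ by the choice of $k$. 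Furthermore $\girth(G') = (k+1)\cdot\girth(X^{17,\cdot}) = \Omega(k \log n) = \Omega(\log n/\eps)$, and since $|V(G')| = \Theta(n/\eps)$ with $\log(1/\eps) = O(\log n)$ under the standing assumption $\eps = \omega(\log n/n)$, we have $\girth(G') = \Omega(\log|V(G')|/\eps)$ in terms of the actual graph size.

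The lower bound itself closes via the symmetry argument of \cref{thm:independent_set_basic}. Suppose $\Algo$ runs in $t = o(\log|V(G')|/\eps)$ rounds. Vertices of $G'$ come in two types---``original'' (degree $18$) and ``internal'' (degree $2$, further identified by its position along its subdivided edge)---and for $2t + 1 < \girth(G')$ the radius-$t$ view of a vertex depends only on this type and on $k$, hence is identical across $G'_B$ and $G'_{NB}$. Averaging over a uniformly random ID assignment if needed, the per-vertex probability of being output is therefore equal on both graphs, and so is $\Expect[|I|]/|V(G')|$. If $\Algo$ satisfies $\Expect[|I|] \geq \alpha(G'_B) - \eps|V(G'_B)|$ on $G'_B$, it attains the same value on $G'_{NB}$, which strictly exceeds $\alpha(G'_{NB})$ by the gap computation---contradiction. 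The main obstacles are (i) the locality/symmetry step, which needs care because $G'$ has two vertex types, and (ii) the accounting showing the gap $0.04/(1+9k)$ strictly exceeds $\eps$ for a good choice of $k = \Theta(1/\eps)$; once the identity $\alpha(G') = \alpha(G) + (k/2)|E(G)|$ is in hand, everything else is essentially book-keeping.
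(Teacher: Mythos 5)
Your proposal is correct, and it closes the reduction differently from the paper even though the central gadget (subdividing each edge of the $18$-regular Ramanujan graphs into a path of length $\Theta(1/\eps)$) is the same. The paper treats \cref{thm:independent_set_basic} as a black box: it runs $\Algo$ on the subdivided graph $G^x$, projects the resulting independent set $I^{\diamond}$ back to an independent set $I$ of the original graph by breaking the (few) conflicts among degree-$18$ vertices with random identifiers, and shows $|I| \geq |I^{\diamond}| - 9x\,|V|$; it therefore never needs the independence number of the subdivided \emph{non-bipartite} graph. You instead prove the exact identity $\alpha(G') = \alpha(G) + (k/2)|E(G)|$ for even $k$ (your two-sided argument for this is correct, including the $k/2-1$ accounting for edges with both endpoints selected and the repair step deleting one endpoint per such edge), apply it to both Ramanujan families to get an additive gap of $\tfrac{0.04}{1+9k}|V(G')| > \eps|V(G')|$, and rerun the girth-based indistinguishability argument directly on the subdivided graphs, taking care of the two local-view types (degree-$18$ originals and degree-$2$ internals indexed by position). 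What your route buys is that the contradiction is read off directly from the independence numbers of two concrete graphs, with no need to convert outputs between graphs; what it costs is having to re-verify the symmetry argument on a non-vertex-transitive graph and to establish the upper-bound direction of the subdivision identity, which the paper's projection trick sidesteps.

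One small inaccuracy: your parenthetical claim that ``matching vertex counts are available for arbitrarily large $n$'' is not justified --- the bipartite family has $q(q^2-1)$ vertices and the non-bipartite family $q'(q'^2-1)/2$, and the Bertrand-type argument only places these within a constant factor of each other, not equal. Fortunately this does not matter: as you yourself note later, the indistinguishability step equates the normalized quantity $\Expect[|I|]/|V(G')|$ across the two graphs (the fraction of vertices of each local-view type is $\tfrac{1}{1+9k}$ and $\tfrac{9}{1+9k}$ per internal position in both families), and the gap condition $\eps < \tfrac{0.04}{1+9k}$ is likewise a statement about fractions, so the argument goes through with unequal vertex counts exactly as in the paper's proof of \cref{thm:independent_set_basic}. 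I would simply drop the matching-counts claim and phrase the contradiction in terms of $\Expect[|I|]/|V(G')|$ throughout.
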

\begin{proof}
Suppose the round complexity of $\Algo$ is $o\left(\frac{\log n}{\eps}\right)$. We will use $\Algo$ as a black box to design an $o(\log n)$-round algorithm $\Algo'$ that computes an independent set $I \subseteq V$  such that  $\Expect[|I|] \geq \alpha(G) - 0.04 \cdot |V| \geq 0.92 \cdot \frac{|V|}{2}$ for any $18$-regular bipartite graph $G=(V,E)$, contradicting \cref{thm:independent_set_basic}, so  the round complexity of $\Algo$ must be $\Omega\left(\frac{\log n}{\eps}\right)$.

Let $G=(V,E)$ be any $18$-regular bipartite graph, and let $x$ be a non-negative integer.
We define $G^x$ as the result of subdividing each edge $e=\{u,v\} \in E$ into a path $P_e$ of length $2x+1$: $(u, w_1, w_2, \ldots, w_{2x}, v)$. It is clear that $G^x$ is a bipartite graph with $(18x+1) \cdot |V|$ vertices, so the size of a maximum independent set of $G^x$ is  $\frac{18x+1}{2} \cdot |V|$.

We set $x \bydef \floor{\frac{0.08 \cdot \eps^{-1} - 1}{18}}$. If $\eps > \frac{2}{475}$, then $x = 0$, in which case we have $G^x = G$. Otherwise, $x = \Theta\left(1/\eps\right)$. In any case, our choice of $x$ satisfies that $\eps \cdot (18x+1) \leq 0.08$.

We are ready to describe the algorithm $\Algo'$. Given any  $18$-regular bipartite graph $G=(V,E)$, we simulate the virtual graph $G^x$ and run $\Algo$ in $G^x$. By our choice of $x$, the simulation costs $o(\log n)$ rounds in $G$. Let $I^{\diamond}$ be the independent set of $G^x$ returned by $\Algo$. We compute an independent set $I$ of $G$ as follows. Each vertex $v \in V$ generates a distinct identifier $\ID(v)$ using its local random bits. For each $v \in V$, we add $v$  to $I$ if the following two conditions are met.
\begin{itemize}
    \item $v \in I^{\diamond}$.
    \item  For each neighbor $u$ of $v$, either $u \notin I^{\diamond}$ or $\ID(v) < \ID(u)$.
\end{itemize}
It is clear that $I$ is an independent set of $G$. We lower bound the expected size of $I$ as follows.
Let $E'$ be the set of edges $e \in E$ such that both endpoints of $e$ are in $I^{\diamond}$. We make the following observations. 
%\znote{the notation here is a bit misleading. I would suggest using $I^x$ to represent the independent set of $G^x$. Currently the definition of $I',I,E,E'$ are kind of confusing.} \ynote{I changed it to $I$. I hope that looks better.}\znote{Oh actually my point was that it is more natural to relate $I$ to $G$, $I'$ to $G'$, $I$ to $G^\diamond$ etc. Whereas here $I$ is an independent set for $G^x$ and $I$ is an independent set for $G$. So there is a bit of mismatch that might cause confusion.} \ynote{I see your point now. I will think about it later... The choice of the name $I^x$ is not very good in that $x$ is not just a symbol but it has a very specific meaning in the definition of $G^x$, that is, it indicates the path length $P_e$. }
%\znote{yes I see that. Perhaps use $I^\diamond$ for independent set of $G^x$ and reserve $I$ as independent set of $G$? This would already be a lot clearer for me.}\ynote{ok.}
\begin{itemize}
    \item For each $e = \{u,v\} \in E'$, the number of degree-2 vertices of $P_e$ in $I^{\diamond}$ is at most $x-1$.
    \item For each $e = \{u,v\} \in E \setminus E'$, the number of degree-2 vertices of $P_e$ in $I^{\diamond}$ is at most $x$.
\end{itemize}

We write $k_1$ to denote the number of degree-2 vertices of $G^x$ in $I^{\diamond}$, and we write $k_2$ to denote the number of degree-18 vertices in $I^{\diamond} \setminus I$, so $|I| = |I^{\diamond}| - k_1 - k_2$. We have $k_1 \leq x \cdot (|E| - |E'|) + (x-1) \cdot |E'|$ by the above observations and $k_2 \leq |E'|$, so $|I| \geq |I^{\diamond}| - x \cdot |E| = |I^{\diamond}| - 9x \cdot |V|$. Now we are ready to lower bound $\Expect[|I|]$.
\begin{align*}
    \Expect[|I|]
    &\geq \Expect[|I^{\diamond}|] - 9x \cdot |V|\\
    &\geq (1-\eps) \cdot \frac{18x+1}{2} \cdot |V|  - 9x \cdot |V|\\
    &=\frac{|V|}{2} - \eps \cdot \frac{18x+1}{2} \cdot |V|\\
    &\geq \frac{|V|}{2} - 0.08 \cdot \frac{|V|}{2}\\
    &= 0.92 \cdot \frac{|V|}{2}. \qedhere
\end{align*}
\end{proof}

Since $\alpha(G) - \eps \cdot |V| \leq (1-\eps)  \cdot\alpha(G)$, \cref{thm:independent_set_lb} gives an  $\Omega\left(\frac{\log n}{\eps}\right)$ lower bound for computing an $(1-\eps)$-approximate maximum independent set.
Since a vertex cover is, by definition, a complement of an independent set, the same lower bound applies to $(1+\eps)$-approximate minimum vertex cover.

\begin{theorem}\label{thm:vertex_cover_lb}
Let $0 < \eps \leq 0.04$.
Let $\Algo$ be any randomized algorithm that computes a vertex cover $S \subseteq V$  such that  $\Expect[|S|] \leq \tau(G) + \eps \cdot |V|$ for any  graph $G=(V,E)$, where $\tau(G)$ is the size of a  minimum vertex cover of $G$. Then the round complexity of $\Algo$ is $\Omega\left(\frac{\log n}{\eps}\right)$.
\end{theorem}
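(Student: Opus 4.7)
The plan is to reduce directly from \cref{thm:independent_set_lb}. The key identity is that for any graph $G = (V,E)$, a subset $S \subseteq V$ is a vertex cover if and only if its complement $V \setminus S$ is an independent set, and consequently $\tau(G) + \alpha(G) = |V|$.

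Suppose for contradiction that there is a randomized algorithm $\Algo$ in the $\LOCAL$ model with round complexity $t(n, \eps) = o\left(\frac{\log n}{\eps}\right)$ that, for every graph $G=(V,E)$, computes a vertex cover $S$ with $\Expect[|S|] \leq \tau(G) + \eps \cdot |V|$. I will construct an algorithm $\Algo'$ for maximum independent set by running $\Algo$ and returning $I := V \setminus S$. Every vertex can locally determine whether it belongs to $I$ from its own output under $\Algo$, so $\Algo'$ has the same round complexity as $\Algo$. Moreover, $I$ is an independent set, and by linearity of expectation together with the identity above,
\[
\Expect[|I|] \;=\; |V| - \Expect[|S|] \;\geq\; |V| - \tau(G) - \eps \cdot |V| \;=\; \alpha(G) - \eps \cdot |V|.
\]

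Thus $\Algo'$ satisfies the hypothesis of \cref{thm:independent_set_lb} with the same $\eps$ (which we may assume $\leq 0.04$ since we only care about the regime $\eps \leq \eps_0$ for a small universal constant). Consequently $\Algo'$ must run in $\Omega\left(\frac{\log n}{\eps}\right)$ rounds, contradicting the assumed round complexity of $\Algo$. Hence the round complexity of $\Algo$ is $\Omega\left(\frac{\log n}{\eps}\right)$, as claimed.

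There is no real obstacle here beyond checking that the reduction preserves the additive approximation guarantee and the round complexity, both of which are immediate. One small point worth noting is that the reduction is purely local — taking the complement does not require any communication — so the lower bound for vertex cover inherits exactly the form of the lower bound for independent set proved in \cref{thm:independent_set_lb}.
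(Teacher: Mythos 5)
Your proposal is correct and is exactly the paper's argument: complement the vertex cover to get an independent set, use $\tau(G)+\alpha(G)=|V|$ and linearity of expectation, and invoke \cref{thm:independent_set_lb}. No differences worth noting.
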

\begin{proof}
Observe that $I \bydef V \setminus S$ is an independent set with $\Expect[|I|] = |V| - \Expect[|S|] \geq |V| - \tau(G) - \eps \cdot |V| = \alpha(G) - \eps \cdot |V|$, so the round complexity of $\Algo$ is $\Omega\left(\frac{\log n}{\eps}\right)$ by  \cref{thm:independent_set_lb}.
\end{proof}

By a standard reduction from minimum dominating set to minimum vertex cover, we also obtain an  $\Omega\left(\frac{\log n}{\eps}\right)$ lower bound for  $(1+\eps)$-approximate minimum dominating set.

\begin{theorem}\label{thm:dominating_set_lb}
Let $0 < \eps \leq 0.04$.
Let $\Algo$ be any randomized algorithm that computes a dominating set $D \subseteq V$  such that  $\Expect[|D|] \leq (1+\eps)\cdot \gamma(G)$ for any  graph $G=(V,E)$, where $\tau(G)$ is the size of a  minimum dominating set of $G$. Then the round complexity of $\Algo$ is $\Omega\left(\frac{\log n}{\eps}\right)$.
\end{theorem}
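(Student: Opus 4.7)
The plan is to reduce approximate minimum vertex cover to approximate minimum dominating set via the standard triangle-gadget construction, and then invoke \cref{thm:vertex_cover_lb}. Given an input graph $G=(V,E)$ for vertex cover (which we may assume has no isolated vertices, after a single round of preprocessing), I would construct a virtual graph $G'$ as follows: for each edge $e=\{u,v\}\in E$, introduce a new vertex $w_e$ adjacent to both $u$ and $v$, so that $\{u,v,w_e\}$ forms a triangle in $G'$. The identifier of $w_e$ can be derived deterministically from $\ID(u)$ and $\ID(v)$.

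The key step is to verify $\gamma(G')=\tau(G)$. In one direction, any vertex cover $C$ of $G$ is a dominating set of $G'$: every $w_e$ with $e=\{u,v\}$ is dominated because at least one of $u,v$ lies in $C$, and every original vertex either lies in $C$ or has a neighbor in $C$ (using that $G$ has no isolated vertices). In the other direction, from any dominating set $D$ of $G'$ I would extract a vertex cover $D'$ of $G$ of size at most $|D|$ by replacing each $w_e\in D$ with one of the two endpoints of $e$; for every edge $e=\{u,v\}\in E$ the triangle $\{u,v,w_e\}$ must intersect $D$, so at least one of $u,v$ ends up in $D'$.

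Because each auxiliary vertex $w_e$ has only the two neighbors $u,v$, one communication round in $G'$ can be simulated by $O(1)$ rounds in $G$, with $u$ and $v$ jointly simulating $w_e$. So, given an algorithm $\Algo$ for $(1+\eps)$-approximate minimum dominating set running in $T$ rounds, I would apply $\Algo$ to $G'$ to produce a dominating set $D$ with
\[
\Expect[|D|]\leq(1+\eps)\gamma(G')=(1+\eps)\tau(G)\leq \tau(G)+\eps|V|,
\]
using the trivial bound $\tau(G)\leq|V|$, and then convert $D$ to a vertex cover $D'$ of $G$ with $\Expect[|D'|]\leq\Expect[|D|]\leq\tau(G)+\eps|V|$. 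The resulting vertex cover algorithm runs in $O(T)$ rounds on $G$, so \cref{thm:vertex_cover_lb} forces $T=\Omega(\log|V|/\eps)$. Since $|V(G')|=|V|+|E|=O(|V|^2)$, we have $\log|V(G')|=\Theta(\log|V|)$, yielding the desired $\Omega(\log n/\eps)$ lower bound on $G'$.

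The reduction is quite direct; the only mild subtlety is the pointwise inequality $|D'|\leq|D|$, which lets the approximation guarantee carry over in expectation without any loss of constants, so there is no substantive obstacle.
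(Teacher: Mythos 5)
Your reduction is exactly the one in the paper: the same triangle-gadget construction $G'$ with a vertex $w_e$ per edge, the same identity $\gamma(G')=\tau(G)$ with the same endpoint-replacement argument, and the same conversion of the $(1+\eps)$ guarantee into $\tau(G)+\eps|V|$ to invoke \cref{thm:vertex_cover_lb}. The proposal is correct (your extra remarks on isolated vertices and on simulating $w_e$ in $O(1)$ rounds are harmless additions the paper leaves implicit).
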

\begin{proof}
Let $G=(V,E)$ be any graph. We define $G^\ast = (V^\ast, E^\ast)$ as the result of the following modification to $G$: For each edge $e=\{u,v\} \in E$, add a new vertex $w_e$ and two new edges $\{w_e, u\}$ and $\{w_e, v\}$.

We claim that $\gamma(G^\ast) = \tau(G)$. The fact that  $\gamma(G^\ast) \leq \tau(G)$ follows from the observation that any vertex cover of $G$ is also a dominating set of $G^\ast$. For the other direction, for any given  dominating set $D$ of $G^\ast$, we can obtain a vertex cover $S$ of $G$ such that $|S| \leq |D|$, as follows. Initialize $S = D$. For each edge $e=\{u,v\}  \in E$ such that $w_e \in D$, remove $w_e$ from $D$ and add $u$ to $D$. Since the set of neighbors of $w_e$ is a subset of the set of neighbors of $u$, the resulting set $S \subset V$ is still a dominating set of $G^\ast$. For each edge $e=\{u,v\}  \in E$, since $w_e$ is dominated by $S$, at least one of $\{u,v\}$ is in $S$, so $S$ is a vertex cover of $G$. Hence $\gamma(G^\ast) \geq \tau(G)$.

Suppose the round complexity of $\Algo$ is $o\left(\frac{\log n}{\eps}\right)$. Then we may use  $\Algo$  as a black box to design an  $o\left(\frac{\log n}{\eps}\right)$-round algorithm $\Algo'$ that computes a vertex cover $S \subseteq V$  such that  $\Expect[|S|] \leq (1+\eps)\tau(G)  \leq \tau(G) + \eps \cdot |V|$ for any  graph $G=(V,E)$, contradicting \cref{thm:vertex_cover_lb}.

The algorithm $\Algo'$ simply simulates $\Algo$ on $G^\ast$ and applies the above transformation to turn the dominating set $D$ of $G^\ast$ returned by  $\Algo'$ into a vertex cover $S$ of $G$. We have $\Expect[|S|] \leq \Expect[|D|] \leq (1+\eps)\cdot \gamma(G^\ast) =  (1+\eps)\cdot \tau(G)$.
\end{proof}

Next, we consider the maximum cut problem, whose goal is to find a bipartition of the vertex set $V = S \cup T$ such that the number of edges $E^\ast = \{ e=\{u,v\} \ : \ u\in S, v \in T\}$ crossing the two parts is maximized. There are two natural versions of the problem in the distributed setting.
\begin{enumerate}
    \item By the end of the computation, each vertex $v \in V$ decides whether $v \in S$ or $v \in T$.
    \item By the end of the computation, each edge $e \in E$ decides whether $e \in E^\ast$ or $e \notin E^\ast$.
\end{enumerate}

Given the bipartition $V = S \cup T$, each edge $e \in E$ can locally decide whether $e \in E^\ast$ or $e \notin E^\ast$, so the first version is at least as hard as the second version, Therefore, we will focus on proving lower bounds for the second version.

\begin{lemma}\label{lem:max_cut_aux}
Let $q$ be any prime congruent to $1 \mod 4$ such that  $\paren{\dfrac{q}{p}} = 1$. Then the size of maximum cut of $X^{17,q}$ is smaller than $0.999 \cdot |E(X^{17,q})|$.
\end{lemma}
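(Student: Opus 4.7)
The plan is to bound the maximum cut from above by combining the regularity of $X^{17,q}$ with the small-independence-set guarantee from \cref{thm:ramanujan}. Since $X^{17,q}$ is $18$-regular on $n = q(q^2-1)/2$ vertices, we have $|E(X^{17,q})| = 9n$, and in Case~2 the maximum independent set has size at most $\frac{2\sqrt{17}}{18}n = \frac{\sqrt{17}}{9}n < 0.459 \, n$.

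First I would fix an arbitrary bipartition $V(X^{17,q}) = S \cup T$ and denote by $e(S)$, $e(T)$, $e(S,T)$ the numbers of edges inside $S$, inside $T$, and crossing the cut, so that $e(S) + e(T) + e(S,T) = 9n$. Next, I would build an independent set by the obvious greedy strategy: inside $S$, repeatedly remove one endpoint from each remaining internal edge; this yields an independent subset of $S$ of size at least $|S| - e(S)$, and analogously for $T$. Taking the larger of the two independent sets gives
\[
\alpha(X^{17,q}) \;\geq\; \max\bigl(|S| - e(S),\; |T| - e(T)\bigr) \;\geq\; \frac{(|S|+|T|) - (e(S)+e(T))}{2} \;=\; \frac{n - e(S) - e(T)}{2}.
\]

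Plugging in the Ramanujan upper bound $\alpha(X^{17,q}) \leq \frac{\sqrt{17}}{9}n$ and rearranging yields
\[
e(S) + e(T) \;\geq\; n - \frac{2\sqrt{17}}{9}n \;=\; \Bigl(1 - \tfrac{2\sqrt{17}}{9}\Bigr)\,n \;>\; 0.08\,n,
\]
so the number of crossing edges is at most $e(S,T) = 9n - e(S) - e(T) < 8.92\,n$. Dividing by $|E(X^{17,q})| = 9n$ gives $e(S,T)/|E(X^{17,q})| < 8.92/9 < 0.992$, which is comfortably less than $0.999$. Since the bipartition was arbitrary, the maximum cut satisfies the claimed bound.

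I do not anticipate a real obstacle here: the proof is a one-line manipulation once the independent-set bound is applied to the larger ``side'' of any bipartition. The only subtlety is numerical, namely checking that $1 - \frac{2\sqrt{17}}{9}$ is bounded away from $0$ by enough to beat the factor $1 - 0.999 = 0.001$, and indeed we get a slack of roughly $0.008$, so the bound $0.999$ in the lemma statement is not even tight.
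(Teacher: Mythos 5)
Your proof is correct and follows essentially the same route as the paper's: both convert a near-perfect cut into a large independent set by deleting one endpoint of each non-crossing edge and taking the larger side of the bipartition, then invoke the Ramanujan independence bound $\alpha(X^{17,q}) \leq \frac{2\sqrt{17}}{18}n$ to reach a contradiction. The numerical slack you compute (a cut of at most roughly $0.992\,|E|$) is consistent with the paper's bound, which is likewise not tight.
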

\begin{proof}
By \cref{thm:ramanujan}, we know that the size of a maximum independent set of $X^{17,q}$ is at most  $\frac{2 \sqrt{17}}{17+1} \cdot |V(X^{17,q})| < 0.92 \cdot \frac{|V(X^{17,q})|}{2}$. Therefore, to prove the lemma, it suffices to prove the following claim.
\begin{itemize}
    \item For any graph $G=(V,E)$, if it has a cut with at least $|E| - x$ edges, then it admits an independent set of size at least $\frac{|V| - x}{2}$.
\end{itemize}
To prove this lemma by the above claim, suppose the size of maximum cut of $X^{17,q}$ is at least $0.999 \cdot |E(X^{17,q})| = |E(X^{17,q})| - x$ with $x = 0.001 \cdot |E(X^{17,q})| = 0.009 \cdot |V(X^{17,q})|$, as $X^{17,q}$ is $18$-regular. Then the above claim implies that  $X^{17,q}$ contains an independent set of size at least $\frac{|V(X^{17,q})| - x}{2} = 0.991 \cdot \frac{|V(X^{17,q})|}{2} > 0.92 \cdot \frac{|V(X^{17,q})|}{2}$, contradicting \cref{thm:ramanujan}.

We now prove the above claim. Consider a  cut $E^\ast$ of  $G=(V,E)$ that contains at least $|E| - x$ edges. We find an independent set $I$ of size at least $\frac{|V| - x}{2}$, as follows. For each edge $e \in E \setminus E^\ast$, we remove one of its two endpoints from the graph. Since $|E^\ast| \geq |E| - x$, at most $x$ vertices are removed. After removing these vertices, the graph becomes bipartite, i.e., the remaining vertices can be partitioned into two parts $X$ and $Y$ such that both $X$ and $Y$ are independent sets, and we have $\max\{|X|, |Y|\} \geq \frac{|V| - x}{2}$. If $|X| \geq |Y|$, we may set $I = X$. Otherwise, we may set $I = Y$.
\end{proof}

Using \cref{lem:max_cut_aux}, the $\Omega\left(\frac{\log n}{\eps}\right)$ lower bound for $(1-\eps)$-approximate maximum cut can be proved via the approach of \cref{thm:independent_set_basic,thm:independent_set_lb}.

\begin{theorem}\label{thm:max_cut_basic}
Let $\Algo$ be any randomized algorithm that computes a cut $E^\ast \subseteq E$  such that  $\Expect[|E^\ast|] \geq 0.999 \cdot |E|$ for any $18$-regular bipartite graph $G=(V,E)$. Then the round complexity of $\Algo$ is $\Omega(\log n)$.
\end{theorem}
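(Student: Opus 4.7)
The plan is to mirror the indistinguishability argument used in the proof of \cref{thm:independent_set_basic}, but apply it at the level of edges rather than vertices, with \cref{lem:max_cut_aux} supplying the analogue of the Ramanujan independence bound. Suppose for contradiction that $\Algo$ has round complexity $t = o(\log n)$. I would first pick a prime $q$ with $q \equiv 5 \pmod{68}$, which simultaneously gives $q \equiv 1 \pmod 4$ and $\paren{\dfrac{q}{17}} = -1$, so that $X^{17,q}$ is the 18-regular bipartite graph required by the hypothesis, and take $q$ large enough that $\girth(X^{17,q}) > 2t + 2$. Running $\Algo$ on $X^{17,q}$ produces a cut $E^\ast \subseteq E(X^{17,q})$ with $\Expect[|E^\ast|] \geq 0.999 \cdot |E(X^{17,q})|$.

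Next I would show that the probability $p^\ast$ that any given edge $e=\{u,v\}$ joins $E^\ast$ is the same for every edge. After $t$ rounds, the decision for $e$ depends only on the joint local view of $u$ and $v$, i.e., on the induced subgraph on $N^t(u) \cup N^t(v)$ together with the independent random bits assigned there (and identifiers, in the deterministic variant). Because $2t+2 < \girth(X^{17,q})$, this induced subgraph is isomorphic across all edges to the canonical depth-$t$ ``double tree'' rooted at an edge of the infinite 18-regular tree, so the joint distributions driving the per-edge decisions are identical and symmetry yields a common inclusion probability $p^\ast$. Linearity of expectation then gives $p^\ast \cdot |E(X^{17,q})| \geq 0.999 \cdot |E(X^{17,q})|$, so $p^\ast \geq 0.999$.

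For the contradiction I would select a second prime $q'$ with $q' \equiv 1 \pmod{68}$, so that $\paren{\dfrac{q'}{17}} = 1$ and \cref{lem:max_cut_aux} applies to $X^{17,q'}$. Using Dirichlet's theorem on arithmetic progressions, $q'$ can be taken arbitrarily large, in particular so that $\girth(X^{17,q'}) \geq \girth(X^{17,q}) > 2t+2$. Running the same $t$-round $\Algo$ on $X^{17,q'}$, each edge's $(t+1)$-hop joint view is again the same canonical double tree, so the per-edge inclusion probability is still $p^\ast \geq 0.999$. Therefore $\Expect[|E^\ast|] \geq 0.999 \cdot |E(X^{17,q'})|$, which exceeds the \emph{worst-case} maximum-cut value of $X^{17,q'}$ guaranteed by \cref{lem:max_cut_aux}, a contradiction.

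The main obstacle is the edge-symmetry step, since the algorithm's output is defined on edges while $\LOCAL$ computation is at vertices. I would handle this by formalizing the edge decision as a deterministic function of the joint transcript of $u$ and $v$ (equivalently, computed by the smaller-identifier endpoint from its $t$-hop view together with $v$'s view, which $u$ learns in one extra round absorbed into the slack between $2t+2$ and the girth). Once this is in place, the fact that vertex identifiers and random bits inside $N^t(u) \cup N^t(v)$ have the same joint distribution across all edges and across both $X^{17,q}$ and $X^{17,q'}$ makes the transfer of $p^\ast$ from the first graph to the second immediate; the rest of the argument is a direct adaptation of \cref{thm:independent_set_basic}.
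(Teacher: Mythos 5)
Your proposal is correct and follows essentially the same route as the paper's proof: an indistinguishability argument on the high-girth Ramanujan graphs $X^{17,q}$ versus $X^{17,q'}$, transferring a uniform per-edge inclusion probability $p^\ast \geq 0.999$ from the bipartite instance to the non-bipartite one and contradicting \cref{lem:max_cut_aux}. Your extra care in formalizing the edge decision as a function of the joint $t$-hop views of both endpoints (requiring girth $> 2t+2$) is a valid and slightly more explicit rendering of the symmetry step the paper states more briefly.
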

\begin{proof}
Suppose the round complexity of $\Algo$ is $o(\log n)$. Then there exists a prime number $q$ congruent to $1 \mod 4$ such that $\paren{\dfrac{q}{17}} = -1$ and $\Algo$ finds a cut $E^\ast \subseteq E(X^{17,q})$ of $X^{17,q}$  such that  $\Expect[|E^\ast|] \geq 0.999 \cdot |E(X^{17,q})|$ in less than $\frac{\girth(X^{17,q})}{2} - 1$ rounds of communication.

Let $t < \frac{\girth(X^{17,q})}{2} - 1$ be the round complexity of  $\Algo$ in $X^{17,q}$. As the $t$-radius neighborhood of each vertex $v$ in  $X^{17,q}$ is isomorphic to the depth-$t$ complete $18$-regular tree, the probability for an  edge $e$ to join the cut $E^\ast$ is identical for all edges $e$ in $X^{17,q}$. We write $p^\ast$ to denote this probability. By the linearity of expectation,  $\Expect[|E^\ast|] = |E(X^{17,q})| \cdot p^\ast$, so we must have $p^\ast \geq 0.999$.

We  pick another prime number  $q'$ congruent to $1 \mod 4$ such that $\paren{\dfrac{q}{17}} = 1$. We choose $q'$ to be sufficiently large such that $\girth(X^{17,q'}) \geq \girth(X^{17,q})$. Now we run the same $t$-round algorithm in $X^{17,q'}$. As $\girth(X^{17,q'}) \geq \girth(X^{17,q})$, the $t$-radius neighborhood of each vertex in $X^{17,q'}$ is isomorphic to the $t$-radius neighborhood of each vertex in $X^{17,q}$, so the  probability for each edge $e$ in $X^{17,q'}$ to join the cut $E^\ast$ is also $p^\ast$. Therefore, the output of the algorithm also satisfies $\Expect[|E^\ast|] \geq 0.999 \cdot |E(X^{17,q'})|$, contradicting \cref{lem:max_cut_aux}. Hence the round complexity of $\Algo$ is $\Omega(\log n)$.
\end{proof}

Since the size of a maximum cut of a  bipartite graph $G=(V,E)$ is $|E|$, \cref{thm:max_cut_basic} shows that any algorithm that finds a $0.999$-approximate maximum cut in expectation needs $\Omega(\log n)$ rounds. Similar to \cref{thm:independent_set_basic}, we may generalize the lower bound to $(1-\eps)$-approximation for small $\eps$.

\begin{theorem}\label{thm:max_cut_lb}
Let $0 < \eps \leq 0.001$.
Let $\Algo$ be any randomized algorithm that computes a  cut $E^\ast \subseteq E$  such that  $\Expect[|E^\ast|] \geq (1-\eps) \cdot |E|$ for any $18$-regular bipartite graph $G=(V,E)$. Then the round complexity of $\Algo$ is $\Omega\left(\frac{\log n}{\eps}\right)$.
\end{theorem}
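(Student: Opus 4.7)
The plan is to mirror the proof of \cref{thm:independent_set_lb}: amplify the $0.001$ gap in \cref{thm:max_cut_basic} using the same edge-subdivision gadget $G^x$, which replaces each $e = \{u,v\} \in E$ by a path $P_e = (u, w_1, \ldots, w_{2x}, v)$ of the \emph{odd} length $2x+1$. The odd parity is essential here: it preserves bipartiteness, so for any $18$-regular bipartite $G = (V,E)$ the graph $G^x$ is bipartite with maximum cut of size $(2x+1)|E|$. Assume for contradiction that $\Algo$ runs in $o(\log n/\eps)$ rounds; I will turn it into an $o(\log n)$-round algorithm $\Algo'$ that computes a $0.999$-approximate maximum cut of any $18$-regular bipartite $G$, contradicting \cref{thm:max_cut_basic}.

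The algorithm $\Algo'$ runs $\Algo$ on $G^x$, obtains a cut $E^\diamond \subseteq E(G^x)$ induced by some bipartition $V(G^x) = S^\diamond \cup T^\diamond$, and projects back to $G$ by putting $e = \{u,v\} \in E$ into the output $E^\ast$ iff $|P_e \cap E^\diamond|$ is odd. Since $|P_e \cap E^\diamond|$ equals the number of ``color changes'' along $P_e$, its parity is odd exactly when $u,v$ lie on opposite sides of the bipartition, so $E^\ast$ is the cut of $G$ induced by $S^\diamond \cap V$ and $T^\diamond \cap V$, hence valid. For the approximation ratio, let $a = |E^\ast|$ and $b = |E| - a$. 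For each $e \in E^\ast$ the path $P_e$ contributes at most $2x+1$ edges to $E^\diamond$, and for $e \notin E^\ast$ parity forces at most $2x$, giving $|E^\diamond| \leq (2x+1)a + 2x \cdot b = 2x|E| + a$ and hence
\[
\Expect[a] \;\geq\; \Expect[|E^\diamond|] - 2x|E| \;\geq\; (1-\eps)(2x+1)|E| - 2x|E| \;=\; |E|\bigl(1-\eps(2x+1)\bigr).
\]
Taking $x := \floor{(0.001/\eps - 1)/2}$, which is nonnegative since $\eps \leq 0.001$ and satisfies $x = O(1/\eps)$ (with $x = 0$ only in the boundary case, where the claim reduces directly to \cref{thm:max_cut_basic}), yields $\eps(2x+1) \leq 0.001$ and hence $\Expect[a] \geq 0.999|E|$.

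For the runtime of $\Algo'$, each $u \in V$ locally simulates the $x$ internal vertices on each incident path that lie closer to $u$, so $T$ rounds of $\Algo$ on $G^x$ can be simulated in $O(T/x) = O(T\eps)$ rounds on $G$ (with $O(1)$ additional rounds at the end to exchange the parity of each $|P_e \cap E^\diamond|$ between endpoints). Using $|V(G^x)| = \Theta(|V|/\eps)$ and $\log(1/\eps) = O(\log n)$ (which follows from the standing assumption $\eps = \omega(\log n/n)$), the $o(\log|V(G^x)|/\eps)$ round bound of $\Algo$ on $G^x$ translates to $o(\log n)$ rounds on $G$: the $\Theta(1/\eps)$ simulation speed-up exactly cancels the $1/\eps$ factor in $\Algo$'s round complexity. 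The main point to watch is the parity argument for the projection, which is precisely what forces the subdivision length to be odd, and verifying that the simulation slowdown offsets the $(2x+1) = \Theta(1/\eps)$ gap-amplification factor so the round bound drops from $\log n/\eps$ to $\log n$.
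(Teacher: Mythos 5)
Your proposal is correct and follows essentially the same route as the paper's proof: the same odd-length edge-subdivision gadget $G^x$ with $x = \floor{(0.001\cdot\eps^{-1}-1)/2}$, the same parity-based projection of the cut of $G^x$ back to a cut of $G$, and the same counting bound $|E^\diamond| \leq 2x|E| + |E^\ast|$ leading to a contradiction with \cref{thm:max_cut_basic}. The only difference is that you spell out the simulation speed-up on $G^x$ explicitly, which the paper leaves implicit.
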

\begin{proof}
Suppose the round complexity of $\Algo$ is $o\left(\frac{\log n}{\eps}\right)$. We will use $\Algo$ as a black box to design an $o(\log n)$-round algorithm $\Algo'$ that computes a cut $\tilde{E} \subseteq E$  such that  $\Expect[|\tilde{E}|] \geq 0.999 \cdot |E|$ for any $18$-regular bipartite graph $G=(V,E)$, contradicting \cref{thm:max_cut_basic}, so  the round complexity of $\Algo$ must be $\Omega\left(\frac{\log n}{\eps}\right)$.

Let $G=(V,E)$ be a $18$-regular bipartite graph, and let $x$ be a non-negative integer.
Similar to the proof of \cref{thm:independent_set_basic}, we define $G^x$ as the result of subdividing each edge $e=\{u,v\} \in E$ into a path $P_e$ of length $2x+1$: $(u, w_1, w_2, \ldots, w_{2x}, v)$. It is clear that $G^x$ is a bipartite graph with $(2x+1) \cdot |E|$ edges, so the size of the maximum cut of $G^x$ is  $(2x+1) \cdot |E|$.

We set $x \bydef \floor{\frac{0.001 \cdot \eps^{-1} - 1}{2}}$. Note that we may have $x = 0$ if $\eps < \frac{1}{3000}$, in which case we have $G^x = G$. Otherwise, $x = \Theta\left(1/\eps\right)$. In any case, our choice of $x$ satisfies that $\eps \cdot (2x+1) \leq 0.001$.

We are ready to describe the algorithm $\Algo'$. Given any  $18$-regular bipartite graph $G=(V,E)$, we simulate the virtual graph $G^x$ and run $\Algo$ in $G^x$. By our choice of $x$, the simulation costs $o(\log n)$ rounds in $G$. Let $E^\ast$ be the cut of $G^x$ returned by $\Algo$. We compute a cut $\tilde{E}$ of $G$ as follows.

For each edge $e\in E$, let $K_e$ be the number of edges in $P_e = (u, w_1, w_2, \ldots, w_{2x}, v)$ that are in $E^\ast$. Observe that $K_e$ is an even number if and only if $u$ and $v$ belong to the same side of the cut $E^\ast$ of $G^x$. For each edge $e\in E$, we add $e$ to  $\tilde{E}$ if $K_e$ is an odd number. It is clear that $\tilde{E}$ is a cut that partitions $V$ into two parts.

To lower bound $\Expect[|\tilde{E}|]$, we make the following observations.
\begin{itemize}
    \item If $K_e$ is odd, then $K_e \leq 2x+1$.
    \item If $K_e$ is even, then $K_e \leq 2x$.
\end{itemize}
Since $|\tilde{E}|$ equals the number of $e \in E$ such that $K_e$ is odd, we have \[|E^\ast| \leq (2x+1)\cdot|\tilde{E}| + 2x \cdot (|E|-|\tilde{E}|) = 2x \cdot |E| - |\tilde{E}|,\]
which implies
\begin{align*}
  \Expect[|\tilde{E}|] &\geq  \Expect[|E^\ast|] - 2x \cdot |E| \\
&=  \Expect[|E^\ast|] - (2x+1) \cdot |E| + |E|\\
&\geq - \eps \cdot (2x+1) \cdot |E| + |E|\\
&\geq - 0.001 \cdot |E| + |E|\\
&= 0.999 \cdot |E|.
\end{align*}
In the calculation, we use the fact that $\Algo$ computes an $(1-\eps)$-approximate maximum cut, so $\Expect[|E^\ast|] \geq (1-\eps) \cdot |E(G^x)| = (1-\eps) \cdot (2x+1) \cdot |E|$. The inequality $\eps \cdot (2x+1) \leq 0.001$ is due to our choice of $x$.
\end{proof}

Now we may prove \cref{thm:lowerbound-main}.

\begin{proof}[Proof of \cref{thm:lowerbound-main}]
The theorem follows from \cref{thm:independent_set_lb,thm:vertex_cover_lb,thm:dominating_set_lb,thm:max_cut_lb}.
\end{proof}

\section{Existing Approaches to Low-Diameter Decompositions}\label{sect:LDD}

%\ynote{when you apply chernoff bound of bounded dependence, note that if the diameter bound / round complexity for LDD is $d$, then for each $v$, its outcome $X_v$ (whether to be clustered) might be correlated with the outcome $X_u$ for all $u$ within distance $2d$ to $v$, not just $d$. Need to take that into consideration when you choose the constants.  }

%\ynote{When you combine chernoff bound with bounded dependence with these results, the ``dependence'' is calculated based on round complexity, not diameter (although in this case I guess they are the same?), so I think the round complexity shouldn't be written in $O()$ notation, and you should make the leading constant explicit. }
%\ynote{Can you edit the texts in the main part of the paper to reflect the changes that I made in the statements of the following results? If my edits lead to any issue, let me know. I think the small additive term is fine as you may assume $\eps = \Omega(n / \log n)$ since otherwise everything is trivial, as we will have $O(n)$ rounds to spend.}\znote{did so. Also should we put the assumption $\eps = \Omega(n / \log n)$ somewhere in the intro?} \ynote{I don't we need to mention that in the intro. You can just explain this when you use this in the proof. This is actually not an ``assumption'' in that the case $\eps = O(n / \log n)$ is trivial, so we don't actually need to assume anything.}

In this section, we assume an upper bound $\nn \geq n$ on the true number of vertices $n = V(G)$ is initially known to all vertices.
We first review the low-diameter decomposition algorithm of~\cite{EN16}, which deletes $O(\lambda)$ fraction of the vertices in expectation such that each remaining connected component has diameter $\left(\frac{\log \nn}{\lambda}\right)$. For the sake of completeness, we provide a proof sketch of this result. 

\begin{lemma}[\cite{EN16}]\label{thm:vertex_LDD}
    Given a graph $G=(V,E)$ and parameter $\lambda$, there is an algorithm in the $\LOCAL$ model that deletes a fraction of vertices in $\frac{4\ln \nn}{\lambda}$ rounds meeting the following conditions.
    \begin{itemize}
        \item Each remaining connected component has diameter at most  $\frac{8\ln \nn}{\lambda}$.
        \item For each vertex $v \in V$, the probability that $v$ is deleted is at most $1 - e^{-\lambda} + \nn^{-3}$.
    \end{itemize}
\end{lemma}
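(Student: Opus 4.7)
The plan is to implement the exponential-shift algorithm of Miller--Peng--Xu, as sharpened by Elkin--Neiman. Each vertex $u$ will independently sample a shift $\delta_u \sim \exponential(\lambda)$, truncated at $T := \frac{4 \ln \nn}{\lambda}$ (i.e., $\delta_u \gets \min(\delta_u, T)$). Writing $Y_u(v) := \delta_u - \dist(u,v)$, each vertex $v$ will be tentatively assigned to the cluster indexed by $c(v) := \argmax_u Y_u(v)$, with ties broken by identifiers. A vertex $v$ will be \emph{deleted} if the margin $Y_{(1)}(v) - Y_{(2)}(v)$ (the gap between the two largest values of $\{Y_u(v)\}_u$) is at most $1$; otherwise it joins the cluster $c(v)$.

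I will then verify the round complexity and the diameter bound. Because $v$ itself is a candidate with $Y_v(v) = \delta_v \geq 0$, the winner $c(v)$ must satisfy $Y_{c(v)}(v) \geq 0$, whence $\dist(c(v), v) \leq \delta_{c(v)} \leq T$. Consequently, both the argmax and the margin at $v$ depend only on the shifts within the radius-$T$ ball around $v$, which can be gathered in $T = \frac{4\ln \nn}{\lambda}$ rounds; and every surviving cluster has radius $\leq T$, hence diameter $\leq 2T = \frac{8\ln \nn}{\lambda}$. I also need to verify that the surviving clusters are non-adjacent: using $|\dist(u, v') - \dist(u, v)| \leq 1$ for adjacent $v, v'$, a short calculation shows that if $v$ is kept with margin $> 1$, then any neighbor $v'$ with $c(v') \neq c(v)$ must satisfy margin $\leq 1$ and hence be deleted, giving the required isolation.

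The heart of the argument, and the main obstacle, is the deletion probability bound. I will first absorb the truncation: a union bound gives $\Pr[\exists u : \delta_u > T] \leq \nn \cdot e^{-\lambda T} = \nn^{-3}$, so it suffices to bound $\Pr[\text{margin}(v) \leq 1] \leq 1 - e^{-\lambda}$ for the untruncated exponential shifts. The plan is to invoke the classical memoryless-property calculation: viewing $\{Y_u(v)\}_u$ as a race of independent exponentials with rate $\lambda$, shifted by $-\dist(u,v)$, I will condition on the identity and value of the winner $u^* = c(v)$ and argue that, by the memoryless property, the gap to the runner-up is stochastically lower-bounded by $\exponential(\lambda)$, so it exceeds $1$ with probability at least $e^{-\lambda}$. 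The subtlety to handle carefully is that the shifts $\dist(u,v)$ differ across $u$, so the race is non-symmetric; this requires reasoning about a non-homogeneous point process rather than a symmetric race of iid exponentials, exactly as in the MPX/EN analysis. Summing the two error contributions yields the claimed bound $1 - e^{-\lambda} + \nn^{-3}$.
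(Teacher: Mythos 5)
Your proposal is correct and follows essentially the same route as the paper's proof: the Miller--Peng--Xu/Elkin--Neiman exponential shifts, deletion when the margin between the top two shifted values is at most $1$, truncation of the shifts absorbed into the $\nn^{-3}$ term, and the memoryless property of the exponential distribution (applied to the non-symmetric race after conditioning) to bound the deletion probability by $1 - e^{-\lambda}$. Your explicit verification that surviving vertices in distinct clusters are non-adjacent is a welcome detail that the paper leaves implicit.
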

\begin{proof}
    The algorithm runs as follows: Each vertex $v$ samples a value $T_v$ from the exponential distribution with parameter $\lambda$ and broadcasts $T_v$ to its $\lfloor T_v\rfloor$-hop neighborhood. For any vertex $v$,
    \[ \Pr\left[T_v \geq 4\cdot \frac{\ln \nn}{\lambda}\right] = e^{- \lambda \left(4\cdot \frac{\ln \nn}{\lambda} \right)} \leq \tilde{n}^{-4}.\]
    Should such event happen, the vertex $v$ simply \emph{resets} $T_v = 0$ and proceeds as usual.  
    
    Now we focus on a vertex $v$. With respect to $v$, we order $V=(v_1, v_2, \ldots, v_n)$ in such a way that $m^{(v)}_{v_1} \geq m^{(v)}_{v_2} \geq \cdots \geq m^{(v)}_{v_n}$, where we define 
    \[m^{(v)}_{v_i} := T_{v_i} - \dist(v,v_i).\] 
    There are two cases.
    \begin{itemize}
        \item If $m^{(v)}_{v_2} \geq m^{(v)}_{v_1} - 1$, then $v$ deletes itself.
        \item Otherwise, $v$ joins the cluster of $v_1$.
    \end{itemize}
    Note that if $v$ is outside of the $\lfloor T_u\rfloor$-hop neighborhood of $u$, then $T_u - \dist(v,u) < 0 \leq T_v - \dist(v,v)$, so such a vertex $u$ will never be a candidate for $v$ to join the cluster of $u$. Therefore, each vertex $u$ only needs to broadcast $T_u$ to its $\lfloor T_u\rfloor$-hop neighborhood, and so the running time of this process is upper bounded by $\frac{4\ln \nn}{\lambda}$.
    %and the diameter of  $G[S_i]$ is upper bounded by $\frac{8\ln n}{\lambda}$.
    
\paragraph{Cluster diameter.} Each cluster has \emph{weak diameter} at most $\frac{8\ln \nn}{\lambda}$, as we observe that $v$ joins the cluster of $u$ \emph{only if} $\dist(v,u) \leq T_u \leq \frac{4\ln \nn}{\lambda}$. To see that the same bound holds for \emph{strong diameter},  consider any vertex $v$, an let $S$ be the cluster that $v$ belongs to. Let $P = (u_1, u_2, u_3 \ldots, u_k = v)$ be a shortest path from  $u_k = v$ to the centre $u_1$ of the cluster $S$. We argue that all vertices on the path $P$ belongs to $S$. Assume towards contradiction that some vertex $w$ on the path does not belong to $S$. Then $w$ must receive a value $T_z$ from  a vertex $z$ such that\[m^{(w)}_{z} := T_{z} - \dist(w,z) > T_{u_1} - \dist(w,u_1) - 1 = m^{(w)}_{u_1} - 1,\] for otherwise $w$ would join the subset $S$. 
   Therefore, we have
   \begin{align*}
       m^{(v)}_{z} &:= T_{z} - \dist(z, v) \\
       &\geq T_{z} - \dist(w,z) - \dist(w,v) \\
       &>  T_{u_1} - \dist(w,u_1) - 1 - \dist(w,v) \\
       &\geq T_{u_1} - \dist(v,u_1) - 1\\
       &= m^{(v)}_{u_1} - 1.
   \end{align*}
   contradicting that fact that $v$ joins the cluster $S$ of $u_1$.
    
  \paragraph{Probability of deletion.} For the rest of the proof, we analyze the probability that a vertex $v$ is deleted.
    In the analysis, we consider the version of the above algorithm that  \emph{does not reset} $T_v$ when $T_v$ exceeds $4\cdot \frac{\ln \nn}{\lambda}$. That is, $T_v \sim \exponential(\lambda)$. Since $\Pr\left[T_v \geq 4\cdot \frac{\ln \nn}{\lambda}\right]   \leq \nn^{-4}$, the probability that the version of algorithm without resetting behaves identically to the original algorithm is at least $1 - \nn^{-3}$, Therefore, to prove that the probability that $v$ is deleted is at most $1 - e^{-\lambda} + \nn^{-3}$ in the original algorithm, we just need to show that this  probability  is at most $1 - e^{-\lambda}$ in the version that we do not reset $T_v$.
    
    The probability that $v$ is deleted equals the probability that  $m^{(v)}_{v_2} \geq m^{(v)}_{v_1} - 1$. We analyze the probability by first reveal the value of $m^{(v)}_{v_2}$. After fixing $m^{(v)}_{v_2}$, by the memoryless property of the exponential distribution, we have \[\Pr\left[m^{(v)}_{v_1} - m^{(v)}_{v_2} \leq 1\right] \leq \Pr\left[\exponential(\lambda) \leq 1\right] = 1 - e^{-\lambda},\] so   $v$ is deleted  with  probability    at most $1 - e^{-\lambda}$, as required.
\end{proof}

%\znote{I am not very sure how to properly explain the effect of capping.}

%\ynote{I thought we need a theorem that includes the decomposition, not just the end-result of solving the covering problem, as we need the decomposition in the preparation step?}\znote{yes you are right. I just fixed that.}

\fullornot{
    We  modify the above low-diameter decomposition algorithm to one that finds a sparse cover for all hyperedges of any given hypergraph. As we will later see, this decomposition will be useful in solving covering ILP problems. In the following lemma, recall that we say a random variable $X$ is dominated by $Y$ if there is a coupling between $X$ and $Y$ such that $X \leq Y$. In particular, $\Pr[X \geq k] \leq \Pr[Y \geq k]$ for all $k$. Note that the weak diameter bound in the following lemma can be strengthened to a strong diameter bound via a proof similar to the one for \cref{thm:vertex_LDD}, but a weak diameter bound suffices for our application in this paper.
    %\ynote{The exponential distribution is not capped in the following proof? I guess if we consider that, then there will be a tiny $1/\poly(n)$ additive term for each $X_v$.}\znote{It is not capped. And I think capping brings a lot more issues.} \ynote{If it is not capped, then the algorithm cannot be implemented in $O\left(\frac{\log n}{\eps}\right)$ rounds, because each vertex $v$ does not know if there is some other vertex $u$ such that both $\dist(u,v)$  and $T_u$ are extremely large. Stopping by time $O\left(\frac{\log n}{\eps}\right)$  and ignoring such vertices is the same as capping.}
    %\ynote{I changed the strong diameter to weak diameter in the following two lemmas and add a short explanation for why we get this weak diameter bound in the proof. If you prefer strong diameter, then you may change it back and add a proof for that.} 
    
    \begin{lemma}\label{thm:LDD_covering}
    Given a hypergraph $H=(V,E)$ and a parameter $\lambda$, there is an algorithm that achieves the following in the $\LOCAL$ model in $\frac{4\ln \nn}{\lambda}$ rounds.
    \begin{itemize}
        \item The algorithm computes subsets $S_1, \ldots, S_k$ of $V$ such that the weak diameter of each $S_i$ is at most $\frac{8\ln \nn}{\lambda}$.
        \item Each hyperedge $e \in E$ is completely contained in at least one induced subgraph $H(S_i)$.
        \item For each vertex $v$, let $X_v$ denote the number of clusters $S_i$ that contains $v$. Then $X_v$ is dominated by $\geom\left(e^{-\lambda}\right) + \nn^{-2}$.
    \end{itemize} 
    \end{lemma}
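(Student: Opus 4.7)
The plan is to adapt the algorithm of \cref{thm:vertex_LDD} from producing a vertex-disjoint partition (with deletions) to producing an overlapping sparse cover that handles hyperedges. Specifically, each vertex $v$ samples $T_v \sim \exponential(\lambda)$, capped with reset at $\frac{4\ln \nn}{\lambda}$ (contributing an additive $\nn^{-2}$ slack to the tail of $X_v$ after a union bound over vertices), and broadcasts $T_v$ to its $\lfloor T_v \rfloor$-hop neighborhood. Then for each candidate centre $u$ I define
\[
S_u \;:=\; \{\, v \in V : T_u - \dist(v, u) \geq M_v - 1 \,\}, \qquad M_v \;:=\; \max_w \bigl(T_w - \dist(v, w)\bigr).
\]
The round bound $\frac{4\ln \nn}{\lambda}$ and the weak-diameter bound $\frac{8\ln \nn}{\lambda}$ per cluster (since $S_u \subseteq N^{\lfloor T_u\rfloor}(u)$) follow immediately from the analysis already given for \cref{thm:vertex_LDD}, so the two substantive facts to verify are hyperedge coverage and the stochastic domination of $X_v$.

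For hyperedge coverage, I would fix a hyperedge $e$ and set $u^* := \argmax_w \min_{v \in e}\bigl(T_w - \dist(v, w)\bigr)$ with $\alpha := \min_{v \in e}(T_{u^*} - \dist(v, u^*))$. For each $v_j \in e$, the inequality $T_{u^*} - \dist(v_j, u^*) \geq \alpha$ is immediate from the definition of $\alpha$. To prove $M_{v_j} \leq \alpha + 1$, let $w^*$ be the maximiser defining $M_{v_j}$; by optimality of $u^*$ there exists $v^{**} \in e$ with $T_{w^*} - \dist(v^{**}, w^*) \leq \alpha$, and since $\dist(v_j, v^{**}) \leq 1$ (vertices of a common hyperedge are adjacent in $H$), the triangle inequality gives
\[
\dist(v_j, w^*) \;\geq\; \dist(v^{**}, w^*) - 1 \;\geq\; T_{w^*} - \alpha - 1,
\]
so $M_{v_j} = T_{w^*} - \dist(v_j, w^*) \leq \alpha + 1$. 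Combining the two bounds yields $T_{u^*} - \dist(v_j, u^*) \geq \alpha \geq M_{v_j} - 1$, hence every $v_j \in e$ lies in $S_{u^*}$ and $e$ is completely contained in $S_{u^*}$.

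For the domination $X_v \preceq \geom(e^{-\lambda}) + \nn^{-2}$, I would sort the values $Y_u := T_u - \dist(v, u)$ in decreasing order and use the memoryless property of the exponential distribution to argue that the conditional probability of a ``new'' centre's $Y$-value lying within $1$ of the current maximum, given that $k$ centres already do so, is at most $1 - e^{-\lambda}$. Concretely, conditioning on the value of $Y_{(1)} = m$ and on the identities of the top $k$ centres, each remaining $Y_u$ satisfies $Y_u < m$ and has an exponential tail with rate $\lambda$ on $[-\dist(v,u), \infty)$, so $\Pr[Y_u \in [m-1, m) \mid Y_u < m] \leq 1 - e^{-\lambda}$ by memorylessness. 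Iterating yields $\Pr[X_v \geq k+1] \leq (1 - e^{-\lambda})^k$, which matches the tail of $\geom(e^{-\lambda})$; the $\nn^{-2}$ additive term accounts for the low-probability event that some $T_u$ relevant to $v$ was reset.

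The main obstacle is making this last step fully rigorous: the $Y_u$'s are \emph{not} i.i.d.\ (they carry distinct deterministic shifts $-\dist(v,u)$), so the textbook closed-form for order-statistic gaps of i.i.d.\ exponentials does not apply directly. I plan to handle this by conditioning on the identity and value of the current running maximum and applying memorylessness to each remaining $Y_u$ individually (truncated against its own shift), mimicking and iterating the single-gap argument that already appears in the proof of \cref{thm:vertex_LDD}.
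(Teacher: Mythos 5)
Your proposal is correct and follows essentially the same route as the paper: identical algorithm (each $v$ joins every centre $u$ with $T_u-\dist(v,u)\geq M_v-1$), the same diameter and round bounds inherited from \cref{thm:vertex_LDD}, and the same memorylessness-on-order-statistics argument for the geometric tail of $X_v$, including the same treatment of the truncation event via the additive $\nn^{-2}$ term. The only cosmetic difference is in the coverage step, where you pick the centre maximizing $\min_{v\in e}(T_w-\dist(v,w))$ while the paper picks the vertex of $e$ with the largest $M$-value and shows everyone joins its favourite centre; both are one-line triangle-inequality arguments, and the non-i.i.d.\ subtlety you flag in the last step is resolved exactly as you propose (condition on the value of the $t$-th largest shifted exponential and apply memorylessness to each of the top $t-1$ independently).
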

    
    \begin{proof}
        We consider the same process of generating $T_v$ and broadcasting $T_v$ as in the algorithm of \cref{thm:vertex_LDD}. The only difference here is that
        here $v$ never delete itself, and we let $v$ joins the cluster of $v_j$ for all $v_j$ such that $m^{(v)}_{v_j}\geq m^{(v)}_{v_1} - 1$. For the sake of simplicity, in the subsequent discussion we write $m^{(v)}_i  = m^{(v)}_{v_i}$. Similar to  \cref{thm:vertex_LDD}, the weak diameter of each cluster $S_i$ is at most $\frac{8\ln \nn}{\lambda}$, as we observe that $v$ joins the cluster of $u$ \emph{only if} $\dist(v,u) \leq T_u \leq \frac{4\ln \nn}{\lambda}$. 
        
        \paragraph{Each hyperedge is covered.}
        To show that each hyperedge $e$ is covered by at least one cluster, we consider any hyperedge $e = \{u_1, u_2, \ldots\}$, and let $u_k \in e$ be the vertex that maximises $m^{(u_k)}_1$. Let $v_1, v_2, \ldots, v_n$ be the ranking of $V$ from the perspective of $u_k$. Recall that $m^{(u_k)}_1 = T_{v_1} - \dist(u_k, v_1)$. For any $u_j \in e$, we have $T_{v_1} - \dist(u_j, v_1) \geq m^{(u_k)}_1 - 1 \geq m^{(u_j)}_1 - 1$. Hence $u_j$ must join the cluster of $v_1$, so $e$ is fully contained in the cluster of $v_1$.
    
        %\ynote{to do: I think the strong diameter bound might require a separate proof here?} \znote{actually do we even require it to have strong diameter? weak diameter should suffice for our application. Anyway, I think the strong diameter proof is just one or two line (by refering to the proof of \cref{thm:vertex_LDD}) if we want to include it} \ynote{if we claim it in the lemma/theorem statement then we need to prove it. I don't mind if we don't require strong diameter, as indeed we don't need that}
        
        \paragraph{Probability of deletion.}
      For the rest of the proof, we analyze the probability of $X_v$.
        Similar to the proof of \cref{thm:vertex_LDD}, in the analysis we consider the version of the  algorithm that  \emph{does not reset} $T_v$ when $T_v$ exceeds $4\cdot \frac{\ln \nn}{\lambda}$. As discussed in the proof of  \cref{thm:vertex_LDD}, the probability that the version of algorithm without resetting behaves identically to the original algorithm is at least $1 - \nn^{-3}$, Therefore, to show that $X_v$ is dominated by $\geom\left(e^{-\lambda}\right) + \nn^{-2}$ in the original algorithm, we just need to show that  $X_v$ is dominated by $\geom\left(e^{-\lambda}\right)$  in the version that we do not reset $T_v$. The additive term $\nn^{-2}$ is to account for the bad event which happens with probability at most $\nn^{-3}$, and here we may use the trivial upper bound $|V| \leq \nn$ on $X_v$.
        
        %\ynote{the following part still needs to be polished a bit}
        Note that $X_v$ is exactly the  number of indices $j$ such that $m^{(v)}_j \geq m^{(v)}_1 - 1$. 
        For any integer $t \geq 1$, $X_v \geq t$ implies $m^{(v)}_t+1 \geq m^{(v)}_1$. Now, for any %non-negative 
        value $a$, conditioning on the event $m^{(v)}_t = a$,
        \begin{align*}
        \Pr\left[m^{(v)}_1 \leq m^{(v)}_t+1 \; \middle| \; m^{(v)}_t = a\right] &= \prod_{i = 1}^{t-1} \Pr\left[m^{(v)}_i \leq a+1 \; \middle| \; m^{(v)}_i \geq a\right] \\
        &\leq \left(1 - e^{-\lambda}\right)^{t-1}  ,
        \end{align*}
        due to the memoryless property of the exponential distribution, and this calculation is independent of $a$. %And negative $a$ corresponds to $v$ receiving $< t$ values and therefore $X_v < t$ for sure.
        Therefore,
        \[ \Pr\left[X_v\geq t\right] = \Pr\left[m^{(v)}_1 \leq m^{(v)}_t + 1\right] \leq \left(1 - e^{-\lambda}\right)^{t-1}. \qedhere\]
        %Substituting in $t = 2$, we have $\Pr[v \text{ is deleted}] \leq 1 - e^{-\lambda}$. Therefore, $X_v$ is dominated by a geometric random variable of parameter $p = e^{-\lambda}$. 
    \end{proof}
}
{}

%\ynote{not sure, need to check: I am thinking that some of the above equalities might be inequalities (which is good, although it makes things complicated), due to the issue of the offset $-\dist(u,v_i)$. For example, when you reveal all $m_i$ except $m_1$, ideally $m_1 - m_2$ follows an exponential distribution due to the memoryless property, but it could be possible that due to the offset, $m_1$ has not started yet at the moment $m_2$ has finished, so in general $m_1 - m_2$ might be a fresh exponential distribution variable minus some non-negative constant.}\znote{I don't think the shift would affect anything though from the proof above} \ynote{the effect is that equality will become inequality, in the good direction.}

%\ynote{I believe that the above proof immediately gives you the edge version of low-diameter decomposition. It also gives you the vertex version by removing those with $X_v \geq 1$. Checking that this is a strong-diameter decomposition might requires some work or is it simple? It might be a good idea to write down all of these, for the sake of completeness, if they are simple.}\znote{I somehow feel that the strong-diameter property is immediate. Might be my wrong intuition though} \ynote{I also agree that this is (almost) immediate, but that still requires a proof.}

\fullornot{
    Making use of the above result,  we may solve any covering ILP problem in the following manner. %\ynote{I think it will be cleaner if we do not include the decomposition algorithm in the following result.}\znote{yes. makes sense to me}
    
    \begin{lemma}\label{thm:simple_covering}
        Let $H=(V,E)$ be a hypergraph representing a covering ILP problem and let $0 < \lambda < 1$ be a parameter.
        Suppose we are given a sparse cover $S_1, S_2, \ldots, S_k$ with the following properties.
    \begin{itemize}
        \item Each $S_i$ is a subset of $V$ with weak diameter at most $O\left(\frac{\log \nn}{\lambda}\right)$.
        \item Each hyperedge $e \in E$ is completely contained in at least one induced subgraph $H(S_i)$.
    \end{itemize}
        Then there is an algorithm that takes $O\left(\frac{\log \nn}{\lambda}\right)$ rounds in the $\LOCAL$ model and finds a solution of the covering ILP problem whose weight is at most $\sum_{v\in V} X_v \cdot Q^*(v) \cdot w_v$, where $Q^*$ is any fixed optimal solution and $X_v$ is the number of clusters $S_i$ that contains $v$. 
    \end{lemma}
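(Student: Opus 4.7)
The plan is to use the given sparse cover directly: solve the covering ILP locally inside each cluster $S_i$, then take the union of the local assignments. Concretely, the algorithm is as follows. Each vertex $v \in V$ gathers the topology of every cluster $S_i$ that contains it (this is possible in $O(\log \nn / \lambda)$ rounds in the $\LOCAL$ model because the weak diameter of $S_i$ is $O(\log \nn /\lambda)$, so all members of $S_i$ lie within that distance of each other in $G$). For each cluster $S_i$, its members compute a common local optimal solution $Q^{\local}_{S_i}$, namely the optimal assignment to the covering ILP restricted to $S_i$ (taking into account exactly those hyperedges that are completely contained in $S_i$, as in the definition of $Q^{\local}_{\cdot}$). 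Finally, set $Q(v) := 1$ if and only if $Q^{\local}_{S_i}(v) = 1$ for at least one cluster $S_i \ni v$, and $Q(v) := 0$ otherwise.

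The first step is to check feasibility. Consider any hyperedge $e \in E$. By assumption there is some $S_i$ such that $e \subseteq S_i$, so $e$ is one of the constraints respected by $Q^{\local}_{S_i}$. Hence the assignment $Q^{\local}_{S_i}$ already satisfies $e$, and since $Q(v) \geq Q^{\local}_{S_i}(v)$ for every $v \in S_i$ (by the union rule), the global assignment $Q$ also satisfies $e$. Thus $Q$ is a feasible covering solution.

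The second step is to bound the weight. From the union definition, we have $Q(v) \leq \sum_{i : v \in S_i} Q^{\local}_{S_i}(v)$ for every $v$, so
\[
\sum_{v \in V} w_v Q(v) \;\leq\; \sum_{i=1}^{k} \sum_{v \in S_i} w_v Q^{\local}_{S_i}(v) \;=\; \sum_{i=1}^{k} W(Q^{\local}_{S_i}, S_i).
\]
For each cluster $S_i$, the restriction of the fixed optimal solution $Q^*$ to $S_i$ is a feasible solution for the local covering instance on $S_i$ (it satisfies every hyperedge fully contained in $S_i$), so by \cref{lem:local_vs_global_C} we have $W(Q^{\local}_{S_i}, S_i) \leq W(Q^*, S_i)$. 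Plugging this in and swapping the order of summation gives
\[
\sum_{v \in V} w_v Q(v) \;\leq\; \sum_{i=1}^{k} W(Q^*, S_i) \;=\; \sum_{v \in V} w_v Q^*(v) \cdot |\{i : v \in S_i\}| \;=\; \sum_{v \in V} X_v \cdot Q^*(v) \cdot w_v,
\]
which is exactly the claimed bound.

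No step looks difficult; the only point requiring slight care is that the local solves happen inside clusters of small \emph{weak} (not strong) diameter, which is harmless in the $\LOCAL$ model because the relevant communication may use shortest paths in $G$, and each cluster's topology can be broadcast in $O(\log \nn /\lambda)$ rounds matching the round budget.
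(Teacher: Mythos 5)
Your proposal is correct and follows essentially the same approach as the paper: solve the local covering instance optimally inside each cluster, combine by bitwise OR, verify feasibility via the hyperedge-cover property, and bound the weight by $\sum_i W(Q^{\local}_{S_i},S_i) \leq \sum_i W(Q^*,S_i) = \sum_{v} X_v Q^*(v) w_v$ using \cref{lem:local_vs_global_C}. Your added remarks on implementing the local solves within clusters of small weak diameter are a harmless elaboration of what the paper leaves implicit.
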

    \begin{proof}
        %The algorithm simply uses \cref{thm:LDD_covering} to compute subsets $S_1, S_2, \ldots, S_k$.
        For each subset $S_i$, compute the local optimal covering solution $Q^{\local}_{S_i}$ on $S_i$. We combine the local solutions into a global solution in the following manner. If a variable is assigned one in any local solution, it will be assigned one in the global solution. Otherwise, it is assigned zero. In other words, we are taking a bit-wise OR on the solution vectors. 
        
        As each hyperedge $e$ is contained in at least one $H(S_i)$, the constraint corresponding to $e$ is satisfied by $Q^{\local}_{S_i}$ in the local instance defined by $S_i$. The way we obtain a global solution from combining local ones implies that all constraints are satisfied by the global solution. It remains to bound the size of the such solution, which is at most
        \begin{align*}
        \sum_i W(Q^{\local}_{S_i}, S_i) &\leq \sum_i W(Q^*,S_i) \\
                                        &\leq \sum_i \sum_{v\in S_i} Q^*(v) \cdot w_v \\
    %                                 &= \sum_i W(P^*,S_i) + W(P^*, N(S_i)) \\
    %                                 &\leq W(P^*, V) + \sum_v (X_v-1) \cdot P^*(v) \cdot w_v\\
                                     &= \sum_{v\in V} X_v \cdot Q^*(v) \cdot w_v. \qedhere
        \end{align*}
    \end{proof}
    
    %\ynote{just a remark: try to avoid using ``should'' or ``would'' if possible, because this kind of suggest uncertainty or not having enough confidence, which is not suitable for writing proofs.}\znote{i see. }
}
{}

\fullornot{
\subsection{Limitations of Existing Approaches}

    %\ynote{To do: (1) show that for cliques, with probability about $\eps$, at least $n-1$ vertices are unclustered in the Elkin-Neiman algorithm and (2) for miller-peng-xu, consider the following graph: start from a complete bipartite graph where each side $U$ and $V$ has $s$ vertices, add two vertices $u$ and $v$ such that $u$ is adjacent to all of $U$ and $v$ is adjacent to all of $V$, add another $s$ degree-1 vertices adjacent to $u$ and another $s$ degree-1 vertices adjacent to $v$. Show that with some $\poly(\eps)$ probability, all the edges in the middle complete bipartite graph are inter-clustered. For simplicity here we may only consider the ideal version where there is no capping, and I guess we don't need to have formal theorem statements.}
    
    Miller, Peng, and Xu~\cite{MPX13} designed  an $O\left(\frac{\log n}{\eps}\right)$-round algorithm that deletes at most $\eps |E(G)|$ edges \emph{in expectation} such that each remaining connected component has diameter  $O\left(\frac{\log n}{\eps}\right)$. Elkin and Neiman extends the approach of~\cite{MPX13} to design an  $O\left(\frac{\log n}{\eps}\right)$-round algorithm that deletes $\eps |V(G)|$ vertices \emph{in expectation} such that each remaining connected component has diameter  $O\left(\frac{\log n}{\eps}\right)$.

    In this section, we present families of graphs such that the number of deleted vertices (resp., edges) exceed $\eps |V(G)|$ (resp., $\eps |E(G)|$) with non-negligible probability, if we run the low-diameter decomposition algorithms~\cite{EN16,MPX13} on them.
    
    %\ynote{before this lemma, state the algorithm of \cite{EN16}, I am fine with not doing capping here}
    
    For completeness, we start with a description of the algorithms from \cite{EN16,MPX13}. In particular, we highlight their deletion behaviour: Each vertex $v$ samples a value $T_v\sim \exponential(\eps)$ from the exponential distribution and broadcasts $T_v$ to its $\lfloor T_v\rfloor$-hop neighborhood. For any vertex $v$, we order $V=(v_1, v_2, \ldots, v_n)$ in such a way that $m^{(v)}_{v_1} \geq m^{(v)}_{v_2} \geq \cdots \geq m^{(v)}_{v_n}$, where we define 
        \[m^{(v)}_{v_i} := T_{v_i} - \dist(v,v_i).\] 
        Here is how the deletion happens in the two algorithms:
        \begin{itemize}
            \item In the Elkin-Neiman algorithm~\cite{EN16}, each vertex $v$ deletes itself if $m^{(v)}_{v_2} \geq m^{(v)}_{v_1} - 1$.
            \item In the Miller-Peng-Xu algorithm~\cite{MPX13}, each vertex $v$ joins the cluster of $v^\ast$ such that $m_{v^\ast}^{(v)}$ is maximized, and each edge $e=\{u,v\}$ deletes itself if $u$ and $v$ joins different clusters.
            %each edge $e=\{u,v\}$ deletes itself if \left|m^{(u)}_{v_1} - m^{(v)}_{v_1}\right| \leq 1$.
        \end{itemize}
    
    \begin{claim}\label{claim:EN}
        There exists a family of graphs such that when we run the Elkin-Neiman algorithm, at least $n-1$ vertices are deleted with probability $\Omega(\eps)$.
    \end{claim}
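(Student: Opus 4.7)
The plan is to take the family of complete graphs $\{K_n\}_{n\geq n_0}$, where $n_0 = n_0(\eps)$ is a sufficiently large polynomial in $1/\eps$ (to be specified). In $K_n$ every pair of distinct vertices is at distance $1$, so for each vertex $v$ the $m$-values in the Elkin--Neiman algorithm simplify to $m^{(v)}_v = T_v$ and $m^{(v)}_u = T_u - 1$ for $u \neq v$, and $v$ is deleted precisely when the top two such values differ by at most $1$. Let $T_{(1)} \geq T_{(2)} \geq \cdots \geq T_{(n)}$ be the order statistics of $(T_v)_{v\in V}$, and let $v_k$ denote the vertex attaining $T_{(k)}$. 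I will isolate an event $E$ that simultaneously occurs with probability $\Omega(\eps)$ and forces $v_1$ to be the unique non-deleted vertex.

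My proposed event is
\[
E \ := \ \left\{T_{(1)} - T_{(2)} \leq \tfrac{1}{2}\right\} \cap \left\{T_{(2)} - T_{(3)} \geq 2\right\}.
\]
A short case analysis on the rank $k$ then shows that on $E$ exactly $n-1$ vertices are deleted. For $v_1$ the top two $m$-values are $T_{(1)}$ and $T_{(2)} - 1$, whose difference $T_{(1)} - T_{(2)} + 1$ strictly exceeds $1$ almost surely, so $v_1$ survives. For $v_2$ the inequality $T_{(1)} - T_{(2)} < 1$ gives $T_{(1)} - 1 < T_{(2)}$, hence its top two $m$-values are $T_{(2)}$ (self) and $T_{(1)} - 1$, with gap $1 - (T_{(1)} - T_{(2)}) \in [\tfrac{1}{2}, 1]$, so $v_2$ is deleted. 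For $v_k$ with $k \geq 3$ the buffer $T_{(2)} - T_{(3)} \geq 2$ forces every other candidate---namely the self-term $T_{(k)}$ and every $T_{(j)} - 1$ with $j \geq 3$---to be strictly dominated by both $T_{(1)} - 1$ and $T_{(2)} - 1$; thus its top two $m$-values are $T_{(1)} - 1$ and $T_{(2)} - 1$, with gap $T_{(1)} - T_{(2)} \leq \tfrac{1}{2}$, and $v_k$ is deleted.

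To lower-bound $\Pr[E]$ I will invoke the classical spacing representation of iid $\exponential(\eps)$ order statistics: the two top spacings $T_{(1)} - T_{(2)}$ and $T_{(2)} - T_{(3)}$ are independent, with $T_{(1)} - T_{(2)} \sim \exponential(\eps)$ and $T_{(2)} - T_{(3)} \sim \exponential(2\eps)$. This immediately yields
\[
\Pr[E] \ = \ \bigl(1 - e^{-\eps/2}\bigr) \cdot e^{-4\eps} \ \geq \ \eps/8
\]
for all sufficiently small $\eps$. The algorithm's reset of $T_v$ when $T_v \geq 4\ln \nn/\eps$ alters the distribution, but the union bound gives $\Pr[\text{some } T_v \text{ is reset}] \leq n \cdot \nn^{-4} \leq \nn^{-3}$; choosing $n_0$ to be a large enough polynomial in $1/\eps$ absorbs this additive error and keeps the lower bound at $\Omega(\eps)$. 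Moreover, the condition $E$ implies $T_{(2)} \geq 2$, which guarantees that $v_1$ and $v_2$ broadcast their $T$-values to everyone in $K_n$, so the ``$\lfloor T_v \rfloor$-hop broadcast'' restriction of the algorithm does not disturb the case analysis above.

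The main obstacle is the verification that on $E$ the top two $m$-values at each vertex $v_k$ with $k \geq 2$ truly come from $T_{(1)}$ and $T_{(2)}$, with no self-term $T_{(k)}$ or lower-ranked term $T_{(j)} - 1$ ($j \geq 3$) sneaking into the top two. This is exactly what the buffer $T_{(2)} - T_{(3)} \geq 2$ is designed to preclude: without a constant-size buffer, a near-tie of $T_{(3)}$ with $T_{(2)}$ (or of $T_{(3)}$ with $T_{(2)} - 1$) could leave $v_3$---and hence the count---unharmed, invalidating the claim. Everything else---the independence of the spacings, the routine estimate $1 - e^{-\eps/2} \geq \eps/4$, and the union bound on the reset event---is straightforward.
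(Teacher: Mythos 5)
Your proof is correct and takes essentially the same route as the paper: both run on the clique $K_n$, condition on an event about the top order statistics of the exponential samples, and use memorylessness (in your case via the R\'enyi spacing representation) to get probability $\Omega(\eps)$; the paper uses the single, weaker condition $T_{(1)} - T_{(2)} \leq 1$ and a two-case analysis showing every $v_k$ with $k \geq 2$ is deleted. Your extra buffer $T_{(2)} - T_{(3)} \geq 2$ is harmless but not actually needed -- what you call the ``main obstacle'' dissolves because one does not need to identify the top two $m$-values as coming from $T_{(1)}$ and $T_{(2)}$: if $T_{(k)} \geq T_{(1)} - 1$ then the self-term is the maximum and $T_{(1)} - 1 \geq T_{(k)} - 1$ sits within $1$ of it, while otherwise $T_{(1)} - 1$ is the maximum and $T_{(2)} - 1 \geq T_{(1)} - 2$ sits within $1$ of it, so $v_k$ is deleted either way; dropping the buffer would only improve your constant.
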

    \begin{proof}
        Consider the clique of $n$ vertices. Recall that each vertex $v$ samples $T_v\sim \exponential(\eps)$ independently. We order $V = (w_1, w_2, \ldots, w_n)$ in such a way that $T_{w_1} \geq T_{w_2} \geq \cdots \geq T_{w_n}$. We claim that if \[T_{w_1} \leq T_{w_2} + 1,\] then all vertices $w_i$ with $i\geq 2$ are  deleted. Note that $w_i$ receives values $T_{w_1} - 1 \geq T_{w_2} - 1 \geq \cdots \geq T_{w_n} - 1$ (excluding $T_{w_i} - 1$) as well as $T_{w_i}$, as $\dist(w_i, w_j) = 1$ for $i \neq j$. There are two cases.
        \begin{itemize}
            \item $T_{w_i} \geq T_{w_1}-1$: We know $m^{(w_i)}_{w_1} = T_{w_1} - 1 \geq T_{w_i} - 1 = m^{(w_i)}_{w_i} - 1$,  so $w_i$ is deleted.
            \item $T_{w_i} < T_{w_1}-1$: We know $m^{(w_i)}_{w_2} = T_{w_2} - 1 \geq   T_{w_1} - 2 > T_{w_i} - 1 \geq m^{(w_i)}_{w_i} - 1$,  so  $w_i$ is deleted. 
        \end{itemize}
        %Indeed, $v_2$ is also be deleted by the same argument.
        The probability of the event $T_{w_1} \leq T_{w_2} + 1$ happening is 
        \[\Pr\left[T_{w_1} \leq T_{w_2} + 1 \; \middle| \; T_{w_1}\geq T_{w_2}\right] = 1 - e^{-\eps} = \Omega(\eps),  \]
        by the memoryless property of the exponential distribution.
    \end{proof}
    
    %\ynote{before this lemma, state the algorithm of \cite{MPX13}}
    \begin{claim}\label{claim:MPX}
        There exists a family of graphs such that when we run the  Miller-Peng-Xu algorithm, at least $(1 - O(1/n))|E(G)|$ edges are deleted with probability   $\Omega(\eps)$.
    \end{claim}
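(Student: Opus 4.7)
The plan is to follow the same template as \cref{claim:EN} but with a deliberately degenerate graph family, for which the $K_2$ analysis already done in the last line of the proof of \cref{claim:EN} can be quoted almost verbatim. Specifically, take $G_n$ to be the disjoint union of a single edge $\{u,v\}$ and $n-2$ isolated vertices, so that $|V(G_n)|=n$ and $|E(G_n)|=1$. Because broadcasts in the Miller--Peng--Xu algorithm propagate only along graph edges, each isolated vertex forms its own singleton cluster, contributes no deletable edge, and plays no role whatsoever; the entire question reduces to MPX on a single $K_2$.

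On $K_2$, I would unfold the cluster assignment exactly as in \cref{claim:EN}. From $u$'s perspective, $m_u^{(u)}=T_u$ and $m_v^{(u)}=T_v-1$, so $u$ joins $v$'s cluster iff $T_v-T_u>1$ and otherwise joins its own cluster; the symmetric statement holds for $v$. Therefore, the edge $\{u,v\}$ is deleted if and only if $|T_u-T_v|\leq 1$. The memoryless-property calculation at the end of the proof of \cref{claim:EN} then gives directly $\Pr[|T_u-T_v|\leq 1]=1-e^{-\eps}=\Omega(\eps)$. Since $|E(G_n)|=1$, the event that this single edge is cut trivially implies that at least $(1-O(1/n))|E(G_n)|$ edges are cut, completing the proof.

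The only ``obstacle'' is conceptual rather than technical: one would naturally hope to reuse the clique $K_n$ that worked in \cref{claim:EN}, but this route does not succeed. On $K_n$, MPX produces one large cluster consisting of $w_1$ together with every vertex $v$ with $T_v<T_{w_1}-1$, and a short direct computation (conditioning on the max $T_{w_1}=M$ and using the fact that for typical $M$ of order $\log n/\eps$ only an $O(1/n)$ fraction of the remaining $T_v$ fall in $[M-1,M]$) shows that the expected number of singleton clusters is only $O(1)$. Cutting a $(1-O(1/n))$ fraction of the $\binom{n}{2}$ edges of $K_n$ would require at least $n-O(\sqrt{n})$ singletons, an event of probability vastly smaller than $\eps$. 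The disconnected construction above sidesteps this difficulty entirely and suffices to establish the claim as stated.
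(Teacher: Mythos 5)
Your $K_2$ computation is correct (the unique edge of $K_2$ is cut exactly when $|T_u-T_v|\le 1$, which has probability $1-e^{-\eps}$ by memorylessness), and your diagnosis of why the clique from \cref{claim:EN} cannot be reused for the edge version is also correct: on $K_n$ the Miller--Peng--Xu rule produces one giant cluster around the vertex with the largest $T$-value plus an expected $O(1)$ singletons, so cutting a $1-O(1/n)$ fraction of the $\binom{n}{2}$ edges is astronomically unlikely. The problem is the witness family you substitute. With a single edge plus $n-2$ isolated vertices we have $|E(G)|=1$, so ``at least $(1-O(1/n))|E(G)|$ edges are deleted'' degenerates to ``the one edge is deleted,'' and the claim collapses to the observation that an event of probability $\Theta(\eps)$ occurs with probability $\Theta(\eps)$. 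This satisfies the literal wording but empties the claim of its content: the purpose of this section is to show that the in-expectation guarantee of~\cite{MPX13} (at most $\eps|E(G)|$ deleted edges in expectation) cannot be upgraded to a with-high-probability guarantee \emph{even when $\eps|E(G)|$ is large}, because the per-edge deletion events are so strongly positively correlated that $\Theta(|E(G)|)$ edges --- exceeding $\eps|E(G)|$ by an unbounded factor --- are cut simultaneously with probability $\Omega(\eps)$. On your graph $\eps|E(G)|<1$, so the ``failure'' is a pure integrality artifact that any randomized carving algorithm exhibits and that demonstrates nothing about correlation; it provides no motivation for the new algorithm of \cref{thm:LDD}.

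The paper's proof confronts, rather than sidesteps, exactly the obstacle you identified with $K_n$: it engineers \emph{two} competing centres. Take $V=\{u\}\cup\{v\}\cup S_L\cup S_R\cup L\cup R$ with $|S_L|=|S_R|=|L|=|R|=t$, make $(L,R)$ complete bipartite ($t^2$ edges), and attach $u$ to every vertex of $S_L\cup L$ and $v$ to every vertex of $S_R\cup R$ ($4t$ more edges). Condition on the event that the largest $T$-value lands in $S_L$, the second largest lands in $S_R$, the second largest exceeds the third largest by more than $2$, and the top two values differ by less than $1$. The first two sub-events have constant probability, the third has probability $e^{-2\eps}=\Omega(1)$, and the fourth has probability $1-e^{-\eps}=\Omega(\eps)$ by memorylessness, so the conjunction has probability $\Omega(\eps)$. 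On this event all of $\{u\}\cup S_L\cup L$ joins the first centre's cluster and all of $\{v\}\cup S_R\cup R$ joins the second's, so all $t^2$ cross edges of the bipartite core --- a $1-O(1/n)$ fraction of the $t^2+4t$ edges of a connected graph --- are deleted at once. To repair your write-up you need a construction of this kind, with $|E(G)|\to\infty$ (and preferably connected); the degenerate family does not suffice.
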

    \begin{proof}
        Consider the following construction of a graph $G=(V,E)$ with $n = 4t+2$ vertices and $m = t^2 + 4t$ edges. Let $S_L, S_R, L, R$ be vertex sets of size $t$, and let $u$ and $v$ be two vertices. Define $V := \{u\} \cup \{v\} \cup S_L \cup S_R \cup L \cup R$. Add an edge connecting each vertex in $L$ and each vertex in $R$ to make $(L,R)$ a complete bipartite graph. Add an edge between $u$ and each vertex in $S_L\cup L$. Add an edge between $v$ and each vertex in $S_R \cup R$. 
        
        Recall that each vertex $v$ samples $T_v\sim \exponential(\eps)$ independently. We order $V = (w_1, w_2, \ldots, w_n)$ in such a way that $T_{w_1} \geq T_{w_2} \geq \cdots \geq T_{w_n}$. Let us consider the following event $\Event$: 
    \[  w_1 \in S_L, \ w_2 \in S_R, \  T_{w_2}>T_{w_3}+2, \ \text{and} \ T_{w_1} < T_{w_2} + 1.\]    
     If  event $\Event$ happens, then  we know that all $t^2$ edges between $L$ and $R$ are going to be deleted by the algorithm, because all vertices in  $\{u\} \cup  L \cup S_L$ will join the cluster of $w_1$ and all vertices in  $\{v\} \cup R \cup S_R$ will join the cluster of $w_2$.
     %This is because vertices from $L$ sees the highest value $T_{v_1} - 2$ and vertices from $R$ sees the highest value $T_{v_2} - 2$ and these two values differ by less than 1. 
     The fraction of removed edges is $\frac{t^2}{t^2 + 4t} =  \left(1 - O(1/n)\right)$. 
     Now let us bound the $\Pr[\Event]$. 
        \[ \Pr\left[(w_1 \in S_L) \wedge (w_2 \in S_R) \right]  = \frac{t}{4t + 2} \cdot \frac{t}{4t + 1} = \Omega(1). \]
        %The events $v_3 \in L$ and $v_2 \in S_R$ follow similarly. 
        Moreover, conditioning on the choice of $w_1$ and  $w_2$, by the memoryless property of exponential distribution, we have
        \begin{align*}
            \Pr[T_{w_2}>T_{w_3}+2] &= e^{-2 \eps} = \Omega(1),\\
            \Pr[T_{w_1} <  T_{w_2}+1] &= 1 - e^{- \eps} = \Omega(\eps). 
        \end{align*}
        Hence  $\Pr[\Event] = \Omega(\eps)$, as required.
    \end{proof}

    While the above families of graphs have low diameter, note that it is possible to increase the diameter to arbitrarily large by appending a long path. Moreover, only about $O(\log n)$ vertices on the path (i.e.,~vertices falling in the $O(\log n)$-radius neighborhood of the constructed graphs) could affect the outcome of the algorithm. By the analysis of \cref{claim:EN,claim:MPX}, these $O(\log n)$ vertices affect the outcome of the algorithm only if their $T$-values fall into the top three $T$-values among all  vertices, which happens with probability at most $O\left(\frac{\log n}{n}\right)$. 
    %This completes the argument.
}{}

%\znote{might need to explain why this does not hurt the proof.} \ynote{yes, if we can explain that then it is great. Roughly the idea is that their contribution is just too small so they can be ignored. But if you want to prove it formally that can be a bit tedious. I guess an informal discussion is enough.}

\end{document}